\newtheorem{theorem}{Theorem}[section]
\newtheorem{thm}[theorem]{Theorem}
\newtheorem{prop}[theorem]{Proposition}
\newtheorem{lem}[theorem]{Lemma}
\newtheorem{cor}[theorem]{Corollary}
\theoremstyle{definition}
\newtheorem{exmp}[theorem]{Example}
\newtheorem{defn}[theorem]{Definition}
\theoremstyle{remark}
\newtheorem{rem}[theorem]{Remark}
\newtheorem*{ntn}{Notation}
\newcommand{\scp}[2]{\langle #1, #2 \rangle}
\newcommand{\jap}[1]{\langle #1 \rangle}
\newcommand*\diff{\mathop{}\!\mathrm{d}}
\newcommand*\supp{\mathop{}\!\mathrm{supp}}
\newcommand*\one{\mathop{}\!\mathbb{1}}
\newcommand*\e{\mathop{}\!\mathrm{e}}
\newcommand*\I{\mathop{}\!\mathrm{i}}
\newcommand*\Op{\mathop{}\!\mathrm{Op}}
\newcommand*\singsupp{\mathop{}\!\mathrm{sing \ supp}}
\newcommand{\norm}[1]{\lVert#1\rVert}
\begin{document}

\author{Janik Kruse}
\title{The Nelson Model on Static Spacetimes}
\date{\today}

\frontmatter

\newpage
\thispagestyle{empty}

\begin{titlepage}
	\setstretch{1.5}
	\begin{center}
		\thispagestyle{empty}
		\LARGE{\textsc{Ludwig-Maximilians-Universität München}}\\
		\large{Fakultät für Physik}\\
		\large{Fakultät für Mathematik, Informatik und Statistik} \\[1ex]
		\LARGE{\textsc{Technische Universität München}}\\
		\large{Fakultät für Physik}\\
		\large{Fakultät für Mathematik}
		\vspace{0.7cm}
		\begin{center}
			\vspace{1.2cm}
			\textbf{\LARGE{Master Thesis}}
			\medskip\par
			\textbf{\normalsize{submitted for the degree of}} \\[2ex]
			\textbf{\Large{Master of Science}}\\
			\vspace{1.6cm}
			\Large{\textbf{The Nelson Model on Static Spacetimes}}\\[-0.5ex]
			\bigskip\par
			\large{by} \par
			\large{\textsc{Janik Kruse}}
			\vspace{3.2cm}
		\end{center}
		\medskip
	\end{center}
	\begin{tabular}{ll}
		Submission Date:  & June 25, 2021 \\[-1ex]
		Supervisors: & Dr. Jonas Lampart (Université de Bourgogne) \\[-1ex]
		& Prof. Dr. Phan Thành Nam
	\end{tabular}
	\vspace{0.7cm}
	\begin{center}
		Revised version of \today.
	\end{center}
\end{titlepage}

\thispagestyle{empty}

\chapter*{Preface}

The Nelson model describes the interaction between nonrelativistic quantum particles and a relativistic quantum field of scalar bosons (e.g. interaction between nonrelativistic nucleons and relativistic mesons in an atomic nucleus). In the present work, we discuss the regularisation of this semi-relativistic model. The formal Nelson Hamiltonian $H$ acting on the Hilbert space $\mathfrak{H} = L^2(\mathbb{R}^3, \diff X) \otimes \Gamma_\mathrm{s}(L^2(\mathbb{R}^3, \diff x))$ is given by
\begin{align*}
	H = -\Delta_X \otimes \mathbb{1} + \mathbb{1} \otimes \diff \Gamma(\omega) + \Phi(\omega^{-\frac{1}{2}} \delta_X),
\end{align*}
where $\omega = \sqrt{-\Delta + m^2}$ is the relativistic dispersion relation with bosonic field mass $m$, $\diff\Gamma(\omega)$ is its second quantisation on the symmetric Fock space $\Gamma_\mathrm{s}(L^2(\mathbb{R}^3, \diff x))$, $\Phi(\omega^{-\frac{1}{2}} \delta_X)$ is a field operator with form factor $v_X = \omega^{-\frac{1}{2}} \delta_X/\sqrt{2}$, and $\delta_X$ is the Dirac delta centred at the particle position $X$. The field operator $\Phi(f) = (a^*(f) + a(f))/\sqrt{2}$ is a sum of creation and annihilation operators. However, the operator $H$ is not well-defined because the form factor $v_X \notin L^2(\mathbb{R}^3,\diff x)$ is too singular. 

We present two different techniques of regularising the Nelson Hamiltonian in this work: renormalisation and regularisation by interior boundary conditions. To re\-nor\-ma\-lise the Hamiltonian $H$, we demonstrate that putting an ultraviolet cut-off $\Lambda$ on the interaction (technically this amounts to replacing $\delta_X$ by some smoother function $\rho_{\Lambda,X}$) leads to a well-defined self-adjoint cut-off Hamiltonian $H_\Lambda$. The cut-off Hamiltonian has a finite vacuum energy $-E_\Lambda$. We establish that once the vacuum energy is removed from the cut-off Hamiltonian, a well-defined limit of $H_\Lambda + E_\Lambda$ as $\Lambda \to \infty$ exists. This limit operator is the renormalised Nelson Hamiltonian.

A weakness of this method is that it provides little information about the renormalised Nelson Hamiltonian. The second regularisation method we present is the method of interior boundary conditions (IBC).

The main idea behind the IBC method is to interpret the quantity $a^*(v_X) \Psi$ for suitable $\Psi \in \mathfrak{H}$ as a distribution. In return, we also consider $H_0\Psi$ as a distribution and choose $\Psi \in \mathfrak{H}$ in such a way that the singular parts of $a^*(v_X) \Psi$ and $H_0 \Psi$ exactly cancel each other. Then, $L=H_0 + a^*(v_X)$ is a well-defined operator. The main difficulty of this approach is to find a physically reasonable extension $A$ of the annihilation operator $a(v_X)$ to the domain $D(L)$ of $L$. The IBC Hamiltonian $L+A$ constructed in this way is well defined, self-adjoint, and equivalent to the re\-nor\-ma\-lised Nelson Hamiltonian. However, this time, we obtain an explicit description of the renormalised Nelson Hamiltonian and its domain, which was not available before.

In the present work, we generalise both regularisation techniques to the Nelson model on static spacetimes. In the Nelson model on static spacetimes the Laplace operator is replaced by the Laplace--Beltrami operator, which is the correct generalisation of the Laplace operator on pseudo-Riemannian manifolds. We revisit a paper by Gérard \textit{et al.} \cite{gerard2012}, who translated Nelson's renormalisation technique to the Nelson model on static spacetimes. The original part of this work is to apply the IBC method to the Nelson model on static spacetimes. 

Usually, the Nelson model (on Euclidean spacetime) is discussed in Fourier space because the Laplace operator is a simple multiplication operator in Fourier space. The main difficulty of the present work lies in the fact that no simple representation of the Nelson Hamiltonian on static spacetimes in Fourier space exists because the Laplace--Beltrami operator is a partial differential operator with variable coefficients. Instead, we rely on the pseudo-differential calculus.



\subsubsection*{Acknowledgements}

This work is the result of a thesis project in the master's degree program 'Theoretical and Mathematical Physics' at the Ludwig--Maximilians--Universität München and Tech\-nische Universität München and was conducted in cooperation with the Laboratoire Interdisciplinaire Carnot de Bourgogne (ICB) at the Université de Bourgogne in Dijon. I very much appreciate the hospitality of the ICB during my stay from October 2020 to April 2021. 

I thank my supervisor Jonas Lampart for his helpful hints and discussions, and for his help to find my way in Dijon. I also thank my second referee Phan Thàn Nam who introduced me to mathematical quantum mechanics through his excellent lectures. A special word of thanks for helpful comments on the first drafts of this work goes to my fellow-students and friends Jakob Oldenburg, Jonas Peteranderl, and Markus Frankenbach.

\vspace{0.8cm}
\noindent Paderborn, June 2021

\noindent Janik Kruse
%
%
%
%
%
%
%

\section*{Statement of Authorship}

I herewith assure that I wrote the present thesis independently and that I have used no other means than the ones indicated. I have indicated all parts of the work in which sources are used according to their wording or to their meaning.

\vspace*{1.2cm}

Paderborn, \today

\vspace*{1.4cm}

Janik Kruse

\newpage

\tableofcontents

\mainmatter
\chapter{Nelson Model} 
\label{ch:NelsonModel}

\begin{verse}
	\textit{First quantisation is a mystery, but second quantisation is a functor. --- Edward Nelson}
\end{verse}

The Nelson model describes a physical system of nonrelativistic quantum particles coupled to a relativistic quantum field of scalar bosons. Nelson introduced this model in 1964 \cite{nelson1964} and rigorously demonstrated the existence of a well-defined Hamiltonian by renormalisation. In the present work,  we are interested in a generalised model on static spacetimes.

This introductory chapter is organised as follows. First, we explain the Fock space formalism and second quantisation, which is essential for the present work. Next, we give a mathematical description of the Nelson model on Euclidean spacetime and collect some results from the literature. In the last section, we define the Nelson model on static spacetimes.

\section{Fock Space Formalism}
\label{sec:FockSpaceFormalism}

In this section, we present the basic concepts of the Fock space formalism and second quantisation. More elaborate and detailed expositions are available in the literature \cite{bratteli2002, lorinczi2020}. Furthermore, we include  some results from the appendices of \cite{griesemer2016,griesemer2018}.

\subsection{Fock Space}
Let $\mathfrak{h}$ denote a complex separable Hilbert space. The state of a quantum system with $n$ particles is usually described by an element of the tensor product space $\mathfrak{h}^{\otimes n}$. If the number of particles is not fixed, the \textbf{Fock space} $\Gamma(\mathfrak{h})$ is introduced as the direct sum of all $n$-particle subspaces:
\begin{align}
	\Gamma(\mathfrak{h}) = \bigoplus_{n \in \mathbb{N}} \mathfrak{h}^{\otimes n}.
\end{align}
It is understood that $\mathfrak{h}^{\otimes 0} = \mathbb{C}$ and that the direct (Hilbert space) sum is to be completed.

An element $\Psi \in \Gamma(\mathfrak{h})$ is a sequence $\Psi = (\Psi^{(0)}, \Psi^{(1)}, \dots)$ with $\Psi^{(n)} \in \mathfrak{h}^{\otimes n}$. One particular element is the \textbf{vacuum vector} $\Omega = (1, 0, 0, \dots)$. 

The Fock space $\Gamma(\mathfrak{h})$ is a Hilbert space with the scalar product
\begin{align}
	\scp{\Psi_1}{\Psi_2}_{\Gamma(\mathfrak{h})} = \sum_{n=0}^{\infty} \scp{\Psi_1^{(n)}}{\Psi_2^{(n)}}_{\mathfrak{h}^{\otimes n}}
\end{align}
and the induced norm
\begin{align}
	\norm{\Psi}^2_{\Gamma(\mathfrak{h})} = \sum_{n=0}^{\infty} \norm{\Psi^{(n)}}^2_{\mathfrak{h}^{\otimes n}}.
\end{align}
With respect to this norm, a dense subspace of $\Gamma(\mathfrak{h})$ is given by the \textbf{finite-particle vectors}:
\begin{align}
	\Gamma_{\mathrm{fin}}(\mathfrak{h}) = \{ \Psi \in \Gamma(\mathfrak{h}) \mid \exists M \geq 0, \Psi^{(n)} = 0 \ \forall n \geq M \}.
\end{align}
For the rest of this section, let $f, f_1,\dots, f_n, g \in \mathfrak{h}$.

In physical applications, further restrictions must usually be imposed on the Fock space due to quantum statistics. To extract a subspace relevant for the description of bosonic systems, we introduce the \textbf{symmetrisation operator} $P_+$. On pure tensor states, it is defined as
\begin{align}
	P_+ (f_1 \otimes \dots \otimes f_n) = \frac{1}{n!} \sum_{\sigma \in S_n} f_{\sigma(1)} \otimes \dots \otimes f_{\sigma(n)},
\end{align}
where the sum is taken over all permutations $\sigma \in S_n$. The definition is extended by linearity and continuity to $\Gamma(\mathfrak{h})$. It is easy to verify that $P_+$ is an orthogonal projection operator. The \textbf{symmetric Fock space} $\Gamma_\mathrm{s}(\mathfrak{h})$, which is the relevant Fock space for bosonic systems, is the image of $P_+$, that is,
\begin{align}
	\Gamma_\mathrm{s}(\mathfrak{h}) = P_{+} \Gamma(\mathfrak{h}),
\end{align}
and its symmetric $n$-particle subspaces are $\mathfrak{h}^{\otimes_\mathrm{s} n} = P_{+}\mathfrak{h}^{\otimes n}$. An antisymmetric Fock space can be defined by similar means. However, in the following, we exclusively use the symmetric Fock space.

\subsection{Second Quantisation}

An operator $h$ defined on the Hilbert space $\mathfrak{h}$ can be lifted to an operator $\diff\Gamma(h)$ on the (symmetric) Fock space:
\begin{align}
	\diff \Gamma(h)\big|_{\mathfrak{h}^{\otimes_\mathrm{s} n}} = \sum_{j=1}^{n} \one^{\otimes (j-1)} \otimes h \otimes \one^{\otimes(n-j)}.
\end{align}
If $h$ is self-adjoint, $\diff\Gamma(h)$ is symmetric and hence closable. Furthermore, a dense subset of analytic vectors of $\diff\Gamma(h)$ is given by any finite sum of symmetrised products of analytic vectors of $h$. Thus, the operator $\diff\Gamma(h)$ is essentially self-adjoint according to Nelson's analytic vector theorem (see \cite[Thm.~X.39, Cor.~2]{reed2}). Its self-adjoint closure, for which we continue to write $\diff\Gamma(h)$, is called the \textbf{second quantisation} of $h$. 

\begin{exmp}
	The second quantisation of the identity operator is the \textbf{number operator} $N = \diff\Gamma(\mathbb{1})$ with $N\Psi^{(n)} = n \Psi^{(n)}$.
\end{exmp}

\begin{exmp}
	Let $\mathfrak{h} = L^2(\mathbb{R}^d)$ and let $\omega = \sqrt{-\Delta + m^2}$ be the \textbf{relativistic dispersion relation} of a boson with mass $m\geq 0$. Then the action of the second quantisation $\diff\Gamma(\omega)$ on $\Psi \in D(\diff\Gamma(\omega))$ in Fourier space is given by
	\begin{align}
		\widehat{(\diff \Gamma(\omega) \Psi)}^{(n)}(\xi_1, \dots, \xi_n) = \sum_{j=1}^{n} \omega(\xi_j) \widehat{\Psi}^{(n)}(\xi_1, \dots, \xi_n)
	\end{align}
	with $\omega(\xi) = \sqrt{|\xi|^2 + m^2}$. The second quantisation $\diff \Gamma(\omega)$ describes the total energy of a bosonic system. In the Nelson model, the bosonic quantum field is such a system; therefore, we call $\diff \Gamma(\omega)$ the \textbf{free field Hamiltonian}.
	\label{exmp:FreeFieldHamiltonian}
\end{exmp}

Another important set of linear operators $a(f):\mathfrak{h}^{\otimes_\mathrm{s} n} \to \mathfrak{h}^{\otimes_\mathrm{s} (n-1)}$ (set $\mathfrak{h}^{\otimes_\mathrm{s}-1} = 0$) and $a^{*}(f):\mathfrak{h}^{\otimes_\mathrm{s} n} \to \mathfrak{h}^{\otimes_\mathrm{s} (n+1)}$ is defined as follows. If $n=0$, we initially define $a(f)\Omega = 0$ and $a^{*}(f)\Omega = f$. For $n\geq 1$, we set
\begin{align}
	a(f) f_1 \otimes \dots \otimes f_n &= \sqrt{n}\scp{f}{f_1}_{\mathfrak{h}} f_2 \otimes \dots \otimes f_n, \\
	a^*(f)\Psi^{(n)} &= \sqrt{n+1}P_{+} \left(f \otimes \Psi^{(n)}\right). \label{def:CreationOperator}
\end{align}
These operators can be extended to operators on $\Gamma_\mathrm{s}(\mathfrak{h})$. Because  $a(f)$ decreases the number of particles (i.e.  $Na(f)\Psi^{(n)} = (n-1)a(f)\Psi^{(n)}$), the operator $a(f)$ is called \textbf{annihilation operator}. The operator $a^{*}(f)$ increases the number of particles, and therefore is called \textbf{creation operator}.

The \textbf{commutation relations} are well known and easy to verify: 
\begin{align}
	[a(f), a(g)] &= 0, \label{eq:CommutatorAnni} \\
	[a^*(f), a^*(g)] &= 0, \\
	[a(f), a^*(g)] &= \scp{f}{g}. 
	\label{eq:CommutatorAnniCrea}
\end{align}
Moreover, if $h$ is a self-adjoint operator on $\mathfrak{h}$ and $f \in D(h)$, the commutators of the second quantisation $\diff\Gamma(h)$ with annihilation and creation operators are
\begin{align}
	[\diff\Gamma(h),a^*(f)] &= a^*(h f), \label{eq:SecondQuantisationCreationCommutator} \\
	[\diff \Gamma(h), a(f)] &= -a(h f) \label{eq:SecondQuantisationAnnihilationCommutator}.
\end{align}

It is useful to have an explicit representation of annihilation and creation operators for $\mathfrak{h} = L^2(\mathbb{R}^d)$. Let $x = (x_1, \dots, x_n) \in \mathbb{R}^{dn}$ and let $\hat{x}_j \in \mathbb{R}^{d(n-1)}$ denote the vector $x$ with the entry $j$ removed. Then, for any $\Psi^{(n)} \in L^2(\mathbb{R}^{dn})$, $\Psi^{(n-1)} \in L^2(\mathbb{R}^{d(n-1)})$, annihilation and creation operators may be written as
\begin{align}
	a(f) \Psi^{(n)} (\hat{x}_{n-1}) &= \sqrt{n} \int_{\mathbb{R}^d} \overline{f(x_n)} \Psi^{(n)}(x) \diff x_n, \\
	a^*(f)\Psi^{(n-1)}(x) &= \frac{1}{\sqrt{n}} \sum_{j=1}^{n} f(x_j) \Psi^{(n-1)}(\hat{x}_j).
\end{align}

The following lemma provides useful norm estimates for annihilation and creation operators.
\begin{lem}
	\label{lem:EstimatesAnnihilationCreation}
	Let $h \geq 1$ be a self-adjoint operator on $\mathfrak{h}$, and $f \in D(h^{-\alpha})$ for an $\alpha \geq 1/2$. Then, for every $\Psi \in D(\diff\Gamma(h)^{\alpha})$,
	\begin{align}
		\norm{a(f)\Psi}_{\Gamma(\mathfrak{h})} &\leq \norm{h^{-\alpha} f}_{\mathfrak{h}} \norm{\diff \Gamma(h)^{\alpha} \Psi}_{\Gamma(\mathfrak{h})} \label{eq:FirstEstimateAC}, \\
		\norm{a^*(f)\Psi}_{\Gamma(\mathfrak{h})} &\leq \norm{h^{-\alpha} f}_{\mathfrak{h}} \norm{\diff \Gamma(h)^{\alpha} \Psi}_{\Gamma(\mathfrak{h})} + \norm{f}_{\mathfrak{h}} \norm{\Psi}_{\Gamma(\mathfrak{h})}. 
		\label{eq:SecondEstimateAC}
	\end{align}
	Moreover, if $f,g \in D(h^{-\alpha/2})$, then, for every $\Psi \in D(\diff\Gamma(h)^{\alpha})$,
	\begin{align}
		\norm{(N+1)^{-\frac{1}{2}} a(f) a(g) \Psi}_{\Gamma(\mathfrak{h})} &\leq \norm{h^{-\frac{\alpha}{2}} f}_{\mathfrak{h}} \norm{h^{-\frac{\alpha}{2}} g}_{\mathfrak{h}} \norm{\diff\Gamma(h)^{\alpha} \Psi}_{\Gamma(\mathfrak{h})}.
		\label{eq:ThirdEstimate}
	\end{align}
\end{lem}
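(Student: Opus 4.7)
My plan is to exploit the fact that every operator in the statement is diagonal in the particle number, and so reduce to a computation on each symmetric $n$-particle sector $\mathfrak{h}^{\otimes_\mathrm{s} n}$, with the Fock-space bounds then obtained by summing over $n$. On such a sector, $a(f)\Psi^{(n)}=\sqrt{n}\,(\scp{f}{\cdot}\otimes\one^{\otimes(n-1)})\Psi^{(n)}$, and inserting $\scp{f}{v}=\scp{h^{-\alpha}f}{h^{\alpha}v}$ together with Cauchy--Schwarz in the first tensor factor gives
\begin{equation*}
\norm{a(f)\Psi^{(n)}}_{\Gamma(\mathfrak{h})}^{2}\leq n\norm{h^{-\alpha}f}_{\mathfrak{h}}^{2}\,\norm{(h^{\alpha}\otimes\one^{\otimes(n-1)})\Psi^{(n)}}_{\mathfrak{h}^{\otimes n}}^{2}.
\end{equation*}
Symmetry of $\Psi^{(n)}$ redistributes the $h^{\alpha}$ evenly over the slots and rewrites the last factor as $\frac{1}{n}\scp{\Psi^{(n)}}{\sum_{j=1}^{n}h_{j}^{2\alpha}\Psi^{(n)}}$, where $h_{j}$ denotes $h$ acting in the $j$-th factor. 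The decisive step is then the scalar inequality
\begin{equation*}
\sum_{j=1}^{n}t_{j}^{2\alpha}\leq\Bigl(\sum_{j=1}^{n}t_{j}\Bigr)^{2\alpha},\qquad t_{j}\geq 0,\ 2\alpha\geq 1,
\end{equation*}
which after normalising $\sum t_{j}=1$ reduces to $t_{j}^{2\alpha}\leq t_{j}$ for $t_{j}\in[0,1]$. Lifted through the joint spectral calculus of the commuting family $(h_{1},\dots,h_{n})$ this becomes the operator bound $\sum_{j}h_{j}^{2\alpha}\leq\diff\Gamma(h)^{2\alpha}$, and \eqref{eq:FirstEstimateAC} drops out.

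For \eqref{eq:SecondEstimateAC} I would use the canonical commutation relation \eqref{eq:CommutatorAnniCrea} to write $\norm{a^{*}(f)\Psi}^{2}=\scp{\Psi}{a(f)a^{*}(f)\Psi}=\scp{\Psi}{a^{*}(f)a(f)\Psi}+\norm{f}^{2}\norm{\Psi}^{2}=\norm{a(f)\Psi}^{2}+\norm{f}^{2}\norm{\Psi}^{2}$; the elementary estimate $\sqrt{A^{2}+B^{2}}\leq A+B$ together with \eqref{eq:FirstEstimateAC} then closes the bound.

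Estimate \eqref{eq:ThirdEstimate} follows the same recipe with two annihilations. On the $n$-particle sector with $n\geq 2$ (smaller sectors contribute zero) two Cauchy--Schwarz applications, using $\scp{f}{v}=\scp{h^{-\alpha/2}f}{h^{\alpha/2}v}$ twice, give
\begin{equation*}
\norm{a(f)a(g)\Psi^{(n)}}^{2}\leq n(n-1)\norm{h^{-\alpha/2}f}^{2}\norm{h^{-\alpha/2}g}^{2}\scp{\Psi^{(n)}}{h_{1}^{\alpha}h_{2}^{\alpha}\Psi^{(n)}}.
\end{equation*}
Symmetry converts $n(n-1)\scp{\Psi^{(n)}}{h_{1}^{\alpha}h_{2}^{\alpha}\Psi^{(n)}}=\sum_{i\neq j}\scp{\Psi^{(n)}}{h_{i}^{\alpha}h_{j}^{\alpha}\Psi^{(n)}}$, while $(N+1)^{-1/2}$ contributes a factor $(n-1)^{-1/2}$ because $a(f)a(g)\Psi^{(n)}$ lives on the $(n-2)$-particle sector. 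It therefore remains to verify $\frac{1}{n-1}\sum_{i\neq j}h_{i}^{\alpha}h_{j}^{\alpha}\leq\diff\Gamma(h)^{2\alpha}$, which I would deduce from the scalar chain
\begin{equation*}
\sum_{i\neq j}t_{i}^{\alpha}t_{j}^{\alpha}=\Bigl(\sum_{i}t_{i}^{\alpha}\Bigr)^{2}-\sum_{i}t_{i}^{2\alpha}\leq (n-1)\sum_{i}t_{i}^{2\alpha}\leq (n-1)\Bigl(\sum_{i}t_{i}\Bigr)^{2\alpha},
\end{equation*}
whose middle step is Cauchy--Schwarz $(\sum t_{i}^{\alpha})^{2}\leq n\sum t_{i}^{2\alpha}$ and whose final step reuses the power-of-a-sum bound above.

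The single nontrivial analytic input -- and hence the main potential obstacle -- is exactly the replacement of $\sum_{j}h_{j}^{2\alpha}$ by $\diff\Gamma(h)^{2\alpha}$. This is where both standing hypotheses, $h\geq 1$ and $\alpha\geq 1/2$, enter decisively, and it also explains why the lemma uses $\diff\Gamma(h)^{\alpha}$ rather than the (more convenient but incorrect) $\diff\Gamma(h^{\alpha})$.
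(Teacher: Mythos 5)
Your argument is correct and follows essentially the same route as the paper's proof: Cauchy--Schwarz on each $n$-particle sector with the weight $h^{\pm\alpha}$, symmetry of $\Psi^{(n)}$ to convert $n\,h_n^{2\alpha}$ into $\diff\Gamma(h^{2\alpha})$, the operator inequality $\diff\Gamma(h^{2\alpha})\leq\diff\Gamma(h)^{2\alpha}$ from $2\alpha\geq 1$, and the commutation relation for the creation-operator bound. Your write-up of \eqref{eq:ThirdEstimate}, which the paper dismisses with ``similar techniques,'' is a correct and welcome completion; the only minor inaccuracy is the closing remark, since the superadditivity $\sum_j t_j^{2\alpha}\leq(\sum_j t_j)^{2\alpha}$ holds for all $t_j\geq 0$ once $2\alpha\geq 1$, so $h\geq 1$ is needed only to make $h^{-\alpha}$ meaningful, not for that step.
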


\begin{proof}
	We can assume that $h \geq 1$ is a multiplication operator on $L^2(\mathbb{R}^d)$ due to the spectral theorem. Because $\Psi^{(n)}$ is symmetric in $x = (x_1, \dots, x_n) \in \mathbb{R}^{nd}$, we obtain the following equality:
	\begin{align}
		\scp{\Psi^{(n)}}{\diff\Gamma(h^{2\alpha})\Psi^{(n)}}_{\mathfrak{h}^{\otimes n}} &= \sum_{i=1}^{n} \int_{\mathbb{R}^{nd}} h(x_i)^{2\alpha} |\Psi^{(n)}(x)|^2 \diff x \notag \\
		&= n \int_{\mathbb{R}^{nd}} h(x_n)^{2\alpha} |\Psi^{(n)}(x)|^2 \diff x.
	\end{align}
	From the above, it follows that
	\begin{align}
		\norm{a(f)\Psi^{(n)}}^2_{\mathfrak{h}^{\otimes n}} &= n \int_{\mathbb{R}^{(n-1)d}} \left| \int_{\mathbb{R}^d} \overline{f(x_n)} \Psi^{(n)}(x) \diff x_n \right|^2 \diff \hat{x}_{n} \notag \\ 
		&\leq n \norm{h^{-\alpha}f}^2_\mathfrak{h} \int_{\mathbb{R}^{nd}} h(x_n)^{2\alpha} |\Psi^{(n)}(x)|^2 \diff x \notag \\
		&= \norm{h^{-\alpha}f}^2_{\mathfrak{h}} \norm{\diff\Gamma(h^{2\alpha})^{\frac{1}{2}}\Psi^{(n)}}^2_{\mathfrak{h}^{\otimes n}}.
		\label{eq:ProofEstimatesAnnihilationCreation}
	\end{align}
	The condition $2\alpha \geq 1$ ensures that $\diff\Gamma(h^{2\alpha}) \leq \diff\Gamma(h)^{2\alpha}$. This proves \eqref{eq:FirstEstimateAC} if we sum \eqref{eq:ProofEstimatesAnnihilationCreation} over $n$. 
	
	The inequality \eqref{eq:SecondEstimateAC} for the creation operator follows from \eqref{eq:FirstEstimateAC} and
	\begin{align}
		\norm{a^*(f)\Psi}^2_{\Gamma(\mathfrak{h})} = \norm{a(f)\Psi}^2_{\Gamma(\mathfrak{h})} + \norm{f}^2_\mathfrak{h}\norm{\Psi}^2_{\Gamma(\mathfrak{h})},
	\end{align}
	which is a consequence of the commutation relation \eqref{eq:CommutatorAnniCrea}. Inequality \eqref{eq:ThirdEstimate} can be proven using similar techniques.
\end{proof}

\begin{exmp}
	If $h = \mathbb{1}$ in Lemma \ref{lem:EstimatesAnnihilationCreation}, then
	\begin{align}
		\norm{a(f)\Psi}_{\Gamma(\mathfrak{h})} &\leq \norm{f}_{\mathfrak{h}} \norm{N^{\frac{1}{2}}\Psi}_{\Gamma(\mathfrak{h})}, \label{eq:NormEstimateAnnihilation} \\
		\norm{a^*(f)\Psi}_{\Gamma(\mathfrak{h})} &\leq \norm{f}_{\mathfrak{h}} \norm{(N+1)^{\frac{1}{2}} \Psi}_{\Gamma(\mathfrak{h})} \label{eq:NormEstimateCreation}.
	\end{align}
	This shows that $D(a(f)) \supset D(N^{1/2})$ and $D(a^*(f)) \supset D(N^{1/2})$ for the domains of the annihilation and creation operators.
\end{exmp}

\subsection{Field and Weyl Operators}
\label{ssec:FieldWeylOperators}

Annihilation and creation operators are adjoint to each other. Symmetric operators like the \textbf{(Segal) field operator}
\begin{align}
	\Phi(f) = \frac{1}{\sqrt{2}} \left( a^*(f) + a(f) \right)
	\label{eq:FieldOperator}
\end{align}
and the \textbf{conjugate momentum operator}
\begin{align}
	\Pi(f) = \frac{\I}{\sqrt{2}} \left( a^*(f) - a(f) \right)
	\label{eq:ConjugateMomentum}
\end{align}
are constructed by combining annihilation and creation operators. Observe that $\Phi(f)$ and $\Pi(f)$ are $\mathbb{R}$-linear but not $\mathbb{C}$-linear in~$f$. Moreover, $\Pi(f) = \Phi(\I f)$. Considering the following proposition, we denote the self-adjoint closures of $\Phi(f)$ and $\Pi(f)$ by the very same symbol.

\begin{prop}
	For every $f\in \mathfrak{h}$, the field operator $\Phi(f)$ and the conjugate momentum operator $\Pi(f)$ are essentially self-adjoint on $\Gamma_{\mathrm{s,fin}}(\mathfrak{h})$. 
	\label{prop:FieldOperatorSelfAdjoint}
\end{prop}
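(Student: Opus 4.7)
The plan is to invoke Nelson's analytic vector theorem, as was already done for $\mathrm{d}\Gamma(h)$ earlier in this section. Since $a(f)^{*} \supset a^{*}(f)$ on the finite-particle domain, the operators $\Phi(f)$ and $\Pi(f)$ are symmetric on $\Gamma_{\mathrm{s,fin}}(\mathfrak{h})$, and because creation and annihilation operators change the particle number by exactly one, both operators leave $\Gamma_{\mathrm{s,fin}}(\mathfrak{h})$ invariant. In particular, all powers $\Phi(f)^{k}\Psi$ are defined for $\Psi \in \Gamma_{\mathrm{s,fin}}(\mathfrak{h})$. It therefore suffices to show that every $\Psi \in \Gamma_{\mathrm{s,fin}}(\mathfrak{h})$ is an analytic vector, i.e., that $\sum_{k=0}^{\infty} \frac{t^k}{k!}\,\norm{\Phi(f)^{k}\Psi}_{\Gamma(\mathfrak{h})}$ has a positive radius of convergence (the argument for $\Pi(f)$ is identical, or one uses $\Pi(f) = \Phi(\I f)$).

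The key estimate comes from the $h=\mathbb{1}$ case of Lemma \ref{lem:EstimatesAnnihilationCreation}. For $\Psi^{(n)} \in \mathfrak{h}^{\otimes_{\mathrm{s}} n}$, the bounds \eqref{eq:NormEstimateAnnihilation}--\eqref{eq:NormEstimateCreation} give
\begin{align*}
\norm{\Phi(f)\Psi^{(n)}}_{\Gamma(\mathfrak{h})} \leq \tfrac{1}{\sqrt{2}}\bigl(\norm{a^{*}(f)\Psi^{(n)}} + \norm{a(f)\Psi^{(n)}}\bigr) \leq \sqrt{2}\,\norm{f}_{\mathfrak{h}}\sqrt{n+1}.
\end{align*}
Iterating this on the finite-particle vector $\Psi = (\Psi^{(0)},\dots,\Psi^{(M)},0,\dots)$ and using that $\Phi(f)^{k}$ raises the particle number by at most $k$, one obtains
\begin{align*}
\norm{\Phi(f)^{k}\Psi}_{\Gamma(\mathfrak{h})} \leq (\sqrt{2}\,\norm{f}_{\mathfrak{h}})^{k}\,\sqrt{(M+k)(M+k-1)\cdots (M+1)}\,\norm{\Psi}_{\Gamma(\mathfrak{h})},
\end{align*}
so that $\norm{\Phi(f)^{k}\Psi} \leq C^{k}\sqrt{(M+k)!/M!}$ with $C = \sqrt{2}\,\norm{f}_{\mathfrak{h}}$.

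Plugging this into the analyticity series and applying Stirling (or just the crude bound $(M+k)! \leq 2^{M+k}M!\,k!$), one finds that $\sum_{k}\tfrac{t^{k}}{k!}\norm{\Phi(f)^{k}\Psi}$ converges for all sufficiently small $t>0$; equivalently, the ratio $\norm{\Phi(f)^{k+1}\Psi}/\norm{\Phi(f)^{k}\Psi}$ grows only like $\sqrt{k}$, so the series converges for every $t>0$. Hence each $\Psi \in \Gamma_{\mathrm{s,fin}}(\mathfrak{h})$ is an analytic vector for the symmetric operator $\Phi(f)$, and because analytic vectors form a dense subspace, Nelson's analytic vector theorem (\cite[Thm.~X.39]{reed2}) yields essential self-adjointness on $\Gamma_{\mathrm{s,fin}}(\mathfrak{h})$.

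The only delicate point is the combinatorial factor: a priori one might worry about a $(k!)$-type growth that would destroy analyticity, but the creation operator contributes only $\sqrt{n+1}$ per application rather than $n+1$, which is exactly what makes the series converge. This is the one spot where symmetry of $\Psi^{(n)}$ and the precise power of $\sqrt{n}$ in the definitions of $a(f)$ and $a^{*}(f)$ are essential; everything else is bookkeeping.
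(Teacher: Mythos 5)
Your proof is correct and follows the same route as the paper: show that every finite-particle vector is analytic for $\Phi(f)$ using the $h=\mathbb{1}$ estimates of Lemma \ref{lem:EstimatesAnnihilationCreation}, then invoke Nelson's analytic vector theorem. The only difference is that you spell out the combinatorial bound $\norm{\Phi(f)^k\Psi}\leq C^k\sqrt{(M+k)!/M!}\,\norm{\Psi}$ and its summability, which the paper asserts without detail.
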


\begin{proof}
	We prove the proposition for $\Phi(f)$. The proof for $\Pi(f)$ is identical. For every $\Psi \in \Gamma_{\mathrm{s,fin}}(\mathfrak{h})$ and $t\geq 0$, the series
	\begin{align}
		\sum_{n=0}^{\infty} \frac{t^n}{n!} \norm{\Phi(f)^n \Psi}_{\Gamma(\mathfrak{h})} < \infty
	\end{align}
	is finite due to \eqref{eq:NormEstimateAnnihilation} and \eqref{eq:NormEstimateCreation}. Thus, $\Gamma_{\mathrm{s,fin}}(\mathfrak{h})$ is a (dense) set of analytic vectors of $\Phi(f)$. The proposition follows from Nelson's analytic vector theorem (see \cite[Thm.~X.39, Cor.~2]{reed2}).
\end{proof}

It is immediate from the commutation relations \eqref{eq:CommutatorAnni}--\eqref{eq:CommutatorAnniCrea} for annihilation and creation operators that
\begin{align}
	[\Phi(f),\Phi(g)] &= \I \Im\scp{f}{g}_\mathfrak{h}, \\
	[\Pi(f),\Pi(g)] &= \I \Im\scp{f}{g}_\mathfrak{h}, \label{eq:CommutationRelationConjugateMomentum} \\
	[\Phi(f),\Pi(g)] &= \I \Re\scp{f}{g}_\mathfrak{h}, \label{eq:CanonicalFieldCommutator}
\end{align}
where $\Im (z)$ denotes the imaginary part of $z \in \mathbb{C}$, and $\Re (z)$ the real part. 

The \textbf{Weyl operators}
\begin{align}
	V(f) = \e^{\I\Pi(f)}, \ \ W(f) = \e^{\I\Phi(f)}
\end{align}
are unitary because the field operators $\Phi(f)$ are self-adjoint. It is not difficult to verify that the Weyl operators satisfy the following algebraic relations:
\begin{align}
	V(f)V(g) &= \e^{-\frac{\I}{2}\scp{f}{g}_\mathfrak{h}}V(f+g), \\
	W(f)W(g) &= \e^{-\frac{\I}{2}\scp{f}{g}_\mathfrak{h}}W(f+g).
\end{align}
Sometimes these relations are used as the defining property of the \textbf{Weyl algebra} $\{V(f),f\in\mathfrak{h}\} = \{W(f),f\in\mathfrak{h}\}$.  The following proposition collects more interesting properties.

\begin{prop}
	The following statements are true:
	\begin{enumerate}
		\item The Weyl operator $V(g)$ leaves the domain $D(\Phi(f))$ invariant 
		and
		\begin{align}
			V(g) \Phi(f) V(g)^* = \Phi(f) + \Re\scp{f}{g}_\mathfrak{h}. \label{eq:WeylField}
		\end{align}
		\item If $h$ is a self-adjoint operator on $\mathfrak{h}$ and $g \in D(h)$, then $V(g)$ leaves the domain $D(\diff\Gamma(h))$ invariant 
		and
		\begin{align}
			V(g)\diff\Gamma(h)V(g)^* = \diff\Gamma(h) + \Phi(h g) + \frac{1}{2}\scp{h g}{g}_\mathfrak{h}.
		\end{align}
		\item If $\Psi \in D(N^{1/2})$, then
		\begin{align}
			\norm{(V(f) - V(g))\Psi}_{\Gamma(\mathfrak{h})} &\leq \norm{\Pi(f-g)\Psi}_{\Gamma(\mathfrak{h})} + \frac{1}{2}|\Im \scp{f}{g}_\mathfrak{h}| \norm{\Psi}_{\Gamma(\mathfrak{h})}.
		\end{align} \label{it:PropWeylOperator3}
	\end{enumerate}
	\label{prop:PropertiesWeylOperator}
\end{prop}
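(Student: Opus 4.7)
My plan is to prove parts 1 and 2 by differentiating the conjugation $V(tg)(\cdot)V(-tg)$ in the parameter $t\in\mathbb{R}$ and integrating, while part 3 will follow from the Weyl algebra relations combined with an elementary spectral estimate. The main technical obstacle lies in justifying the formal commutator calculations for parts 1 and 2 rigorously on a dense invariant core (for which $\Gamma_{\mathrm{s,fin}}(\mathfrak{h})$ should suffice) and then extending the resulting identities to the full operator domains; the scalar nature of the first commutator and the reduction of the second to an already-treated operator keep this bookkeeping manageable.

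For part 1, the crucial input is the canonical commutator \eqref{eq:CanonicalFieldCommutator}, which yields $[\Pi(g),\Phi(f)] = -\I\,\Re\scp{f}{g}_\mathfrak{h}$, a scalar. Fixing $\Psi \in \Gamma_{\mathrm{s,fin}}(\mathfrak{h})$ and considering $F(t) = V(tg)\Phi(f)V(-tg)\Psi$, Stone's theorem applied to $V(tg) = \e^{\I t \Pi(g)}$ together with the fact that $V(tg)$ commutes with $\Pi(g)$ produces
\begin{align*}
F'(t) = \I V(tg)[\Pi(g),\Phi(f)]V(-tg)\Psi = \Re\scp{f}{g}_\mathfrak{h}\,\Psi,
\end{align*}
after cancelling $V(tg)V(-tg) = \mathbb{1}$. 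Integrating from $0$ to $1$ yields the claimed identity on the core. Both the invariance of $D(\Phi(f))$ under $V(g)$ and the identity on the full domain then follow from the observation that the right-hand side is a bounded scalar perturbation of $\Phi(f)$, so it shares its domain; a density/closure argument based on $\Gamma_{\mathrm{s,fin}}(\mathfrak{h})$ being a core for $\Phi(f)$ completes the extension.

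Part 2 has the same architecture. Combining \eqref{eq:SecondQuantisationCreationCommutator} and \eqref{eq:SecondQuantisationAnnihilationCommutator} gives
\begin{align*}
[\diff\Gamma(h),\Pi(g)] = \tfrac{\I}{\sqrt{2}}\bigl([\diff\Gamma(h),a^*(g)] - [\diff\Gamma(h),a(g)]\bigr) = \I\,\Phi(hg).
\end{align*}
Setting $F(t) = V(tg)\diff\Gamma(h)V(-tg)\Psi$ and differentiating as before produces $F'(t) = V(tg)\Phi(hg)V(-tg)\Psi$, which the already-established part 1 (with $f$ replaced by $hg$) collapses to $\Phi(hg)\Psi + t\scp{hg}{g}_\mathfrak{h}\,\Psi$; self-adjointness of $h$ guarantees that $\scp{hg}{g}_\mathfrak{h}$ is real. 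Integrating over $t \in [0,1]$ delivers the stated formula, and the domain argument carries over verbatim from part 1.

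For part 3 I would use the Weyl relations to factor $V(f) = \e^{-\I\theta}V(g)V(f-g)$ with $\theta = \tfrac{1}{2}\Im\scp{f}{g}_\mathfrak{h}$, so that
\begin{align*}
V(f) - V(g) = V(g)\bigl[(\e^{-\I\theta}-1)V(f-g) + (V(f-g) - \mathbb{1})\bigr].
\end{align*}
Unitarity of $V(g)$ drops it from the norm; the first bracketed piece is controlled by $|\e^{-\I\theta}-1| \leq |\theta|$, contributing $\tfrac{1}{2}|\Im\scp{f}{g}_\mathfrak{h}|\,\norm{\Psi}_{\Gamma(\mathfrak{h})}$. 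For the second, $V(f-g) = \e^{\I\Pi(f-g)}$ together with the pointwise bound $|\e^{\I\lambda}-1| \leq |\lambda|$ for $\lambda \in \mathbb{R}$ delivers $\norm{(V(f-g) - \mathbb{1})\Psi}_{\Gamma(\mathfrak{h})} \leq \norm{\Pi(f-g)\Psi}_{\Gamma(\mathfrak{h})}$, the right-hand side being finite because $\Psi \in D(N^{1/2}) \subset D(\Pi(f-g))$ by \eqref{eq:NormEstimateAnnihilation} and \eqref{eq:NormEstimateCreation}.
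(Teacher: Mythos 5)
Your argument is correct in substance but takes a different route from the paper in all three parts, so a comparison is in order. For parts 1 and 2 the paper expands $V(g)^*\Psi$ as a power series in $\Pi(g)$ on the analytic vectors $\Gamma_{\mathrm{s,fin}}(\mathfrak{h})$ and commutes $\Phi(f)$ (resp. $\diff\Gamma(h)$) through $\Pi(g)^k$ term by term, whereas you differentiate $t\mapsto V(tg)(\cdot)V(-tg)\Psi$ and integrate; the algebraic inputs (the commutators \eqref{eq:CanonicalFieldCommutator}, \eqref{eq:SecondQuantisationCreationCommutator}, \eqref{eq:SecondQuantisationAnnihilationCommutator}) are identical, and your reduction of part 2 to part 1 applied to $\Phi(hg)$ is a nice economy the paper does not exploit. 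The one place where your outline is genuinely thinner than the paper's proof is the differentiability of $F(t)$: since $V(-tg)\Psi$ leaves $\Gamma_{\mathrm{s,fin}}(\mathfrak{h})$, justifying the product rule requires controlling $\Phi(f)$ on the difference quotients $h^{-1}(V(-(t+h)g)-V(-tg))\Psi$, which is precisely the bookkeeping the power-series argument avoids; you flag this but do not resolve it, so if you keep the Duhamel route you should either carry out that estimate or first establish the a priori bound $\norm{\Phi(f)V(-tg)\Psi}\leq\norm{\Phi(f)\Psi}+t|{\Re\scp{f}{g}_\mathfrak{h}}|\norm{\Psi}$ on the core. For part 3 your argument is cleaner than the paper's: you factor out $V(g)$ via the Weyl relation and use the spectral bound $\norm{(\e^{\I A}-1)\Psi}\leq\norm{A\Psi}$ together with $|\e^{-\I\theta}-1|\leq|\theta|$, whereas the paper writes the difference as $\int_0^1\frac{\diff}{\diff t}(\e^{-\I t\Pi(g)}\e^{\I t\Pi(f)}\Psi)\diff t$ and commutes $\Pi(f-g)$ past $\e^{\I t\Pi(f)}$; both yield the same constant $\tfrac12|\Im\scp{f}{g}_\mathfrak{h}|$, but your version sidesteps the interchange of the Bochner integral with the derivative and the extra commutator computation.
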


\begin{proof}
	\begin{enumerate}[wide, labelwidth=!, labelindent=0pt]
		\item According to Proposition \ref{prop:FieldOperatorSelfAdjoint} and its proof, every $\Psi \in \Gamma_\mathrm{s,fin}(\mathfrak{h})$ is an analytic vector for the conjugate momentum operator $\Pi(g)$. Hence, we write $\Phi(f) V(g)^* \Psi$ as a power series, that is,
		\begin{align}
			\Phi(f)V(g)^*\Psi = \sum_{k=0}^{\infty} \frac{(-\I)^k}{k!} \Phi(f)\Pi(g)^k \Psi.
		\end{align}
		From the commutation relation \eqref{eq:CanonicalFieldCommutator} and induction, it follows that, for all $k \geq 1$,
		\begin{align}
			[\Phi(f),\Pi(g)^k] = \I k \Re\scp{f}{g}_\mathfrak{h} \Pi(g)^{k-1}.
		\end{align}
		Using this commutator in the power series expansion above, we obtain
		\begin{align}
			\Phi(f)V(g)^* \Psi = V(g)^*(\Phi(f) + \Re\scp{f}{g}_\mathfrak{h}) \Psi \label{eq:WeylProp1}.
		\end{align}
		This proves \eqref{eq:WeylField} on $\Gamma_\mathrm{s,fin}(\mathfrak{h})$. Furthermore, from \eqref{eq:WeylProp1}, we derive the following inequality for $\Psi \in \Gamma_\mathrm{s,fin}(\mathfrak{h})$:
		\begin{align}
			\norm{\Phi(f)V(g)^*\Psi}_{\Gamma(\mathfrak{h})} \leq \norm{\Phi(f) \Psi}_{\Gamma(\mathfrak{h})} + |\Re \scp{f}{g}_\mathfrak{h}| \norm{\Psi}_{\Gamma(\mathfrak{h})}.
			\label{eq:ExtensionInequality}
		\end{align}
		This inequality extends to $\Psi \in D(\Phi(f))$ because $\Gamma_\mathrm{s,fin}(\mathfrak{h})$ is dense in $D(\Phi(f))$.	Because $\Gamma_\mathrm{s,fin}(\mathfrak{h})$ is a core for $\Phi(f)$, there is, for every $\Psi \in D(\Phi(f))$, a sequence $(\Psi_k)_{k\in \mathbb{N}} \subset \Gamma_\mathrm{s,fin}(\mathfrak{h})$ such that $\norm{\Phi(f)(\Psi_k - \Psi)}_{\Gamma(\mathfrak{h})} + \norm{\Psi_k - \Psi}_{\Gamma(\mathfrak{h})} \to 0$ as $k\to\infty$. Hence, $\Phi(f)V(g)^*\Psi_k \to \Phi(f)V(g)^*\Psi$ converges due to \eqref{eq:ExtensionInequality}, and \eqref{eq:WeylProp1} extends to $D(\Phi(f))$. 
		
		Moreover, we deduce that $V(g)^*\Psi \in D(\Phi(f))$ because $\Phi(f)V(g)^*\Psi$ is well defined. If we repeat the same argument but replace $V(g)^*$ with $V(g)$, we obtain $V(g)\Psi \in D(\Phi(f))$ and conclude that $V(g)D(\Phi(f)) = D(\Phi(f))$.
		
		\item The proof is similar to that of the first item; therefore, we only sketch it. For $\Psi \in D(\diff\Gamma(h)) \cap \Gamma_\mathrm{s,fin}(\mathfrak{h})$, we expand $\diff\Gamma(h)V(g)^* \Psi$ in a power series:
		\begin{align}
			\diff\Gamma(h)V(g)^* \Psi = \sum_{k=0}^\infty \frac{(-\I)^k}{k!} \diff\Gamma(h)\Pi(g)^k \Psi.
			\label{eq:WeylSumExpansion}
		\end{align}
		From \eqref{eq:SecondQuantisationCreationCommutator} and \eqref{eq:SecondQuantisationAnnihilationCommutator}, it follows that
		\begin{align}
			[\diff \Gamma(h), \I\Pi(g)] = -\Phi(h g),
		\end{align}
		and, by induction, for every $k\geq 2$,
		\begin{align}
			[\diff \Gamma(h), \I^k\Pi(g)^k] = -k \I^{k-1} \Pi(g)^{k-1} \Phi(h g) + \frac{k(k-1)}{2} \I^{k-2} \Pi(g)^{k-2} \scp{h g}{g}_\mathfrak{h}.
		\end{align}
		Inserting this commutator into \eqref{eq:WeylSumExpansion}, we obtain
		\begin{align}
			\sum_{k=0}^\infty \frac{(-\I)^k}{k!} \diff\Gamma(h)\Pi(g)^k \Psi = V(g)^* (\diff\Gamma(h) - \Phi(h g) + \frac{1}{2}\scp{h g}{g}_\mathfrak{h})\Psi.
		\end{align}
		
		\item For $\Psi \in D(N^{1/2})$, the following equality holds true:
		\begin{align}
			&\norm{(\e^{\I\Pi(f)} - \e^{\I\Pi(g)})\Psi}_{\Gamma(\mathfrak{h})} = \norm{(\e^{-\I\Pi(g)}\e^{\I\Pi(f)} - 1)\Psi}_{\Gamma(\mathfrak{h})} \notag \\
			&\ \ \ = \norm{\int_{0}^{1} \frac{\diff}{\diff t} \left( \e^{-\I t\Pi(g)}\e^{\I t\Pi(f)} \Psi \right) \diff t}_{\Gamma(\mathfrak{h})} \notag \\
			&\ \ \ = \norm{\int_{0}^{1} \e^{-\I t\Pi(g)} \Pi(f-g) \e^{\I t\Pi(f)} \Psi \diff t}_{\Gamma(\mathfrak{h})}.
		\end{align}
		From \eqref{eq:CommutationRelationConjugateMomentum}, we derive the commutator
		\begin{align}
			[\Pi(f-g), \e^{\I t \Pi(f)}] = \e^{\I t \Pi(f)} t \Im \scp{f-g}{f}.
		\end{align}
		Inserting this commutator into the above equality, we obtain the claim:
		\begin{align}
			\norm{(\e^{\I\Pi(f)} - \e^{\I\Pi(g)})\Psi}_{\Gamma(\mathfrak{h})} &= \norm{\int_{0}^{1} \e^{-\I t\Pi(g)} \e^{\I t\Pi(f)} (\Pi(f-g) + t \Im \scp{f-g}{f}) \Psi \diff t}_{\Gamma(\mathfrak{h})} \notag \\
			& \ \ \ \leq \int_0^1 \norm{\Pi(f-g)\Psi + t \Im\scp{f-g}{f}_\mathfrak{h} \Psi}_{\Gamma(\mathfrak{h})} \diff t \notag \\
			&\ \ \ \leq \norm{\Pi(f-g)\Psi}_{\Gamma(\mathfrak{h})} + \frac{1}{2}|\Im \scp{f}{g}_\mathfrak{h}| \norm{\Psi}_{\Gamma(\mathfrak{h})}.
		\end{align}
	\end{enumerate}
\end{proof}

\section{Nelson Model with Constant Coefficients} 
\label{sec:NelsonModelConstantCoefficients}

After introducing the Fock space formalism, we can now describe the Nelson model mathematically. As explained earlier, the Nelson model describes the coupling of nonrelativistic quantum particles to a relativistic field of bosons. We only discuss the case of a single nonrelativistic particle (referred to simply as 'particle' in the following text) because the general case offers no more difficulties rather than complicating the notation. Furthermore, we assume that the bosonic field mass $m > 0$ is strictly positive to avoid infrared divergences.

The underlying Hilbert space $\mathfrak{H}$ of the Nelson model is a tensor product space. It consists of the Hilbert space $\mathfrak{K} = L^2(\mathbb{R}^3, \diff X)$ for the particle and the symmetric Fock space $\Gamma_\mathrm{s}(\mathfrak{h})$ with $\mathfrak{h} = L^2(\mathbb{R}^3, \diff x)$ for the field:
\begin{align}
	\mathfrak{H} = \mathfrak{K} \otimes \Gamma_\mathrm{s}(\mathfrak{h}) = L^2(\mathbb{R}^3, \diff X) \otimes \Gamma_\mathrm{s}(L^2(\mathbb{R}^3, \diff x)). 
\end{align} 
The $n$-boson subspaces are denoted by $\mathfrak{H}^{(n)} = \mathfrak{K} \otimes \mathfrak{h}^{\otimes_\mathrm{s}n}$.

The free particle Hamiltonian $-\Delta_X$ and the free field Hamiltonian $\diff \Gamma(\omega)$ with dispersion relation $\omega = \sqrt{-\Delta + m^2}$ (see Example~\ref{exmp:FreeFieldHamiltonian}) combine into the \textbf{free Nelson Hamiltonian}:
\begin{align}
	H_{0} = -\Delta_X \otimes \mathbb{1} + \mathbb{1} \otimes \diff \Gamma(\omega).
\end{align}
Because it is the sum of two commuting self-adjoint operators, $H_0$ is self-adjoint on $D(H_0) = D(-\Delta_X \otimes \mathbb{1}) \cap D(\mathbb{1} \otimes \diff\Gamma(\omega))$.

If we formally perturb the free Nelson Hamiltonian by the field operator $\Phi(\omega^{-\frac{1}{2}} \delta_X)$ defined in \eqref{eq:FieldOperator}, where $\delta_X$ is the Dirac delta centred at the particle position $X$, we  obtain the formal \textbf{Nelson Hamiltonian}:
\begin{align}
	H = -\Delta_X \otimes \mathbb{1} + \mathbb{1} \otimes \diff \Gamma(\omega) + \Phi(\omega^{-\frac{1}{2}} \delta_X).
	\label{eq:NelsonHamiltonian}
\end{align}
However, the formal Nelson Hamiltonian is not a well-defined operator. The problem is that the form factor $\omega^{-\frac{1}{2}} \delta_X$ is not an element of $\mathfrak{h} = L^2(\mathbb{R}^3, \diff x)$. If the creation operator $a^*(\omega^{-\frac{1}{2}}\delta_X)$ is defined as in \eqref{def:CreationOperator} even though $\omega^{-\frac{1}{2}}\delta_X \notin \mathfrak{h}$, then its natural domain is $D(a^*(\omega^{-\frac{1}{2}}\delta_X)) = \{ 0 \}$. Hence, the field operator $\Phi(\omega^{-\frac{1}{2}}\delta_X)$ is not densely defined, and $H$ cannot be self-adjoint as a sum of self-adjoint operators.

\begin{rem}
	The annihilation operator is typically easier to handle because the scalar product $\scp{\omega^{-\frac{1}{2}}\delta_X}{f}_\mathfrak{h}$ in the definition of the annihilation operator may be well defined for a dense set of $f\in \mathfrak{h}$, even if $\omega^{-\frac{1}{2}}\delta_X \notin \mathfrak{h}$. 
	If $a(\omega^{-\frac{1}{2}}\delta_X)$ were defined on $D(H_0^{1/2})$, a common method of interpreting \eqref{eq:NelsonHamiltonian} would be in the sense of quadratic forms. However, this is not the case in the Nelson model.
\end{rem}

The Nelson model has yet to be regularised. One possibility is to replace the singular delta function with a smoother function $\rho_\Lambda$ which depends on the scale $\Lambda$ in such a way that $\rho_{\Lambda}$ approaches the delta function for large $\Lambda$. We realise this as follows. Assume that $0 \leq \rho \in \mathcal{S}(\mathbb{R}^3)$ is a nonnegative symmetric Schwartz function with $\int_{\mathbb{R}^3} \rho = 1$ and set
\begin{align}
	\rho_{\Lambda,X}(x) = \Lambda^3 \rho(\Lambda(x-X)), \ \rho_{\Lambda}(x) = \rho_{\Lambda,0}(x).
	\label{eq:CutOffFunction}
\end{align}
Then, for every $s<-3/2$, $\rho_{\Lambda,X}$ converges in $H^{s}(\mathbb{R}^3)$ to $\delta_X$ uniformly in $X$:
\begin{align}
	\norm{\rho_{\Lambda,X}-\delta_X}_{H^{s}}^2 &= \norm{(1+|\cdot|^2)^{\frac{s}{2}}(\widehat{\rho}_{\Lambda} -1)}_{L^2}^2 \notag \\
	&= \int_{\mathbb{R}^3} \frac{\left|\widehat{\rho} \left(\xi/\Lambda\right)-1\right|^2 }{(1+|\xi|^2)^{-s}} \diff \xi \stackrel{\Lambda\to\infty}{\longrightarrow} 0.
	\label{eq:ConvergenceCutOffToDelta}
\end{align}
The convergence in the last step is justified because $\widehat{\rho}(\xi/\Lambda) \to 1$ pointwise and because $s<-3/2$ is sufficiently small to apply the dominated convergence theorem.

The \textbf{cut-off Hamiltonian} $H_\Lambda$ is obtained by replacing $\delta_X$ with $\rho_{\Lambda,X}$:
\begin{align}
	H_{\Lambda} = -\Delta_X \otimes \mathbb{1} + \mathbb{1} \otimes \diff \Gamma(\omega) + \Phi(\omega^{-\frac{1}{2}}\rho_{\Lambda,X}).
\end{align} 
Physically, this replacement suppresses energies above a certain scale $\Lambda$, which are beyond the validity of the model. Therefore, $\rho_\Lambda$ is called an \textbf{ultraviolet cut-off}. Equivalently, the cut-off can be interpreted as assigning a nonzero radius to particles that are usually point-like objects. 

The following proposition proves that $H_{\Lambda}$ is a well-defined self-adjoint operator.

\begin{prop}
	The cut-off Hamiltonian $H_\Lambda$ is self-adjoint on $D(H_\Lambda) = D(H_0)$.
	\label{prop:CutOffHamiltonianSelfAdjoint}
\end{prop}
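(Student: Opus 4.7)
The plan is to apply the Kato--Rellich theorem: I will show that the perturbation $\Phi(\omega^{-1/2}\rho_{\Lambda,X})$ is symmetric on $D(H_0)$ and has infinitesimal relative bound with respect to $H_0$. Self-adjointness on $D(H_0)$ then follows immediately. Symmetry is essentially inherited from Proposition~\ref{prop:FieldOperatorSelfAdjoint}: for each fixed $X \in \mathbb{R}^3$ the form factor $\omega^{-1/2}\rho_{\Lambda,X}$ lies in $\mathfrak{h}$ (since $\rho \in \mathcal{S}(\mathbb{R}^3)$ and $\omega \geq m > 0$), so $\Phi(\omega^{-1/2}\rho_{\Lambda,X})$ is a densely defined symmetric operator on the fibre $\Gamma_\mathrm{s}(\mathfrak{h})$, and the total operator on $\mathfrak{H}$, viewed as a fibred operator with respect to the decomposition $\mathfrak{H} = L^2(\mathbb{R}^3,\mathfrak{K};\Gamma_\mathrm{s}(\mathfrak{h}))$, inherits this symmetry on $D(H_0)$.

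The heart of the proof is the relative bound. For each fixed $X$, I will apply Lemma~\ref{lem:EstimatesAnnihilationCreation} with $h = \omega$ and $\alpha = 1/2$ to obtain
\begin{align*}
    \norm{\Phi(\omega^{-\frac{1}{2}}\rho_{\Lambda,X}) \Psi(X,\cdot)}_{\Gamma(\mathfrak{h})} &\leq \sqrt{2}\, \norm{\omega^{-1}\rho_{\Lambda,X}}_{\mathfrak{h}} \norm{\diff\Gamma(\omega)^{\frac{1}{2}}\Psi(X,\cdot)}_{\Gamma(\mathfrak{h})} \\
    &\quad + \tfrac{1}{\sqrt{2}}\norm{\omega^{-\frac{1}{2}}\rho_{\Lambda,X}}_{\mathfrak{h}} \norm{\Psi(X,\cdot)}_{\Gamma(\mathfrak{h})}.
\end{align*}
The crucial point is that the prefactors are independent of $X$: passing to Fourier space shows $|\widehat{\rho_{\Lambda,X}}(\xi)| = |\widehat{\rho}(\xi/\Lambda)|$, whence
\begin{align*}
    \norm{\omega^{-\alpha}\rho_{\Lambda,X}}_{\mathfrak{h}}^2 = \int_{\mathbb{R}^3}\omega(\xi)^{-2\alpha}|\widehat{\rho}(\xi/\Lambda)|^2 \diff\xi < \infty
\end{align*}
for $\alpha \in \{1/2,1\}$, uniformly in $X$. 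Squaring and integrating over $X$ therefore gives a bound of the form
\begin{align*}
    \norm{\Phi(\omega^{-\frac{1}{2}}\rho_{\Lambda,X})\Psi}_{\mathfrak{H}} \leq C_1 \norm{\diff\Gamma(\omega)^{\frac{1}{2}}\Psi}_{\mathfrak{H}} + C_2 \norm{\Psi}_{\mathfrak{H}}
\end{align*}
with constants $C_1, C_2$ depending on $\Lambda$ but not on $\Psi$.

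To turn this into an $H_0$-bound with arbitrarily small relative constant, I will use that $\diff\Gamma(\omega) \geq 0$ and $-\Delta_X \geq 0$ commute and sum to $H_0$, so $\diff\Gamma(\omega) \leq H_0$ as quadratic forms. Hence $\norm{\diff\Gamma(\omega)^{1/2}\Psi}^2 \leq \scp{\Psi}{H_0\Psi} \leq \norm{H_0\Psi}\norm{\Psi}$, and Young's inequality $ab \leq \varepsilon a^2 + (4\varepsilon)^{-1}b^2$ yields, for any $\varepsilon > 0$,
\begin{align*}
    \norm{\Phi(\omega^{-\frac{1}{2}}\rho_{\Lambda,X})\Psi}_{\mathfrak{H}} \leq \varepsilon \norm{H_0 \Psi}_{\mathfrak{H}} + C(\varepsilon,\Lambda) \norm{\Psi}_{\mathfrak{H}}.
\end{align*}
Kato--Rellich then concludes the argument. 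The main obstacle I anticipate is bookkeeping rather than analysis: care is needed to verify that the $X$-fibred estimate really integrates to an estimate on $\mathfrak{H}$ with the correct domain, i.e.\ that $\Psi \in D(H_0)$ implies that $X \mapsto \Psi(X,\cdot)$ takes values in $D(\diff\Gamma(\omega)^{1/2})$ almost everywhere with the right $L^2$-integrability in $X$. This is standard for the fibred construction of $\mathbb{1}\otimes\diff\Gamma(\omega)$ but is the only spot where one has to be careful about domain issues.
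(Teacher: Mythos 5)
Your proposal is correct and follows essentially the same route as the paper: an infinitesimal relative bound on $\Phi(\omega^{-1/2}\rho_{\Lambda,X})$ obtained from Lemma \ref{lem:EstimatesAnnihilationCreation} and the uniform-in-$X$ finiteness of the form-factor norms, followed by Kato--Rellich. The only (immaterial) difference is that you take $h=\omega$, $\alpha=1/2$ and dominate $\diff\Gamma(\omega)\leq H_0$, whereas the paper takes $h=\mathbb{1}$ and dominates $N\leq H_0/m$; both yield the same $\epsilon\norm{H_0\Psi}+C_\epsilon\norm{\Psi}$ bound.
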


\begin{proof}
	Observe that the number operator $N$ is dominated by $H_0/m$ due to $\diff\Gamma(\mathbb{1}) \leq \diff\Gamma(\omega)/m$. Furthermore, \eqref{eq:NormEstimateAnnihilation} and \eqref{eq:NormEstimateCreation} imply that $D(N^{1/2}) \subset D(\Phi(\omega^{-\frac{1}{2}}\rho_{\Lambda,X}))$. Thus, $D(H_0) \subset D(H_0^{1/2}) \subset D(N^{1/2}) \subset D(\Phi(\omega^{-\frac{1}{2}}\rho_{\Lambda,X}))$.
	
	According to the Kato--Rellich theorem, it suffices to prove that the perturbation $\Phi(\omega^{-\frac{1}{2}}\rho_{\Lambda,X})$ is relatively bounded with respect to $H_0$ and that the relative bound is strictly smaller than 1. Again, from \eqref{eq:NormEstimateAnnihilation} and \eqref{eq:NormEstimateCreation}, it follows that, for every $\epsilon > 0$,
	\begin{align}
		\norm{\Phi(\omega^{-\frac{1}{2}}\rho_{\Lambda,X})\Psi}_{\mathfrak{H}} &\leq \sup_{X\in \mathbb{R}^3} \norm{\omega^{-\frac{1}{2}}\rho_{\Lambda,X}}_\mathfrak{h} \norm{(N+1)^{\frac{1}{2}} \Psi}_{\mathfrak{H}} \notag \\
		&\leq \sup_{X\in \mathbb{R}^3} \norm{\omega^{-\frac{1}{2}}\rho_{\Lambda,X}}_\mathfrak{h} (\epsilon \norm{H_0\Psi}_{\mathfrak{H}} + C_\epsilon \norm{\Psi}_{\mathfrak{H}}).
		\label{eq:EstimatePertubation}
	\end{align}	
	The factor $\sup_X \norm{\omega^{-\frac{1}{2}}\rho_{\Lambda,X}}_\mathfrak{h} = \norm{\omega^{-\frac{1}{2}}\rho_{\Lambda}}_\mathfrak{h}$ is finite due to the cut-off $\rho_\Lambda$. In the second inequality, we used $N\leq H_0/m$ and Young's inequality. Choosing an $\epsilon$ that is small enough yields the desired bound.
\end{proof}

The cut-off $\rho_\Lambda$ ensures that the cut-off Hamiltonian is well defined and self-adjoint. However, the choice of a particular cut-off is arbitrary, and it is expected that relevant physical quantities (e.g. energy levels) heavily depend on how the cut-off is chosen.

The actual problem we face in the Nelson model without the cut-off is that the Nelson Hamiltonian has infinite vacuum energy, whereas the cut-off Hamiltonian has finite vacuum energy $-E_\Lambda$. It makes sense to remove the infinite vacuum energy as we are usually interested in energy differences rather than absolute values. Nelson \cite{nelson1964} has proven the following theorem, which provides a decent candidate for a well-defined Nelson Hamiltonian.

\begin{thm}[Nelson]
	There are constants $E_\Lambda$ such that $H_\Lambda + E_\Lambda$ converges in the strong resolvent sense to a self-adjoint bounded from below operator $H$ as $\Lambda \to \infty$. 
	\label{thm:Nelson}
\end{thm}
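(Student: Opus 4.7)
I would follow Nelson's original strategy via a Gross-type dressing transformation: construct a family of unitary Weyl operators $U_\Lambda = V(\beta_{\Lambda,X})$ that absorbs the ultraviolet-singular part of the interaction under unitary conjugation, so that $U_\Lambda H_\Lambda U_\Lambda^* + E_\Lambda$ converges to a self-adjoint operator $\tilde H$ in the norm resolvent sense; strong resolvent convergence of $H_\Lambda + E_\Lambda$ is then extracted from this via an additional strong limit of the dressing itself.

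The dressing function would be chosen as
$$\beta_{\Lambda,X}(\xi) := -\frac{\widehat{\omega^{-\frac{1}{2}}\rho_{\Lambda,X}}(\xi)}{2\omega(\xi)+|\xi|^2},$$
where the denominator is the energy denominator of the one-boson emission process (including the particle recoil). Applying Proposition~\ref{prop:PropertiesWeylOperator}, conjugation of $\diff\Gamma(\omega)$ produces a field operator $\Phi(\omega\beta_{\Lambda,X})$ together with a scalar $\tfrac12 \scp{\omega\beta_{\Lambda,X}}{\beta_{\Lambda,X}}_\mathfrak{h}$, and conjugation of $\Phi(\omega^{-\frac{1}{2}}\rho_{\Lambda,X})$ produces a second scalar $\Re\scp{\omega^{-\frac{1}{2}}\rho_{\Lambda,X}}{\beta_{\Lambda,X}}_\mathfrak{h}$; both scalars diverge logarithmically in $\Lambda$, and their combination defines $-E_\Lambda$. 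By construction $\Phi(\omega\beta_{\Lambda,X})$ cancels the leading part of $\Phi(\omega^{-\frac{1}{2}}\rho_{\Lambda,X})$, leaving a residual field with an $L^2$-convergent form factor. Conjugation of $-\Delta_X$ produces extra commutator contributions because $\beta_{\Lambda,X}$ depends on the particle position, giving terms of the form $\Pi(\nabla_X \beta_{\Lambda,X})$ and $\norm{\nabla_X\beta_{\Lambda,X}}^2_\mathfrak{h}$. Combining Lemma~\ref{lem:EstimatesAnnihilationCreation} with the high-momentum decay $\beta_{\Lambda,X}(\xi) \lesssim |\xi|^{-5/2}$, one then shows that the full remainder $R_\Lambda := U_\Lambda H_\Lambda U_\Lambda^* + E_\Lambda - H_0$ is $H_0$-bounded with arbitrarily small relative bound, uniformly in $\Lambda$, which gives self-adjointness on $D(H_0)$ and a uniform lower bound by Kato--Rellich.

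To conclude, $L^2$-convergence of the various form factors implies $R_\Lambda \to R_\infty$ strongly on $D(H_0)$, hence norm resolvent convergence to $\tilde H := H_0 + R_\infty$. The function $\beta_{\Lambda,X}$ itself converges in $L^2(\mathbb{R}^3,\diff x)$ uniformly in $X$ by dominated convergence (the limit decays like $|\xi|^{-5/2}$ and is therefore square-integrable), and item~\ref{it:PropWeylOperator3} of Proposition~\ref{prop:PropertiesWeylOperator} then yields strong convergence $U_\Lambda \Psi \to U_\infty \Psi$ for $\Psi \in D(N^{1/2})$. Combining these two convergences, $H_\Lambda + E_\Lambda$ converges in the strong resolvent sense to $H := U_\infty^* \tilde H U_\infty$, which is self-adjoint and bounded below. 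The principal technical obstacle is the treatment of the kinetic-energy commutators: because $\nabla_X \beta_{\Lambda,X}(\xi) = \I\xi\beta_{\Lambda,X}(\xi)$ decays only like $|\xi|^{-3/2}$, the scalar $\norm{\nabla_X\beta_{\Lambda,X}}^2_\mathfrak{h}$ is itself logarithmically divergent and must be shown to combine with the other divergent scalars into a single counterterm $E_\Lambda$, while the operator $\Pi(\nabla_X\beta_{\Lambda,X})$ has to be controlled against $H_0^{1/2}$ using both the particle and the field kinetic energies simultaneously---a balance that dictates the precise choice of the denominator $2\omega(\xi)+|\xi|^2$ in $\beta_{\Lambda,X}$.
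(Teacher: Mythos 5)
Your overall architecture is the same as the paper's: a Gross transformation $U_\Lambda = V(\beta_{\Lambda,X})$ with dressing function essentially $-(K+\omega)^{-1}\omega^{-1/2}\rho_{\Lambda,X}$ (your $2\omega(\xi)+|\xi|^2$ versus the paper's $\omega(\xi)+|\xi|^2$ is immaterial), extraction of the divergent scalar $E_\Lambda$, control of the transformed remainder relative to $H_0$, and a final untransformation using the strong convergence of the Weyl operators. Two small inaccuracies along the way: the scalar $\tfrac12\scp{\omega\beta_{\Lambda,X}}{\beta_{\Lambda,X}}_\mathfrak{h}$ is actually convergent (only the cross term and the kinetic scalar $\norm{\nabla_X\beta_{\Lambda,X}}^2$ diverge logarithmically), and $\Phi(\omega\beta_{\Lambda,X})$ alone does \emph{not} cancel the singular form factor --- since $\omega\beta_{\Lambda,X}+\omega^{-1/2}\rho_{\Lambda,X}$ still decays only like $|\xi|^{-1/2}$, the cancellation requires the additional field operator $\Phi(-\Delta_X\beta_{\Lambda,X})$ produced by conjugating the kinetic term, which is precisely why $|\xi|^2$ sits in your denominator.

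The genuine gap is the Kato--Rellich step. The transformed remainder contains $a^*(\nabla_X\beta_{\Lambda,X})\cdot\nabla_X$, and $\nabla_X\beta_{\Lambda,X}(\xi)\sim|\xi|^{-3/2}$ gives $\norm{\nabla_X\beta_{\Lambda,X}}^2_\mathfrak{h}\sim\log\Lambda$. Since $\norm{a^*(f)\Phi}^2=\norm{a(f)\Phi}^2+\norm{f}^2\norm{\Phi}^2$, testing on $\Psi=\psi\otimes\Omega$ shows $\norm{a^*(\nabla_X\beta_{\Lambda,X})\cdot\nabla_X\Psi}\sim\sqrt{\log\Lambda}\,\norm{\nabla\psi}$, so $R_\Lambda$ is \emph{not} $H_0$-bounded uniformly in $\Lambda$, and the limit $R_\infty$, containing $a^*(f)$ with $f\notin L^2$, is not densely defined as an operator at all (its natural domain is $\{0\}$, as the paper notes for such creation operators). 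Consequently "self-adjointness on $D(H_0)$ by Kato--Rellich" and "$R_\Lambda\to R_\infty$ strongly on $D(H_0)$" both fail; this is not a technicality but the central difficulty of the renormalisation. The paper instead controls $R_\Lambda$ as a quadratic form on $D(H_0^{1/2})$, where only annihilation-operator bounds with the $\omega^{-1/2}$ weight are needed, and closes the argument with the KLMN-type Proposition \ref{prop:KLMN}; it also inserts an intermediate infrared cut-off $\sigma$ into $\beta_{\Lambda,X}$, without which the form bound cannot be pushed below $1$. You would need to replace your operator-norm step by this form argument (and concede that the limit operator's domain is only contained in $D(H_0^{1/2})$, not equal to $D(H_0)$) for the proof to go through.
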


A sequence $(A_n)_{n\in\mathbb{N}}$ of self-adjoint operators on $\mathfrak{H}$ converges in the strong resolvent sense to a self-adjoint operator $A$ if $(A_n - \lambda)^{-1}\Psi \to (A - \lambda)^{-1}\Psi$ for every (or equivalently for one) $\lambda \in \mathbb{C}\backslash\mathbb{R}$ and every $\Psi \in \mathfrak{H}$.

The operator $H$ from Nelson's theorem is the \textbf{renormalised Nelson Hamiltonian}. We prove Nelson's theorem in Chapter \ref{ch:RemovalCutOff} for the Nelson model on static spacetimes, which we introduce in the next section.

\begin{rem}
	A sequence of self-adjoint operators converges in the strong resolvent sense if and only if the associated unitary groups converge strongly (see \cite[Thm.~VIII.21]{reed1}). That is, the convergence in Theorem \ref{thm:Nelson} is equivalent to
	\begin{align}
		\e^{\I t(H_\Lambda+E_\Lambda)} \Psi \to \e^{\I t H}\Psi
	\end{align}
	as $\Lambda \to \infty$ for all $t \in \mathbb{R}$ and $\Psi \in \mathfrak{H}$.	Although  $\e^{\I t H_\Lambda}\Psi$ diverges as $\Lambda \to \infty$, the state determined by it converges in the sense that
	\begin{align}
		\scp{\e^{\I t H_\Lambda}\Psi}{A \e^{\I t H_\Lambda}\Psi} \to \scp{\e^{\I t H}\Psi}{A \e^{\I t H}\Psi}
	\end{align}
	for any bounded self-adjoint operator $A$.
\end{rem}

\begin{rem}
	Recently, Griesemer and Wünsch \cite{griesemer2018} proved that the convergence of $H_\Lambda + E_\Lambda$ in Theorem \ref{thm:Nelson} also holds in the norm resolvent sense.
\end{rem}

In summary, we defined the formal Nelson Hamiltonian in \eqref{eq:NelsonHamiltonian}, realised that the Hamiltonian is not well defined and that putting a cut-off on the interaction solves the problem of self-adjointness. Then, we argued that the cut-off can be removed by extracting the divergent vacuum energy $-E_\Lambda$ of the system.

It remains an open question whether we have any chance of describing the renormalised Hamiltonian $H$ because many interesting properties of the cut-off Hamiltonians $H_\Lambda$ are not expected to survive the limit $\Lambda \rightarrow \infty$. One example is that $D(H_\Lambda) = D(H_0)$ for all $\Lambda < \infty$, but $D(H_0) \cap D(H) = \{0\}$. This property was conjectured by Nelson and proven by Griesemer and Wünsch \cite{griesemer2018} without describing $D(H)$ explicitly. 

In Chapter \ref{ch:IBC}, we obtain  an explicit description of the renormalised Nelson Hamiltonian and its domain due to the novel \textbf{IBC  method} introduced by Teufel and Tumulka \cite{teufel2016, teufel2020}. Lampart and Schmidt \cite{lampart2019} already demonstrated that the IBC method applies to the Nelson-type Hamiltonians. The present work aims to generalise these recently obtained results to the Nelson model on static spacetimes, also known as the Nelson model with variable coefficients.

\section{Nelson Model with Variable Coefficients}
\label{sec:NelonModelVariableCoeff}

The Nelson model with variable coefficients is an extension of the Nelson model, where the Laplace operator is replaced with a second-order partial differential operator with variable coefficients. This extension describes the Nelson model on static spacetimes, which is motivated by the fact that the correct generalisation of the Laplace operator to pseudo-Rie\-mann\-ian manifolds is the \textbf{Laplace--Beltrami operator}. The Nelson Hamiltonian with variable coefficients was introduced in \cite{gerard2011, gerard2012}, and we adopt some of the notation used in these papers.

We begin with redefining the \textbf{free particle Hamiltonian} $K_0$ by
\begin{align}
	K_0 = -\sum_{1\leq j,k\leq 3} \partial_{X_j} g^{jk}(X) \partial_{X_k} = -\partial_X \cdot g(X) \partial_X,
	\label{eq:K0}
\end{align}
where the matrix $g(X) := [g^{jk}(X)]$ is assumed to be symmetric for every $X\in \mathbb{R}^3$, which ensures that $K_0$ is symmetric. However, to obtain self-adjointness, we must impose further regularity conditions on $g(X)$. Later, we assume that the matrix coefficients $g^{jk}$ are smooth, but for now it suffices to assume that they are Lipschitz continuous. Additionally, we impose the following \textbf{uniform ellipticity} condition on the matrix $g(X)$:
\begin{align}
	C_0 \leq g(X) \leq C_1,\ C_0 > 0. 
	\label{eq:EllipticityCondition}
\end{align}
The assumption of ellipticity is useful for applying well-known results from the theory of elliptic operators. Furthermore, it guarantees that $K_0$ is 'similar' to the Laplacian, which is easier to handle than $K_0$.

\begin{prop}
	Assume that, for every $X\in\mathbb{R}^3$, the matrix $g(X)$ is symmetric. Furthermore, assume that $g$ is uniform elliptic and that its matrix coefficients are Lipschitz continuous. Then $K_0$ is self-adjoint on $D(K_0) = D(-\Delta)$.
	\label{prop:K0SelfAdjoint}
\end{prop}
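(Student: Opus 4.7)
The plan is to construct $K_0$ as a self-adjoint operator via its quadratic form and then invoke elliptic regularity to identify the operator domain with $H^2(\mathbb{R}^3) = D(-\Delta)$. The form method is preferable to a direct application of Kato--Rellich with $-\Delta$ as unperturbed operator, since the difference $K_0 - (-\Delta)$ is of the same order as $-\Delta$ and hence is not infinitesimally $-\Delta$-bounded unless $g$ is close to the identity.

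\emph{Step 1 (form construction).} I would introduce the sesquilinear form
\begin{equation*}
q(u,v) = \sum_{j,k=1}^{3} \int_{\mathbb{R}^3} g^{jk}(X)\,\overline{\partial_{X_j} u(X)}\,\partial_{X_k} v(X)\,\diff X
\end{equation*}
with form domain $Q(q) = H^1(\mathbb{R}^3)$. Symmetry of $g(X)$ renders $q$ Hermitian, while the uniform ellipticity condition \eqref{eq:EllipticityCondition} gives $C_0\norm{\nabla u}_{L^2}^2 \leq q(u) \leq C_1\norm{\nabla u}_{L^2}^2$. Hence $q$ is nonnegative, its form norm is equivalent to the $H^1$-norm, and $q$ is closed. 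By the form representation theorem there is a unique nonnegative self-adjoint operator $\widetilde{K}_0$ with form domain $H^1(\mathbb{R}^3)$ satisfying $\scp{u}{\widetilde{K}_0 v} = q(u,v)$ for $u \in H^1(\mathbb{R}^3)$ and $v \in D(\widetilde{K}_0)$.

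\emph{Step 2 (identification on $H^2$).} Because the $g^{jk}$ are Lipschitz, their weak derivatives $\partial_{X_j} g^{jk}$ lie in $L^\infty(\mathbb{R}^3)$. For $v \in H^2(\mathbb{R}^3)$ and $\phi \in C_c^\infty(\mathbb{R}^3)$, integration by parts yields $q(\phi,v) = \scp{\phi}{K_0 v}$, where $K_0 v = -\sum g^{jk}\partial_{X_j}\partial_{X_k} v - \sum (\partial_{X_j} g^{jk})\partial_{X_k} v \in L^2(\mathbb{R}^3)$. Density of $C_c^\infty(\mathbb{R}^3)$ in $H^1(\mathbb{R}^3)$ together with the characterisation of $\widetilde{K}_0$ through $q$ then gives $H^2(\mathbb{R}^3) \subset D(\widetilde{K}_0)$ with $\widetilde{K}_0 v = K_0 v$ on this subspace.

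\emph{Step 3 (elliptic regularity, the main obstacle).} The converse inclusion $D(\widetilde{K}_0) \subset H^2(\mathbb{R}^3)$ is the technical heart of the proof. It amounts to the classical global $H^2$ regularity theorem for second-order uniformly elliptic operators on $\mathbb{R}^3$ with $W^{1,\infty}$ coefficients: every $u \in H^1(\mathbb{R}^3)$ with $\widetilde{K}_0 u = f \in L^2(\mathbb{R}^3)$ belongs to $H^2(\mathbb{R}^3)$ and satisfies $\norm{u}_{H^2} \leq C(\norm{f}_{L^2} + \norm{u}_{L^2})$. The standard proof proceeds by Nirenberg's difference quotient method: one tests the weak equation $q(u,\cdot) = \scp{f}{\cdot}$ against difference quotients of $u$, commutes the translations past the Lipschitz coefficients at the cost of an $L^\infty$-error, and uses uniform ellipticity to absorb the resulting leading-order term, producing bounds on $\norm{\nabla u}_{H^1}$ uniform in the step size. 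Once this estimate is granted, $D(\widetilde{K}_0) = H^2(\mathbb{R}^3) = D(-\Delta)$ and $\widetilde{K}_0 = K_0$, so $K_0$ is self-adjoint on $D(-\Delta)$.
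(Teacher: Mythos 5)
Your proposal is correct and follows essentially the same route as the paper: construct the operator via the closed nonnegative quadratic form on $H^1(\mathbb{R}^3)$, note that the inclusion $H^2(\mathbb{R}^3)\subset D(\widetilde{K}_0)$ is immediate, and obtain the converse inclusion from global $H^2$ elliptic regularity for uniformly elliptic operators with Lipschitz coefficients (the paper cites H\"ormander's Theorem 17.2.7 where you sketch the difference-quotient argument). The only difference is one of presentation, not of substance.
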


\begin{proof}
	We define a quadratic form $q:H^1(\mathbb{R}^3) \times H^1(\mathbb{R}^3) \to \mathbb{C}$ by
	\begin{align}
		q(u,v) = \sum_{1\leq j,k\leq 3} \scp{\partial_{X_j} v}{g^{jk}(X) \partial_{X_k} u} = \scp{\partial_X u}{g(X)\partial_X v}.
	\end{align}
	The strategy of this proof is to associate a self-adjoint operator $B$ with the quadratic form $q$ and prove that $B=K_0$. Clearly, $q$ is bounded from below by 0:
	\begin{align}
		q(u,u) = \scp{\partial_X u}{g(X)\partial_X u} \geq C_0 \norm{\partial_X u}^2 \geq 0.
	\end{align}
	Furthermore, $q$ is closed, which follows from the fact that the quadratic form norm
	\begin{align}
		\norm{u}_q^2 = q(u,u) + \norm{u}^2
	\end{align}
	is equivalent to the Sobolev $H^1$-norm by condition \eqref{eq:EllipticityCondition}. Thus, by a well-known theorem in the theory of quadratic forms (see \cite[Thm.~2.14]{teschl2014}), a unique self-adjoint operator $B\geq 0$ exists  with form domain $Q(B) = H^1(\mathbb{R}^3)$ such that $q(u,u) = \norm{B^{\frac{1}{2}}u}^2$. More explicitly, $B$ is given by
	\begin{align}
		D(B) &= \{u \in H^1(\mathbb{R}^3) \mid \exists f \in L^2(\mathbb{R}^3), q(u,v) = \scp{f}{v} \ \forall v \in H^1(\mathbb{R}^3)\} \\
		Bu &= f.
	\end{align}	
	The condition $q(u,v) = \scp{f}{v}$ for all $v \in H^1(\mathbb{R}^3)$ is equivalent to calling $u\in H^1(\mathbb{R}^3)$ a weak solution of the partial differential equation $K_0u = f$. From the elliptic regularity theorem (see \cite[Thm.~17.2.7]{hoermander1985} and its proof) we deduce that $u \in H^2(\mathbb{R}^3)$. To apply \cite[Thm.~17.2.7]{hoermander1985}, $g(X)$ must be symmetric and its matrix coefficients Lipschitz continuous.
	
	Hence, $D(B)$ is contained in $H^2(\mathbb{R}^3)$. As the other inclusion is trivial (for $u\in H^2(\mathbb{R}^3)$, choose $f=K_0u$), $D(B) = H^2(\mathbb{R}^3)$. On $H^2(\mathbb{R}^3)$, the action of $B$ and $K_0$ coincide; thus, $B=K_0$.
\end{proof}

We add an exterior particle potential $W(X)$ to the free particle Hamiltonian $K_0$. This potential should be chosen such that $K_0 + W(X)$ is essentially self-adjoint. The self-adjoint closure of this operator is denoted by $K$. 

The free field Hamiltonian on static spacetimes is defined similarly  to the free particle Hamiltonian. We set
\begin{align}
	h_0 &= -\sum_{1\leq j,k \leq 3} \partial_{x_j} g^{jk}(x) \partial_{x_k} = -\partial_x \cdot g(x)\partial_x, \\
	h &= h_0 + \mu(x)^2.
\end{align}
We allow for a position-dependent field mass $0\leq \mu \in L^\infty(\mathbb{R}^3)$. Clearly, $h$ is a self-adjoint operator on $D(h) = H^2(\mathbb{R}^3)$ by the previous proposition and the fact that $\mu$ is bounded. 

The \textbf{dispersion relation} is $\omega = h^{1/2}$,  and the \textbf{free field Hamiltonian} is the second quantisation $\diff\Gamma(\omega)$. We denote by $m$ the infimum of the spectrum of $\omega$. If $m>0$, we regard the model as \textbf{massive}, and \textbf{massless} otherwise. As in the Nelson model with constant co\-ef\-fi\-cients, we assume that the Nelson model with variable coefficients is massive to avoid infrared divergences.

Altogether, the Nelson Hamiltonian with variable co\-ef\-fi\-cients and cut-off is given by
\begin{align}
	H_\Lambda = K \otimes \mathbb{1} + \mathbb{1} \otimes \diff \Gamma(\omega) + \Phi(\omega^{-\frac{1}{2}}\rho_{\Lambda,X}).
	\label{eq:CutOffHamiltonianStaticST}
\end{align}
The cut-off function $\rho \in \mathcal{S}(\mathbb{R}^3)$ is chosen as in the previous section.

What makes the Nelson model with variable coefficients more challenging to analyse than the Nelson model with constant coefficients is that  the free particle Hamiltonian $K_0$ or the operator $h_0$ cannot be replaced by a multiplication operator in Fourier space. Instead, we rely on the \textbf{pseudo-differential calculus}, which is introduced in the next chapter.

\chapter{Pseudo-differential Calculus}
\label{ch:Pseudors}

The pseudo-differential calculus defines an algebraic correspondence between \textbf{classical} and \textbf{quantum observables} known as \textbf{quantisation}. The classical observables in this calculus are represented by \textbf{symbols} that are smooth functions defined on the physical \textbf{phase space}, a subset of $\mathbb{R}^d \times \mathbb{R}^d$. Each symbol defines a pseudo-differential operator, which is the corresponding quantum observable. However, because of the \textbf{ordering ambiguity} in quantum mechanics (classical observables commute, but their quantum mechanical counterparts generally do not), the correspondence between classical and quantum observables is ambiguous. Instead, different quantisation schemes exist.

From a mathematical viewpoint, pseudo-differential operators provide an extended class of differential operators. To understand this, let $p(D)$ be a linear differential operator of order $m$, that is,
\begin{align}
	p(D)u(x) = \sum_{|\alpha| \leq m} g_\alpha D_x^{\alpha}u(x),
\end{align}
where $g_\alpha \in \mathbb{C}$, $D_x = -\I \partial_x$, and the sum is taken over all multi-indices $\alpha \in \mathbb{N}^d$ with $|\alpha| \leq m$. The operator $p(D)$ can alternatively be written as a composition of a Fourier transformation, a multiplication by the symbol
\begin{align}
	p(\xi) = \sum_{|\alpha| \leq m} g_\alpha \xi^{\alpha}
\end{align}
and an inverse Fourier transformation:
\begin{align}
	p(D)u(x) &= \frac{1}{(2\pi)^d} \int_{\mathbb{R}^d} \int_{\mathbb{R}^d} \e^{\I\scp{x-y}{\xi}} p(\xi) u(y) \diff y \diff \xi	\notag\\ 
	&= \frac{1}{(2\pi)^d} \int_{\mathbb{R}^d} \e^{\I\scp{x}{\xi}} p(\xi) \widehat{u}(\xi) \diff \xi.
\end{align}
In Fourier space, the initially complicated differential operator $p(D)$ is a simple multi\-pli\-cation by the symbol $p(\xi)$. Manipulations of the differential operator (e.g. taking powers, inverses, etc.) are equivalent to manipulations of its symbol. 

A pseudo-differential operator $\mathrm{Op}(p)$ is an operator for which the symbol $p = p(x,\xi)$ can depend on two phase space variables, $x$ and $\xi$, representing position and momentum:
\begin{align}
	\mathrm{Op}(p) u(x) &= \frac{1}{(2\pi)^d} \int_{\mathbb{R}^d} \int_{\mathbb{R}^d} \e^{\I\scp{x-y}{\xi}} p(x,\xi) u(y) \diff y \diff \xi.
\end{align} 
The main difficulty in the pseudo-differential calculus is that manipulations of the symbol no longer correspond to the same manipulations of the pseudo-differential operator (e.g. the inverse symbol (if existent) is not necessarily the symbol of the inverse operator). Nevertheless, we demonstrate that the correspondence holds at least in an approximate sense, and we obtain estimates in how far the corres\-pon\-dence fails.

The analysis of pseudo-differential operators is a vast area of mathematics with many different applications. It is impossible to provide a complete overview of this topic, but we explain essential ideas and techniques relevant to the present work. Several excellent references in the literature provide a more detailed exposition. We follow \cite{dimassi1999, hoermander1985, martinez2002, raymond1991, zworski2012} in this chapter.

\section{Symbols}
\label{sec:Symbols}

Generally, a symbol is a smooth function satisfying additional estimates. In the literature, different classes of symbols are more or less useful in different applications. To define appropriate symbol classes for the present work, we first need the notion of an order function.

\begin{defn}
	A smooth function $M:\mathbb{R}^k \to (0,\infty)$ is an \textbf{order function} if constants $C,m$ exist such that $M(z) \leq C\jap{z-w}^m M(w)$ for all $z,w \in \mathbb{R}^{k}$, where $\jap{z} = \sqrt{1+|z|^2}$ is the \textbf{Japanese bracket}.
	\label{def:OrderFunction}
\end{defn}

\begin{ntn}
	In this chapter, we write $z$ or $w$ for an element in $\mathbb{R}^k$ (unless stated otherwise). If $k=2d$, we replace $z$ with $(x,\xi) \in \mathbb{R}^d \times \mathbb{R}^d$ and call $x$ the \textbf{position variable} and $\xi$ the \textbf{momentum variable}. Sometimes we also deal with $k=3d$. In this case, we write $z=(x,y,\xi)$, and $y$ takes the role of an integrated variable.
\end{ntn}

\begin{exmp}
	Clearly, $M\equiv 1$ and $M(z) = \jap{z}$ are order functions, with the latter being a consequence of Peetre's inequality (see Theorem \ref{thm:Peetre}). Moreover, if $k=2d$, then $M(x,\xi) = \jap{x}^\rho \jap{\xi}^\sigma$ is an order function for any $\rho,\sigma \in \mathbb{R}$.
\end{exmp}

For a given order function $M$ and a Riemannian metric $g$, we define the symbol class $S_k(M,g)$. A Riemannian metric on $\mathbb{R}^k$ is a smooth map $g:\mathbb{R}^{k} \ni z \mapsto g_z$ with values in positive definite quadratic forms on $\mathbb{R}^{k}$. Further regularity conditions must be imposed on the Riemannian metric $g$ to obtain a well-behaved theory of pseudo-differential operators (cf.~\cite{bony2013} for a detailed discussion). In the following text, we call these Riemannian metrics \textbf{admissible}.

\begin{defn}
	Let $M$ be an order function and $g$ be an admissible metric on $\mathbb{R}^{k}$. The \textbf{symbol class} $S_k(M,g)$ is the set of all $a\in C^\infty(\mathbb{R}^{k})$ that satisfy the following bound for all $p\in\mathbb{N}$ and $v_1,\dots,v_p \in \mathbb{R}^{k}$:
	\begin{align}
		\left|\prod_{i=1}^{p} (v_i \cdot \partial_{z})a(z)\right| \leq C_p M(z) \prod_{i=1}^p |g_{z}(v_i)|^\frac{1}{2}.
	\end{align}
	We write $S_k(M,g) = S(M,g)$ if the dimension $k$ is irrelevant or clear from the context.
	\label{def:SMg}
\end{defn}

The space $S(M,g)$ is a Fréchet space with semi-norms given by the best constants $C_{p}$ satisfying the above bound. 

It is easy to verify that the product of two symbols is, again, a symbol:

\begin{prop}
	If $a_1 \in S(M_1,g)$ and $a_2 \in S(M_2,g)$, then $a_1 a_2 \in S(M_1M_2,g)$.
\end{prop}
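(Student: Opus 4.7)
The plan is a direct application of the Leibniz rule, since a symbol estimate is just a uniform bound on iterated directional derivatives. Fix $p \in \mathbb{N}$ and vectors $v_1, \dots, v_p \in \mathbb{R}^k$. Each operator $v_i \cdot \partial_z$ is a first-order differential operator, so by iterating the product rule one obtains
\begin{align*}
	\prod_{i=1}^{p} (v_i \cdot \partial_z)(a_1 a_2)(z) = \sum_{I \subset \{1,\dots,p\}} \left(\prod_{i \in I}(v_i \cdot \partial_z) a_1(z)\right)\!\left(\prod_{j \notin I}(v_j \cdot \partial_z) a_2(z)\right),
\end{align*}
where the sum runs over all subsets $I$ of $\{1,\dots,p\}$.

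Next I would insert the defining symbol estimates for $a_1 \in S(M_1,g)$ and $a_2 \in S(M_2,g)$ into each summand. For every $I$, there are constants $C^{(1)}_{|I|}, C^{(2)}_{p-|I|}$ such that
\begin{align*}
	\left|\prod_{i \in I}(v_i \cdot \partial_z) a_1(z)\right| &\leq C^{(1)}_{|I|}\, M_1(z) \prod_{i \in I}|g_z(v_i)|^{\frac{1}{2}}, \\
	\left|\prod_{j \notin I}(v_j \cdot \partial_z) a_2(z)\right| &\leq C^{(2)}_{p-|I|}\, M_2(z) \prod_{j \notin I}|g_z(v_j)|^{\frac{1}{2}}.
\end{align*}
Multiplying and summing over $I$ collects the full product $\prod_{i=1}^{p}|g_z(v_i)|^{1/2}$ out of each term, so that
\begin{align*}
	\left|\prod_{i=1}^{p}(v_i \cdot \partial_z)(a_1 a_2)(z)\right| \leq C_p\, M_1(z) M_2(z) \prod_{i=1}^{p} |g_z(v_i)|^{\frac{1}{2}},
\end{align*}
with $C_p = \sum_{k=0}^{p} \binom{p}{k} C^{(1)}_k C^{(2)}_{p-k}$. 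This is precisely the estimate characterising $S(M_1 M_2, g)$, and smoothness of $a_1 a_2$ is inherited from the factors, completing the argument.

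There is no real obstacle here: the only point requiring any attention is the bookkeeping in the iterated Leibniz rule, ensuring that the directional derivatives are distributed over the two factors in all $2^p$ ways and that the corresponding metric factors regroup correctly. Everything else is just the definition of $S(M,g)$.
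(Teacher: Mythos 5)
Your argument is correct and is precisely the routine Leibniz-rule verification that the paper leaves to the reader (it states the proposition as ``easy to verify'' and gives no proof); the directional derivatives $v_i\cdot\partial_z$ commute, so the sum over subsets $I$ is exactly the right form of the product rule, and the metric factors regroup as you say. The only implicit point worth a word is that $M_1M_2$ is again an order function, which is immediate from Definition \ref{def:OrderFunction} by multiplying the two defining inequalities.
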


We rarely work with the symbol class $S(M,g)$ in its full generality but restrict our attention to the following two examples.

\begin{defn}
	Let $M$ be an order function. Then $S_k(M) = S_k(M,\diff z^2)$ is the set of all $a\in C^\infty(\mathbb{R}^k)$ satisfying the bound $|\partial_z^\alpha a(z)| \leq C_{\alpha} M(z)$ for every multi-index $\alpha \in \mathbb{N}^k$.
\end{defn}

In (quantum) mechanical applications, derivatives in the momentum variable usually improve the order of a symbol $a\in S_{2d}(M)$. Therefore, we also define the following subclass of $S(M)$.

\begin{defn}
	Let $m\in \mathbb{R}$. Then $S^m = S_{2d}(\jap{\xi}^m,\diff x^2 + \jap{\xi}^{-2}\diff\xi^2) \subset S_{2d}(\jap{\xi}^m)$ is the set of all $a\in C^\infty(\mathbb{R}^d \times \mathbb{R}^d)$ that satisfy the following bound for all multi-indices $\alpha, \beta \in \mathbb{N}^d$:
	\begin{align}
		|\partial_x^\alpha \partial_\xi^\beta a(x,\xi)| \leq C_{\alpha \beta} \jap{\xi}^{m-|\beta|}
	\end{align}
	We call $m$ the \textbf{order} of the symbol class, and write $S^{-\infty}=\bigcap_m S^m$ and $S^{\infty} = \bigcup_{m} S^m$.
\end{defn}

The symbol classes $S^m$ have the practical advantage that we can perform com\-pu\-ta\-tions in $S^n$ modulo $S^l$ because $S^l \subset S^n$ for $l\leq n$. It is also possible to expand elements of $S^m$ in \textbf{asymptotic sums}. To assign a precise meaning to asymptotic sums of symbols, two symbols $a,b \in S^\infty$ are called \textbf{asymptotically equivalent}, written as $a\sim b$, if $a-b\in S^{-\infty}$. A surprising feature is that although a formal series of the type $\sum_{j} a_j$, where the order of $a_j$ is decreasing to $-\infty$ for larger $j$, has no reason to be convergent, a symbol that is asymptotically equivalent to this series can always be found.

\begin{thm}
	Let $a_j \in S^{m-j}$ for any $j \in \mathbb{N}$. Then a symbol $a \in S^m$ uniquely determined up to asymptotic equivalence (i.e. up to a symbol in $S^{-\infty}$) exists such that
	\begin{align}
		a - \sum_{j<l} a_j \in S^{m-l}
		\label{eq:AsymptoticExpansion}
	\end{align}
	for every $l \in \mathbb{N}$.
	\label{thm:AsymptoticExpansion}
\end{thm}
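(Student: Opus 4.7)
Uniqueness is immediate: if $a$ and $\tilde a$ in $S^m$ both satisfy \eqref{eq:AsymptoticExpansion}, then for every $l$ the difference $a - \tilde a = (a - \sum_{j<l} a_j) - (\tilde a - \sum_{j<l} a_j)$ lies in $S^{m-l}$, hence $a - \tilde a \in \bigcap_l S^{m-l} = S^{-\infty}$, i.e.\ $a \sim \tilde a$.

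For existence I would use a Borel-type summation. Pick a cutoff $\chi \in C^\infty(\mathbb{R}^d)$ with $\chi(\xi) = 0$ for $|\xi| \leq 1$ and $\chi(\xi) = 1$ for $|\xi| \geq 2$, fix an increasing sequence $t_j \to \infty$ to be chosen below, and set
$$a(x,\xi) := \sum_{j=0}^\infty \chi(\xi/t_j)\, a_j(x,\xi).$$
Since $\chi(\xi/t_j)$ vanishes on $|\xi| \leq t_j$, for each fixed $\xi$ only finitely many summands are non-zero, so $a$ is a well-defined smooth function.

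The heart of the argument is to make the $t_j$'s grow fast enough that (i) $a \in S^m$ and (ii) the remainder after truncation lies in the correct class. Applying Leibniz to $\partial_x^\alpha \partial_\xi^\beta(\chi(\xi/t_j) a_j)$ yields two kinds of contributions: the term in which no derivative falls on $\chi$, controlled directly by $a_j \in S^{m-j}$; and terms containing $t_j^{-|\gamma|}(\partial^\gamma\chi)(\xi/t_j)$ supported on $|\xi|\asymp t_j$, on which $\langle\xi\rangle \asymp t_j$ balances the $t_j^{-|\gamma|}$ factor and leaves the net order unchanged. Since moreover $\langle\xi\rangle \geq t_j$ on $\supp \chi(\cdot/t_j)$, one can trade any power of $\langle\xi\rangle^{-1}$ for the same power of $t_j^{-1}$, obtaining for $j\geq 1$
$$\bigl|\partial_x^\alpha \partial_\xi^\beta\bigl(\chi(\xi/t_j)\,a_j\bigr)(x,\xi)\bigr| \leq C_{\alpha\beta,j}\, t_j^{-1}\, \langle\xi\rangle^{m-j-|\beta|+1}.$$
The diagonal step is then to choose $t_j$ so large that $C_{\alpha\beta,j}\,t_j^{-1} \leq 2^{-j}$ for \emph{every} pair $(\alpha,\beta)$ with $|\alpha|+|\beta|\leq j$; this involves only finitely many conditions per $j$ and is therefore possible. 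With such $(t_j)$ the series converges in every $S^{m-l}$-seminorm.

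To conclude \eqref{eq:AsymptoticExpansion}, decompose
$$a - \sum_{j<l} a_j = \sum_{j<l}\bigl(\chi(\xi/t_j)-1\bigr) a_j + \sum_{j\geq l}\chi(\xi/t_j)\, a_j.$$
Each term of the finite first sum is compactly supported in $\xi$ and therefore lies in $S^{-\infty} \subset S^{m-l}$, while the tail is handled by the diagonal estimate above, trading one additional power of $\langle\xi\rangle^{-1}$ for $t_j^{-1}$ on the support to land in $S^{m-l}$ rather than $S^{m-l+1}$. The main technical obstacle is precisely this bookkeeping: a single sequence $(t_j)$ must control every seminorm of $a$ and every remainder simultaneously, and it works only because for each $j$ the growth of $t_j$ is constrained by finitely many conditions and the factors $2^{-j}$ are absolutely summable.
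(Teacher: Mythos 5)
Your proof is correct and follows essentially the same route as the paper, which sketches exactly this resummation $a=\sum_j(1-\chi(\epsilon_j\xi))a_j$ with $\epsilon_j=1/t_j\to 0$ rapidly and defers the verification to H\"ormander. Your diagonal choice of the $t_j$ and the splitting of $a-\sum_{j<l}a_j$ into a compactly supported part and a controlled tail supply precisely the details the paper omits.
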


\begin{proof}[Sketch of the Proof]
	If $a$ exists, it is uniquely determined by \eqref{eq:AsymptoticExpansion} up to a symbol in $S^{-\infty}$. To prove the existence of $a$, we define a \textbf{resummation} of the formal series $\sum_{j} a_j$. 
	
	We take a $\chi \in C_0^\infty(\mathbb{R}^d)$, which is equal to $1$ in a neighbourhood of $0$, and set $A_j(x,\xi) = (1-\chi(\epsilon_j\xi))a_j(x,\xi)$, where $\epsilon_j \to 0$ rapidly. Then $A_j \sim a_j$ because $A_j(x,\cdot) = a_j(x,\cdot)$ outside a compact set. Clearly, $a(x,\xi) = \sum_j A_j(x,\xi)$ defines an element in $C^\infty(\mathbb{R}^{2d})$ because the sum is finite near any fixed $\xi_0 \in \mathbb{R}^{d}$. One can verify that the symbol $a$ defined in this way indeed satisfies \eqref{eq:AsymptoticExpansion}. We refer to \cite[Thm. 18.1.3]{hoermander1985} for the full proof.
\end{proof}

\begin{exmp}
	The symbol $\e^{\I \scp{D_x}{D_\xi}} a$ with $a \in S^{\infty}$ has the following asymptotic expansion:
	\begin{align}
		\e^{\I \scp{D_x}{D_\xi}} a(x,\xi) &\sim \sum_j \frac{1}{j!} (\I \scp{D_x}{D_\xi})^j a(x,\xi) = \sum_{\alpha} \frac{1}{\alpha!} \partial_\xi^\alpha D_x^\alpha a(x,\xi). 
		\label{eq:AsymptoticExpansionExponential}
	\end{align}
	A resummation of the sum on the right-hand side is found in \cite[Thm.~18.1.7]{hoermander1985}.
\end{exmp}

\begin{rem}
	The nonzero symbol of the highest order in the asymptotic sum $\sum_{j} a_j$ is called the \textbf{principal symbol}. Typically, it is easy to determine the principal symbol (e.g., in \eqref{eq:AsymptoticExpansionExponential}, the principal symbol of $\e^{\I \scp{D_x}{D_\xi}} a$ is simply $a$), whereas the lower order terms are usually harder to compute.
\end{rem}

It is also possible to define asymptotic expansions for general symbols in $S(M)$ at the cost of introducing an additional \textbf{order parameter} $h$. In this case, we call $a \in S(M)$ asymptotically equivalent to the formal series $\sum_j h^j a_j$ with $a_j \in S(M)$ if
\begin{align}
	\left|\partial^\alpha \left(a-\sum_{j<l} h^j a_j\right)\right| \leq C_{l\alpha} h^l M
\end{align}
for all $l\in\mathbb{N}$, $\alpha \in \mathbb{N}^k$ and all $h>0$ that are sufficiently small. With a similar resummation argument as in the proof of Theorem \ref{thm:AsymptoticExpansion}, we associate each formal series $\sum_j h^j a_j$ with a symbol $a \in S(M)$ that is asymptotically equivalent to this series.


It is useful to have both concepts of asymptotic sums. For simplicity, we restrict our attention to symbols in $S^\infty$ when we construct asymptotic sums in the following sections.

\section{Pseudo-differential Operators}
\label{sec:Pseudors}

Having defined appropriate classes of symbols, we introduce pseudo-differential operators. We assume that $M$ is an order function on $\mathbb{R}^{3d}$ that depends only on the momentum variable (i.e. $M(x,y,\xi) = M(\xi)$).\footnote{This assumption is not strictly necessary for defining pseudo-differential operators, but it simplifies the discussion. The general case is of no use for the present work.} In this case, an $m \in \mathbb{R}$ exists such that $M(x,\xi) \leq C M(0) \jap{\xi}^m$. 
\begin{defn}
	Let $M$ be an order function specified as above and $a \in S_{3d}(M)$. Then we call the operator $\Op(a)$ defined on $C_\mathrm{c}^\infty(\mathbb{R}^d)$ by
	\begin{align}
		\mathrm{Op}(a) u(x) = \frac{1}{(2\pi)^d} \int_{\mathbb{R}^d} \int_{\mathbb{R}^d} \e^{\I\scp{x-y}{\xi}} a(x,y,\xi) u(y) \diff y \diff \xi
		\label{eq:Pseudor}
	\end{align}
	a \textbf{pseudo-differential operator of order $m$}.
	\label{def:Pseudor}
\end{defn}
We defined $\Op(a)$ on $C_\mathrm{c}^\infty(\mathbb{R}^d)$ to make sense of \eqref{eq:Pseudor} as an \textbf{oscillatory integral}. To verify that the integral \eqref{eq:Pseudor} is well defined, we assume first that $m<-d$. Then,
\begin{align}
	|a(x,y,\xi)| \leq C M(x,y,\xi) \leq C \jap{\xi}^m,
\end{align}
and the integral on the right-hand side of \eqref{eq:Pseudor} is absolutely convergent. If $m\geq -d$, we exploit the oscillatory behaviour of $\e^{\I\scp{x-y}{\xi}}$. We observe the following:
\begin{align}
	L \e^{\I\scp{x-y}{\xi}} = \e^{\I\scp{x-y}{\xi}}, \ L = \frac{1-\xi D_y}{1+|\xi|^2}.
\end{align}
Thus, if $L^T$ is the transpose of $L$, we obtain via integration by parts, for any $k\in \mathbb{N}$,
\begin{align}
	\mathrm{Op}(a) u(x) = \frac{1}{(2\pi)^d} \int_{\mathbb{R}^d} \int_{\mathbb{R}^d} \e^{\I\scp{x-y}{\xi}} (L^T)^k( a(x,y,\xi) u(y) ) \diff y \diff \xi \equiv I_k(a) u(x).
\end{align}
If $k$ is chosen sufficiently large, the integral $I_k(a)u(x)$ is absolutely convergent; hence, $\Op(a)$ is well defined on $C_\mathrm{c}^\infty(\mathbb{R}^d)$.

\begin{thm}
	If $a\in S(M)$, then $\Op(a)$ defines a continuous map from $C_\mathrm{c}^\infty(\mathbb{R}^d)$ to $C^\infty(\mathbb{R}^d)$.
	\label{thm:ContinuityPseudors}
\end{thm}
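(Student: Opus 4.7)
The plan is to work from the oscillatory integral representation established just before the theorem, using the operator $L = (1 - \xi \cdot D_y)/(1+|\xi|^2)$ and its transpose $L^T$. Since $M(\xi) \leq C \jap{\xi}^m$ for some $m$ by the preceding assumption on $M$, the identity $\Op(a)u(x) = I_k(a)u(x)$ is absolutely convergent as soon as $k$ is chosen large enough that the integrand in $I_k(a)u(x)$ decays faster than $\jap{\xi}^{-d}$. There are two claims to verify: smoothness of $\Op(a)u$ in $x$, and continuity of the linear map $u \mapsto \Op(a)u$ from $C_\mathrm{c}^\infty(\mathbb{R}^d)$, endowed with the inductive limit topology, to $C^\infty(\mathbb{R}^d)$, endowed with the Fr\'echet topology of uniform convergence of all derivatives on compact sets.

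First I would prove smoothness by induction on the order $|\alpha|$ of differentiation in $x$. Starting from $I_k(a)u$ with $k$ chosen sufficiently large, I would justify differentiation under the integral sign by dominated convergence: each $\partial_{x_j}$ applied to $\e^{\I\scp{x-y}{\xi}}(L^T)^k(a(x,y,\xi)u(y))$ produces either a factor $\I\xi_j$ from the exponential or a $\partial_{x_j}$-derivative of $a$ (the factor $u(y)$ does not depend on $x$). After one differentiation, the new integrand is bounded by a constant times $\jap{\xi}^{m+1-k}$ times $|\partial_y^\beta u(y)|\one_K(y)$ summed over finitely many $\beta$, so by choosing $k$ larger than $m + |\alpha| + d$ the expression retains absolute convergence and we may iterate. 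Hence $\Op(a)u \in C^\infty(\mathbb{R}^d)$.

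For continuity, I would fix a compact set $K \subset \mathbb{R}^d$ and prove continuity of the restricted map $C_\mathrm{c}^\infty(K) \to C^\infty(\mathbb{R}^d)$. Given a compact $K' \subset \mathbb{R}^d$ and a multi-index $\alpha$, the Leibniz expansion of $\partial_x^\alpha \bigl[\e^{\I\scp{x-y}{\xi}}(L^T)^k(a(x,y,\xi)u(y))\bigr]$ is a finite sum of terms of the form $\e^{\I\scp{x-y}{\xi}} p(\xi) (1+|\xi|^2)^{-k} (\partial_x^{\alpha_1}\partial_y^{\beta_1} a)(x,y,\xi)\, \partial_y^{\beta_2} u(y)$, where $p$ is a polynomial in $\xi$ of degree at most $k$, $|\alpha_1|\leq |\alpha|$, and $|\beta_1| + |\beta_2| \leq k$. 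Using the symbol bound $|\partial_x^{\alpha_1}\partial_y^{\beta_1} a| \leq C \jap{\xi}^m$ and the estimate $|\partial_y^{\beta_2} u(y)|\one_K(y) \leq \sup_{|\beta|\leq k}\norm{\partial^\beta u}_{L^\infty(K)}$, the $\xi$-integration converges for $k > m + |\alpha| + d$, and the $y$-integration over $K$ contributes only the finite factor $|K|$. Taking the supremum over $x \in K'$ yields $\sup_{x\in K'} |\partial_x^\alpha \Op(a)u(x)| \leq C \sup_{|\beta|\leq N}\norm{\partial^\beta u}_{L^\infty(K)}$ for some $N = N(\alpha, k, d)$ and a constant $C$ depending on $a$, $\alpha$, $K$ and $K'$, which is the required continuity estimate.

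The main obstacle is bookkeeping rather than any deep analytic difficulty: each additional $x$-derivative introduces growth in $\xi$ that must be reabsorbed by raising $k$, and one has to check that the Leibniz expansion of $(L^T)^k$ applied to $a(x,y,\xi)u(y)$ yields terms that retain the decay $\jap{\xi}^{-k}$ modulo the polynomial factor $p(\xi)$. This reduces everything to the symbol estimates defining $S(M)$ together with a finite count of how many derivatives of $u$ appear, so the theorem follows from dominated convergence applied uniformly to the regularised oscillatory integral.
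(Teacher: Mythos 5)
Your proposal is correct and follows essentially the same route as the paper: both regularise the oscillatory integral as $I_k(a)u$ with $k$ chosen large relative to the number of $x$-derivatives, obtain smoothness by dominated convergence, and derive the continuity estimate by bounding $\partial_x^\alpha \Op(a)u$ through finitely many semi-norms of $u$ (a step the paper delegates to Martinez, Thm.~2.4.3, and which you carry out explicitly via the Leibniz expansion of $(L^T)^k$). The bookkeeping you describe — each $x$-derivative costing one power of $\jap{\xi}$, reabsorbed by increasing $k$ — is exactly the content of the paper's choice $k = [m] + d + l + 1$.
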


\begin{proof}
	We take $l\in\mathbb{N}$ and $k=[m] + d + l + 1$, where $[m]$ is the smallest integer such that $m\leq [m]$. Then $\Op(a)u = I_k(a) u$ has $l$ continuous derivatives by the dominated convergence theorem. As $l$ is arbitrary, $\Op(a)u \in C^\infty(\mathbb{R}^d)$.
	
	Furthermore, it is not difficult to bound $\partial^\alpha \Op(a)u$ by a finite sum of semi-norms of $u$ (see \cite[Thm. 2.4.3]{martinez2002}). This proves that $\Op(a): C_\mathrm{c}^\infty(\mathbb{R}^d) \to C^\infty(\mathbb{R}^d)$ is continuous.
\end{proof}

The continuity of $\Op(a)$ on $C_\mathrm{c}^\infty(\mathbb{R}^d)$ and the Schwartz kernel theorem (see Theorem \ref{thm:SchwartzKernel}) imply the existence of a unique distributional kernel $K$ such that $\scp{\Op(a)u}{v} = K(u\otimes v)$ for all $u,v \in C_\mathrm{c}^\infty(\mathbb{R}^d)$. It is suggestive to denote the kernel $K$ by
\begin{align}
	K(x,y) = \frac{1}{(2\pi)^d} \int_{\mathbb{R}^d} \e^{\I\scp{x-y}{\xi}} a(x,y,\xi) \diff\xi \label{eq:PseudorKernel}
\end{align}
and write $\Op(a)u(x) = \int_{\mathbb{R}^d} K(x,y) u(y) \diff y$ as an integral operator.

The following examples of pseudo-differential operators are easy to verify.
\begin{exmp}
	\begin{enumerate}[wide, labelwidth=!, labelindent=0pt]
		\item If $a(x,y,\xi) = \xi^\alpha$, then $\mathrm{Op}(a)u(x) = D^\alpha u(x)$.
		\item If $a(x,y,\xi) = a(x)$ or $a(x,y,\xi) = a(y)$, then $\mathrm{Op}(a)u(x) = a(x)u(x)$.
		\item If $a(x,y,\xi) = \sum_{|\alpha|\leq m} g_\alpha(x) \xi^\alpha$, then $\Op(a)u(x) = \sum_{|\alpha|\leq m} D^\alpha (g_\alpha u)(x)$.		
		\item If $a(x,y,\xi) = \sum_{|\alpha|\leq m} g_\alpha(y) \xi^\alpha$, then $\mathrm{Op}(a)u(x) = \sum_{|\alpha|\leq m} g_\alpha(x) D^\alpha u(x)$.
	\end{enumerate}
	\label{exmp:Pseudors}
\end{exmp}

\begin{rem}
	The last two examples establish that replacing $g_\alpha(x)$ by $g_\alpha(y)$ commutes the order in which the differential $D^\alpha$ and multiplication by $g_\alpha$ act on $u$. This is connected to the ordering ambiguity of quantum mechanics (cf. Section \ref{sec:ChangeOfQuantisation} below).
\end{rem}

Thus far, we defined pseudo-differential operators only on $C_\mathrm{c}^\infty(\mathbb{R}^d)$, which is insufficient for practical purposes. In the rest of this section, we continuously extend pseudo-differential operators to Schwartz functions and, subsequently, to tempered distributions.

\begin{thm}
	If $a\in S(M)$, then $\Op(a)$ defines a continuous map from $\mathcal{S}(\mathbb{R}^d)$ to $\mathcal{S}(\mathbb{R}^d)$.
	\label{thm:PseudorsSchwartzSpaceMapping}
\end{thm}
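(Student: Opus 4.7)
The plan is to repeat the strategy used in Theorem \ref{thm:ContinuityPseudors}, but now push the integration-by-parts trick further so that the oscillatory integral \eqref{eq:Pseudor} makes sense for $u\in \mathcal{S}(\mathbb{R}^d)$, and so that the resulting function $\Op(a)u$ obeys Schwartz-type bounds uniformly in $x$. The continuity statement will then drop out by inspecting the chain of inequalities and noticing that every constant is a finite combination of semi-norms of $a$ (fixed) and $u$.

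First I would make sense of $\Op(a)u$ for $u\in \mathcal{S}(\mathbb{R}^d)$ and $a\in S(M)$ with $M(x,y,\xi)\le C\jap{\xi}^m$. Apply the operator $L=(1-\xi\cdot D_y)/(1+|\xi|^2)$ as in Definition \ref{def:Pseudor}: $L\,\e^{\I\scp{x-y}{\xi}} = \e^{\I\scp{x-y}{\xi}}$, and integration by parts in $y$ gives, for any $k\in\mathbb{N}$,
\begin{align*}
	\Op(a)u(x) = \frac{1}{(2\pi)^d} \int_{\mathbb{R}^d}\int_{\mathbb{R}^d} \e^{\I\scp{x-y}{\xi}} (L^T)^k\bigl(a(x,y,\xi) u(y)\bigr) \diff y \diff \xi.
\end{align*}
Choosing $k> m+d$ makes the $\xi$-integral absolutely convergent, with the integrand bounded by $\jap{\xi}^{m-k}$ times a finite sum of products of derivatives of $a$ in $y$ (still controlled by $M$) and derivatives of $u$ (rapidly decreasing in $y$). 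Fubini then yields a well-defined value for each $x$, and this extends $\Op(a)$ to $\mathcal{S}(\mathbb{R}^d)$ consistently with Theorem \ref{thm:ContinuityPseudors}.

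Next I would show $\Op(a)u\in \mathcal{S}(\mathbb{R}^d)$ by bounding $x^{\beta}\partial_x^{\alpha}\Op(a)u(x)$ uniformly in $x$. Differentiation in $x$ is absorbed under the integral and produces derivatives of $a$ (still in $S(M)$) and extra powers of $\xi$, which only shifts $m$. To deal with the factor $x^{\beta}$ I would decompose $x^{\beta}=((x-y)+y)^{\beta}$ via the binomial theorem, then use the identity $(x-y)^{\gamma}\e^{\I\scp{x-y}{\xi}}=(-D_\xi)^{\gamma}\e^{\I\scp{x-y}{\xi}}$ followed by an integration by parts in $\xi$ to replace each factor $(x-y)^{\gamma}$ by $D_\xi^{\gamma}$ acting on $a$. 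Since $a\in S(M)$ satisfies no improvement under $\partial_\xi$ (the class $S(M)$ with $M$ depending only on $\xi$ does not gain decay, unlike $S^m$), this step does not worsen the order, and the remaining $y^{\beta-\gamma}u(y)$ and its derivatives are bounded by a Schwartz semi-norm of $u$. After again inserting enough powers of $L^T$ to dominate $\jap{\xi}^{m}$, every term becomes an absolutely convergent integral bounded by a product $C_{\alpha\beta,a}\cdot\|u\|_{N,\mathcal{S}}$ for some Schwartz semi-norm of $u$ with index $N=N(\alpha,\beta,m,d)$.

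The continuity of $\Op(a):\mathcal{S}(\mathbb{R}^d)\to\mathcal{S}(\mathbb{R}^d)$ is then an immediate consequence, since the above estimates exhibit each Schwartz semi-norm of $\Op(a)u$ as bounded by a finite sum of Schwartz semi-norms of $u$. The main obstacle I expect is combinatorial bookkeeping: one has to interleave two distinct integration-by-parts procedures (the $L^T$-trick for convergence in $\xi$, and the $D_\xi$-trick for decay in $x$) and verify that all intermediate oscillatory integrals, which are not absolutely convergent at the outset, are legitimately manipulated. This is most cleanly handled by inserting a cutoff $\chi(\epsilon\xi)$ that is equal to $1$ near the origin, performing all operations with the absolutely convergent regularised integrand, and taking $\epsilon\to 0$ at the end via dominated convergence, exactly as in Theorem \ref{thm:ContinuityPseudors}.
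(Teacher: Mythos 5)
Your proof is correct, but it takes a genuinely different (and more self-contained) route than the paper. The paper first invokes the change-of-quantisation result of Section \ref{sec:ChangeOfQuantisation} to assume the symbol is independent of $y$, so that $\Op(a)u(x) = (2\pi)^{-d}\int \e^{\I\scp{x}{\xi}}a(x,\xi)\widehat{u}(\xi)\diff\xi$ is an absolutely convergent integral from the start ($\widehat{u}$ is Schwartz, $a$ grows at most like $\jap{\xi}^m$); the factors $x^\beta D_x^\alpha$ are then handled by a single integration by parts in $\xi$ and the Leibniz rule, producing terms $\Op(x^{\beta-\delta}D_x^{\alpha-\gamma}a)(x^\delta D_x^\gamma u)$. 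You instead work with the full double integral for $a(x,y,\xi)$, interleaving the $(L^T)^k$ regularisation in $y$ with the decomposition $x^\beta=((x-y)+y)^\beta$ and the $D_\xi$-integration by parts; your observation that $\xi$-derivatives do not worsen the order for $a\in S(M)$ (in contrast to $S^m$, where they would improve it) is exactly the right point, and your $\chi(\epsilon\xi)$-cutoff remark correctly disposes of the only delicate step, namely justifying integration by parts in a conditionally convergent oscillatory integral. What each approach buys: the paper's proof is shorter but relies on a forward reference to the quantisation-reduction theorem; yours avoids that dependence at the cost of heavier combinatorial bookkeeping, and is essentially the standard textbook argument for general amplitudes. (One cosmetic slip: $D_\xi^\gamma \e^{\I\scp{x-y}{\xi}} = (x-y)^\gamma \e^{\I\scp{x-y}{\xi}}$ with the paper's convention $D=-\I\partial$, so the sign $(-D_\xi)^\gamma$ you wrote is off by $(-1)^{|\gamma|}$; this is harmless.)
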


\begin{proof}
	For simplicity, we assume that the symbol does not depend on the integration variable $y$. This assumption is justified in the next section. The reason for this choice is that $\Op(a)u$ simplifies to
	\begin{align}
		\Op(a)u(x) = \frac{1}{(2\pi)^d} \int_{\mathbb{R}^d} \e^{\I\scp{x}{\xi}} a(x,\xi) \widehat{u}(\xi) \diff\xi.
	\end{align}	
	In this simplified form, it is easy to verify that $x^\beta D_x^\alpha \Op(a)u(x)$ is a linear combination of 
	\begin{align}
		\Op(x^{\beta-\delta} D_x^{\alpha-\gamma}a) (x^{\delta} D_x^{\gamma} u)(x).
	\end{align}
	It follows that $x^\beta D_x^\alpha \Op(a)u(x)$ is bounded by a finite sum of semi-norms of $u$. This proves that $\Op(a)u \in \mathcal{S}(\mathbb{R}^d)$ if $u \in \mathcal{S}(\mathbb{R}^d)$ and that $\Op(a):\mathcal{S}(\mathbb{R}^d)\to \mathcal{S}(\mathbb{R}^d)$ is a continuous map.	
\end{proof}

To establish the same result for the dual space $\mathcal{S}'(\mathbb{R}^d)$, we define the distributional pairing 
\begin{align}
	\scp{\mathrm{Op}(a)u}{v} = \scp{u}{\mathrm{Op}(b) v}
	\label{eq:DistributionalPairing}
\end{align}
for $u \in \mathcal{S}'(\mathbb{R}^d)$, $v\in\mathcal{S}(\mathbb{R}^d)$, where $\Op(b) = \Op(a)^*$ is the formal adjoint of $\Op(a)$. The symbol $b$ of the formal adjoint is defined by $b(x,y,\xi) = \overline{a(y,x,\xi)}$.

\begin{thm}
	If $a\in S(M)$, then $\Op(a)$ defines a continuous map from $\mathcal{S}'(\mathbb{R}^d)$ to $\mathcal{S}'(\mathbb{R}^d)$.
	\label{thm:PseudorsSchwartzDualSpaceMapping}
\end{thm}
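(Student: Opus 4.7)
The plan is to leverage the duality definition \eqref{eq:DistributionalPairing} together with the continuity of $\Op(b)$ on $\mathcal{S}(\mathbb{R}^d)$, which was just established in Theorem \ref{thm:PseudorsSchwartzSpaceMapping}. The strategy is completely analogous to how one extends a linear operator to tempered distributions using its formal adjoint.

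The first step is to verify that the formal adjoint symbol $b(x,y,\xi) = \overline{a(y,x,\xi)}$ belongs to the same symbol class $S(M)$. Since $M$ was assumed to depend only on the momentum variable $\xi$, the estimate
\begin{align*}
    |\partial_x^\alpha \partial_y^\beta \partial_\xi^\gamma b(x,y,\xi)| = |\partial_y^\alpha \partial_x^\beta \partial_\xi^\gamma a(y,x,\xi)| \leq C_{\alpha\beta\gamma} M(\xi)
\end{align*}
follows immediately from the corresponding estimate on $a$. Hence $\Op(b)$ is a pseudo-differential operator in the sense of Definition \ref{def:Pseudor}, and by Theorem \ref{thm:PseudorsSchwartzSpaceMapping} it defines a continuous map $\mathcal{S}(\mathbb{R}^d) \to \mathcal{S}(\mathbb{R}^d)$.

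With this in hand, I would define $\Op(a)u$ for $u \in \mathcal{S}'(\mathbb{R}^d)$ via \eqref{eq:DistributionalPairing}, namely as the linear form
\begin{align*}
    v \mapsto \scp{u}{\Op(b) v}, \quad v \in \mathcal{S}(\mathbb{R}^d).
\end{align*}
This is the composition $u \circ \Op(b)$ of two continuous maps (viewing $u: \mathcal{S}(\mathbb{R}^d) \to \mathbb{C}$ as continuous by definition of $\mathcal{S}'$), and therefore defines an element of $\mathcal{S}'(\mathbb{R}^d)$. To show that $\Op(a): \mathcal{S}'(\mathbb{R}^d) \to \mathcal{S}'(\mathbb{R}^d)$ is continuous in the weak-$*$ topology, it suffices to observe that if $u_n \to u$ in $\mathcal{S}'(\mathbb{R}^d)$, then for every $v \in \mathcal{S}(\mathbb{R}^d)$ we have
\begin{align*}
    \scp{\Op(a)u_n}{v} = \scp{u_n}{\Op(b)v} \longrightarrow \scp{u}{\Op(b)v} = \scp{\Op(a)u}{v},
\end{align*}
since $\Op(b)v$ is a fixed element of $\mathcal{S}(\mathbb{R}^d)$.

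There is essentially no obstacle here: the heavy lifting was done in Theorem \ref{thm:PseudorsSchwartzSpaceMapping}. The only thing to double-check is the consistency of this extension with the original definition \eqref{eq:Pseudor} on $C_\mathrm{c}^\infty(\mathbb{R}^d) \subset \mathcal{S}'(\mathbb{R}^d)$, which amounts to Fubini's theorem on the absolutely convergent (after sufficient integration by parts in $\xi$) oscillatory integrals defining $\scp{\Op(a)u}{v}$ and $\scp{u}{\Op(b)v}$ for $u,v \in C_\mathrm{c}^\infty(\mathbb{R}^d)$.
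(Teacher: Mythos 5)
Your proposal is correct and follows essentially the same route as the paper: define $\Op(a)u$ on $\mathcal{S}'(\mathbb{R}^d)$ by duality against the formal adjoint $\Op(b)$, invoke Theorem \ref{thm:PseudorsSchwartzSpaceMapping} to see that $\Op(b)v \in \mathcal{S}(\mathbb{R}^d)$, and check weak-$*$ continuity on sequences. Your added verification that $b(x,y,\xi) = \overline{a(y,x,\xi)}$ again lies in $S(M)$ (using that $M$ depends only on $\xi$) is a detail the paper leaves implicit, and is a worthwhile inclusion.
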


\begin{proof}
	If $u\in \mathcal{S}'(\mathbb{R}^d)$, then $\mathrm{Op}(a)u$ defined in \eqref{eq:DistributionalPairing} is an element of $\mathcal{S}'(\mathbb{R}^d)$ because $\mathrm{Op}(b) v \in \mathcal{S}(\mathbb{R}^d)$ for all $v\in \mathcal{S}(\mathbb{R}^d)$ by the previous theorem. Moreover, $\mathrm{Op}(a):\mathcal{S}'(\mathbb{R}^d)\to\mathcal{S}'(\mathbb{R}^d)$ is continuous because $\scp{\Op(a)u_n}{v} = \scp{u_n}{\Op(b)v} \to 0$ for all $v\in \mathcal{S}(\mathbb{R}^d)$ if $u_n \to 0$ in $\mathcal{S}'(\mathbb{R}^d)$.
\end{proof}

\section{Change in Quantisation}
\label{sec:ChangeOfQuantisation}

In Definition \ref{def:Pseudor}, we assumed that the underlying symbol of a pseudo-differential operator depends on the position variable $x$, integration variable $y$, and momentum variable $\xi$ (almost) arbitrarily. In this section, we demonstrate that 
it is not a restriction to replace the arbitrary dependence on $(x,y)$ by the convex combination $tx+(1-t)y$ for any $t\in[0,1]$, where the parameter $t$ distinguishes different quantisations.

For a symbol $a \in S_{2d}(M)$, where the order function $M$ satisfies the same assumptions as in Definition \ref{def:Pseudor}, we set
\begin{align}
	\Op_t(a)u(x) = \frac{1}{(2\pi)^d} \int_{\mathbb{R}^d} \int_{\mathbb{R}^d} \e^{\I\scp{x-y}{\xi}} a(tx+(1-t)y,\xi) u(y) \diff y \diff \xi,
\end{align}
and call $\Op_t(a)$ a pseudo-differential operator in \textbf{quantisation} $t$.

\begin{thm}
	Let $a \in S_{3d}(M)$. Then, for every $t\in[0,1]$, a unique $a_t \in S_{2d}(M)$ exists such that $\Op(a) = \Op_t(a_t)$. The symbol $a_t$ is given by
	\begin{align}
		a_t(x,\xi) &= \frac{1}{(2\pi)^d} \int_{\mathbb{R}^d} \int_{\mathbb{R}^d} \e^{\I\scp{\theta}{\eta-\xi}} a(x+(1-t)\theta, x-t\theta,\eta) \diff\eta \diff\theta \notag \\
		&= \e^{-\I \scp{D_\theta}{D_\xi}}a(x+(1-t)\theta,x-t\theta, \xi) \big|_{\theta = 0}.
	\end{align} \label{thm:PseudorReduction}
\end{thm}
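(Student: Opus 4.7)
The plan is to establish the three assertions — existence of $a_t$ in $S_{2d}(M)$, the operator identity $\Op(a) = \Op_t(a_t)$, and uniqueness — by a direct change of variables combined with distributional Fourier analysis, regularising oscillatory integrals as in Section \ref{sec:Pseudors}. Substituting $\eta = \xi + \zeta$ rewrites the defining integral as
\begin{align*}
    a_t(x,\xi) = \frac{1}{(2\pi)^{d}} \iint \e^{\I\scp{\theta}{\zeta}}\, a(x+(1-t)\theta,\, x-t\theta,\, \xi+\zeta)\, \diff\zeta\, \diff\theta,
\end{align*}
with nondegenerate phase $\theta \cdot \zeta$. I would regularise this via the identities $(1-\Delta_\theta)\e^{\I\theta\cdot\zeta} = (1+|\zeta|^2)\e^{\I\theta\cdot\zeta}$ and $(1-\Delta_\zeta)\e^{\I\theta\cdot\zeta} = (1+|\theta|^2)\e^{\I\theta\cdot\zeta}$, integrating by parts in $\theta$ and in $\zeta$ sufficiently many times to render the integrand absolutely convergent.

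To show $a_t \in S_{2d}(M)$, I would differentiate under the integral sign: $x$-derivatives distribute onto the first two arguments of $a$ and $\xi$-derivatives onto the third, all controlled by the order-function bound $M(\xi+\zeta) \leq C\jap{\zeta}^m M(\xi)$ from Definition \ref{def:OrderFunction}. Carrying out enough integrations by parts in $\theta$ to beat the factor $\jap{\zeta}^m$ then yields $|\partial_x^\alpha \partial_\xi^\beta a_t(x,\xi)| \leq C_{\alpha\beta} M(\xi)$ uniformly in $x$. For the operator identity, I would insert the formula for $a_t$ into the kernel expression \eqref{eq:PseudorKernel} for $\Op_t(a_t)$, perform the $\xi$-integration first (rigorously: test against $\mathcal{S}(\mathbb{R}^d) \otimes \mathcal{S}(\mathbb{R}^d)$ and apply Fubini after regularisation), and observe that $(2\pi)^{-d}\int \e^{\I\scp{x-y-\theta}{\xi}}\, \diff\xi = \delta(x-y-\theta)$ collapses the $\theta$-integration to $\theta = x-y$, leaving exactly the kernel of $\Op(a)$. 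Uniqueness follows because $\Op_t(b) = 0$ means its kernel vanishes, and inverse Fourier transforming the kernel in $x-y$ forces $b \equiv 0$.

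The second expression in the statement, $\e^{-\I\scp{D_\theta}{D_\xi}} a(x+(1-t)\theta, x-t\theta, \xi)|_{\theta=0}$, is the formal-power-series interpretation of the same Fourier multiplier, analogous to \eqref{eq:AsymptoticExpansionExponential}: Taylor-expanding $a$ in $\zeta$ about $0$ and applying the distributional identity $(2\pi)^{-d}\iint \e^{\I\theta\cdot\zeta}\zeta^\alpha f(\theta)\, \diff\zeta\, \diff\theta = (-D_\theta)^\alpha f(0)$ produces the asymptotic series, to be resummed via Theorem \ref{thm:AsymptoticExpansion}. The main obstacle will be tracking how many integrations by parts are required to absorb the $\jap{\zeta}^m$ growth from the order function uniformly in $\xi$; each differentiation of $a_t$ raises the number needed, so one must choose the regularisation depending on the multi-indices $\alpha,\beta$ but not on $(x,\xi)$, which is precisely what the order-function property of $M$ makes possible.
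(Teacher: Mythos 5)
Your proposal is correct and follows essentially the same route as the paper: both identify $a_t$ by matching the distributional kernels of $\Op(a)$ and $\Op_t(a_t)$ in the variables $(tx+(1-t)y,\, x-y)$ and Fourier-analysing in $\theta = x-y$ — you run the computation forwards (substitute the candidate formula into the kernel of $\Op_t(a_t)$, do the $\xi$-integral to produce $\delta(x-y-\theta)$, and collapse to the kernel of $\Op(a)$), while the paper runs it backwards (equate the two kernels and Fourier-invert in $\theta$ to solve for $a_t$). You additionally supply the oscillatory-integral regularisation, the verification that $a_t \in S_{2d}(M)$, and the uniqueness argument, all of which the paper's sketch leaves implicit. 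One small correction: the expression $\e^{-\I\scp{D_\theta}{D_\xi}}a(x+(1-t)\theta,x-t\theta,\xi)\big|_{\theta=0}$ is not merely an asymptotic series to be resummed via Theorem \ref{thm:AsymptoticExpansion} — as a Fourier multiplier it is \emph{exactly} the oscillatory integral you wrote after the substitution $\eta=\xi+\zeta$, so the second equality in the statement holds on the nose; only its Taylor expansion in powers of $\scp{D_\theta}{D_\xi}$ is asymptotic.
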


\begin{proof}
	To determine the symbol $a_t$, we compare the integral kernels of $\Op_t(a_t)$ and $\Op(a)$:
	\begin{align}
		\int_{\mathbb{R}^d} \e^{\I\scp{x-y}{\xi}} a_t(tx+(1-t)y,\xi) \diff\xi = \int_{\mathbb{R}^d} \e^{\I\scp{x-y}{\xi}} a(x,y,\xi) \diff\xi.
	\end{align}
	If $\theta = x-y$ and $z=tx + (1-t)y = x-(1-t)\theta$, then
	\begin{align}
		\int_{\mathbb{R}^d} \e^{\I\scp{\theta}{\xi}} a_t(z,\xi) \diff\xi = \int_{\mathbb{R}^d} \e^{\I\scp{\theta}{\xi}} a(z+(1-t)\theta,z-t\theta,\xi) \diff\xi.
	\end{align}
	Hence, an inverse Fourier transformation with respect to $\theta$ on both sides yields 
	\begin{align}
		a_t(z,\eta) = \frac{1}{(2\pi)^d} \int_{\mathbb{R}^d} \int_{\mathbb{R}^d} \e^{\I\scp{\theta}{\xi-\eta}} a(z+(1-t)\theta,z-t\theta,\xi) \diff\xi \diff \theta,
	\end{align}
	which is the desired result if we rename $x \leftrightarrow z$ and $\eta \leftrightarrow \xi$.
\end{proof}

Due to their appearance in applications, three values of $t$ are particularly important:
\begin{itemize}
	\item \textbf{Standard quantisation} ($t=1$): $a(x,D) = \mathrm{Op}(a) = \mathrm{Op}_{1}(a)$;
	\item \textbf{Weyl quantisation} ($t=\frac12$): $a^\mathrm{w}(x,D) = \mathrm{Op}_{\frac{1}{2}}(a)$; 	
	\item \textbf{Right quantisation} ($t=0$): $\mathrm{Op}_0(a)$.
\end{itemize}
The Weyl quantisation has the pleasant property that $a^\mathrm{w}(x,D)$ is a symmetric operator if $a$ is real-valued. However, the standard quantisation and, sometimes, the right quantisation are easier to handle. For example, a pseudo-differential operator in standard quantisation has the following form:
\begin{align}
	a(x,D) u(x) = \frac{1}{(2\pi)^d} \int_{\mathbb{R}^d} \e^{\I\scp{x}{\xi}} a(x,\xi) \widehat{u}(\xi) \diff \xi.
\end{align}
To understand the different quantisations better, we consider the following example relevant for the Nelson model with variable coefficients:
\begin{align}
	a(x,\xi) = \sum_{1\leq j,k \leq d} \xi_j g^{jk}(x) \xi_k \in S^{2}.
\end{align}
Then $\Op_t(a)$ in standard, Weyl, and right quantisation is given by
\begin{align}
	a(x,D)u(x) &= \sum_{1\leq j,k \leq d} g^{jk}(x) D_{x_j} D_{x_k}u(x), \\
	a^\mathrm{w}(x,D)u(x) &= \sum_{1\leq j,k \leq d} D_{x_j} g^{jk}(x) D_{x_k}u(x), \\
	\Op_0(a)u(x) &= \sum_{1\leq j,k \leq d} D_{x_j} D_{x_k} g^{jk}(x) u(x).
\end{align}
We observe that the choice of a particular quantisation (i.e. the choice of a particular $t\in[0,1]$) resolves the ordering ambiguity of the classical position observable $x$ and momentum observable $\xi$. Furthermore, we deduce from the commutator $[D_{x}, g(x)] = (D_{x} g)(x)$ that the difference between the standard, Weyl, and right quanti\-sation is an operator of order $1$, whereas $a$ is a symbol of order $2$. This observation can be generalised. 

\begin{thm}
	Let $a\in S_{2d}(M)$. Then $\Op_t(a_t) = \Op_0(a)$ for $a_t = \e^{\I t \scp{D_x}{D_\xi}} a(x,\xi)$.
	\label{thm:ChangingQuantisations}
\end{thm}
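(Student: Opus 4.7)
The plan is to reduce the statement to Theorem~\ref{thm:PseudorReduction} by choosing a suitable three-variable symbol. The right quantisation $\Op_0(a)$ of $a\in S_{2d}(M)$ coincides with $\Op(\tilde a)$ for $\tilde a(x,y,\xi) := a(y,\xi) \in S_{3d}(M)$, as one sees by comparing the kernels in \eqref{eq:PseudorKernel}. Theorem~\ref{thm:PseudorReduction} then supplies a unique $a_t\in S_{2d}(M)$ with $\Op_t(a_t) = \Op(\tilde a) = \Op_0(a)$, and because $\tilde a(x+(1-t)\theta,x-t\theta,\xi) = a(x-t\theta,\xi)$ does not depend on the first argument, the formula in that theorem collapses to
\begin{align*}
    a_t(x,\xi) = \e^{-\I\scp{D_\theta}{D_\xi}} a(x-t\theta, \xi)\big|_{\theta=0}.
\end{align*}

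What remains is to identify this expression with $\e^{\I t\scp{D_x}{D_\xi}} a(x,\xi)$. I would use the asymptotic expansion of the exponential as in \eqref{eq:AsymptoticExpansionExponential} applied to $\e^{-\I\scp{D_\theta}{D_\xi}}$, together with the chain rule identity
\begin{align*}
    D_\theta^\alpha\bigl[a(x-t\theta,\xi)\bigr]\big|_{\theta=0} = (-t)^{|\alpha|}(D_x^\alpha a)(x,\xi),
\end{align*}
which follows from $D = -\I\partial$ and $\partial_\theta[a(x-t\theta,\xi)] = -t(\partial_x a)(x-t\theta,\xi)$. Combining the signs via $(-\I)^{|\alpha|}(-t)^{|\alpha|} = (\I t)^{|\alpha|}$ gives
\begin{align*}
    a_t(x,\xi) \sim \sum_{\alpha} \frac{(\I t)^{|\alpha|}}{\alpha!} D_x^\alpha D_\xi^\alpha a(x,\xi),
\end{align*}
which is precisely the asymptotic expansion of $\e^{\I t\scp{D_x}{D_\xi}} a(x,\xi)$ as guaranteed by Theorem~\ref{thm:AsymptoticExpansion}.

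The argument is essentially a bookkeeping exercise once the reduction to Theorem~\ref{thm:PseudorReduction} is in place; the only step that requires care is the sign tracking in the chain rule. Both sides of the asserted equality are symbols determined up to $S^{-\infty}$ by the same resummation procedure, so asymptotic equivalence of their formal series is sufficient to conclude equality of the induced pseudo-differential operators.
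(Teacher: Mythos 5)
Your reduction to Theorem~\ref{thm:PseudorReduction} via the three-variable symbol $\tilde a(x,y,\xi)=a(y,\xi)$ is exactly the route the paper takes, and up to the formula $a_t(x,\xi)=\e^{-\I\scp{D_\theta}{D_\xi}}a(x-t\theta,\xi)\big|_{\theta=0}$ you are on the paper's track. The gap is in how you pass from this expression to $\e^{\I t\scp{D_x}{D_\xi}}a(x,\xi)$. Matching asymptotic expansions term by term only shows that the two symbols agree modulo $S^{-\infty}$: a resummation is unique only up to $S^{-\infty}$, so your closing claim --- that asymptotic equivalence of the formal series suffices to conclude equality of the induced operators --- is false. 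Two asymptotically equivalent symbols yield operators differing by a generally nonzero smoothing operator, whereas the theorem asserts the exact identity $\Op_t(a_t)=\Op_0(a)$; indeed the whole point of Theorems~\ref{thm:PseudorReduction} and \ref{thm:ChangingQuantisations}, as opposed to Corollary~\ref{cor:ChangingQuantisations}, is that the change of quantisation is exact. A second problem is that the theorem is stated for $a\in S_{2d}(M)$ with a general order function, while the expansion \eqref{eq:AsymptoticExpansionExponential} you invoke is only formulated for $a\in S^\infty$; for general $S(M)$ an asymptotic expansion requires the extra order parameter $h$ discussed after Theorem~\ref{thm:AsymptoticExpansion}, so the expansion step is not justified in the stated generality.

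The repair is to make the identification exact rather than asymptotic. On functions of the form $(x,\theta)\mapsto b(x-t\theta)$ one has the exact operator identity $D_\theta=-tD_x$ (visible on the Fourier transform, whose support lies in $\{\theta^*=-tx^*\}$, or by a change of variables in the oscillatory integral defining $\e^{-\I\scp{D_\theta}{D_\xi}}$). Hence
\begin{align*}
	\e^{-\I\scp{D_\theta}{D_\xi}}a(x-t\theta,\xi) = \e^{\I t\scp{D_x}{D_\xi}}a(x-t\theta,\xi)
\end{align*}
holds exactly, and setting $\theta=0$ gives $a_t=\e^{\I t\scp{D_x}{D_\xi}}a$ on the nose. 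This is precisely the (terse) two-line computation in the paper's proof; your chain-rule bookkeeping is the correct infinitesimal version of it, but it must be applied to the exponential as a whole rather than term by term in a formal series.
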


\begin{proof}
	From Theorem \ref{thm:PseudorReduction} it follows that
	\begin{align}
		a_t(x,\xi) &= \e^{-\I \scp{D_\theta}{D_\xi}}a(x-t\theta, \xi) \big|_{\theta = 0} \notag \\
		&= \e^{\I t\scp{D_x}{D_\xi}}a(x-t\theta, \xi) \big|_{\theta = 0} \notag \\
		&= \e^{\I t\scp{D_x}{D_\xi}}a(x,\xi).
	\end{align}
	This proves the theorem.
\end{proof}

\begin{cor}
	If $a\in S^m$ and $s,t\in [0,1]$, then the pseudo-differential operators $\mathrm{Op}_t(a)$ and $\mathrm{Op}_s(a)$ are identical up to a pseudo-differential operator of order $m-1$.
	\label{cor:ChangingQuantisations}
\end{cor}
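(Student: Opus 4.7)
The plan is to reduce the statement to a single application of Theorem \ref{thm:ChangingQuantisations} followed by reading off the order from the asymptotic expansion in \eqref{eq:AsymptoticExpansionExponential}. Concretely, I would first express $\Op_t(a)$ in quantisation $s$. Starting from $\Op_t(a)$, apply Theorem \ref{thm:ChangingQuantisations} backwards to find a symbol $b$ with $\Op_0(b) = \Op_t(a)$, namely $b = \e^{-\I t \scp{D_x}{D_\xi}} a$ (constructed via the resummation guaranteed by Theorem \ref{thm:AsymptoticExpansion}); applying Theorem \ref{thm:ChangingQuantisations} again forward yields $\Op_s(b_s) = \Op_0(b)$ with $b_s = \e^{\I s \scp{D_x}{D_\xi}} b$. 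Composing the two operator exponentials (which is legitimate at the level of the formal asymptotic expansion in $h = s - t$), I obtain $\Op_t(a) = \Op_s(\tilde a)$ with
\begin{align*}
\tilde a = \e^{\I(s-t)\scp{D_x}{D_\xi}} a.
\end{align*}

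Next, I would invoke the asymptotic expansion \eqref{eq:AsymptoticExpansionExponential} adapted to the scale factor $s - t$:
\begin{align*}
\tilde a (x,\xi) \sim \sum_{\alpha \in \mathbb{N}^d} \frac{(s-t)^{|\alpha|}}{\alpha!} \partial_\xi^\alpha D_x^\alpha a(x,\xi).
\end{align*}
The $\alpha = 0$ contribution equals $a$, so the residual symbol $\tilde a - a$ is asymptotic to a sum starting at $|\alpha| = 1$. Since $a \in S^m$ implies $\partial_\xi^\alpha D_x^\alpha a \in S^{m-|\alpha|}$, every summand with $|\alpha| \geq 1$ lies in $S^{m-1}$, and by Theorem \ref{thm:AsymptoticExpansion} (applied with order $m-1$) the resummed symbol $\tilde a - a$ itself belongs to $S^{m-1}$.

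Finally, using linearity of $\Op_s$, I conclude
\begin{align*}
\Op_t(a) - \Op_s(a) = \Op_s(\tilde a) - \Op_s(a) = \Op_s(\tilde a - a),
\end{align*}
which is a pseudo-differential operator of order $m-1$, as required. The only real obstacle is verifying that the formal composition $\e^{\I s \scp{D_x}{D_\xi}} \circ \e^{-\I t \scp{D_x}{D_\xi}}$ indeed yields $\e^{\I(s-t)\scp{D_x}{D_\xi}}$ modulo $S^{-\infty}$ — but since the two generators $\scp{D_x}{D_\xi}$ commute with themselves, the group law for the formal series is an immediate Cauchy product identity, and passing from formal series to genuine symbols is done via Theorem \ref{thm:AsymptoticExpansion}. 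All other ingredients are routine consequences of the symbol calculus already assembled in Sections \ref{sec:Symbols} and \ref{sec:Pseudors}.
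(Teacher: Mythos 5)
Your argument is correct and follows essentially the same route as the paper: apply Theorem \ref{thm:ChangingQuantisations} to write $\Op_t(a) = \Op_s(\tilde a)$ with $\tilde a = \e^{\I(s-t)\scp{D_x}{D_\xi}}a$, then read off from the asymptotic expansion \eqref{eq:AsymptoticExpansionExponential} that $\tilde a - a \in S^{m-1}$. Your explicit passage through the right quantisation and the composition of the two exponentials merely spells out a step the paper leaves implicit, so there is nothing to add.
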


\begin{proof}
	Theorem \ref{thm:ChangingQuantisations} indicates that $\mathrm{Op}_t(a) = \mathrm{Op}_s(b)$ for 
	\begin{align}
		b(x,\xi) &= \e^{\I (s-t)\scp{D_x}{D_\xi}}a(x,\xi) \notag \\
		&\sim \sum_\alpha \frac{(s-t)^\alpha}{\alpha!} D_x^\alpha \partial_\xi^\alpha a(x,\xi) \notag \\
		&= a(x,\xi) + S^{m-1}.
	\end{align}	
	Thus, $\mathrm{Op}_t(a) = \mathrm{Op}_s(b) = \mathrm{Op}_s(a) + \mathrm{Op}(S^{m-1})$.
\end{proof}

\section{Symbolic Calculus}

Thus far, we formulated the basic notions of symbols and pseudo-differential operators. In this section, we explore how manipulations of a pseudo-differential operator affect its symbol.

\subsection{Adjoint Symbol}

We mentioned at the end of Section \ref{sec:Pseudors} that the adjoint operator $\Op(a)^* = \Op(b)$ is a pseudo-differential operator with symbol $b(x,y,\xi) = \overline{a(y,x,\xi)}$. In terms of the pseudo-differential operator $\Op_t(a)$ in quantisation $t$, this means that
\begin{align}
	\Op_t(a)^* = \Op_{1-t}(\overline{a}).
\end{align}
A slight obstacle is that $\Op_{1-t}(\overline{a})$ is not in the same quantisation as $\Op_t(a)$ except if $t=1/2$. However, based on our results from Section \ref{sec:ChangeOfQuantisation}, it is not difficult to construct the \textbf{adjoint symbol} $a^*$ satisfying $\Op_t(a)^* = \Op_t(a^*)$. The following theorem immediately follows from Theorem \ref{thm:ChangingQuantisations}.
\begin{thm}
	Let $a\in S(M)$ and set
	\begin{align}
		a^*(x,\xi) &= \e^{\I(2t-1) \scp{D_x}{D_\xi}} \overline{a(x,\xi)} \label{eq:AdjointSymbol1}.
	\end{align}
	Then $a^*$ defines a symbol in $S(M)$, and $\Op_t(a)^* = \Op_t(a^*)$.
	\label{prop:AdjointSymbol}
\end{thm}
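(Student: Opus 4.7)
The plan is to reduce the statement to a trivial adjoint identity for kernels composed with one application of the quantisation-change formula from Theorem \ref{thm:ChangingQuantisations}.

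First, I would recall from the discussion at the end of Section \ref{sec:Pseudors} that $\mathrm{Op}(a)^* = \mathrm{Op}(b)$ with $b(x,y,\xi) = \overline{a(y,x,\xi)}$. Specialising to a symbol of the form $a(tx+(1-t)y,\xi)$, the roles of $x$ and $y$ get swapped, so that $tx+(1-t)y$ becomes $ty+(1-t)x=(1-t)x+ty$. This immediately yields the identity
\begin{align*}
\mathrm{Op}_t(a)^* = \mathrm{Op}_{1-t}(\overline{a}),
\end{align*}
valid for every $t\in[0,1]$. At this stage the only obstruction to concluding is that the adjoint is expressed in the wrong quantisation.

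Second, I would convert from the $(1-t)$-quantisation back to the $t$-quantisation using Theorem \ref{thm:ChangingQuantisations}. The formula derived in the proof of Corollary \ref{cor:ChangingQuantisations} (applied with source parameter $s=1-t$ and target parameter $t$) gives
\begin{align*}
\mathrm{Op}_{1-t}(\overline{a}) = \mathrm{Op}_t\bigl(\e^{\I(2t-1)\scp{D_x}{D_\xi}}\overline{a}\bigr),
\end{align*}
which is exactly $\mathrm{Op}_t(a^*)$ with $a^*$ as defined in \eqref{eq:AdjointSymbol1}. Combined with the first step, this proves the operator identity $\mathrm{Op}_t(a)^* = \mathrm{Op}_t(a^*)$.

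Third, and this is the real content beyond the formal manipulations, I would verify that $a^*$ genuinely defines an element of $S(M)$. The symbol $\e^{\I(2t-1)\scp{D_x}{D_\xi}}\overline{a}$ is only formal: it is the asymptotic sum $\sum_{\alpha} \frac{(\I(2t-1))^{|\alpha|}}{\alpha!}\partial_\xi^\alpha D_x^\alpha \overline{a}$, which is generally divergent. The precise interpretation is via resummation: using Theorem \ref{thm:AsymptoticExpansion} (or, in the general $S(M)$ setting, the variant with an order parameter discussed after it), one constructs a symbol $a^*\in S(M)$ asymptotically equivalent to this series, and one shows that the resulting operator agrees with $\mathrm{Op}_{1-t}(\overline{a})$ modulo a smoothing remainder that, by uniqueness of the kernel in \eqref{eq:PseudorKernel}, must vanish. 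The stability $\partial_x^\alpha\partial_\xi^\beta \overline{a}\in S(M)$ under each individual term of the expansion is immediate from the definition of $S(M)$, so the resummation really lands in $S(M)$.

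The main obstacle, and the only nontrivial point, is this last step: giving rigorous meaning to the exponential operator $\e^{\I(2t-1)\scp{D_x}{D_\xi}}$ acting on $\overline{a}$ and showing it produces an honest element of $S(M)$ rather than merely a formal series. Once this is settled by the resummation argument of Theorem \ref{thm:AsymptoticExpansion}, the two preceding algebraic steps conclude the proof without further computation.
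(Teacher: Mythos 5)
Your first two steps are exactly the paper's argument: the paper records $\Op_t(a)^* = \Op_{1-t}(\overline{a})$ from the kernel symmetry and then declares the theorem to follow immediately from Theorem \ref{thm:ChangingQuantisations}, which is precisely your conversion $\Op_{1-t}(\overline{a}) = \Op_t(\e^{\I(2t-1)\scp{D_x}{D_\xi}}\overline{a})$.

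Your third step, however, mischaracterises the object $\e^{\I(2t-1)\scp{D_x}{D_\xi}}\overline{a}$, and the patch you propose does not work as stated. This expression is not defined by resumming the divergent series $\sum_\alpha \frac{(\I(2t-1))^{|\alpha|}}{\alpha!}\partial_\xi^\alpha D_x^\alpha \overline{a}$ via Theorem \ref{thm:AsymptoticExpansion}: a resummation determines the symbol only modulo $S^{-\infty}$, so the resulting operator would agree with $\Op_{1-t}(\overline{a})$ only up to a smoothing operator, and nothing forces that smoothing operator to vanish. Appealing to ``uniqueness of the kernel'' is circular here, since you would first have to show the two kernels coincide, which is exactly what is in question. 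The correct reading, and the one already used in Theorems \ref{thm:PseudorReduction} and \ref{thm:ChangingQuantisations}, is that $\e^{\I s\scp{D_x}{D_\xi}}$ applied to a symbol is an \emph{exact} oscillatory integral of the type appearing in Theorem \ref{thm:PseudorReduction}; that theorem already asserts that the result is a genuine, uniquely determined element of $S_{2d}(M)$ and that the operator identity holds exactly, not merely modulo $\Op(S^{-\infty})$. The asymptotic expansion \eqref{eq:AsymptoticExpansionExponential} is a consequence of this exact definition, not a substitute for it. With that correction, your argument coincides with the paper's.
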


\subsection{Moyal Product}

Determining a symbol for the composition $\Op_t(a_1)\Op_t(a_2)$ with $a_1 \in S(M_1)$ and $a_2 \in S(M_2)$ is more difficult. For simplicity, we discuss this construction only for the standard quantisation. Expanding the expression for $\Op(a_1)\Op(a_2)u$, we obtain
\begin{align}
	\mathrm{Op}(a_1)\mathrm{Op}(a_2)u(x) &= \frac{1}{(2\pi)^{d}} \int_{\mathbb{R}^{d}} \e^{\I\scp{x}{\eta}} a_1(x,\eta) \mathcal{F}[\Op(a_2)u](\eta) \diff \eta \notag \\	
	&= \frac{1}{(2\pi)^{2d}} \int_{\mathbb{R}^{3d}} \e^{\I\scp{x}{\eta}} \e^{\I\scp{y}{\xi-\eta}} a_1(x,\eta) a_2(y,\xi) \widehat{u}(\xi) \diff y \diff \xi \diff \eta \notag \\
	&= \frac{1}{(2\pi)^{d}} \int_{\mathbb{R}^{d}} \e^{\I\scp{x}{\xi}} a_1\# a_2(x,\xi) \widehat{u}(\xi) \diff \xi \notag \\
	&= \Op(a_1\# a_2)u(x),
\end{align}
where we defined the \textbf{Moyal product}
\begin{align}
	a_1 \# a_2(x,\xi) &= \frac{1}{(2\pi)^d} \int_{\mathbb{R}^{d}} \int_{\mathbb{R}^d} \e^{-\I\scp{x-y}{\xi-\eta}} a_1(x,\eta) a_2(y,\xi) \diff y \diff \eta \notag \\
	&= \e^{\I \scp{D_y}{D_\eta}} a_1(x,\eta) a_2(y,\xi) \big|_{\eta = \xi, y=x}.
\end{align}
A similar construction is possible for the other quantisations. For completeness, we state the following theorem (see \cite[Thm.~2.7.4]{martinez2002}).
\begin{thm}
	Let $a_1 \in S(M_1)$, $a_2 \in S(M_2)$, and set
	\begin{align}
		a_1 \#_t a_2 &= \e^{\I (\scp{D_\eta}{D_v} - \scp{D_u}{D_\xi})} a_1(tx+(1-t)u,\eta) a_2((1-t)x+tv,\xi)  \big|_{u=v=x, \eta=\xi}.
	\end{align}
	Then $a_1 \#_t a_2 \in S(M_1M_2)$ and $\Op_t(a_1)\Op_t(a_2) = \Op_t(a_1 \#_t a_2)$.
	\label{thm:MoyalProduct}
\end{thm}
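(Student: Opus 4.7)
The plan is to generalise the explicit computation that was just carried out for the standard quantisation $t=1$ to arbitrary $t \in [0,1]$. Starting from the iterated integral representation of $\Op_t(a_1)\Op_t(a_2)u(x)$, the goal is to rewrite it, via a suitable change of variables and a formal Fourier inversion, as a single oscillatory integral of the form $\Op_t(c)u(x)$, and then identify $c$ with the symbol $a_1 \#_t a_2$ defined in the statement.

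Plugging in the definitions of the two pseudo-differential operators yields a four-fold oscillatory integral over $(y,y',\xi,\eta)$ with combined phase $\e^{\I\scp{x-y'}{\xi}+\I\scp{y'-y}{\eta}}$ and amplitude $a_1(tx+(1-t)y',\xi)\,a_2(ty'+(1-t)y,\eta)$. I would then carry out a translation in $y'$ that centres this integration around the quantisation point $tx+(1-t)y$; this brings the arguments of $a_1$ and $a_2$ into the form $tx+(1-t)u$ and $(1-t)x+tv$ respectively, in terms of new variables $u,v$ encoding the shift. After an accompanying shift in the momentum variable $\eta$, the phase reorganises into $\e^{\I\scp{x-y}{\xi}}$, which is exactly the phase of $\Op_t$, times a remaining oscillatory factor in the auxiliary variables. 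A formal Fourier inversion in $(u,v)$ and the conjugate momenta then produces the compact expression
\begin{align*}
	c(x,\xi) = \e^{\I(\scp{D_\eta}{D_v}-\scp{D_u}{D_\xi})}\bigl(a_1(tx+(1-t)u,\eta)\, a_2((1-t)x+tv,\xi)\bigr)\Big|_{u=v=x,\eta=\xi},
\end{align*}
which is precisely $a_1 \#_t a_2$ from the statement, so that the composition has been recast in the canonical form $\Op_t(c)u(x)$. As a sanity check, setting $t=1$ recovers the formula $a_1\#a_2 = \e^{\I\scp{D_y}{D_\eta}}a_1(x,\eta)a_2(y,\xi)|_{y=x,\eta=\xi}$ derived earlier, since the $D_u$-factor then acts on an $a_1$ independent of $u$.

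To verify that $c \in S(M_1M_2)$, I would expand the exponential of the differential operator asymptotically as in Theorem \ref{thm:AsymptoticExpansion}. Each term in the expansion is a finite sum of products of $\partial^\alpha a_1$ and $\partial^\beta a_2$ evaluated at $(x,\xi)$, and the order-function bounds on $a_1$ and $a_2$, together with the fact that $M_1 M_2$ is again an order function, control the symbol semi-norms termwise; the remainder is handled by the resummation argument already used for Theorem \ref{thm:AsymptoticExpansion}. The main obstacle is the rigorous justification of the formal manipulations above: the symbols $a_1,a_2$ are only polynomially bounded, so the interchange of integration order, the change of variables, and the inverse Fourier transformation must all be controlled through repeated integrations by parts exploiting the oscillatory phases, in the same spirit as the $L$-operator argument of Section \ref{sec:Pseudors}. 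Once these estimates hold on the dense set $\mathcal{S}(\mathbb{R}^d)$, where $\Op_t(\cdot)$ maps continuously by Theorem \ref{thm:PseudorsSchwartzSpaceMapping}, the identity $\Op_t(a_1)\Op_t(a_2) = \Op_t(a_1\#_t a_2)$ extends automatically to the full symbol class.
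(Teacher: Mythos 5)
Your overall strategy is the right one and is exactly the computation the paper carries out for $t=1$ immediately before stating the theorem; the paper itself does not prove the general case but cites the literature, and your plan of composing the two oscillatory integrals, recentring the inner integration variable around the quantisation point $tx+(1-t)y$, and reading off the amplitude after a Fourier inversion is the standard route to the formula for $a_1\#_t a_2$. Your $t=1$ sanity check is also correct: at $t=1$ the factor $a_1(tx+(1-t)u,\eta)$ is independent of $u$, so $\scp{D_u}{D_\xi}$ acts trivially and the formula collapses to the one derived in the text.

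The genuine gap is in your verification that $a_1\#_t a_2 \in S(M_1M_2)$. You propose to expand $\e^{\I(\scp{D_\eta}{D_v}-\scp{D_u}{D_\xi})}$ termwise and to handle the remainder ``by the resummation argument already used for Theorem \ref{thm:AsymptoticExpansion}''. That theorem concerns the classes $S^{m-j}$, where each successive term loses an order in $\jap{\xi}$; this strict gain is what makes the resummation meaningful and pins the resummed symbol down modulo $S^{-\infty}$. For general order functions $M_1,M_2$ with the flat metric there is no such gain: every term $\frac{1}{\alpha!}\,\partial^\alpha_\xi a_1\, D^\alpha_x a_2$ lies in $S(M_1M_2)$ with no improvement, the formal series is not asymptotic in any useful sense (the paper itself notes that asymptotic expansions in $S(M)$ require an extra order parameter $h$), and a resummation would in any case only produce \emph{some} symbol asymptotically equivalent to the series --- it would not identify that symbol with the oscillatory integral that actually defines $a_1\#_t a_2$. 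The correct argument estimates that oscillatory integral directly: one inserts operators of the type $L=(1-\xi D_y)/(1+|\xi|^2)$ adapted to the quadratic phase coupling the auxiliary position and momentum variables, integrates by parts until the integral converges absolutely, and uses the defining property of order functions, $M_i(z)\leq C\jap{z-w}^{N}M_i(w)$, to compare $M_1,M_2$ at the shifted arguments with their values at $(x,\xi)$; differentiating under the integral and repeating gives the full family of $S(M_1M_2)$ semi-norm bounds. This is the content of the proof in the reference the paper cites, and it is the step your proposal leaves essentially unaddressed.
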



\begin{exmp}
	We construct a symbol for the commutator $[\cdot,\cdot]$ of two pseudo-differential operators in standard quantisation. If $a_1 \in S^{m_1}$ and $a_2 \in S^{m_2}$, then
	\begin{align}
		[\Op(a_1),\Op(a_2)] = \Op(a_1\# a_2 - a_2 \# a_1).
	\end{align}
	In leading order, the asymptotic expansion of $a_1\# a_2 - a_2 \# a_1$ is given by
	\begin{align}
		a_1\# a_2 - a_2 \# a_1 &= -\I (\partial_\xi a_1  \partial_x a_2 - \partial_\xi a_2 \partial_x a_1) + S^{m_1+m_2-2} \notag \\
		&= \I \{ a_1, a_2 \} + S^{m_1+m_2-2},
	\end{align}
	where $\{\cdot,\cdot\}$ is the \textbf{Poisson bracket}. This is the well-known correspondence between the Poisson bracket in classical theory and the commutator in quantum theory.
\end{exmp}

\subsection{Construction of a Parametrix}

We use the Moyal product to construct inverses of elliptic pseudo-differential operators.

\begin{defn}
	A symbol $a\in S_k(M)$ is \textbf{elliptic} if a constant $C>0$ exists such that $|a(z)| \geq C M(z)$ for all $z\in\mathbb{R}^k$. A pseudo-differential operator is elliptic if its underlying symbol is elliptic.
\end{defn}

The inverse function $M^{-1}$ of an order function $M$ is an order function. Hence, if $a\in S(M)$ is elliptic, then $a^{-1} \in S(M^{-1})$. The following theorem reveals that every elliptic symbol has an inverse with respect to the Moyal product.

\begin{thm}
	Let $a \in S(M)$ be elliptic. Then a symbol $b \in S(M^{-1})$ uniquely determined up to asymptotic equivalence exists such that $a\# b \sim 1$ and $b \# a \sim 1$.
	\label{thm:MoyalInverse}
\end{thm}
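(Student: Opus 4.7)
The plan is a standard parametrix construction: build $b$ as an asymptotic sum whose leading term is $1/a$, correct the remainder order by order using the asymptotic expansion of the Moyal product, and then resum via the analogue of Theorem \ref{thm:AsymptoticExpansion} for $S(M)$ with order parameter discussed after it. First I would check that $1/a$ is genuinely a symbol in $S(M^{-1})$. Ellipticity $|a(z)| \geq CM(z)$ gives the zeroth-order bound immediately, and the higher-order symbol estimates follow from Faà di Bruno applied to the composition $z \mapsto 1/a(z)$, since each derivative is a rational expression in derivatives of $a$ divided by powers of $a$; combining the $S(M,g)$-estimates on $a$ with the ellipticity lower bound yields exactly the $S(M^{-1},g)$-estimates.

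Next I would construct a right parametrix. Setting $b_0 = 1/a$, the leading term of the Moyal product (using Theorem \ref{thm:MoyalProduct} for the chosen quantisation) gives
\begin{equation*}
a \# b_0 = 1 + r_0,
\end{equation*}
where $r_0$ is of strictly lower order in the asymptotic expansion (a power of the order parameter $h$, or equivalently living in a symbol class with an extra decay factor). Assuming $b_0, \dots, b_n$ have been chosen so that $a \# (b_0 + \cdots + b_n) = 1 + r_n$ with $r_n$ of order $h^{n+1}$, I define
\begin{equation*}
b_{n+1} = -\tfrac{1}{a}\, r_n,
\end{equation*}
which lies in $S(M^{-1})$ with an extra factor of $h^{n+1}$ gained from $r_n$. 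A short calculation using the Moyal expansion shows $a \# (b_0 + \cdots + b_{n+1}) = 1 + r_{n+1}$ with $r_{n+1}$ of order $h^{n+2}$. Resummation yields $b_R \in S(M^{-1})$ with $a \# b_R \sim 1$. The same procedure applied from the left produces $b_L \in S(M^{-1})$ with $b_L \# a \sim 1$.

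To identify the two and obtain uniqueness, I would use associativity of the Moyal product modulo $S^{-\infty}$, which is inherited from associativity of the composition of the corresponding pseudo-differential operators via Theorem \ref{thm:MoyalProduct}. Then
\begin{equation*}
b_L \sim b_L \# (a \# b_R) \sim (b_L \# a) \# b_R \sim b_R,
\end{equation*}
so $b := b_R$ is a genuine two-sided inverse. If $b'$ is any other two-sided inverse, the same associativity argument with $b_L$ replaced by $b'$ gives $b' \sim b$, proving uniqueness up to asymptotic equivalence.

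The main obstacle is the first step: establishing $1/a \in S(M^{-1})$ with estimates controlled by the admissible metric $g$. The bookkeeping of Faà di Bruno has to be compatible with Definition \ref{def:SMg}, so that each application of a vector field $v \cdot \partial_z$ costs a factor $|g_z(v)|^{1/2}$ and nothing more; here one uses that the quotients of derivatives of $a$ by powers of $a$ combine, through ellipticity, to exactly the right $M^{-1}$ factor. Once this is in place, the iterative cancellation and the resummation are routine, and associativity closes the uniqueness argument without further work.
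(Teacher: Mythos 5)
Your proposal is correct and follows essentially the same route as the paper: both arguments take $a^{-1}\in S(M^{-1})$ as the leading term (the paper records this fact just before the theorem), correct the remainder of $a\#a^{-1}=1-r$ by a Neumann-type series — your order-by-order iteration $b_{n+1}=-a^{-1}r_n$ resums to exactly the paper's $a^{-1}\#(1+r+r\#r+\cdots)$ — and then identify the left and right parametrices via $b_L\sim b_L\#a\#b_R\sim b_R$, which is also the paper's uniqueness argument. The only cosmetic difference is that you spell out the Faà di Bruno verification that $1/a\in S(M^{-1})$, which the paper states without proof.
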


\begin{proof}
	We prove the theorem for $\# = \#_1$ and $a\in S^m$, $m \in \mathbb{R}$. The proof for arbitrary quantisations and general $a \in S(M)$ uses similar ideas. 
	
	The first candidate for the inverse symbol of $a$ with respect to the Moyal product is $a^{-1}$. Modulo a symbol $r\in S^{-1}$, the symbol $a^{-1} \in S^{-m}$ is the inverse of $a$ because
	\begin{align}
		a \# a^{-1} \sim 1 + \sum_{|\alpha| \geq 1} \frac{1}{\alpha!} \partial_\xi^\alpha a D_x^\alpha a^{-1} \equiv 1-r.
	\end{align}
	We define the symbol $b_1$ to be asymptotically equivalent to
	\begin{align}
		b_1 \sim a^{-1} \# r + a^{-1} \# r \# r + \cdots.
	\end{align}
	This symbol exists by Theorem \ref{thm:AsymptoticExpansion}, and $b_1$ is a right inverse of $a$ up to asymptotic equivalence:
	\begin{align}
		a \# b_1 \sim (1-r) \# (r+ r \# r \cdots) = 1 + S^{-\infty}.
	\end{align}
	In similar fashion, we construct a left inverse $b_2$ satisfying $b_2 \# a = 1 + S^{-\infty}$. From $b_1 \sim b_2\# a \# b_1 \sim b_2$, it follows that $b_1$ is asymptotically equivalent to $b_2$.
\end{proof}

An immediate consequence of the above theorem is that, for every pseudo-differential operator $A = \Op(a) \in \Op(S^m)$, a pseudo-differential operator $B$ exists such that $AB=BA = 1 + \Op(S^{-\infty})$. The operator $B$ is called an \textbf{approximate inverse} of $A$. Because pseudo-differential operators in $\Op(S^{-\infty})$ are smoothing (see Theorem \ref{thm:Smoothing} below), the approximate inverse $B$ is also called a \textbf{parametrix} of $A$.

\subsection{Functional Calculus}

The symbolic calculus developed so far culminates in the functional calculus of self-adjoint pseudo-differential operators. Although we do not prove the functional calculus, it satisfies all properties we expect intuitively. If $f$ is a sufficiently smooth function, then $f(\Op(a))$ is a pseudo-differential operator and its principal symbol is $f(a)$. 'Sufficiently smooth' in this case means that '$f$ is a symbol of order $p$', that is, an element of
\begin{align}
	S^p(\mathbb{R}) = \{f\in C^{\infty}(\mathbb{R}) \mid |f^{(k)}(t)| \leq C_k\jap{t}^{p-k} \ \forall k \in \mathbb{N} \}.
	\label{eq:Sp}
\end{align}
For a full discussion of the functional calculus we refer to \cite{bony2013}. The functional calculus is usually formulated for pseudo-differential operators in Weyl quantisation because in this quantisation a pseudo-differential operator is symmetric if the underlying symbol is real-valued. 

\begin{thm}[Functional calculus]
	Let $a\in S(M,g)$ be real-valued and elliptic such that its Weyl quantisation $a^{\mathrm{w}}$ is essentially self-adjoint on $\mathcal{S}(\mathbb{R}^d)$. If $f\in S^p(\mathbb{R})$, the operator $f(a^\mathrm{w})$ is a pseudo-differential operator, and the symbol of $f(a^\mathrm{w})$ is an element of $S(M^p,g)$. If $a\in S^m$, then $f(a^\mathrm{w}) - f(a)^\mathrm{w} \in \mathrm{Op}(S^{mp-1})$.
	\label{thm:FunctionalCalculus}
\end{thm}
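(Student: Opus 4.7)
The plan is to combine the Helffer--Sjöstrand formula with the parametrix construction of Theorem~\ref{thm:MoyalInverse}. First, I extend $f \in S^p(\mathbb{R})$ to an almost analytic extension $\tilde{f} \in C^\infty(\mathbb{C})$ that agrees with $f$ on $\mathbb{R}$ and satisfies, for every $N \in \mathbb{N}$,
\begin{align*}
	|\bar{\partial}_z \tilde{f}(z)| \leq C_N \jap{\mathrm{Re}\, z}^{p-1-N} |\mathrm{Im}\, z|^N.
\end{align*}
Such an extension exists by a standard Borel-type construction. Since $a^\mathrm{w}$ is essentially self-adjoint, this gives the Helffer--Sjöstrand representation
\begin{align*}
	f(a^\mathrm{w}) = -\frac{1}{\pi} \int_\mathbb{C} \bar{\partial}_z \tilde{f}(z) \, (z - a^\mathrm{w})^{-1} \, L(\diff z),
\end{align*}
where $L$ is Lebesgue measure on $\mathbb{C}$.

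Next, for each $z$ with $\mathrm{Im}\, z \neq 0$ the symbol $z - a$ is elliptic in the sense that $|z - a(x,\xi)| \geq |\mathrm{Im}\, z|$. A $z$-dependent adaptation of Theorem~\ref{thm:MoyalInverse} to the Weyl quantisation produces an asymptotic expansion
\begin{align*}
	b(z) \sim \sum_{j=0}^{\infty} b_j(z), \quad b_0(z) = (z-a)^{-1}, \quad b_j(z) \in S(M^{-1-j},g),
\end{align*}
whose symbol seminorms grow at most polynomially in $|\mathrm{Im}\, z|^{-1}$. A Neumann series argument then upgrades the formal inverse to the resolvent: for each $N$,
\begin{align*}
	(z - a^\mathrm{w})^{-1} = \Op_{1/2}(b^{(N)}(z)) + R_N(z),
\end{align*}
with $b^{(N)}(z) = \sum_{j<N} b_j(z)$ and $R_N(z)$ controlled by a negative power of $|\mathrm{Im}\, z|$ whose size is, for $N$ large, absorbed by the $|\mathrm{Im}\, z|^N$ factor of $\bar{\partial}_z\tilde{f}$.

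Substituting this expansion into the Helffer--Sjöstrand formula and interchanging the integration with the quantisation yields
\begin{align*}
	f(a^\mathrm{w}) = \Op_{1/2}\!\left(-\frac{1}{\pi}\int_\mathbb{C} \bar{\partial}_z \tilde{f}(z)\, b^{(N)}(z) \, L(\diff z)\right) + \Op(S^{-\infty}).
\end{align*}
The decay of $\bar{\partial}_z \tilde{f}$ to arbitrary order on $\mathbb{R}$ compensates the blow-up of the $b_j(z)$ as $|\mathrm{Im}\, z|\to 0$, so the integral defines a symbol in $S(M^p,g)$. Moreover, the scalar Helffer--Sjöstrand identity applied pointwise to $a(x,\xi) \in \mathbb{R}$ shows that the $b_0$ contribution is exactly $f(a)$, while each $b_j$ with $j \geq 1$ contributes a symbol of strictly lower order. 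When $a \in S^m$, each refinement step in the parametrix gains one power of $\jap{\xi}^{-1}$ through the Moyal product, so the cumulative correction satisfies $f(a^\mathrm{w}) - f(a)^\mathrm{w} \in \Op(S^{mp-1})$.

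The main obstacle will lie in the $z$-bookkeeping of the second step: one must track, for each $j$ and each symbol seminorm, exactly how many negative powers of $|\mathrm{Im}\, z|$ enter, then verify that the arbitrary vanishing rate of $\bar{\partial}_z \tilde{f}$ on the real axis is sufficient to dominate this blow-up and to legitimise both the Fubini interchange and the interpretation of the contour integral as a symbol-valued integral. This delicate balance between the parametrix estimates and the decay of $\bar{\partial} \tilde{f}$ is the technical heart of the proof, and is the reason for the assumption $f \in S^p(\mathbb{R})$ rather than merely $f \in C^\infty(\mathbb{R})$.
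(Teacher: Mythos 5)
The paper does not actually prove this theorem --- it defers to \cite{bony2013}, where the result is obtained from a Beals-type commutator characterisation of $\Op(S(M,g))$. Your Helffer--Sj\"ostrand route is the other standard approach (cf.~\cite{dimassi1999}) and is a legitimate strategy: almost analytic extension, resolvent parametrix with seminorms polynomial in $|\mathrm{Im}\,z|^{-1}$, integration against $\bar{\partial}\tilde f$, and identification of the $b_0$-contribution with $f(a)$ via the scalar Helffer--Sj\"ostrand identity. Two points in your sketch, however, need more than ``bookkeeping''. First, for $p\geq 0$ the Helffer--Sj\"ostrand integral does not converge as written: with your bounds on $\bar{\partial}\tilde f$ and the usual support condition $|\mathrm{Im}\,z|\leq 2\jap{\mathrm{Re}\,z}$ one finds
\begin{align*}
\int_{\mathbb{C}} |\bar{\partial}_z\tilde f(z)|\,|\mathrm{Im}\,z|^{-1}\,L(\diff z)\ \lesssim\ \int_{\mathbb{R}}\jap{t}^{p-1}\diff t,
\end{align*}
which is infinite precisely when $p\geq 0$ --- consistent with $f(a^{\mathrm{w}})$ being unbounded in that case. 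One must first apply the formula to $f(\lambda)\jap{\lambda}^{-2k}$ for large $k$ and recover $f(a^{\mathrm{w}})$ by composing with $\jap{a^{\mathrm{w}}}^{2k}$ (which itself uses the calculus already established for negative order), or use a modified representation; this cannot be skipped.

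Second, the lower bound $|z-a(x,\xi)|\geq|\mathrm{Im}\,z|$ is \emph{not} ellipticity in the sense required by Theorem~\ref{thm:MoyalInverse}: by itself it does not place $(z-a)^{-1}$ in $S(M^{-1},g)$, only in $S(|\mathrm{Im}\,z|^{-1},g)$ with constants blowing up near the real axis. To get $b_0(z)=(z-a)^{-1}\in S(M^{-1},g)$ with seminorms of size $O(\jap{z}^{k}|\mathrm{Im}\,z|^{-k-1})$ you must combine $|z-a|\geq|\mathrm{Im}\,z|$ with the genuine ellipticity $|a|\geq C M$ (so that $|z-a|\gtrsim M$ in the region $M\gg\jap{z}$); only then does the iteration yield $b_j(z)\in S(M^{-1-j},g)$ with polynomially controlled constants, which is the estimate your ``$z$-bookkeeping'' must actually deliver before the $|\mathrm{Im}\,z|^{N}$ decay of $\bar{\partial}\tilde f$ can absorb it. Granting these two repairs the argument closes; note also that in the Weyl quantisation the first parametrix correction vanishes identically, since the leading term of $(z-a)\#_{1/2}(z-a)^{-1}-1$ is a multiple of the Poisson bracket $\{z-a,(z-a)^{-1}\}=0$, so your method in fact gives the stronger conclusion $f(a^{\mathrm{w}})-f(a)^{\mathrm{w}}\in\Op(S^{mp-2})$; the $S^{mp-1}$ stated in the theorem is what survives after changing to an arbitrary quantisation via Corollary~\ref{cor:ChangingQuantisations}.
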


\section{$L^2$-continuity}
\label{sec:L2Continuity}

For applications in quantum mechanics, it is vital to know how pseudo-differential operators act on $L^2(\mathbb{R}^d) \subset \mathcal{S}'(\mathbb{R}^d)$. It is well known from Fourier analysis that the Fourier transformation on $\mathcal{S}'(\mathbb{R}^d)$ restricts to a bounded operator on $L^2(\mathbb{R}^d)$. The Calderon--Vaillancourt theorem generalises this statement for pseudo-differential operators with bounded symbols. In this section, we present a simplified but instructive proof of the Calderon--Vaillancourt theorem for symbols in $S^0$. In the next section, we present a more general proof for symbols in $S(1)$.

\begin{thm}[Calderon--Vaillancourt]
	For every $a \in S^0$, the pseudo-differential operator $\Op(a)$ defines a bounded operator on $L^2(\mathbb{R}^d)$.
	\label{thm:L2Continuity}
\end{thm}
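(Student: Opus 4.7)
The plan is to prove Calderon--Vaillancourt through a Littlewood--Paley style dyadic decomposition of the symbol in the momentum variable $\xi$, combined with the Cotlar--Stein almost-orthogonality lemma. First, I would fix a smooth partition of unity $1 = \chi_0(\xi) + \sum_{k \geq 1} \chi_k(\xi)$ with $\chi_k(\xi) = \chi(2^{-k}\xi)$ and $\chi \in C_\mathrm{c}^\infty(\mathbb{R}^d)$ supported in the annulus $\{1/2 \leq |\xi| \leq 2\}$, and set $a_k = \chi_k \cdot a$ so that $a = \sum_{k \geq 0} a_k$. The $S^0$ symbol estimates together with $|\partial_\xi^\beta \chi_k| \leq C_\beta 2^{-k|\beta|}$ give, via the Leibniz rule, the uniform bounds $|\partial_x^\alpha \partial_\xi^\beta a_k(x,\xi)| \leq C_{\alpha\beta} 2^{-k|\beta|}$ on the support of $\chi_k$.

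As an intermediate step, I would show that each $\Op(a_k)$ is bounded on $L^2(\mathbb{R}^d)$ with norm uniform in $k$. Writing the Schwartz kernel
\begin{align*}
    K_k(x,y) = \frac{1}{(2\pi)^d} \int_{\mathbb{R}^d} \e^{\I (x-y)\cdot\xi}\, a_k(x,\xi) \diff \xi,
\end{align*}
the change of variable $\xi = 2^k \eta$ followed by $N$-fold integration by parts in $\eta$ via $(1-\Delta_\eta)\e^{\I 2^k (x-y)\cdot\eta} = \jap{2^k(x-y)}^2 \e^{\I 2^k (x-y)\cdot\eta}$ yields the pointwise estimate $|K_k(x,y)| \leq C_N 2^{kd} \jap{2^k(x-y)}^{-N}$ for any $N$. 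A further substitution shows that $\int |K_k(x,y)|\diff y$ and $\int |K_k(x,y)|\diff x$ are bounded uniformly in $k$ by the convergent integral $\int \jap{z}^{-N}\diff z$, so Schur's test delivers the desired bound $\norm{\Op(a_k)}_{L^2 \to L^2} \leq C$ independently of $k$.

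The main obstacle is the summation $\sum_k \Op(a_k)$, which diverges if one uses the triangle inequality. I would invoke the Cotlar--Stein lemma, which reduces the boundedness to almost-orthogonality estimates of the form $\norm{\Op(a_j)\Op(a_k)^*} + \norm{\Op(a_j)^*\Op(a_k)} \leq C_N 2^{-N|j-k|}$ for every $N$. The crucial observation is that for $|j-k| \geq 2$ the symbols $a_j$ and $a_k$ have disjoint $\xi$-supports; consequently, every term $\partial_\xi^\alpha a_j \cdot D_x^\alpha a_k^*$ in the asymptotic expansion of the Moyal product $a_j \# a_k^*$ from Theorem \ref{thm:MoyalProduct} vanishes pointwise. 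This is the technically delicate step: one must upgrade the termwise vanishing into a quantitative decay of the $S^{-N}$ seminorms of the Moyal product by repeated integration by parts in the defining oscillatory integral (exploiting the frequency gap between $a_j$ and $a_k$ to extract decay in $2^{-\max(j,k)}$), and then feed this into a Schur-type argument as in the previous step to convert it into an operator-norm bound.

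With both the uniform estimate and the almost-orthogonality bounds established, the Cotlar--Stein lemma produces a bounded operator $T$ on $L^2(\mathbb{R}^d)$ with $T = \sum_{k \geq 0} \Op(a_k)$ in the strong operator topology, and the norm bound is controlled by a finite sum of $S^0$ seminorms of $a$. On $\mathcal{S}(\mathbb{R}^d)$ the partial sums converge to $\Op(a)u$ by Theorem \ref{thm:PseudorsSchwartzSpaceMapping}, so $T$ and $\Op(a)$ agree on this dense subspace, and $\Op(a)$ therefore extends uniquely to a bounded operator on $L^2(\mathbb{R}^d)$.
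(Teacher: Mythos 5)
Your argument is correct in outline, but it is a genuinely different route from the one taken in the paper. The paper follows H\"ormander's bootstrap: Schur's test for $a\in S^{-d-1}$, then the identity $\norm{\Op(a)u}^2=\scp{\Op(a^*\#a)u}{u}$ to double the (negative) order inductively, and finally the sum-of-squares trick $a^*\#a+b^*\#b=1+M^2+c$ with $c\in S^{-1}$ to reach order zero. You instead use a dyadic partition in $\xi$ plus Cotlar--Stein, which is closer in spirit to the paper's proof of the operator-valued version (Theorem \ref{thm:CalderonVaillancourtOV}), except that there the partition is a uniform lattice in all of phase space because $S(1)$ symbols gain nothing from $\xi$-derivatives, whereas your purely dyadic-in-$\xi$ decomposition exploits the $\jap{\xi}^{-|\beta|}$ gain specific to $S^0$. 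What your approach buys is a quantitative norm bound by finitely many seminorms of $a$ without needing the functional calculus input $F(M^2-|a|^2)\in S^0$; what it costs is the almost-orthogonality estimate, where two points deserve care. First, with $\chi$ supported in $\{1/2\leq|\xi|\leq 2\}$ the supports of $\chi_j$ and $\chi_k$ for $|j-k|=2$ touch, and the quantitative frequency gap $|\eta-\xi|\gtrsim 2^{\max(j,k)}$ you need for integration by parts only holds for $|j-k|\geq 3$ (harmless: Cotlar--Stein only needs summability, and finitely many near-diagonal terms are covered by your uniform bound). Second, the vanishing of every term of the asymptotic expansion of $a_j\# a_k^*$ only shows the composed symbol lies in $S^{-\infty}$ and by itself gives no decay in $|j-k|$, since the adjoint symbol $a_j^*=\e^{\I\scp{D_x}{D_\xi}}\overline{a_j}$ does not preserve $\xi$-supports; you correctly identify that the real content is a direct integration-by-parts estimate on the exact composition. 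It is worth noting that for $\Op(a_j)\Op(a_k)^*$ in standard quantisation the composition kernel is $(2\pi)^{-d}\int\e^{\I\scp{x-y}{\xi}}a_j(x,\xi)\overline{a_k(y,\xi)}\diff\xi$, which vanishes identically once the $\xi$-supports are disjoint, so the only genuinely delicate products are $\Op(a_j)^*\Op(a_k)$, where the integration by parts must be performed in the spatial variable against the phase $\e^{\I\scp{z}{\eta-\xi}}$; combined with further integration by parts in $\xi$ and $\eta$ to produce an integrable kernel for the Schur test, this closes the argument.
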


\begin{proof}
	First, we assume that $a\in S^{-d-1}$. In standard quantisation, the integral kernel of $\mathrm{Op}(a)$ is
	\begin{align}
		K(x,y) = \frac{1}{(2\pi)^d} \int_{\mathbb{R}^d} \e^{\I\scp{x-y}{\xi}} a(x,\xi) \diff \xi,
	\end{align}
	and, due to $a\in S^{-d-1}$,
	\begin{align}
		|(x-y)^\alpha K(x,y)| &= \frac{1}{(2\pi)^d} \left|\int_{\mathbb{R}^d} (\partial_\xi^\alpha \e^{\I\scp{x-y}{\xi}}) a(x,\xi) \diff \xi\right| \notag \\
		&\leq \frac{1}{(2\pi)^d} \int_{\mathbb{R}^d} |\partial_\xi^\alpha a(x,\xi)| \diff \xi
	\end{align}
	is bounded for every multi-index $\alpha \in \mathbb{N}^d$. Hence, $(1+|x-y|^2)^d |K(x,y)|$ is bounded; therefore,
	\begin{align}
		\sup_y \int_{\mathbb{R}^d} |K(x,y)| \diff x \leq \sup_y \int_{\mathbb{R}^d} \frac{C}{(1+|x-y|^2)^d} \diff x  \leq C
	\end{align}
	and
	\begin{align}
		\sup_x \int_{\mathbb{R}^d} |K(x,y)| \diff y \leq \sup_x \int_{\mathbb{R}^d} \frac{C}{(1+|x-y|^2)^d} \diff y \leq C.
	\end{align}
	It follows from the Schur test (see Theorem \ref{thm:SchurTest} with $B=\mathbb{C}$) that the integral operator $\Op(a)$ is bounded on $L^2(\mathbb{R}^d)$.
	
	Next, we prove that $\Op(a)$ is bounded for symbols of negative order. It suffices to prove that $\Op(a)$ is bounded if $a\in S^{-1/2^k}$ for all $k\in\mathbb{\mathbb{Z}}$ because $S^m \subset S^n$ for $m \leq n$. This proof is done by induction. According to the first step, a symbol $a \in S^{-1/2^k}$ defines a bounded operator if $k$ is sufficiently small (e.g. $k\leq -d$). Assume we proved the claim for all $k\leq l$. If $a\in S^{-1/2^{l+1}}$, then $a^* \# a \in S^{-1/2^l}$, and
	\begin{align}
		\norm{\Op(a)u}^2 = \scp{\Op(a^*\# a)u}{u} \leq \norm{\Op(a^*\# a)} \norm{u}^2.
	\end{align}
	By the induction hypothesis, $\norm{\Op(a^*\# a)}<\infty$; hence, $\Op(a)$ is bounded.
	
	It remains to prove the theorem for $a \in S^0$. Define
	\begin{align}
		M^2 = \sup_{(x,\xi)} |a(x,\xi)|^2 < \infty.
	\end{align}
	Then $M^2 - |a(x,\xi)|^2 \in S^0$ is a nonnegative symbol. Furthermore, if $F \in C^\infty(\mathbb{C})$ is any smooth function such that $F(z) = \sqrt{1+z}$ for $z \in \mathbb{R}_+$, then
	\begin{align}
		b(x,\xi) = F(M^2 - |a(x,\xi)|^2)
	\end{align}
	defines a symbol $b \in S^0$. The symbol $b$ is chosen such that $a^* \# a + b^* \# b = 1 + M^2 + c$ for a $c \in S^{-1}$. Thus, it follows that
	\begin{align}
		\norm{\Op(a) u}^2 &\leq \norm{\Op(a) u}^2 + \norm{\Op(b) u}^2 \notag \\
		&= \scp{\Op(a^* \# a + b^* \# b) u}{u} \notag \\ 
		&= \scp{(1 + M^2) u}{u} + \scp{\Op(c) u}{u} \notag \\
		&\leq (1 + M^2 + \norm{\Op(c)}) \norm{u}^2.
	\end{align}
	Hence, $\Op(a)$ is bounded because $\norm{\Op(c)}<\infty$ by the previous step.
\end{proof}

\begin{cor}
	Let $a \in S^m$. Then, for every $s\in\mathbb{R}$, $\Op(a):H^s \to H^{s-m}$ is a bounded operator from the Sobolev space $H^s$ to $H^{s-m}$.
	\label{cor:PseudorsSobolevMapping}
\end{cor}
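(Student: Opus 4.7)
The plan is to reduce the $H^s \to H^{s-m}$ boundedness of $\Op(a)$ to $L^2$ boundedness of a suitable composition, and then invoke Calderón--Vaillancourt. Recall that the Sobolev space $H^s(\mathbb{R}^d)$ is defined with the norm $\norm{u}_{H^s} = \norm{\jap{D}^s u}_{L^2}$, where $\jap{D}^s = \Op(\jap{\xi}^s)$ is the Bessel potential. A direct check from the definitions shows that $\jap{\xi}^s \in S^s$ for every $s\in\mathbb{R}$, since $|\partial_\xi^\beta \jap{\xi}^s| \leq C_\beta \jap{\xi}^{s-|\beta|}$. Moreover, $\jap{D}^s \colon H^s \to L^2$ is an isometric isomorphism, and its inverse is $\jap{D}^{-s}$.

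First I would write, for $u \in H^s$,
\begin{align}
	\norm{\Op(a) u}_{H^{s-m}} = \norm{\jap{D}^{s-m} \Op(a) \jap{D}^{-s} \jap{D}^s u}_{L^2},
\end{align}
so it suffices to show that the composition $T := \jap{D}^{s-m} \Op(a) \jap{D}^{-s}$ is a bounded operator on $L^2(\mathbb{R}^d)$, since then $\norm{\Op(a) u}_{H^{s-m}} \leq \norm{T}\,\norm{u}_{H^s}$.

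Next I would apply the Moyal product theorem (Theorem \ref{thm:MoyalProduct}) twice in standard quantisation to identify $T$ as a pseudo-differential operator: since $\jap{\xi}^{s-m} \in S^{s-m}$, $a \in S^m$, and $\jap{\xi}^{-s} \in S^{-s}$, their Moyal products lie in the product symbol class, so
\begin{align}
	T = \Op\bigl( \jap{\xi}^{s-m} \# a \# \jap{\xi}^{-s} \bigr),
\end{align}
with symbol in $S^{(s-m)+m+(-s)} = S^0$. Theorem \ref{thm:L2Continuity} (Calderón--Vaillancourt) then yields $T \in \mathcal{L}(L^2(\mathbb{R}^d))$, which gives the desired estimate.

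I do not anticipate a genuine obstacle here: the argument is really just assembling three earlier results (the symbol estimates for $\jap{\xi}^s$, the Moyal product theorem, and Theorem \ref{thm:L2Continuity}). The only mildly delicate point is to verify, using the formula for $\#$ in standard quantisation, that the composition on $\mathcal{S}(\mathbb{R}^d)$ (or on $\mathcal{S}'(\mathbb{R}^d)$) does correspond to the composition of pseudo-differential operators with the claimed Moyal product symbol; this is justified by Theorems \ref{thm:PseudorsSchwartzSpaceMapping} and \ref{thm:PseudorsSchwartzDualSpaceMapping} together with Theorem \ref{thm:MoyalProduct}, after which the extension to arbitrary $u\in H^s$ follows by density of $\mathcal{S}(\mathbb{R}^d)$ in $H^s$.
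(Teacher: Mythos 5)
Your argument is correct and is essentially identical to the paper's proof: the paper also sets $b=\jap{\cdot}^{s-m}\# a \# \jap{\cdot}^{-s} \in S^0$ and concludes via $\norm{\Op(a)u}_{H^{s-m}} = \norm{\Op(b)\jap{D}^{s}u}_{L^2} \leq C\norm{u}_{H^s}$ using Theorem \ref{thm:L2Continuity}. Your additional remarks about justifying the composition on $\mathcal{S}$ and extending by density are a sound, if slightly more explicit, version of what the paper leaves implicit.
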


\begin{proof}
	Set $b=\jap{\cdot}^{s-m}\# a \# \jap{\cdot}^{-s} \in S^0$. Then
	\begin{align}
		\norm{\mathrm{Op}(a) u}_{H^{s-m}} &= \norm{\jap{D}^{s-m} \Op(a)u}_{L^2} 
		= \norm{\mathrm{Op}(b) \jap{D}^{s} u}_{L^2} \notag \\ &\leq C \norm{\jap{D}^{s} u}_{L^2} = C \norm{u}_{H^{s}}
	\end{align}
	proves the claim.
\end{proof}

The Calderon--Vaillancourt theorem has further consequences for mapping properties of pseudo-differential operators. We remind the reader of the following notions. If $u\in \mathcal{S}'(\mathbb{R}^d)$ is a tempered distribution, the \textbf{singular support} of $u$ denoted by $\singsupp(u)$ is the complement of all $x\in \mathbb{R}^d$ that have an open neighbourhood $U$ such that $\varphi u \in C_\mathrm{c}^\infty(U)$ for all $\varphi \in C_\mathrm{c}^\infty(U)$. An operator $A$ on $\mathcal{S}'(\mathbb{R}^d)$ is called \textbf{pseudo-local} if $\singsupp(Au) \subset \singsupp(u)$ for all $u\in\mathcal{S}'(\mathbb{R}^d)$ and \textbf{hypoelliptic} if $\singsupp(Au) = \singsupp(u)$ for all $u\in\mathcal{S}'(\mathbb{R}^d)$. Furthermore, $A$ is called \textbf{smoothing} if it maps tempered distributions to smooth functions.

\begin{thm}
	If $a\in S^{-\infty}$, then $\Op(a)$ is smoothing.
	\label{thm:Smoothing}
\end{thm}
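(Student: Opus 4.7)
My plan is to realise $\Op(a)u$ as a pointwise-defined smooth function via the Schwartz kernel of $\Op(a)$. Since $a\in S^{-\infty}$ does not depend on $y$, its kernel reads
\begin{align*}
    K(x,y) = \frac{1}{(2\pi)^d} \int_{\mathbb{R}^d} \e^{\I\scp{x-y}{\xi}} a(x,\xi) \diff\xi.
\end{align*}
Because $a$ and all its $\xi$-derivatives decay faster than any power of $\jap{\xi}$, the integral and all of its derivatives in $(x,y)$ are absolutely convergent. Integration by parts in $\xi$ converts polynomial factors $(x-y)^{\gamma}$ into $\xi$-derivatives of $a$, producing the uniform estimate
\begin{align*}
    |\partial_x^{\alpha}\partial_y^{\beta} K(x,y)| \leq C_{\alpha\beta N}\jap{x-y}^{-N}
\end{align*}
for all multi-indices $\alpha,\beta$ and every $N\in\mathbb{N}$. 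In particular $K \in C^{\infty}(\mathbb{R}^{2d})$.

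Combining this bound with $|y|^{|\gamma|}\leq C_\gamma(|x-y|^{|\gamma|}+|x|^{|\gamma|})$ shows that for each fixed $x$ the function $y\mapsto K(x,y)$ belongs to $\mathcal{S}(\mathbb{R}^d)$, with Schwartz seminorms growing at most polynomially in $x$. A difference-quotient argument using the analogous bounds for $\partial_x^{\alpha} K$ upgrades this to smoothness of the map $\mathbb{R}^d \ni x \mapsto K(x,\cdot) \in \mathcal{S}(\mathbb{R}^d)$ in the Fréchet sense, the $\alpha$-th derivative being $\partial_x^{\alpha} K(x,\cdot)$. Consequently, for any $u\in\mathcal{S}'(\mathbb{R}^d)$, composing with the continuous linear form $u$ yields a smooth scalar function
\begin{align*}
    f(x) := \scp{u}{\overline{K(x,\cdot)}} \in C^{\infty}(\mathbb{R}^d).
\end{align*}

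It remains to identify $f$ with the tempered distribution $\Op(a)u$ defined by \eqref{eq:DistributionalPairing}. Given $v\in\mathcal{S}(\mathbb{R}^d)$, the map $x\mapsto \overline{K(x,\cdot)}\,v(x)$ is a smooth, $\mathcal{S}(\mathbb{R}^d)$-valued Bochner-integrable function, and the rapid decay of $K(x,y)$ in $x-y$ ensures its integral coincides with $\Op(a)^{*}v$. Interchanging $u$ with this vector-valued integral — legitimate by continuity of $u$ on $\mathcal{S}(\mathbb{R}^d)$ — yields $\int f(x)\overline{v(x)}\diff x = \scp{u}{\Op(a)^{*}v} = \scp{\Op(a)u}{v}$, so $\Op(a)u = f \in C^{\infty}(\mathbb{R}^d)$. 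I expect the main obstacle to be the second step, namely proving rigorously that $x\mapsto K(x,\cdot)$ is smooth into the Fréchet space $\mathcal{S}(\mathbb{R}^d)$, since this forces careful bookkeeping of how the Schwartz seminorms of $\partial_x^{\alpha} K(x,\cdot)$ depend polynomially on $x$; the Fubini-type interchange in the final step is routine once this regularity is available.
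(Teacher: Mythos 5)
Your argument is correct, but it follows a genuinely different route from the paper. The paper writes a tempered distribution as $u=(1+|x|^2)^N v$ with $v\in H^{-N}$, expands $\Op(b)(1+|x|^2)^N$ by a finite Leibniz-type sum, and then invokes the Sobolev mapping property $\Op(S^{-\infty}):H^{-N}\to H^{d}$ (Corollary \ref{cor:PseudorsSobolevMapping}) together with the embedding $H^d\subset C^0$ to conclude that every derivative $D^\alpha\Op(a)u$ is continuous. You instead work directly with the Schwartz kernel: the integration-by-parts bound $|\partial_x^\alpha\partial_y^\beta K(x,y)|\leq C_{\alpha\beta N}\jap{x-y}^{-N}$ shows $x\mapsto K(x,\cdot)$ is a smooth $\mathcal{S}(\mathbb{R}^d)$-valued map with seminorms of polynomial growth, and pairing with $u$ then gives a smooth representative of $\Op(a)u$. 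Your approach is more elementary in that it bypasses the Calderon--Vaillancourt machinery and the structure theorem for tempered distributions entirely, and it yields stronger quantitative output (rapid off-diagonal decay of all kernel derivatives, explicit polynomial control in $x$); the cost is exactly the Fréchet-calculus bookkeeping you flag, i.e.\ verifying differentiability of $x\mapsto K(x,\cdot)$ into $\mathcal{S}(\mathbb{R}^d)$ via difference quotients and justifying the interchange of $u$ with the $\mathcal{S}$-valued Bochner integral, both of which go through since the relevant seminorms are locally uniformly controlled. The paper's route is shorter given the $L^2$-theory already established at that point in the text. One small bookkeeping remark: when you identify $f$ with $\Op(a)u$ you should align the complex conjugations with the paper's convention $\Op(a)^*=\Op(b)$, $b(x,y,\xi)=\overline{a(y,x,\xi)}$, and the sesquilinear pairing \eqref{eq:DistributionalPairing}; this does not affect the validity of the argument.
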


\begin{proof}
	Let $u\in\mathcal{S}'(\mathbb{R}^d)$ be a tempered distribution. To prove that $\Op(a)u$ is smooth, we demonstrate that $D^\alpha \Op(a)u = \Op(\xi^\alpha \# a)u \equiv \Op(b)u$ is a continuous function for every multi-index $\alpha \in \mathbb{N}^d$. As $u$ is a tempered distribution, an $N\in\mathbb{N}$ exists such that $v=(1+|\cdot|^2)^{-N} u \in H^{-N}$ (see \cite[Lem.~1.16]{raymond1991}). A simple computation indicates that
	\begin{align}
		\Op(b) u(x) &= \Op(b) (1+|x|^2)^N v(x) \notag \\
		&= \frac{1}{(2\pi)^d} \int_{\mathbb{R}^d} \e^{\I\scp{x-y}{\xi}} b(x,\xi) \sum_{\beta \in \mathbb{N}^d} \frac{1}{\beta!} \partial_x^\beta(1+|x|^2)^N (y-x)^\beta v(y) \diff y \diff \xi \notag \\
		&= \sum_{\beta \in \mathbb{N}^d} \frac{1}{\beta!} D_x^\beta (1+|x|^2)^N \Op(\partial_\xi^\beta b)v(x).
	\end{align}
	Clearly, $\partial_\xi^\beta b \in S^{-\infty}$ for all $\beta \in \mathbb{N}^d$ because $b\in S^{-\infty}$. It follows that $\Op(\partial_\xi^\beta b)v \in H^d$ by Corollary \ref{cor:PseudorsSobolevMapping}. Because $H^d$ is a subspace of the space of continuous functions, $\Op(\partial_\xi^\beta b)v$ is continuous, and so is $\Op(b)u$.
\end{proof}

\begin{thm}
	If $a\in S^m$, then $\Op(a)$ is pseudo-local. If $a\in S^m$ is elliptic, then $\Op(a)$ is hypoelliptic.
	\label{thm:PseudoLocal}
\end{thm}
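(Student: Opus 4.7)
The plan is twofold: first establish pseudo-locality by exploiting the fact that the Schwartz kernel of $\Op(a)$ is smooth away from the diagonal, then derive hypoellipticity from the parametrix construction of Theorem~\ref{thm:MoyalInverse}.

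As a preliminary, I would show that the distributional kernel
\begin{align*}
K(x,y) = \frac{1}{(2\pi)^d} \int_{\mathbb{R}^d} \e^{\I\scp{x-y}{\xi}} a(x,y,\xi) \diff\xi
\end{align*}
is smooth on $\{(x,y) \in \mathbb{R}^{2d} : x \neq y\}$. On this set the operator $L' = |x-y|^{-2}(x-y)\cdot D_\xi$ satisfies $L'\e^{\I\scp{x-y}{\xi}} = \e^{\I\scp{x-y}{\xi}}$, so repeated integration by parts transfers derivatives off the exponential onto $a$ at the cost of a factor $|x-y|^{-2N}$. For $N$ sufficiently large (depending on $m$ and the number of $(x,y)$-derivatives sought), the resulting integral converges absolutely and may be differentiated under the integral sign, yielding $K \in C^\infty(\{x\neq y\})$.

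For pseudo-locality, fix $u \in \mathcal{S}'(\mathbb{R}^d)$ and $x_0 \notin \singsupp(u)$. By definition I can choose $\psi \in C_\mathrm{c}^\infty(\mathbb{R}^d)$ with $\psi = 1$ on an open neighbourhood $V$ of $x_0$ and $\psi u \in C_\mathrm{c}^\infty$, together with $\varphi \in C_\mathrm{c}^\infty(V)$ equal to $1$ near $x_0$, so that $\supp\varphi$ and $\supp(1-\psi)$ are disjoint. Decomposing
\begin{align*}
\Op(a) u = \Op(a)(\psi u) + \Op(a)((1-\psi) u),
\end{align*}
the first summand lies in $\mathcal{S}(\mathbb{R}^d) \subset C^\infty$ by Theorem~\ref{thm:PseudorsSchwartzSpaceMapping}. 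The operator $\varphi\,\Op(a)(1-\psi)$ has integral kernel $\varphi(x) K(x,y)(1-\psi(y))$, which by the preliminary step is smooth on all of $\mathbb{R}^{2d}$ since its support avoids the diagonal. Hence $\varphi\,\Op(a)(1-\psi)$ is smoothing and sends $\mathcal{S}'$ into $C^\infty$, so $\varphi\,\Op(a) u \in C^\infty$ and $x_0 \notin \singsupp(\Op(a) u)$.

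For hypoellipticity, assume $a$ is elliptic. Theorem~\ref{thm:MoyalInverse} provides $b \in S^{-m}$ with $b \# a = 1 + r$ for some $r \in S^{-\infty}$, so $\Op(b)\Op(a) = \mathbb{1} + \Op(r)$ with $\Op(r)$ smoothing by Theorem~\ref{thm:Smoothing}. Writing $u = \Op(b)\Op(a) u - \Op(r) u$ and using $\singsupp(\Op(r) u) = \emptyset$ together with pseudo-locality applied to $\Op(b)$, we obtain
\begin{align*}
\singsupp(u) \subset \singsupp(\Op(b)\Op(a) u) \subset \singsupp(\Op(a) u),
\end{align*}
and the reverse inclusion is pseudo-locality of $\Op(a)$, giving the claimed equality. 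I expect the main technical nuisance to be the preliminary kernel-smoothness argument, which requires carefully tracking the number of integrations by parts as a function of $m$ and of the order of the $(x,y)$-derivatives one wishes to take; the remainder of the proof is essentially formal once that bookkeeping is in place.
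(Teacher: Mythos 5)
Your proof is correct and follows the same skeleton as the paper's: the decomposition $\Op(a)u = \Op(a)(\psi u) + \Op(a)((1-\psi)u)$ with $\psi = 1$ near the point of interest for pseudo-locality, and the parametrix $b\# a = 1 + r$ from Theorem~\ref{thm:MoyalInverse} for hypoellipticity (your chain $\singsupp(u) \subset \singsupp(\Op(b)\Op(a)u) \subset \singsupp(\Op(a)u)$ is exactly the paper's). The one genuine difference is how you dispose of the second summand. The paper stays inside the symbolic calculus: it writes $\varphi\,\Op(a)(1-\psi) = \Op(\varphi a \# (1-\psi))$, observes that the disjointness of the supports kills every term of the asymptotic expansion so the symbol lies in $S^{-\infty}$, and then invokes Theorem~\ref{thm:Smoothing}. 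You instead prove directly that the Schwartz kernel $K(x,y)$ is smooth off the diagonal via the operator $L' = |x-y|^{-2}(x-y)\cdot D_\xi$ and conclude that the off-diagonal-supported kernel $\varphi(x)K(x,y)(1-\psi(y))$ defines a smoothing operator. Your route is more elementary and self-contained (it does not need the composition theorem or Theorem~\ref{thm:Smoothing}), at the price of the bookkeeping you acknowledge; the paper's route delegates that bookkeeping to results already proved. One small point to tighten in your version: to pair the kernel against a general $u \in \mathcal{S}'(\mathbb{R}^d)$ you need not merely smoothness of $K$ off the diagonal but rapid decay of $y \mapsto \varphi(x)K(x,y)(1-\psi(y))$ and all its derivatives, so that it is a Schwartz function in $y$ for each fixed $x$. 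Your integration by parts does deliver the bound $O(|x-y|^{-2N})$ for every $N$ (and likewise for derivatives), so the decay is there, but you should state it explicitly rather than only recording that $K \in C^\infty(\{x\neq y\})$.
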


\begin{proof}
	Let $u\in \mathcal{S}'(\mathbb{R}^d)$ be a tempered distribution and $U = \mathbb{R}^d \backslash \singsupp(u)$ be the com\-ple\-ment of the singular support of $u$. Then, $\psi u \in C_\mathrm{c}^\infty(U)$ for all $\psi \in C_\mathrm{c}^\infty(U)$. To prove $\singsupp(\Op(a)u) \subset \singsupp(u)$, we show that $\varphi \Op(a)u \in C_\mathrm{c}^\infty(U)$ for all $\varphi \in C_\mathrm{c}^\infty(U)$. If $\varphi \in C_\mathrm{c}^\infty(U)$, a $\psi \in C_\mathrm{c}^\infty(U)$ exists such that $\psi=1$ on $\supp(\varphi)$. We write $\varphi \Op(a)u$ as
	\begin{align}
		\varphi \Op(a)u = \varphi \Op(a)(\psi u) + \varphi \Op(a)((1-\psi)u).
	\end{align}
	The function $\Op(a)(\psi u)$ is a Schwartz function because $\psi u\in C_\mathrm{c}^\infty(U) \subset \mathcal{S}(\mathbb{R}^d)$ and because $\Op(a)$ maps Schwartz functions to Schwartz functions (see Theorem \ref{thm:PseudorsSchwartzSpaceMapping}). Hence, $\varphi \Op(a)(\psi u) \in C_\mathrm{c}^\infty(U)$. The second summand may be written as $\varphi \Op(a)((1-\psi)u) = \Op(b)u$ with $b=\varphi a \# (1-\psi) \in S^{-\infty}$ because the supports of $\varphi a$ and $1-\psi$ are disjoint. From the previous theorem it follows that $\Op(b)u$ is smooth; thus, $\Op(b)u \in C_\mathrm{c}^\infty(U)$.
	
	If $a\in S^m$ is elliptic, then, according to Theorem \ref{thm:MoyalInverse}, a $b \in S^{-m}$ and $r\in S^{-\infty}$ exist such that $b \# a = 1 + r$. The function $\Op(r)u$ is smooth because $\Op(r)$ is smoothing; hence, $\singsupp(u) = \singsupp(\Op(b \# a)u) \subset \singsupp(\Op(a)u)$.
\end{proof}

\section{Operator-valued Symbols}
\label{sec:OperatorValuedSymbols}

In this section, we discuss pseudo-differential operators with operator-valued symbols as presented in \cite[Appendix A]{teufel2003}. This calculus does not provide any more difficulties than what we discussed so far, as most of the proofs can be adapted without modification. In Chapter \ref{ch:IBC}, where we apply the IBC method to the Nelson model with variable coefficients, a calculus with operator-valued symbols simplifies our analysis.

Let $\mathcal{H}_1$, $\mathcal{H}_2$ be two separable Hilbert spaces. We write $\mathfrak{B}(\mathcal{H}_1, \mathcal{H}_2)$ for the space of all bounded operators from $\mathcal{H}_1$ to $\mathcal{H}_2$. The following definition is a direct generalisation of the symbol class $S_k(M)$.

\begin{defn}
	Let $M$ be an order function. Then $S_k(M, \mathfrak{B}(\mathcal{H}_1, \mathcal{H}_2))$ is the space of all $a\in C^\infty(\mathbb{R}^k, \mathfrak{B}(\mathcal{H}_1, \mathcal{H}_2))$ that satisfy, for all $\alpha \in \mathbb{N}^k$, the following bound:
	\begin{align}
		\norm{\partial^\alpha_z a(z)}_{\mathfrak{B}(\mathcal{H}_1, \mathcal{H}_2)} \leq C_{\alpha} M(z).
	\end{align}	
\end{defn}

We recover the symbol class $S_k(M)$ if $\mathcal{H}_1 = \mathcal{H}_2 = \mathbb{C}$. Again, $S_k(M, \mathfrak{B}(\mathcal{H}_1, \mathcal{H}_2))$ is a Fréchet space with semi-norms given by the best constants $C_\alpha$ satisfying the above estimate. The operator-valued pseudo-differential operator $\Op(a)$ for a symbol $a\in S_{3d}(M,\mathfrak{B}(\mathcal{H}_1, \mathcal{H}_2))$ is defined on $C_\mathrm{c}^\infty(\mathbb{R}^d, \mathcal{H}_1)$ in an analogy to Definition \ref{def:Pseudor} by
\begin{align}
	\mathrm{Op}(a) u(x) &= \frac{1}{(2\pi)^d} \int_{\mathbb{R}^d} \int_{\mathbb{R}^d} \e^{\I\scp{x-y}{\xi}} a(x,y,\xi) u(y) \diff y \diff \xi,
\end{align}
where the integral is a \textbf{Bochner integral}. The results of Section \ref{sec:ChangeOfQuantisation} generalise without modification to operator-valued pseudo-differential operators. Thus, it is not a restriction to assume that $a(x,y,\xi) = a(tx+(1-t)y,\xi)$ for some $t\in[0,1]$. 

The only theorem we re-prove for operator-valued pseudo-differential operators is the Calderon--Vaillancourt theorem because of its central importance for the present work and because the proof of Theorem \ref{thm:L2Continuity} fails for symbols in $S(1)$. The proof we present follows \cite[Thm. 2.8.1]{martinez2002}, which we translate into the setting of operator-valued symbols.

\begin{thm}[Calderon--Vaillancourt]
	For every $a \in S(1,\mathfrak{B}(\mathcal{H}_1, \mathcal{H}_2))$, the operator-valued pseudo-differential operator $\Op(a)$ extends to a bounded operator from $L^2(\mathbb{R}^d,\mathcal{H}_1)$ to $L^2(\mathbb{R}^d,\mathcal{H}_2)$. Moreover, there are constants $C,N>0$ independent of $a$ such that
	\begin{align}
		\norm{\Op(a)u}_{L^2(\mathbb{R}^d,\mathcal{H}_2)} \leq C \sum_{|\alpha| \leq N} \sup_{z \in \mathbb{R}^{2d}} \norm{\partial_z^\alpha a(z)}_{\mathfrak{B}(\mathcal{H}_1, \mathcal{H}_2)} \norm{u}_{L^2(\mathbb{R}^d,\mathcal{H}_1)}
	\end{align}
	for all $u\in L^2(\mathbb{R}^d,\mathcal{H}_1)$.
	\label{thm:CalderonVaillancourtOV}
\end{thm}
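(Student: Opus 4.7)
The plan is to follow the Cotlar--Stein proof of Theorem~\ref{thm:L2Continuity} as given in \cite[Thm.~2.8.1]{martinez2002}, verifying that every step survives the replacement of absolute values by $\mathfrak{B}(\mathcal{H}_1,\mathcal{H}_2)$-operator norms. By the operator-valued analogue of Theorem~\ref{thm:ChangingQuantisations}, I would first reduce to standard quantisation with $a=a(x,\xi)$ independent of $y$. Then I fix $\chi\in C_\mathrm{c}^\infty(\mathbb{R}^{2d})$ with $\sum_{k\in\mathbb{Z}^{2d}} \chi(\cdot-k)\equiv 1$ and decompose
\begin{align*}
	a = \sum_{k\in\mathbb{Z}^{2d}} a_k,\qquad a_k(z) = \chi(z-k)\,a(z),
\end{align*}
so that each $a_k \in S(1,\mathfrak{B}(\mathcal{H}_1,\mathcal{H}_2))$ is supported in a unit ball around $k$ and all its seminorms are bounded, uniformly in $k$, by a finite sum of seminorms of $a$.

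Next, I would establish two types of bounds on $A_k = \Op(a_k)$. First, the kernel
\begin{align*}
	K_k(x,y) = \frac{1}{(2\pi)^d}\int_{\mathbb{R}^d} \e^{\I\scp{x-y}{\xi}}\, a_k(x,\xi)\,\diff\xi
\end{align*}
satisfies $\norm{K_k(x,y)}_{\mathfrak{B}(\mathcal{H}_1,\mathcal{H}_2)} \leq C_N \jap{x-y}^{-N}$ for every $N$ by integration by parts against $(1-\Delta_\xi)/\jap{x-y}^2$, and $K_k(x,y)$ vanishes unless $x$ lies in the unit ball around $k_x$ (where $k=(k_x,k_\xi)$). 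The operator-valued Schur test then yields $\sup_k \norm{A_k}\leq C$. Second, for $j\neq k$ the symbols of $A_j^* A_k$ and $A_j A_k^*$ are operator-valued Moyal products; since the phase-space supports of $a_j$ and $a_k$ are separated by a distance $\sim|j-k|$, repeated integration by parts in the oscillatory integrals defining these products gives
\begin{align*}
	\norm{A_j^* A_k} + \norm{A_j A_k^*} \leq C_N \jap{j-k}^{-N}
\end{align*}
for every $N\in\mathbb{N}$, with constants $C_N$ controlled by finitely many seminorms of $a$.

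With these estimates the Cotlar--Stein lemma, which is valid for bounded operators between any pair of Hilbert spaces, yields strong convergence of the partial sums of $\sum_k A_k$ together with
\begin{align*}
	\norm{\Op(a)}_{\mathfrak{B}(L^2(\mathbb{R}^d,\mathcal{H}_1),L^2(\mathbb{R}^d,\mathcal{H}_2))} \leq \sup_j \sum_k \Big(\norm{A_j^* A_k}^{\frac{1}{2}} + \norm{A_j A_k^*}^{\frac{1}{2}}\Big),
\end{align*}
and choosing $N$ sufficiently large renders the right-hand side finite and bounded by a finite sum of seminorms of $a$. The main obstacle I foresee is the bookkeeping in the operator-valued Moyal product: one must verify that $a_j^*(x,\xi)$ (the pointwise adjoint in $\mathfrak{B}(\mathcal{H}_2,\mathcal{H}_1)$) still obeys $S(1)$ estimates with the same seminorms, and that the Bochner-integral oscillatory integrals defining $a_j^* \# a_k$ admit the same integration-by-parts manipulations as in the scalar case, with $\mathfrak{B}$-operator norms replacing absolute values throughout.
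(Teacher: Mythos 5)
Your proposal is correct and follows essentially the same route as the paper: a phase-space partition of unity $a=\sum_k a\chi(\cdot-k)$, uniform bounds on the localised pieces, almost-orthogonality of $A_jA_k^*$ and $A_j^*A_k$ obtained by integration by parts in the oscillatory integrals using the $\sim|j-k|$ separation of supports, and the Cotlar--Stein lemma (the paper likewise translates \cite[Thm.~2.8.1]{martinez2002} to operator-valued symbols). The only cosmetic differences are that the paper bounds $\sup_k\norm{A_k}$ by Cauchy--Schwarz on the compactly supported symbol rather than by a Schur test on the kernel, and writes the composition $A_\mu A_\nu^*$ via its explicit integral kernel rather than as a Moyal product.
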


\begin{proof}	
	The key idea of the proof is to apply the Cotlar--Stein Lemma with $H_1 = L^2(\mathbb{R}^d,\mathcal{H}_1)$ and $H_2 = L^2(\mathbb{R}^d,\mathcal{H}_2)$ (see Theorem \ref{thm:CotlarStein}). 
	
	We define a partition of the operator $A=\Op(a)$ by selecting a $\chi \in C_\mathrm{c}^\infty(\mathbb{R}^{2d})$ such that
	\begin{align}
		\sum_{\mu \in \mathbb{Z}^{2d}} \chi_\mu = 1,
	\end{align}
	where $\chi_\mu(z) = \chi(z-\mu)$ is the function $\chi$ shifted by the lattice point $\mu \in \mathbb{Z}^{2d}$.
	We set $a_\mu = a \chi_\mu$ and $A_\mu = \Op(a_\mu)$. The operator $A_\mu:H_1\to H_2$ is bounded for every $\mu \in \mathbb{Z}^{2d}$:
	\begin{align}
		\norm{A_\mu u}_{H_2}^2 &= \int_{\mathbb{R}^d} \norm{A_\mu u(x)}^2_{\mathcal{H}_2} \diff x \notag \\
		&\leq \frac{1}{(2\pi)^{2d}} \int_{\mathbb{R}^d} \left(\int_{\mathbb{R}^d} \norm{a_\mu(x,\xi) \widehat{u}(\xi)}_{\mathcal{H}_2} \diff\xi \right)^2 \diff x \notag \\
		&\leq \frac{1}{(2\pi)^{2d}} \int_{\mathbb{R}^d} \left(\int_{\mathbb{R}^d} \norm{a_\mu(x,\xi)}_{\mathfrak{B}(\mathcal{H}_1,\mathcal{H}_2)}^2 \diff \xi\right) \left( \int_{\mathbb{R}^d} \norm{\widehat{u}(\xi)}_{\mathcal{H}_1}^2 \diff\xi \right) \diff x \notag \\
		&\leq \frac{1}{(2\pi)^{2d}} \sup_{z \in \mathbb{R}^{2d}} \norm{a(z)}_{\mathfrak{B}(\mathcal{H}_1,\mathcal{H}_2)}^2 \int_{\mathbb{R}^d} \int_{\mathbb{R}^d} |\chi_\mu(x,\xi)|^2 \diff x\diff \xi \ \norm{\widehat{u}}_{H_1}^2 \notag \\ 
		&\leq C \sup_{z \in \mathbb{R}^{2d}} \norm{a(z)}_{\mathfrak{B}(\mathcal{H}_1,\mathcal{H}_2)}^2 \norm{u}_{H_1}^2.
	\end{align}
	The constant $C$ does not depend on $\mu$ because the integral over $(x,\xi)$ in the fourth line is translation invariant.
	
	Next, we estimate the operator norms of $A_\mu A_\nu^*$ and $A_\mu^* A_\nu$ for $\mu = (\mu_1,\mu_2), \nu = (\nu_1,\nu_2) \in \mathbb{Z}^{2d}$ and illustrate that
	\begin{align}
		\sup_{\mu \in \mathbb{Z}^{2d}} \sum_{\nu \in \mathbb{Z}^{2d}} \norm{A_\mu A_\nu^*}_{\mathfrak{B}(H_2)}^{\frac{1}{2}} \leq C \sum_{|\alpha| \leq N} \sup_{z \in \mathbb{R}^{2d}} \norm{\partial_z^\alpha a(z)}_{\mathfrak{B}(\mathcal{H}_1, \mathcal{H}_2)} \equiv C(a), \\ 
		\sup_{\mu \in \mathbb{Z}^{2d}} \sum_{\nu \in \mathbb{Z}^{2d}} \norm{A_\mu^* A_\nu}_{\mathfrak{B}(H_1)}^{\frac{1}{2}} \leq C \sum_{|\alpha| \leq N} \sup_{z \in \mathbb{R}^{2d}} \norm{\partial_z^\alpha a(z)}_{\mathfrak{B}(\mathcal{H}_1, \mathcal{H}_2)} \equiv C(a)
	\end{align}
	for some constants $C,N>0$ which do not depend on $a$.
	
	We already observed that $A_\mu$ is bounded by $C(a)$. Hence, for every $\mu \in \mathbb{Z}^{2d}$, it suffices to bound the operator norms of $A_\mu A_\nu^*$ and $A_\mu^* A_\nu$ for almost every $\nu$ and demonstrate that the operator norms are summable. Therefore, we assume $|\mu-\nu|$ is sufficiently large such that the supports of $\chi_\mu$ and $\chi_\nu$ are disjoint. To proceed further, we note that
	\begin{align}
		A_\mu A_\nu^* u(x) = \int_{\mathbb{R}^d} K_{\mu\nu}(x,y) u(y) \diff y
	\end{align}
	with the integral kernel
	\begin{align}
		K_{\mu\nu}(x,y)
		= \frac{1}{(2\pi)^d} \int_{\mathbb{R}^d} \e^{\I\scp{x-y}{\xi}} a_\mu(x,\xi) a_\nu(y,\xi)^* \diff \xi \in \mathfrak{B}(\mathcal{H}_2).
	\end{align}
	Furthermore, we observe that
	\begin{align}
		L \e^{\I\scp{x-y}{\xi}} = \e^{\I\scp{x-y}{\xi}}, \ L = \frac{1+ (x-y)D_\xi}{1+|x-y|^2}.
	\end{align}
	Hence, via integration by parts, it follows that, for any $k \in \mathbb{N}$,
	\begin{align}
		K_{\mu\nu}(x,y) = \frac{1}{(2\pi)^{d}} \int_{\mathbb{R}^d} \e^{\I\scp{x-y}{\xi}} (L^\mathrm{T})^{k}(a_\mu(x,\xi) a_\nu(y,\xi)^*) \diff \xi.
	\end{align}	
	If $(x,y) \in \supp(a_\mu(\cdot,\xi) a_\nu(\cdot,\xi)^*)$, then $x\neq y$ because we assumed that the supports of $\chi_\mu$ and $\chi_\nu$ are disjoint. Thus, a constant $c>0$ independent of $\mu,\nu$, and $\xi$ exists such that, for every $(x,y) \in \supp(a_\mu(\cdot,\xi)\overline{a}_\nu(\cdot,\xi))$,
	\begin{align}
		c^{-1}|\mu-\nu| \leq |x-y| \leq c |\mu-\nu|.
	\end{align}
	With a similar argument as in the first step of the proof of Theorem \ref{thm:L2Continuity}, we obtain the following bound:
	\begin{align}
		\int_{\mathbb{R}^d} \norm{K_{\mu\nu}(x,y)}_{\mathfrak{B}(\mathcal{H}_2)} \diff y \leq C(a) \int_{B_{\mu\nu}(x)} \int_{\substack{|\xi-\mu_2| \leq C_\chi \\ |\xi-\nu_2| \leq C_\chi}} \frac{1}{(1+|x-y|)^k} \diff \xi \diff y,
		\label{eq:CVIntegral1}
	\end{align} 
	where $B_{\mu\nu}(x)= \{y \in \mathbb{R}^d \mid c^{-1} |\mu-\nu| \leq |x-y| \leq c |\mu-\nu|\}$. The constant $C_\chi$ depends only on the support of $\chi$. The integral over $\xi$ in \eqref{eq:CVIntegral1} is bounded and independent of $\mu$ and $\nu$; hence,
	\begin{align}
		\int_{\mathbb{R}^d} \norm{K_{\mu\nu}(x,y)}_{\mathfrak{B}(\mathcal{H}_2)} \diff y
		&\leq C(a) \int_{B_{\mu\nu}(x)} \frac{1}{(1+|x-y|)^k} \diff y \notag \\
		&\leq C(a) \int_{\mathbb{R}^d} \frac{(1+|\mu-\nu|)^{d+1-k}}{(1+|x-y|)^{d+1}} \diff y \notag \\
		&\leq C(a) (1+|\mu-\nu|)^{d+1-k}.
	\end{align}
	Analogously, we confirm that integrating $\norm{K_{\mu\nu}(x,y)}_{\mathfrak{B}(\mathcal{H}_2)}$ over $x$ yields the same bound:
	\begin{align}
		\int_{\mathbb{R}^d} \norm{K_{\mu\nu}(x,y)}_{\mathfrak{B}(\mathcal{H}_2)} \diff x \leq C(a) (1+|\mu-\nu|)^{d+1-k}.
	\end{align}
	This proves that the integral kernel $K_{\mu\nu}(x,y)$ passes the Schur test (see Theorem \ref{thm:SchurTest}); thus,
	\begin{align}
		\norm{A_\mu A_\nu^*}_{\mathfrak{B}(H_2)} \leq C(a) (1+|\mu-\nu|)^{d+1-k}.
	\end{align}	
	Similarly, the operator norm of $A_\mu^* A_\nu$ is bounded:
	\begin{align}
		\norm{A_\mu^* A_\nu}_{\mathfrak{B}(H_1)} \leq C(a) (1+|\mu-\nu|)^{d+1-k}.
	\end{align}
	If $k \in \mathbb{N}$ is chosen sufficiently large, $(1+|\mu-\nu|)^{d+1-k}$ is summable over $\nu \in \mathbb{Z}^{2d}$ uniformly in $\mu \in \mathbb{Z}^{2d}$. From the Cotlar--Stein Lemma it follows that $\sum_\mu A_\mu$ converges in the strong operator topology to a bounded operator $A:L^2(\mathbb{R}^d,\mathcal{H}_1) \to L^2(\mathbb{R}^d,\mathcal{H}_2)$ with $\norm{A}\leq C(a)$. The action of $A$ on $C_\mathrm{c}^\infty(\mathbb{R}^d,\mathcal{H}_1)$ coincides with that of $\Op(a)$. Thus, $\Op(a)$ extends to a bounded operator from $L^2(\mathbb{R}^d,\mathcal{H}_1)$ to $L^2(\mathbb{R}^d,\mathcal{H}_2)$.
\end{proof}

\begin{cor}
	If $a \in S(M,\mathfrak{B}(\mathcal{H}_1,\mathcal{H}_2))$, then $\Op(a)$ is relatively bounded with respect to $M(D)$. Furthermore, if $a$ is elliptic, then $\Op(a^{-1})$ is relatively bounded with respect to $M(D)^{-1}$.
	\label{cor:RelativeBoundednessPseudor}
\end{cor}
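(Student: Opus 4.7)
The plan is to reduce both claims to the operator-valued Calderon--Vaillancourt theorem (Theorem \ref{thm:CalderonVaillancourtOV}) by factoring out the weight $M(\xi)$ from the symbol. Because $M$ depends only on the momentum variable, the Moyal expansion of $b \# M$ for any symbol $b$ collapses (all terms with $D_x^\alpha M$ for $|\alpha|\ge 1$ vanish), giving the exact identity $\Op(b)\Op(M) = \Op(b \cdot M)$; the same fact can be read off directly from the standard-quantisation formula by inserting $\widehat{M(D)u}(\xi) = M(\xi)\widehat{u}(\xi)$.

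First I would set $b(x,\xi) := a(x,\xi) M(\xi)^{-1}$ and verify that $b \in S_{2d}(1,\mathfrak{B}(\mathcal{H}_1,\mathcal{H}_2))$. This is a Leibniz-rule computation combined with the estimate $|\partial_\xi^\alpha M(\xi)^{-1}| \le C_\alpha M(\xi)^{-1}$, which in turn follows from the regularity of the order function $M$ (namely $|\partial_\xi^\alpha M| \le C_\alpha M$, elementary for the weights $\jap{\xi}^m$ appearing later in the Nelson application). Theorem \ref{thm:CalderonVaillancourtOV} then yields that $\Op(b) \colon L^2(\mathbb{R}^d,\mathcal{H}_1)\to L^2(\mathbb{R}^d,\mathcal{H}_2)$ is bounded, and the exact composition identity above rewrites as the factorisation $\Op(a) = \Op(b)\, M(D)$ on $D(M(D))$. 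Consequently
\begin{align*}
	\norm{\Op(a) u}_{L^2(\mathbb{R}^d,\mathcal{H}_2)} \le \norm{\Op(b)}\, \norm{M(D) u}_{L^2(\mathbb{R}^d,\mathcal{H}_1)}
\end{align*}
for every $u\in D(M(D))$, which is the desired relative boundedness (with relative bound $0$ for the $\norm{u}$ term).

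For the elliptic case, ellipticity of $a\in S(M,\mathfrak{B}(\mathcal{H}_1,\mathcal{H}_2))$ means that $a(x,\xi)$ has a bounded inverse with $\norm{a(x,\xi)^{-1}}_{\mathfrak{B}(\mathcal{H}_2,\mathcal{H}_1)} \le C M(\xi)^{-1}$, and the usual Leibniz argument (using $M\in S(M)$ once again) shows $a^{-1}\in S(M^{-1},\mathfrak{B}(\mathcal{H}_2,\mathcal{H}_1))$. Setting $c := a^{-1}\cdot M \in S(1,\mathfrak{B}(\mathcal{H}_2,\mathcal{H}_1))$ and applying the same factorisation trick gives $\Op(a^{-1}) = \Op(c)\, M(D)^{-1}$, which yields relative boundedness with respect to $M(D)^{-1}$ via Theorem \ref{thm:CalderonVaillancourtOV}. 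The only real piece of work is the symbol estimate $b,c \in S(1,\mathfrak{B})$; the rest is algebraic and domain bookkeeping.
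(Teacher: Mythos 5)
Your proposal is correct and follows essentially the same route as the paper: factor $\Op(a) = \Op(aM^{-1})M(D)$ in standard quantisation (exact because $M$ depends only on $\xi$) and apply the operator-valued Calderon--Vaillancourt theorem to $aM^{-1}\in S(1,\mathfrak{B}(\mathcal{H}_1,\mathcal{H}_2))$, with the analogous factorisation $\Op(a^{-1})=\Op(a^{-1}M)M(D)^{-1}$ in the elliptic case. The only difference is that you spell out the symbol estimate $aM^{-1}\in S(1,\mathfrak{B})$, which the paper asserts without comment.
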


\begin{proof}
	It is not a restriction to assume that $\Op(a)$ is given in standard quantisation. A simple computation demonstrates that
	\begin{align}
		\Op(a)u(x) &= \frac{1}{(2\pi)^d} \int_{\mathbb{R}^d} \e^{\I \scp{x}{\xi}} a(x,\xi)M(\xi)^{-1} M(\xi) \widehat{u}(\xi) \diff \xi \notag \\
		&= \frac{1}{(2\pi)^d} \int_{\mathbb{R}^d} \e^{\I \scp{x}{\xi}} a(x,\xi)M(\xi)^{-1} \widehat{M(D) u}(\xi) \diff \xi \notag \\ 
		&= \Op\left(aM^{-1}\right) M(D)u(x);
		\label{eq:RelativeBoundednessPseudor}
	\end{align}
	hence, the first claim is a consequence of Theorem \ref{thm:CalderonVaillancourtOV} because $aM^{-1} \in S(1,\mathfrak{B}(\mathcal{H}_1,\mathcal{H}_2))$.
	
	If $a$ is elliptic, then $a^{-1} \in S(M^{-1},\mathfrak{B}(\mathcal{H}_1,\mathcal{H}_2))$. Because $a^{-1}(x,\xi)M(\xi)$ is bounded by 1, a similar computation as in \eqref{eq:RelativeBoundednessPseudor} verifies that $\Op(a^{-1})$ is relatively bounded with respect to $M(D)^{-1}$.
\end{proof}
\chapter{Removal of the Cut-off}
\label{ch:RemovalCutOff}

In Section \ref{sec:NelsonModelConstantCoefficients}, we argued that the formal Nelson Hamiltonian must be regularised to obtain a well-defined Hamiltonian. We proved in Proposition \ref{prop:CutOffHamiltonianSelfAdjoint} that putting an ultraviolet cut-off $\Lambda$ on the interaction leads to a self-adjoint Hamiltonian $H_\Lambda$. In this chapter, we remove the cut-off using renormalisation to obtain the self-adjoint renormalised Nelson Hamiltonian. 

\begin{ntn}
	We write $f \in C^\infty_{\mathrm{b}}(\mathbb{R}^3)$ if $f$ is a smooth function and if all derivatives of $f$ (including the function $f$ itself) are bounded.
\end{ntn}

\begin{thm}
	Let $H_\Lambda$ be the Nelson Hamiltonian with variable coefficients and cut-off defined in Section \ref{sec:NelonModelVariableCoeff}. Assume that $g^{jk}, \mu^2, W \in C^\infty_{\mathrm{b}}(\mathbb{R}^3)$. Then particle potentials $-E_\Lambda(X)$ exist such that $H_\Lambda + E_\Lambda(X)$ converges in the norm resolvent sense to a self-adjoint bounded from below operator $H$ as $\Lambda \to \infty$.
	\label{thm:Gerard}
\end{thm}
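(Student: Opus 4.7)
The plan is a Gross (dressing) transformation adapted to the variable-coefficient setting. Fix a large spectral parameter $\sigma>0$ and, via Theorem~\ref{thm:FunctionalCalculus} applied to $\omega = h^{1/2}$, split the form factor using a smooth cutoff $\chi_\sigma$ that vanishes below $\sigma$ and equals $1$ above $2\sigma$:
\begin{align*}
v_{\Lambda,X} = \tfrac{1}{\sqrt 2}\omega^{-1/2}\rho_{\Lambda,X} = v_{\Lambda,X}^{\mathrm{ir}} + v_{\Lambda,X}^{\mathrm{uv}}, \qquad v_{\Lambda,X}^{\mathrm{uv}} := \chi_\sigma(\omega) v_{\Lambda,X}.
\end{align*}
Both $v_{\Lambda,X}^{\mathrm{ir}}$ and $\omega^{-1} v_{\Lambda,X}^{\mathrm{uv}}$ are $L^2$-bounded uniformly in $\Lambda$: the first because the spectral cutoff confines it to a bounded window of $\omega$; the second because the extra $\omega^{-1}$ beats the three-dimensional UV singularity. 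Define the $X$-fibred Weyl operator $U_\Lambda := V(g)$ with $g := -\omega^{-1}v_{\Lambda,X}^{\mathrm{uv}}$, so $U_\Lambda = \e^{-\I\Pi(\omega^{-1}v_{\Lambda,X}^{\mathrm{uv}})}$ is unitary on $\mathfrak H$.

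A direct application of Proposition~\ref{prop:PropertiesWeylOperator} in the $X$-fibre yields
\begin{align*}
V(g)\,\diff\Gamma(\omega)\,V(g)^* &= \diff\Gamma(\omega) - \Phi(v_{\Lambda,X}^{\mathrm{uv}}) + \tfrac{1}{2}\scp{v_{\Lambda,X}^{\mathrm{uv}}}{\omega^{-1}v_{\Lambda,X}^{\mathrm{uv}}}_\mathfrak h, \\
V(g)\,\Phi(v_{\Lambda,X})\,V(g)^* &= \Phi(v_{\Lambda,X}) - \scp{v_{\Lambda,X}}{\omega^{-1}v_{\Lambda,X}^{\mathrm{uv}}}_\mathfrak h.
\end{align*}
Choosing $E_\Lambda(X) := \tfrac{1}{2}\scp{v_{\Lambda,X}^{\mathrm{uv}}}{\omega^{-1}v_{\Lambda,X}^{\mathrm{uv}}}_\mathfrak h$, which lies in $C^\infty_\mathrm{b}(\mathbb R^3_X)$ by the regularity assumptions on $g^{jk}$ and $\mu$, cancels the $\Lambda$-divergent scalar, while the bounded remainder $b_\Lambda(X) := -\scp{v_{\Lambda,X}^{\mathrm{ir}}}{\omega^{-1}v_{\Lambda,X}^{\mathrm{uv}}}_\mathfrak h$ stays uniformly bounded. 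Hence
\begin{align*}
\tilde H_\Lambda := U_\Lambda(H_\Lambda + E_\Lambda(X))U_\Lambda^* = K + \diff\Gamma(\omega) + \Phi(v_{\Lambda,X}^{\mathrm{ir}}) + b_\Lambda(X) + R_\Lambda, \qquad R_\Lambda := U_\Lambda K U_\Lambda^* - K.
\end{align*}
The interaction $\Phi(v_{\Lambda,X}^{\mathrm{ir}})$ is relatively $\diff\Gamma(\omega)^{1/2}$-bounded with uniform constant (Lemma~\ref{lem:EstimatesAnnihilationCreation}) and converges in that norm, together with $b_\Lambda$, to the respective limits thanks to~\eqref{eq:ConvergenceCutOffToDelta}.

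The main obstacle is the commutator $R_\Lambda$. Since $V(g)$ depends smoothly on $X$ with values in the unitary group of $\Gamma_\mathrm{s}(\mathfrak h)$, Duhamel's formula --- equivalently, the Moyal composition for operator-valued symbols (Theorem~\ref{thm:MoyalProduct}) --- expands $[K,V(g)]$ into a finite sum of terms of the schematic form
\begin{align*}
g^{jk}(X)\,\Pi(\partial_{X_k}g)\,\partial_{X_j}V(g),\quad g^{jk}(X)\,\Pi(\partial_{X_j}g)\Pi(\partial_{X_k}g)V(g), \quad g^{jk}(X)\,\Pi(\partial_{X_j}\partial_{X_k}g)V(g),
\end{align*}
plus lower-order contributions from $W$ and derivatives of $g^{jk}$. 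The crucial point is that every $X$-derivative of $g = -\omega^{-1}\chi_\sigma(\omega)\omega^{-1/2}\rho_{\Lambda,X}/\sqrt 2$ transfers to an $x$-derivative of $\rho_{\Lambda,X}$ via $\partial_{X_j}\rho_{\Lambda,X} = -\partial_{x_j}\rho_{\Lambda,X}$. Commuting this through the pseudo-differential operator $\omega^{-3/2}\chi_\sigma(\omega) \in \Op(S^{-3/2})$ from Theorem~\ref{thm:FunctionalCalculus} yields $\mathfrak h$-norm bounds $\norm{h^{-\alpha}\partial_X^\beta g}_\mathfrak h \leq C_{\alpha\beta}\sigma^{-\alpha-1/2}$ uniform in $\Lambda$, for $\alpha \in [0,1/2]$ and $|\beta|\leq 2$. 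Feeding these into Lemma~\ref{lem:EstimatesAnnihilationCreation} controls the field factors by $\diff\Gamma(\omega)^{\alpha}$, and the operator-valued Calderon--Vaillancourt theorem (Theorem~\ref{thm:CalderonVaillancourtOV}) dispatches the accompanying $X$-multipliers and derivatives $\partial_{X_j}$. Consequently $R_\Lambda$ is $H_0$-bounded with relative bound $O(\sigma^{-1/2})$ uniformly in $\Lambda$, and the same estimates on $R_\Lambda - R_{\Lambda'}$ yield Cauchy-ness in $\mathfrak B(D(H_0), \mathfrak H)$, whence $R_\Lambda \to R_\infty$.

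Choosing $\sigma$ large, Kato--Rellich gives self-adjointness of $\tilde H_\Lambda$ on $D(H_0)$ with a uniform lower bound, and a second resolvent identity then yields $(\tilde H_\Lambda - z)^{-1} \to (\tilde H - z)^{-1}$ in norm for $z$ with $|\Im z|$ large, $\tilde H$ bounded below. Since $\omega^{-1}v_{\Lambda,X}^{\mathrm{uv}}$ is Cauchy in $\mathfrak h$ (by~\eqref{eq:ConvergenceCutOffToDelta} composed with the functional calculus for $\omega$), Proposition~\ref{prop:PropertiesWeylOperator}(\ref{it:PropWeylOperator3}) gives $U_\Lambda \to U_\infty := V(-\omega^{-1}v_{\infty,X}^{\mathrm{uv}})$ strongly, and in fact in norm after composition with $(H_0 + i)^{-1}$ (using $N \leq \diff\Gamma(\omega)/m$ to map $D(H_0)$ uniformly into $D(N^{1/2})$). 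This upgrades strong resolvent convergence of $(H_\Lambda + E_\Lambda(X) - z)^{-1} = U_\Lambda^*(\tilde H_\Lambda - z)^{-1}U_\Lambda$ to norm convergence, with limit $H := U_\infty^* \tilde H U_\infty$, which is self-adjoint and bounded below by unitary equivalence. The hardest part is really the pseudo-differential analysis of $[K,U_\Lambda]$ in the third step: with constant coefficients the commutator reduces to explicit Fourier multipliers, but for variable $g^{jk}$ one genuinely needs the Moyal composition and operator-valued Calderon--Vaillancourt theorem of Chapter~\ref{ch:Pseudors} to turn fibrewise $\mathfrak h$-estimates on $\partial_X^\beta g$ into $\mathfrak H$-bounds on $R_\Lambda$.
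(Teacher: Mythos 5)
Your overall architecture (Gross transformation, extraction of a divergent scalar $E_\Lambda(X)$, convergence of the transformed operator, retransformation) matches the paper's, but your choice of dressing function breaks the argument at the decisive step. You take $g=-\omega^{-1}v^{\mathrm{uv}}_{\Lambda,X}=-\tfrac{1}{\sqrt2}\omega^{-3/2}\chi_\sigma(\omega)\rho_{\Lambda,X}$, which is the dressing appropriate to the \emph{static} (infinite-mass) model $\diff\Gamma(\omega)+\Phi(v_{\Lambda,X})$. For the full model the commutator with $K_0=-\partial_X\cdot g(X)\partial_X$ produces, among the terms you list schematically, the contribution $-\I\Pi(K_0 g)=\Phi(K_0 g)-\sqrt2\,a(K_0 g)$, i.e.\ a \emph{new interaction} with form factor $K_0 g$. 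Since each $\partial_X$ acting on $\rho_{\Lambda,X}$ becomes $-\partial_x$, in the elliptic calculus $K_0 g$ behaves like $\jap{\xi}^{2}\jap{\xi}^{-3/2}\widehat\rho(\xi/\Lambda)=\jap{\xi}^{1/2}\widehat\rho(\xi/\Lambda)$ — strictly \emph{more} singular than the original form factor $v_{\Lambda,X}\sim\jap{\xi}^{-1/2}\widehat\rho(\xi/\Lambda)$. Concretely, your key estimate $\norm{h^{-\alpha}\partial_X^\beta g}_{\mathfrak h}\le C\sigma^{-\alpha-1/2}$ uniformly in $\Lambda$ for $\alpha\le 1/2$, $|\beta|\le2$ is false: a scaling computation gives $\norm{h^{-\alpha}\partial_X^\beta g}_{\mathfrak h}^2\sim\int^\Lambda r^{2|\beta|-4\alpha-1}\diff r\sim\Lambda^{2|\beta|-4\alpha}$, which diverges like $\Lambda^2$ for $|\beta|=2$, $\alpha=1/2$ (and logarithmically already for $|\beta|=1$, $\alpha=1/2$). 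Consequently $R_\Lambda$ is not $H_0$-bounded uniformly in $\Lambda$, let alone Cauchy, and the Kato--Rellich step collapses. This is not a technicality of the variable-coefficient setting: the same failure occurs for constant coefficients, and it is precisely why Nelson's dressing must incorporate the particle recoil.

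The repair is the one the paper makes: take $B^\sigma_{\Lambda,X}=-(K+\omega)^{-1}\omega^{-1/2}\rho^\sigma_{\Lambda,X}$ (a genuinely $X$-dependent object built from the \emph{joint} symbol of particle and boson, of order $\jap{\xi}^{-5/2}$ by Lemma~\ref{lem:B}), so that the field operators generated by $\diff\Gamma(\omega)$ \emph{and} by $K_0$ combine into $\Phi\bigl(\omega^{-1/2}\rho+(K_0+\omega)B^\sigma\bigr)$, whose UV-singular parts cancel by construction of $B^\sigma$. Two further consequences of the correct dressing that your write-up would then have to absorb: the divergent scalar is no longer just $\tfrac12\scp{v^{\mathrm{uv}}}{\omega^{-1}v^{\mathrm{uv}}}$ but the potential $V_\Lambda(X)$ of \eqref{eq:VLambda}, whose identification with a vacuum-energy counterterm plus a convergent remainder requires the pseudo-differential expansion of Section~\ref{sec:Renormalisation}; and the surviving remainder $R_\Lambda(X)$ contains terms such as $a^*(\partial_XB^\sigma)\cdot g(X)a^*(\partial_XB^\sigma)$ that are only \emph{form}-bounded relative to $H_0$, so the limit must be taken in the sense of quadratic forms via a KLMN-type argument (Proposition~\ref{prop:KLMN}) rather than by Kato--Rellich on $D(H_0)$. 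Your final retransformation step (norm convergence of $(H_0+1)^{-1/2}U_\Lambda$ upgrading the resolvent convergence) is sound and essentially identical to the paper's.
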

The theorem in the above formulation was first proven by Gérard \textit{et al.}~\cite{gerard2011}, and the proof we present follows their paper. Furthermore, we include ideas from Nelson's original work \cite{nelson1964} and from Griesemer and Wünsch \cite{griesemer2016, griesemer2018}. The latter proved that the convergence holds in the norm resolvent sense in the Nelson model with constant coefficients, and their technique applies to the Nelson model with variable coefficients. The main steps in the proof of Theorem \ref{thm:Gerard} are the following.
\begin{enumerate}
	\item Realise that the relevant operators in the Nelson model with variable coefficients are pseudo-differential operators. 
	\item Define a suitable unitary Gross transformation $U_\Lambda$ and compute the transformed cut-off Hamiltonian $U_\Lambda H_\Lambda U_\Lambda^*$.
	\item Identify a particle potential $-E_\Lambda(X)$ that is divergent as $\Lambda \to \infty$. Remove this divergent potential from the transformed cut-off Hamiltonian.
	\item Show that $H_\Lambda' = U_\Lambda H_\Lambda U_\Lambda^* + E_\Lambda(X)$ defines a family $(q_\Lambda)_{\Lambda\geq 0}$ of quadratic forms. Prove that the limit $q_\infty$ exists and that the corresponding operator $H_\infty'$ defines the required self-adjoint operator $H = U_\infty^* H_\infty' U_\infty$.
\end{enumerate}

\section{Pseudo-differential Operators in Nelson Model}
\label{sec:PseudorsInNelsonModelI}

The particle Hamiltonian $K = K_0 + W(X)$ and the square of the dispersion relation $\omega^2 = h$ can be treated as pseudo-differential operators:
\begin{align}
	K = K_0 + W(X) &= K_0^\mathrm{w}(X, D_X) + W(X) = K^\mathrm{w}(X, D_X), \\
	h = h_0 + \mu(x)^2 &= h_0^\mathrm{w}(x,D_x) + \mu(x)^2 = h^\mathrm{w}(x,D_x).
\end{align}
The corresponding symbols
\begin{align}
	K(X,\Xi) &= K_0(X,\Xi) + W(X) = \Xi \cdot g(X)\Xi + W(X), \\
	h(x,\xi) &= h_0(x,\xi) + \mu(x)^2 = \xi \cdot g(x)\xi + \mu(x)^2
\end{align}
are under the assumptions of Theorem \ref{thm:Gerard} elements of $S^2$. Furthermore, these symbols are elliptic because $g$ is uniformly elliptic.

\subsubsection*{Dispersion relation}

The dispersion relation $\omega$ is the positive square root of $h$. The square root of a pseudo-differential operator is not necessarily a pseudo-differential operator because the square root is not smooth in $0$. However, $h \geq m^2 > 0$ is bounded from below by the square of the (by assumption) positive bosonic field mass $m^2$. If $f\in S^{1/2}(\mathbb{R})$ (see \eqref{eq:Sp} for the definition of $S^{p}(\mathbb{R})$) is such that $f(t) = \sqrt{t}$ for $t \geq m^2/2$, then $f$ and $t\mapsto \sqrt{t}$ agree on the spectrum of $h$; hence,
\begin{align}
	\omega = (h^\mathrm{w})^\frac{1}{2} = f(h^\mathrm{w}).
\end{align}
According to the functional calculus (see Theorem \ref{thm:FunctionalCalculus}), the operator $\omega = f(h^\mathrm{w})$ is a pseudo-differential operator of order 1. Its principal symbol is $f(h(x,\xi)) = h(x,\xi)^{1/2}$, that is, $f(h)^\mathrm{w} - \omega = f(h)^\mathrm{w} - f(h^\mathrm{w}) \in \mathrm{Op}(S^0)$.
If we prefer to use the standard quantisation instead, we can also write $\mathrm{Op}(f(h)) - \omega \in \Op(S^0)$ due to Corollary \ref{cor:ChangingQuantisations}. Similarly, we confirm that $\omega^k$ is a pseudo-differential operator of order $k$ for every $k \in \mathbb{R}$. 

\subsubsection*{Free Nelson Hamiltonian}

The free Nelson Hamiltonian $H_0 = K \otimes \mathbb{1} + \mathbb{1} \otimes \diff\Gamma(\omega) \equiv K + \diff\Gamma(\omega)$ restricted to the $n$-boson subspace $\mathfrak{H}^{(n)}$ is a pseudo-differential operator.\footnote{Hereafter, if it is clear from the context, we omit tensor products with the identity operator.} Its symbol is elliptic and an element of the symbol class $S(M,g)$ with the order function
\begin{align}
	M(\Xi,\xi) = \jap{\Xi}^2 + \sum_{j=1}^n \jap{\xi_j} \equiv \jap{\Xi}^2 + \Omega(\xi),
	\label{eq:OrderFunctionH0}
\end{align}
where $\Omega(\xi) = \sum_{j=1}^n \jap{\xi_j}$, and the metric
\begin{align}
	g_{(X,\Xi,\xi,x)} = \diff X^2 + \jap{\Xi}^{-2} \diff \Xi^2 + \sum_{j=1}^n (\diff x_j^2 + \jap{\xi_j}^{-2} \diff \xi_j^2).
	\label{eq:MetricH0}
\end{align}
A peculiarity of the Nelson model is that the symbol is quadratic in $\Xi$ but linear in $\xi$. In a fully relativistic theory, the symbol is of linear order in $\Xi$ and $\xi$. The additional order in $\Xi$ is vital for our analysis because it makes some otherwise divergent integrals convergent.

\subsubsection*{Form factor}

We denote the regularised form factor of the field operator $\Phi(\omega^{-1/2}\rho_{\Lambda,X}) = a^*(v_{\Lambda,X}) + a(v_{\Lambda,X})$ by $v_{\Lambda,X} = \omega^{-1/2}\rho_{\Lambda,X}/\sqrt{2}$, where the cut-off function $\rho_{\Lambda,X}$ is defined in \eqref{eq:CutOffFunction}. Above, we observed that $\omega^{-1/2}$ is a pseudo-differential operator of order $-1/2$. Let $\sigma(\omega^{-1/2}) \in S^{-1/2}$ be its symbol in standard quantisation. Then the form factor $v_{\Lambda,X}$ can be written as follows:
\begin{align}
	v_{\Lambda,X}(x) = \frac{1}{\sqrt{2}} \int_{\mathbb{R}^3} \frac{\e^{\I\scp{x-X}{\xi}}}{(2\pi)^3} \sigma(\omega^{-\frac{1}{2}})(x,\xi) \widehat{\rho}(\xi/\Lambda) \diff \xi.
\end{align}
We extract the leading contribution of the above integral by expanding the symbol $\sigma(\omega^{-\frac{1}{2}})(x,\xi)$ in $x=X$:
\begin{align}
	\sigma(\omega^{-\frac{1}{2}})(x,\xi) = \sigma(\omega^{-\frac{1}{2}})(X,\xi) - \sqrt{2} (x-X) \cdot r(x,\xi),
\end{align}
where $r\in [S^{-1/2}]^3 = S^{-1/2} \times S^{-1/2} \times S^{-1/2}$. The form factor decomposes: 
\begin{align}
	v_{\Lambda,X} = u_{\Lambda,X} + \tilde{u}_{\Lambda,X}
\end{align}
with
\begin{align}
	u_{\Lambda,X}(x) = \frac{1}{\sqrt{2}} \int_{\mathbb{R}^3} \frac{\e^{\I\scp{x-X}{\xi}}}{(2\pi)^3} \sigma(\omega^{-\frac{1}{2}})(X,\xi) \widehat{\rho}(\xi/\Lambda) \diff \xi
	\label{eq:FormFactorLeadingOrder}
\end{align}
and
\begin{align}
	\tilde{u}_{\Lambda,X}(x) &= -\int_{\mathbb{R}^3} \frac{\e^{\I\scp{x-X}{\xi}}}{(2\pi)^3} (x-X) \cdot r(x,\xi) \widehat{\rho}(\xi/\Lambda) \diff \xi \notag \\
	&= \int_{\mathbb{R}^3} \frac{\e^{\I\scp{x-X}{\xi}}}{(2\pi)^3} (D_\xi \cdot r)(x,\xi) \widehat{\rho}(\xi/\Lambda) \diff \xi
	+ \frac{1}{\Lambda} \int_{\mathbb{R}^3}  \frac{\e^{\I\scp{x-X}{\xi}}}{(2\pi)^3} r(x,\xi) \cdot (D_\xi \widehat{\rho})(\xi/\Lambda) \diff \xi \notag \\
	&= \Op(D_\xi \cdot r)\rho_{\Lambda,X}(x) + \frac{1}{\Lambda}\Op(r) \tau_{\Lambda,X}(x),
\end{align}
where $\tau(x) = x\rho(x)$ is a vector of Schwartz functions.

\section{Gross Transformation}

We proceed with the second step of the proof and find a unitary transformation of the cut-off Hamiltonian that reveals a divergent potential in the Nelson Hamiltonian. 

Assume for a moment that the nonrelativistic particle has infinite mass (i.e. its position in space is fixed, and its kinetic energy is negligible). The cut-off Hamiltonian, in this case, reduces to
\begin{align}
	H^\circ_\Lambda = \diff\Gamma(\omega) + \Phi(\omega^{-\frac{1}{2}}\rho_{\Lambda,X}).
\end{align}
Recall the definition and properties of the unitary Weyl operators $V(f) = \e^{\I\Pi(f)}$ from Subsection \ref{ssec:FieldWeylOperators}. We compute the \textbf{Gross transformed} Hamiltonian $V(f)H_\Lambda^\circ V(f)^*$. From Proposition \ref{prop:PropertiesWeylOperator} it follows that
\begin{align}
	\e^{\I\Pi(f)} H^\circ_\Lambda \e^{-\I\Pi(f)} &= \diff \Gamma(\omega) + \Phi(\omega f + \omega^{-\frac{1}{2}}\rho_{\Lambda,X}) + \frac{1}{2} \norm{\omega^{\frac{1}{2}}f}^2_{\mathfrak{h}} + \Re \scp{f}{\omega^{-\frac{1}{2}}\rho_{\Lambda,X}}_\mathfrak{h}.
\end{align}
If $f=-\omega^{-\frac{3}{2}}\rho_{\Lambda,X}$, then
\begin{align}
	\e^{\I\Pi(\omega^{-\frac{3}{2}}\rho_{\Lambda,X})} H^\circ_\Lambda \e^{-\I\Pi(\omega^{-\frac{3}{2}}\rho_{\Lambda,X})} = \diff\Gamma(\omega) - \frac{1}{2}\norm{\omega^{-1}\rho_{\Lambda,X}}^2_\mathfrak{h}.
	\label{eq:GrossTransformation}
\end{align}
Thus, the Gross transformation reveals the term $-\norm{\omega^{-1}\rho_{\Lambda,X}}^2_{\mathfrak{h}}/2$, which diverges as $\Lambda \to \infty$. This term must be removed to renormalise the Nelson model with a particle of infinite mass.

We perform a similar Gross transformation for the full cut-off Hamiltonian $H_\Lambda$. The correct generalisation of the function $f=-\omega^{-\frac{3}{2}}\rho_{\Lambda,X}$ turns out to be
\begin{align}
	B_{\Lambda, X}^\sigma(x) = -(K+\omega)^{-1}\omega^{-\frac{1}{2}}\rho_{\Lambda, X}^\sigma(x).
	\label{eq:B}
\end{align}
We added a superscript $\sigma$ in the cut-off function $\rho_{\Lambda}$. This is an infrared cut-off and is defined via
\begin{align}
	\widehat{\rho_\Lambda^\sigma}(\xi) = \chi(\xi) \widehat{\rho}(\xi/\Lambda),
	\label{eq:InfraredCutOff}
\end{align}
where $\chi \in C^{\infty}(\mathbb{R}^3)$, $0\leq\chi\leq 1$, is any smooth symmetric function equal to $0$ if $|\xi| \leq \sigma$ and equal to $1$ if $|\xi| \geq 2\sigma$. It becomes clear why this cut-off is needed at the end of the proof. 

We collect analytical properties of the function $B_{\Lambda, X}^\sigma$. First, for every $X \in \mathbb{R}^3$, $B_{\Lambda, X}^\sigma$ is a Schwartz function because pseudo-differential operators map Schwartz functions to Schwartz functions (see Theorem \ref{thm:PseudorsSchwartzSpaceMapping}). Moreover, $B_{\Lambda, X}^\sigma$ is real-valued because $\rho_{\Lambda,X}^\sigma$ is real-valued and because the operators $K$ and $\omega$ are real operators.

\begin{lem}
	For every $\alpha < 1$, $\omega^\alpha B_{\Lambda, X}^\sigma$, $\omega^{\alpha -1} \partial_X B_{\Lambda, X}^\sigma \in \mathfrak{h} = L^2(\mathbb{R}^3,\diff x)$, and an $s<-3/2$ exists such that
	\begin{align}
		\norm{\omega^\alpha B_{\Lambda, X}^\sigma}_{\mathfrak{h}} &\leq C \norm{\rho_\Lambda^\sigma}_{H^{s}}, \label{eq:B1} \\
		\norm{\omega^{\alpha -1} \partial_X B_{\Lambda, X}^\sigma}_{\mathfrak{h}} &\leq C \norm{\rho_\Lambda^\sigma}_{H^{s}} \label{eq:B2} 
	\end{align}
	for a constant $C$ independent of $\Lambda$, $\sigma$, and $X$. Here $H^s$ is the Sobolev space $H^s(\mathbb{R}^3,\diff x)$.
	\label{lem:B}
\end{lem}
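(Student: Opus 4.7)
The plan is to analyze $(K+\omega)^{-1}\omega^{-1/2}$ as a joint pseudo-differential operator on $\mathbb{R}^6 = \mathbb{R}^3_X \times \mathbb{R}^3_x$ and exploit the structural fact that the input $\rho_{\Lambda,X}^\sigma(x) = \rho_\Lambda^\sigma(x-X)$ has joint Fourier transform $(2\pi)^3\delta(\Xi+\xi)\widehat{\rho_\Lambda^\sigma}(\xi)$, concentrated on the diagonal $\Xi = -\xi$. This delta-function concentration collapses the six-dimensional calculation to a three-dimensional integral in $\xi$, where the relevant multiplier will decay like $\jap{\xi}^{\alpha-5/2}$; since $\alpha < 1$, the choice $s = \alpha - 5/2 < -3/2$ closes the estimate.

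\textbf{Main steps.} First I would observe that, since $W \in C^\infty_{\mathrm{b}}$, we may absorb a constant into $W$ to ensure $K \geq 0$; as $K$ and $\omega$ act on disjoint tensor factors they commute, and $K+\omega \geq m > 0$ so $(K+\omega)^{-1}$ is bounded and self-adjoint. On the joint phase space I would then identify $K + \omega$ as an elliptic element of $S(M,g)$ with order function $M = \jap{\Xi}^2 + \jap{\xi}$ and metric $g$ as in \eqref{eq:OrderFunctionH0}--\eqref{eq:MetricH0}, apply Theorem \ref{thm:FunctionalCalculus} to a cut-off version of $t \mapsto 1/t$ to get a symbol for $(K+\omega)^{-1}$ in $S(M^{-1},g)$, and compose with $\omega^{\alpha-1/2} \in \Op(S^{\alpha-1/2})$ using Theorem \ref{thm:MoyalProduct} to conclude that $\omega^\alpha(K+\omega)^{-1}\omega^{-1/2} = (K+\omega)^{-1}\omega^{\alpha-1/2}$ has joint symbol $a \in S(\jap{\xi}^{\alpha-1/2}M^{-1},g)$. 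Plugging the delta-function input into the oscillatory-integral representation and integrating out $\Xi$ reduces $(\omega^\alpha B_{\Lambda,X}^\sigma)(x)$ to
\begin{align*}
-\frac{1}{(2\pi)^3}\int_{\mathbb{R}^3} \e^{\I(x-X)\cdot\xi}\,\tilde a(X,x,\xi)\,\widehat{\rho_\Lambda^\sigma}(\xi)\,\diff\xi,
\end{align*}
where $\tilde a(X,x,\xi) = a(X,-\xi,x,\xi)$ plus lower-order Moyal-product corrections satisfies the uniform bound $|\tilde a| \leq C\jap{\xi}^{\alpha-1/2}/(\jap{\xi}^2+\jap{\xi}) \leq C'\jap{\xi}^{\alpha-5/2}$. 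Plancherel in $x$ then yields
\begin{align*}
\norm{\omega^\alpha B_{\Lambda,X}^\sigma}_\mathfrak{h}^2 \leq C\int_{\mathbb{R}^3}\jap{\xi}^{2\alpha-5}|\widehat{\rho_\Lambda^\sigma}(\xi)|^2\diff\xi \leq C\norm{\rho_\Lambda^\sigma}_{H^s}^2
\end{align*}
with $s = \alpha - 5/2 < -3/2$, uniformly in $X$, $\Lambda$, $\sigma$.

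\textbf{The derivative bound.} For \eqref{eq:B2} I would split
\begin{align*}
\partial_X B_{\Lambda,X}^\sigma = -(K+\omega)^{-1}\omega^{-1/2}\partial_X\rho_{\Lambda,X}^\sigma - [\partial_X,(K+\omega)^{-1}]\omega^{-1/2}\rho_{\Lambda,X}^\sigma.
\end{align*}
In the first piece, $\partial_X\rho_{\Lambda,X}^\sigma = -\partial_x\rho_{\Lambda,X}^\sigma$ introduces an extra factor of order $\jap{\xi}$ in Fourier, which is exactly offset by using $\omega^{\alpha-1}$ in place of $\omega^\alpha$, so the same symbol bound $\jap{\xi}^{\alpha-5/2}$ arises. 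For the commutator, $[\partial_X,(K+\omega)^{-1}] = -(K+\omega)^{-1}[\partial_X,K](K+\omega)^{-1}$ has joint symbol of order at most $\jap{\Xi}^2M^{-2}$, which on the diagonal is $\leq C\jap{\xi}^{-2}$, and combined with $\omega^{\alpha-1}\cdot\omega^{-1/2}$ again gives $\jap{\xi}^{\alpha-5/2}$-decay.

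\textbf{Main obstacle.} The hard part will be the rigorous pseudo-differential reduction of step three in the variable-coefficient setting: the joint symbol is not a Fourier multiplier, so substituting $\Xi = -\xi$ is not a literal evaluation but requires controlling the Moyal-product remainders uniformly in $X$. The essential inputs will be the uniform ellipticity \eqref{eq:EllipticityCondition}, which guarantees that $M^{\pm 1}$ behave well in $S(M^{\pm 1},g)$ with constants independent of $X$, together with the $C^\infty_{\mathrm{b}}$ smoothness of $g^{jk}$, $\mu^2$, $W$, which ensures that each successively lower-order correction from the asymptotic expansion is dominated after integration by the leading $\jap{\xi}^{\alpha-5/2}$ bound.
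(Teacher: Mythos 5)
Your proposal follows essentially the same route as the paper: represent $(K+\omega)^{-1}\omega^{-1/2}$ through its joint symbol, use the concentration of the Fourier transform of $\rho^\sigma_{\Lambda,X}$ on the diagonal $\Xi=-\xi$ to collapse the $\Xi$-integration, obtain a symbol of order $\alpha-5/2$ uniformly in $X$, take $s=\alpha-5/2<-3/2$, and treat $\partial_X B^\sigma_{\Lambda,X}$ by the same splitting into the term where the derivative hits $\rho^\sigma_{\Lambda,X}$ (converted to $-\partial_x$) and the term where it hits the symbol. The only step that is not literally valid as written is the appeal to Plancherel: since $\tilde a(X,x,\xi)$ depends on $x$, the final estimate is not a Fourier-multiplier bound but an application of the Sobolev mapping property of the resulting pseudo-differential operator (Corollary \ref{cor:PseudorsSobolevMapping}, i.e.\ Calderon--Vaillancourt), which is exactly how the paper closes the argument.
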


\begin{proof}
	Let $t^{-1} \in S(M^{-1},g)$ be the symbol of the pseudo-differential operator $(K+\omega)^{-1}$ in standard quantisation with order function $M$ and metric $g$ defined in \eqref{eq:OrderFunctionH0} and \eqref{eq:MetricH0} with $n=1$. Then,
	\begin{align}
		(K+\omega)^{-1}\rho_{\Lambda,X}^\sigma(x) &= \frac{1}{(2\pi)^3} \int_{\mathbb{R}^3} \int_{\mathbb{R}^3} \e^{\I \scp{X}{\Xi} + \I \scp{x}{\xi}} t^{-1}(X,\Xi,x,\xi) \delta(\Xi+\xi) \widehat{\rho^\sigma}(\xi/\Lambda) \diff\Xi \diff\xi \notag \\
		&= \frac{1}{(2\pi)^3} \int_{\mathbb{R}^3} \e^{\I \scp{x-X}{\xi}} t^{-1}(X,-\xi,x,\xi) \widehat{\rho^\sigma}(\xi/\Lambda) \diff\xi.
	\end{align}	
	For $X \in \mathbb{R}^3$ fixed, $t_X^{-1}(x,\xi) = t^{-1}(X,-\xi,x,\xi)$ defines a symbol $t_X^{-1} \in S^{-2}$. The form of the symbol $t_X^{-1}$ is irrelevant for this proof, but for later applications, we note that $t_X^{-1}$ is given by
	\begin{align}
		t^{-1}_X(x,\xi) &= (K(X,\xi) + \sigma(\omega)(x,\xi))^{-1} + S^{-3} \notag \\
		&= (K_0(X,\xi) + 1)^{-1} + S^{-3}.
		\label{eq:SymbolTX}
	\end{align} 
	Because $\omega^{-1/2} \in \Op(S^{-1/2})$, a symbol $b_X \in S^{-5/2}$ exists such that
	\begin{align}
		B_{\Lambda,X}^\sigma = (K+\omega)^{-1} \omega^{-\frac{1}{2}} \rho_{\Lambda,X}^\sigma = \omega^{-\frac{1}{2}}\Op(t_X^{-1})\rho_{\Lambda,X}^\sigma = \Op(b_X)\rho_{\Lambda,X}^\sigma.
	\end{align}	
	Due to Corollary \ref{cor:PseudorsSobolevMapping}, we obtain the following inequality for every $s \in \mathbb{R}$:
	\begin{align}
		\norm{\omega^\alpha \Op(b_X)\rho_{\Lambda,X}}_{H^{s-m}} \leq C \norm{\rho_{\Lambda,X}^\sigma}_{H^{s}} = C \norm{\rho_{\Lambda}^\sigma}_{H^{s}},
	\end{align}
	where $m = -5/2+\alpha < -3/2$. The inequality \eqref{eq:B1} is proven if we choose $s=m$. 
	
	To prove \eqref{eq:B2}, we observe that $\partial_X \rho_{\Lambda,X}^\sigma(x) = -\partial_x \rho_{\Lambda,X}^\sigma(x)$ and
	\begin{align}
		\partial_X B_{\Lambda,X}^\sigma = \Op(\partial_X b_X)\rho_{\Lambda,X}^\sigma - \Op(b_X)\partial_x \rho_{\Lambda,X}^\sigma
	\end{align}
	with $\partial_X b_X \in S^{-5/2}$ and $\norm{\partial_x \rho_{\Lambda,X}^\sigma}_{H^{s}} \leq \norm{\rho_{\Lambda,X}^\sigma}_{H^{s+1}}$. The rest of the proof is similar to the proof of the first inequality.
\end{proof}

We return to the transformation of the cut-off Hamil\-tonian $H_\Lambda$. The Gross trans\-for\-mation of $H_\Lambda$ is realised using the unitary operator
\begin{align}
	U_\Lambda = V(B_{\Lambda,X}^\sigma) = \e^{\I \Pi(B_{\Lambda,X}^\sigma)}.
\end{align}
Computing the transformed operator $U_\Lambda H_\Lambda U_\Lambda^*$, however, is more tedious than for $H^\circ_\Lambda$. At the end of a careful computation, we arrive at
\begin{align}
	U_\Lambda H_\Lambda U_\Lambda^* = H_0 + R_\Lambda(X) + V_\Lambda(X)
\end{align}
with $R_\Lambda(X)$ and $V_\Lambda(X)$ to be specified below.

\begin{ntn}
	For notational convenience, we drop the sub\-scripts in $\rho = \rho_{\Lambda,X}$, $\rho^\sigma = \rho_{\Lambda,X}^\sigma$, and $B^\sigma = B_{\Lambda,X}^\sigma$ for the rest of this section. 
\end{ntn}

The transformation of the free field Hamiltonian and the interacting field are known from the beginning of this section:
\begin{align}
	U_\Lambda \diff\Gamma(\omega) U_\Lambda^* &= \diff\Gamma(\omega) + \Phi(\omega B^\sigma) + \frac{1}{2} \norm{\omega^{\frac{1}{2}}B^\sigma}^2_\mathfrak{h}, \\
	U_\Lambda \Phi(\omega^{-\frac{1}{2}}\rho) U_\Lambda^* &= \Phi(\omega^{-\frac{1}{2}}\rho) + \scp{B^\sigma}{\omega^{-\frac{1}{2}} \rho}_\mathfrak{h}.
\end{align}
The particle Hamiltonian is defined as $K=K_0+W(X)$. Because the potential $W(X)$ does not depend on the boson variables, it commutes with $U_\Lambda$ and remains invariant under the unitary transformation. To transform the free particle Hamiltonian $K_0$, observe that
\begin{align}
	U_\Lambda \partial_{X} U_\Lambda^* = \partial_{X} - \I \Pi(\partial_{X}B^\sigma). \label{eq:TransformationPartialDerivative}
\end{align}
We used the fact that $\Pi(B^\sigma)$ and $\Pi(\partial_{X}B^\sigma)$ commute according to the commutation relation \eqref{eq:CommutationRelationConjugateMomentum} because $B^\sigma$ and $\partial_{X}B^\sigma$ are real-valued functions. Thus,
\begin{align}
	U_\Lambda K_0 U_\Lambda^* &= -[\partial_{X} - \I\Pi(\partial_{X}B^\sigma)] \cdot g(X) [\partial_{X} - \I \Pi(\partial_{X}B^\sigma)] \notag \\
	&= K_0 - \I \Pi(K_0 B^\sigma) + \Pi(\partial_{X}B^\sigma) \cdot g(X) \Pi(\partial_X B^\sigma) \notag \\
	&\ \ \ +2 \I\Pi(\partial_X B^\sigma) \cdot g(X) \partial_X. \label{eq:ParticleHamiltonianTransformed}
\end{align}
We expand and simplify the right-hand side of the above equality. First, the conjugate momentum operator $\Pi$ and field operator $\Phi$ are related by
\begin{align}
	-\I \Pi(K_0B^\sigma) = \Phi(K_0B^\sigma) - \sqrt{2} a(K_0B^\sigma).
\end{align}
Moreover, expanding the third summand in \eqref{eq:ParticleHamiltonianTransformed} leads to 
\begin{align}
	&\Pi(\partial_{X}B^\sigma) \cdot g(X) \Pi(\partial_X B^\sigma) \notag \\
	&\ \ \ = -\frac{1}{2} a^*(\partial_X B^\sigma) \cdot g(X) a^*(\partial_X B^\sigma) -\frac{1}{2} a(\partial_X B^\sigma) \cdot g(X) a(\partial_X B^\sigma) \notag \\
	&\ \ \ \ \ \ + a^*(\partial_X B^\sigma) \cdot g(X) a(\partial_X B^\sigma) + \frac{1}{2} \scp{\partial_X B^\sigma}{g(X)\partial_X B^\sigma}_\mathfrak{h}. \label{eq:ThirdTermTransformedParticleHamiltonian}
\end{align}

\begin{rem}
	In \eqref{eq:ThirdTermTransformedParticleHamiltonian}, we use the commutation relation \eqref{eq:CommutatorAnniCrea} to obtain \textbf{normal ordering}, that is, creation operators are to the left of annihilation operators. This ordering is practical because it avoids the appearance of creation operators if \eqref{eq:ThirdTermTransformedParticleHamiltonian} is interpreted in the sense of quadratic forms. The additional term $\scp{\partial_X B^\sigma}{g(X)\partial_X B^\sigma}_\mathfrak{h}$, however, is divergent as $\Lambda \to \infty$.
\end{rem}

Combining the fourth term in \eqref{eq:ParticleHamiltonianTransformed} with $-\sqrt{2} a(K_0B^\sigma) = \sqrt{2} a(\partial_X \cdot g(X)\partial_X B^\sigma)$ from the second term yields
\begin{align}
	&2 \I\Pi(\partial_X B^\sigma) \cdot g(X) \partial_X -\sqrt{2} a(K_0B^\sigma) \notag \\ 
	&\ \ \ = -\sqrt{2} a^*(\partial_X B^\sigma) \cdot g(X) \partial_X + \sqrt{2} a(\partial_X B^\sigma) \cdot g(X) \partial_X -\sqrt{2} a(K_0B^\sigma) \notag \\
	&\ \ \ = -\sqrt{2} a^*(\partial_X B^\sigma) \cdot g(X) \partial_X + \sqrt{2}\partial_X \cdot g(X) a(\partial_X B^\sigma).
\end{align}
Altogether, we obtain the following for the transformed cut-off Hamiltonian:
\begin{align}
	U_\Lambda H_\Lambda U_\Lambda^* &= K + \diff\Gamma(\omega) + \Phi(\omega^{-\frac{1}{2}}\rho + (K_0 + \omega)B^\sigma) \notag \\
	&\ \ \ - \sqrt{2} a^*(\partial_X B^\sigma) \cdot g(X)\partial_X + \sqrt{2} \partial_X \cdot g(X) a(\partial_X B^\sigma) \notag \\ 
	&\ \ \ - \frac{1}{2} a^*(\partial_X B^\sigma) \cdot g(X) a^*(\partial_X B^\sigma) - \frac{1}{2} a(\partial_XB^\sigma)\cdot g(X) a(\partial_X B^\sigma) \notag \\
	&\ \ \ + a^*(\partial_X B^\sigma) \cdot g(X) a(\partial_X B^\sigma) \notag \\
	&\ \ \ + \frac{1}{2}\norm{\omega^{\frac{1}{2}} B^\sigma}^2_\mathfrak{h} + \scp{B^\sigma}{\omega^{-\frac{1}{2}}\rho}_\mathfrak{h} + \frac{1}{2} \scp{\partial_X B^\sigma}{g(X)\partial_X B^\sigma}_\mathfrak{h}. 
	\label{eq:HamiltonianTransformed}
\end{align}
The form factor of the field operator $\Phi$ is $\omega^{-\frac{1}{2}}(\rho - \rho^\sigma)$ according to the definition of $B^\sigma$ in \eqref{eq:B}. We combine this field operator with the second to fourth line to $R_\Lambda(X)$ (reinstating subscripts $\Lambda$ and $X$):
\begin{align}
	R_\Lambda(X) = &- \sqrt{2} a^*(\partial_X B_{\Lambda,X}^\sigma) \cdot g(X)\partial_X + \sqrt{2} \partial_X \cdot g(X) a(\partial_X B_{\Lambda,X}^\sigma) \notag \\ 
	&- \frac{1}{2} a^*(\partial_X B_{\Lambda,X}^\sigma) \cdot g(X) a^*(\partial_X B_{\Lambda,X}^\sigma) \notag \\
	&- \frac{1}{2} a(\partial_XB_{\Lambda,X}^\sigma)\cdot g(X) a(\partial_X B_{\Lambda,X}^\sigma) \notag \\
	&+ a^*(\partial_X B_{\Lambda,X}^\sigma) \cdot g(X) a(\partial_X B_{\Lambda,X}^\sigma) 
	+ \Phi(\omega^{-\frac{1}{2}}(\rho_{\Lambda,X} - \rho_{\Lambda,X}^\sigma)).
	\label{eq:RLambda}
\end{align}
In Section \ref{sec:ProofOfNelsonsTheorem}, we show that $R_\Lambda(X)$ defines a quadratic form on $D(H_0^{1/2})$. We abbreviate the fifth line in \eqref{eq:HamiltonianTransformed} by $V_\Lambda(X)$:
\begin{align}
	V_{\Lambda}(X) &= \frac{1}{2}\norm{(K+\omega)^{-1}\rho^\sigma_{\Lambda,X}}^2_{\mathfrak{h}} -\scp{\rho^\sigma_{\Lambda,X}}{(K+\omega)^{-1}\omega^{-1}\rho_{\Lambda,X}}_{\mathfrak{h}} \notag \\
	&\ \ \ + \frac{1}{2} \scp{\omega^{-1} \partial_{X}(K+\omega)^{-1}\rho^\sigma_{\Lambda,X}}{g(X)\partial_{X}(K+\omega)^{-1}\rho^\sigma_{\Lambda,X}}_{\mathfrak{h}}.
	\label{eq:VLambda}
\end{align}
The potential $V_\Lambda(X)$ is chosen such that it contains all divergent terms. In the next section, we identify these divergent terms.

\section{Renormalisation}
\label{sec:Renormalisation}

The discussions in this section are not strictly necessary for the proof of Theorem \ref{thm:Gerard} if we choose $-E_\Lambda(X) = V_\Lambda(X)$. Nevertheless, this section provides a physical interpretation of the term $V_\Lambda(X)$ and some useful mathematical techniques. 

In Section \ref{sec:NelsonModelConstantCoefficients}, we argued that it makes sense to remove the vacuum energy of the formal Nelson Hamiltonian. The minimum of the spectrum of the free Nelson Hamiltonian $H_0$ is 0, and $0\in \sigma_\mathrm{ess}(H_0)$ lies in the essential spectrum of $H_0$. If $(\psi_k)_{k\in \mathbb{N}} \subset L^2(\mathbb{R}^3)$ is a singular Weyl sequence for $(-\Delta_X,0)$, that is, $(\psi_k)_{k\in \mathbb{N}}$ is an orthonormal family and $\norm{-\Delta_X \psi_k}_{L^2} \to 0$ as $k\to\infty$, then $(\psi_k \otimes \Omega)_{k\in \mathbb{N}}$ is a singular Weyl sequence for $(H_0,0)$, where $\Omega$ is the Fock space vacuum. We interpret $\psi_k \otimes \Omega$ as an approximate ground state of the free Nelson Hamiltonian for a sufficiently large $k$. Perturbation theory in leading order suggests that the perturbation $\Phi(\omega^{-1/2}\rho_{\Lambda,X})$ of the cut-off Hamiltonian $H_\Lambda = H_0 + \Phi(\omega^{-1/2}\rho_{\Lambda,X})$ causes the following energy shift:
\begin{align}
	&\scp{\psi_k \otimes \Omega}{\Phi(\omega^{-\frac{1}{2}}\rho_{\Lambda,X}) \psi_k \otimes \Omega}_\mathfrak{H} - \scp{\psi_k \otimes \Omega}{\Phi(\omega^{-\frac{1}{2}}\rho_{\Lambda,X}) H_0^{-1} \Phi(\omega^{-\frac{1}{2}}\rho_{\Lambda,X}) \psi_k \otimes \Omega}_\mathfrak{H} \notag \\
	&\ \ \ = -\frac{1}{2} \int_{\mathbb{R}^3} \scp{\Omega}{a(\omega^{-\frac{1}{2}}\rho_{\Lambda,X}) H_0^{-1} a^*(\omega^{-\frac{1}{2}}\rho_{\Lambda,X})\Omega}_\mathfrak{h} |\psi_k(X)|^2 \diff X \notag \\
	&\ \ \ = -\frac{1}{2} \int_{\mathbb{R}^3} \scp{\omega^{-\frac{1}{2}} \rho_{\Lambda,X}}{(K+\omega)^{-1}\omega^{-\frac{1}{2}} \rho_{\Lambda,X}}_\mathfrak{h} |\psi_k(X)|^2 \diff X.
	\label{eq:PerturbationTheoreticExpansion}
\end{align}
We call $\scp{\omega^{-\frac{1}{2}} \rho_{\Lambda,X}}{(K+\omega)^{-1} \omega^{-\frac{1}{2}} \rho_{\Lambda,X}}_\mathfrak{h}/2$ the \textbf{(boson) vacuum energy}. Using only the leading order part $u_{\Lambda,X}$ in \eqref{eq:FormFactorLeadingOrder} of the form factor $v_{\Lambda,X} = \omega^{-\frac{1}{2}}\rho_{\Lambda,X}/\sqrt{2}$ and the symbol $t_X^{-1}$ in \eqref{eq:SymbolTX}, the boson vacuum energy in leading order is equal to
\begin{align}
	-E_\Lambda(X) &= -\frac{1}{2(2\pi)^3} \int_{\mathbb{R}^3} \frac{|\sigma(\omega^{-\frac{1}{2}})(X,\xi)|^2}{K_0(X,\xi)+1} \left|\widehat{\rho}\left(\frac{\xi}{\Lambda}\right)\right|^2 \diff\xi \notag \\
	&= -\frac{1}{2(2\pi)^3} \int_{\mathbb{R}^3} \frac{(h_0(X,\xi)+1)^{-\frac{1}{2}}}{K_0(X,\xi)+1} \left|\widehat{\rho}\left(\frac{\xi}{\Lambda}\right)\right|^2 \diff\xi + F_\Lambda(X), 
	\label{eq:VacuumEnergy}
\end{align}
where $(F_\Lambda)_{\Lambda\geq 0} \subset C_\mathrm{b}^\infty(\mathbb{R}^3)$ is a family of functions converging uniformly to some $F \in C_\mathrm{b}(\mathbb{R}^3)$ as $\Lambda \to \infty$. In the following, any family of functions (not necessarily the same on different lines) that has this property is denoted by $(F_\Lambda)_{\Lambda\geq 0}$. 

\begin{thm}
	A bounded continuous potential $V_{\mathrm{ren}} \in C_\mathrm{b}(\mathbb{R}^3)$ exists such that $V_\Lambda(X) + E_\Lambda(X)$ converges uniformly in $X$ to $V_{\mathrm{ren}}(X)$ as $\Lambda \to \infty$.
	\label{thm:RenormalisedPotential}
\end{thm}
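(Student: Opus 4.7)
The plan is to use the pseudo-differential calculus from Chapter \ref{ch:Pseudors} to expand each of the three summands of $V_\Lambda(X)$ in \eqref{eq:VLambda} into an explicit leading-order integral over $\xi$ plus a subleading remainder, and to check that the logarithmically divergent leading-order contributions combine with $E_\Lambda(X)$ to produce only a bounded continuous function, uniformly convergent in $X$ as $\Lambda\to\infty$.

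Following the argument in the proof of Lemma \ref{lem:B}, I would represent $(K+\omega)^{-1}\rho^\sigma_{\Lambda,X}=\Op(t_X^{-1})\rho^\sigma_{\Lambda,X}$ with $t_X^{-1}\in S^{-2}$ and principal symbol $(K_0(X,\xi)+1)^{-1}$ modulo $S^{-3}$, and analogously $(K+\omega)^{-1}\omega^{-1}\rho_{\Lambda,X}=\Op(c_X)\rho_{\Lambda,X}$ with $c_X\in S^{-3}$ of principal symbol $(K_0(X,\xi)+1)^{-1}(h_0(X,\xi)+1)^{-1/2}$. For the derivative summand I would use
\[
\partial_X\bigl[(K+\omega)^{-1}\rho^\sigma_{\Lambda,X}\bigr] = \Op(\partial_X t_X^{-1})\rho^\sigma_{\Lambda,X} - \Op(t_X^{-1})\partial_x\rho^\sigma_{\Lambda,X},
\]
where the second term is a pseudo-differential operator of $\xi$-order $-1$ (providing the logarithmically divergent leading behavior), while the first is of order $-2$ and yields a convergent integral. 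Plancherel in the spatial variable then recasts each pairing in \eqref{eq:VLambda} as an integral $\int F_X(\xi)|\widehat\rho(\xi/\Lambda)|^2\,d\xi$ with appropriate factors of $\chi(\xi)$, where $F_X$ lies in a classical symbol class in $\xi$ uniformly in $X$.

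The first summand of $V_\Lambda$ yields $F_X\in S^{-4}_\xi$, so the corresponding integral converges to a bounded continuous function of $X$ by dominated convergence. The middle and derivative summands each produce a leading integrand of $\xi$-order $-3$, which integrates to $\log\Lambda$. The explicit coefficients, extracted from the principal symbols together with the identity $\xi\cdot g(X)\xi=K_0(X,\xi)$ in the derivative summand, combine with the divergent part of $E_\Lambda(X)$ in \eqref{eq:VacuumEnergy}, namely $\tfrac{1}{2(2\pi)^3}\int(h_0+1)^{-1/2}(K_0+1)^{-1}|\widehat\rho(\xi/\Lambda)|^2\,d\xi$, to cancel the $\log\Lambda$-divergences exactly (the three coefficients add up as $-1+\tfrac12+\tfrac12=0$). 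The infrared cutoff $\chi$ introduces only compactly supported differences, giving a bounded continuous function of $X$ that converges as $\Lambda\to\infty$. The subleading symbols in the asymptotic expansions contribute $S^{-4}_\xi$ integrands, handled by the same dominated-convergence argument.

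The main obstacle is to secure uniformity in $X$ throughout the analysis: the symbols $t_X^{-1},c_X$ and every term of their asymptotic expansions depend on $X$ through $g(X),W(X),\mu^2(X)$, and the dominating $L^1(\mathbb{R}^3_\xi)$ functions in each dominated-convergence argument must be chosen independently of $X$. The hypotheses $g^{jk},\mu^2,W\in C_{\mathrm b}^\infty(\mathbb{R}^3)$ together with the uniform ellipticity \eqref{eq:EllipticityCondition} supply $X$-uniform symbol estimates on every term in every expansion. Continuity of the limit $V_{\mathrm{ren}}$ then follows by dominated convergence from smoothness in $X$ of the integrands.
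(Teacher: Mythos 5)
Your proposal is correct and follows essentially the same route as the paper: the same representation of $V_\Lambda$ via the symbols $t_X^{-1}$, $c_X$, $d_X$, the same identification of the two order $-3$ (logarithmically divergent) $\xi$-integrands, and the same coefficient cancellation $-1+\tfrac12+\tfrac12=0$ against $E_\Lambda(X)$, with the infrared cut-off contributing only convergent, compactly supported corrections. The one point to make explicit is that reducing the divergent pairings to $\int F_X(\xi)|\widehat{\rho}(\xi/\Lambda)|^2\,\diff\xi$ is not bare Plancherel, since the symbols depend on $x$: one must Taylor-expand them at the collision point $x=X$ and control the remainder by an integration by parts in $\xi$ (which gains one order and produces an $O(\log\Lambda/\Lambda)$ term), which is precisely how the paper obtains your ``subleading remainder''.
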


\begin{proof}
	From the proof of Lemma \ref{lem:B} it follows that a symbol $t^{-1}_X \in S^{-2}$ exists such that
	\begin{align}
		(K+\omega)^{-1} \rho_{\Lambda,X}^\sigma &= \Op(t_X^{-1}) \rho_{\Lambda,X}^\sigma, \\
		\partial_X (K+\omega)^{-1} \rho_{\Lambda,X}^\sigma &= \Op(\partial_X t^{-1}_X)\rho_{\Lambda,X}^\sigma - \Op(t^{-1}_X)\partial_x\rho_{\Lambda,X}^\sigma.
	\end{align}
	We insert these identities in \eqref{eq:VLambda}:
	\begin{align}
		V_\Lambda(X) &= \frac{1}{2} \norm{\Op(t^{-1}_X)\rho_{\Lambda,X}^\sigma}^2_\mathfrak{h} 
		-\scp{\rho_{\Lambda,X}^\sigma}{\Op(t^{-1}_X)\omega^{-1}\rho_{\Lambda,X}}_\mathfrak{h}  \notag \\
		&\ \ \ + \frac{1}{2} \scp{\omega^{-1}\Op(\partial_X t^{-1}_X)\rho_{\Lambda,X}^\sigma}{g(X)\Op(\partial_X t^{-1}_X)\rho_{\Lambda,X}^\sigma}_\mathfrak{h}  \notag \\
		&\ \ \ -\scp{\omega^{-1}\Op(\partial_X t^{-1}_X) \rho_{\Lambda,X}^\sigma}{g(X) \Op(t^{-1}_X) \partial_x \rho_{\Lambda,X}^\sigma}_\mathfrak{h}  \notag \\
		&\ \ \ + \frac{1}{2} \scp{\omega^{-1}\Op(t^{-1}_X) \partial_{x} \rho_{\Lambda,X}^\sigma}{g(X) \Op(t^{-1}_X) \partial_{x} \rho_{\Lambda,X}^\sigma}_\mathfrak{h}.
		\label{eq:VLambdaRen}
	\end{align}
	Some of these terms converge uniformly to a bounded continuous function as $\Lambda \to \infty$. For the first summand, this is a consequence of the inverse triangle inequality and Lemma \ref{lem:B} with $\alpha = 1/2$:
	\begin{align}
		|\norm{\Op(t^{-1}_X) \rho_{\Lambda,X}^\sigma}_{\mathfrak{h}} - \norm{\Op(t^{-1}_X) \delta_{X}^\sigma}_{\mathfrak{h}}| &\leq \norm{\Op(t^{-1}_X) (\rho_{\Lambda,X}^\sigma - \delta_{X}^\sigma)}_{\mathfrak{h}} \notag \\
		&\leq C \norm{\rho_{\Lambda}^\sigma - \delta^\sigma}_{H^{s}}
	\end{align}
	for an $s<-3/2$. Because $\norm{\rho_{\Lambda}^\sigma - \delta^\sigma}_{H^{s}} \to 0$ as $\Lambda \to \infty$ for every $s<-3/2$, $\norm{\Op(t^{-1}_X) \rho_{\Lambda,X}^\sigma}_{\mathfrak{h}}$ converges to $\norm{\Op(t^{-1}_X) \delta_{X}^\sigma}_{\mathfrak{h}}$. The expression $\norm{\Op(t^{-1}_X) \delta_{X}^\sigma}_{\mathfrak{h}}$ depends continuously on $X$ and is bounded by Lemma \ref{lem:B}; thus, it defines an element of $C_\mathrm{b}(\mathbb{R}^3)$.
	
	In similar fashion, we demonstrate the convergence of the second line in \eqref{eq:VLambdaRen}:
	\begin{align}
		&|\scp{\omega^{-1}\Op(\partial_X t^{-1}_X)\rho_{\Lambda,X}^\sigma}{g(X)\Op(\partial_X t^{-1}_X)\rho_{\Lambda,X}^\sigma}_\mathfrak{h} - \notag \\
		& \ \ \ \ \ \ \ \scp{\omega^{-1}\Op(\partial_X t^{-1}_X)\delta_{X}^\sigma}{g(X)\Op(\partial_X t^{-1}_X)\delta_{X}^\sigma}_\mathfrak{h}| \notag \\
		&\ \ \ = |\scp{\omega^{-1}\Op(\partial_X t^{-1}_X)(\rho_{\Lambda,X}^\sigma-\delta_X^\sigma)}{g(X)\Op(\partial_X t^{-1}_X)(\rho_{\Lambda,X}^\sigma+\delta_X^\sigma)}_\mathfrak{h}| \notag \\
		&\ \ \ \leq C\norm{\omega^{-\frac{1}{2}} \Op(\partial_X t^{-1}_X)(\rho_{\Lambda,X}^\sigma-\delta_X^\sigma)}_\mathfrak{h} \norm{\omega^{-\frac{1}{2}} \Op(\partial_X t^{-1}_X)(\rho_{\Lambda,X}^\sigma+\delta_X^\sigma)}_\mathfrak{h} \notag \\
		&\ \ \ \leq C \norm{\rho_{\Lambda}^\sigma-\delta^\sigma}_{H^s} \sup_{\Lambda}\norm{\rho_{\Lambda}^\sigma+\delta^\sigma}_{H^s}
	\end{align}
	for an $s<-3/2$. Analogously, we confirm that the third line in \eqref{eq:VLambdaRen} converges. However, for the remaining two terms in \eqref{eq:VLambdaRen},
	\begin{align}
		&-\scp{\rho_{\Lambda,X}^\sigma}{\Op(t^{-1}_X)\omega^{-1}\rho_{\Lambda,X}}_\mathfrak{h} \notag \\
		&+ \frac{1}{2} \scp{\omega^{-1}\Op(t^{-1}_X) \partial_{x} \rho_{\Lambda,X}^\sigma}{g(X) \Op(t^{-1}_X) \partial_{x} \rho_{\Lambda,X}^\sigma}_\mathfrak{h},
		\label{eq:RemainingTerms}
	\end{align}	
	the above line of reasoning fails because the order of the involved pseudo-differential operators is not sufficiently negative. These are the divergent terms in $V_\Lambda(X)$.
	
	We use the pseudo-differential calculus to extract the vacuum energy $-E_\Lambda(X)$ from \eqref{eq:RemainingTerms}. From the proof of Lemma \ref{lem:B}, the symbol of $t_X^{-1}$ is given by
	\begin{align}
		t^{-1}_X(x,\xi) = (K_0(X,\xi) + 1)^{-1} + S^{-3}.
	\end{align}
	The symbol $\sigma(\omega^{-1})$ of $\omega^{-1}$ in standard quantisation is
	\begin{align}
		\sigma(\omega^{-1})(x,\xi) &= (h_0(x,\xi) + \mu(x)^2)^{-\frac{1}{2}} + S^{-2} \notag \\
		&= (h_0(x,\xi) + 1)^{-\frac{1}{2}} + S^{-2}.
	\end{align}
	Relevant operators for the remaining terms in \eqref{eq:RemainingTerms} are compositions of $\Op(t^{-1}_X)$ and $\omega^{-1}$, which can be written as
	\begin{align}
		\Op(t^{-1}_X)\omega^{-1} &= \Op(c_X) + \mathrm{Op}(\tilde{c}_X) \label{eq:cX} \\
		\Op(t^{-1}_X)^*\omega^{-1}\Op(t^{-1}_X) &= \Op(d_X) + \mathrm{Op}(\tilde{d}_X) \label{eq:dX}
	\end{align}
	with $\tilde{c}_X \in S^{-4}$, $\tilde{d}_X \in S^{-6}$ uniformly in $X$, and
	\begin{align}
		c_X(x,\xi) &= (h_0(x,\xi) + 1)^{-\frac{1}{2}}(K_0(X,\xi) + 1)^{-1} \in S^{-3}, \\
		d_X(x,\xi) &= (h_0(x,\xi) + 1)^{-\frac{1}{2}}(K_0(X,\xi) + 1)^{-2} \in S^{-5}.
	\end{align}
	We can neglect the pseudo-differential operators of order $-4$ and $-6$ in \eqref{eq:cX} and \eqref{eq:dX}, respectively, because
	\begin{align}
		\scp{\rho_{\Lambda,X}^\sigma}{\Op(\tilde{c}_X)\rho_{\Lambda,X}}_\mathfrak{h} + \frac{1}{2} \scp{\partial_{x} \rho_{\Lambda,X}^\sigma}{g(X) \Op(\tilde{d}_X) \partial_{x} \rho_{\Lambda,X}^\sigma}_\mathfrak{h}
	\end{align}	
	converges uniformly to a bounded continuous function as $\Lambda \to \infty$ by the same arguments as above. It remains to analyse the terms with $\Op(c_X)$ and $\Op(d_X)$. We begin with
	\begin{align}
		-\scp{\rho_{\Lambda,X}^\sigma}{\Op(c_X) \rho_{\Lambda,X}}_\mathfrak{h} = -\frac{1}{(2\pi)^3} \int_{\mathbb{R}^3} \int_{\mathbb{R}^3} \e^{\I\scp{x-X}{\xi}} c_X(x,\xi) \rho_{\Lambda,X}^\sigma(x) \widehat{\rho}(\xi/\Lambda) \diff \xi \diff x
		\label{eq:cXexpectation}
	\end{align}
	and expand $c_X(x,\xi)$ in $x=X$:
	\begin{align}
		c_X(x,\xi) = c_X(X,\xi) + (x-X) \cdot r_X(x,\xi)
	\end{align}
	with $r_X \in [S^{-3}]^3$. The remainder $r_X$ in \eqref{eq:cXexpectation} can be written as follows:
	\begin{align}
		&-\frac{1}{(2\pi)^3} \int_{\mathbb{R}^3} \int_{\mathbb{R}^3} D_\xi(\e^{\I\scp{x-X}{\xi}}) \cdot r_X(x,\xi) \rho_{\Lambda,X}^\sigma(x) \widehat{\rho}(\xi/\Lambda) \diff \xi \diff x \notag \\
		&= \frac{1}{(2\pi)^3} \int_{\mathbb{R}^3} \int_{\mathbb{R}^3} \e^{\I\scp{x-X}{\xi}} D_\xi \cdot r_X(x,\xi) \rho_{\Lambda,X}^\sigma(x) \widehat{\rho}(\xi/\Lambda) \diff \xi \diff x \notag \\
		&\ \ \  + \frac{1}{(2\pi)^3} \frac{1}{\Lambda} \int_{\mathbb{R}^3} \int_{\mathbb{R}^3} \e^{\I\scp{x-X}{\xi}} r_X(x,\xi) \cdot \rho_{\Lambda,X}^\sigma(x) (D_\xi \widehat{\rho})(\xi/\Lambda) \diff \xi \diff x.
	\end{align}
	The integral in the second line is equal to $\scp{\rho_{\Lambda,X}^\sigma}{\Op(D_\xi \cdot r)\rho_{\Lambda,X}}_\mathfrak{h}$. This has a bounded continuous limit as $\Lambda \to \infty$ because $D_\xi \cdot r_X \in S^{-4}$. The integral in the third line is asymptotically equivalent to $\log(\Lambda)/\Lambda$ for large $\Lambda$ because the integrand is order $-3$ in $\xi$. This summand vanishes as $\Lambda \to \infty$. Thus,
	\begin{align}
		-\scp{\rho_{\Lambda,X}^\sigma}{\Op(c_X)\rho_{\Lambda,X}}_\mathfrak{h} = -\frac{1}{(2\pi)^3} \int_{\mathbb{R}^3} c_X(X,\xi)  |\widehat{\rho}(\xi/\Lambda)|^2 \diff \xi + F_\Lambda(X).
	\end{align}	
	With similar techniques, we approach the term with $\Op(d_X)$:
	\begin{align}
		&\frac{1}{2}\scp{\partial_{x}\rho_{\Lambda,X}^\sigma}{g(X)\Op(d_X)\partial_{x}\rho_{\Lambda,X}^\sigma}_\mathfrak{h} \notag \\ 
		&\ \ \ = \frac{\I}{2(2\pi)^3} \int_{\mathbb{R}^3} \int_{\mathbb{R}^3} \e^{\I\scp{x-X}{\xi}} d_X(x,\xi) \scp{\partial_{x} \rho_{\Lambda,X}^\sigma(x)}{g(X)\xi} \widehat{\rho^\sigma}(\xi/\Lambda) \diff \xi \diff x \notag \\
		&\ \ \ = \frac{1}{2(2\pi)^3} \int_{\mathbb{R}^3} \int_{\mathbb{R}^3} \e^{\I\scp{x-X}{\xi}} d_X(x,\xi) \scp{\xi}{g(X)\xi} \rho_{\Lambda,X}^\sigma(x) \widehat{\rho^\sigma}(\xi/\Lambda) \diff \xi \diff x \notag \\
		&\ \ \ \ \ \  -\frac{\I}{2(2\pi)^3} \int_{\mathbb{R}^3} \int_{\mathbb{R}^3} \e^{\I\scp{x-X}{\xi}} \scp{\partial_{x} d_X(x,\xi)}{g(X)\xi} \rho_{\Lambda,X}^\sigma(x) \widehat{\rho^\sigma}(\xi/\Lambda) \diff \xi \diff x.
	\end{align}
	Because $\scp{\partial_{x} d_X(x,\xi)}{g(X)\xi} \in S^{-4}$, the integral in the last line has a bounded continuous limit as $\Lambda \to \infty$. In the first integral after the second equality sign, we expand $d_X(x,\xi)$ in $x=X$:
	\begin{align}
		&\frac{1}{2}\scp{\partial_{x}\rho_{\Lambda,X}^\sigma}{g(X)\Op(d_X)\partial_{x}\rho_{\Lambda,X}^\sigma}_\mathfrak{h} \notag \\
		&\ \ \ = \frac{1}{2(2\pi)^3} \int_{\mathbb{R}^3} d_X(X,\xi) \scp{\xi}{g(X)\xi} |\widehat{\rho}(\xi/\Lambda)|^2 \diff \xi + F_\Lambda(X).
	\end{align}
	Note that we also replaced $\rho^\sigma$ with $\rho$. 
	
	We combine the contributions from the terms with $\Op(c_X)$ and $\Op(d_X)$. From the identity
	\begin{align}
		&-c_X(X,\xi) + \frac{1}{2} d_X(X,\xi) \scp{\xi}{g(X)\xi} = -\frac{(h_0(X,\xi)+1)^{-\frac{1}{2}} (K_0(X,\xi)+2)}{2(K_0(X,\xi)+1)^2},
	\end{align}
	it follows that
	\begin{align}
		&-\scp{\rho_{\Lambda,X}^\sigma}{\Op(c_X)\rho_{\Lambda,X}} + \frac{1}{2} \scp{\partial_{x}\rho_{\Lambda,X}}{g(X)\Op(d_X)\partial_{x}\rho_{\Lambda,X}} \notag \\
		&\ \ \  = -\frac{1}{2(2\pi)^3} \int_{\mathbb{R}^3} \frac{(h_0(X,\xi)+1)^{-\frac{1}{2}} (K_0(X,\xi)+2)}{(K_0(X,\xi)+1)^2} \left|\widehat{\rho}\left(\frac{\xi}{\Lambda}\right)\right|^2 \diff \xi + F_\Lambda(X) \notag \\
		&\ \ \  = -\frac{1}{2(2\pi)^3} \int_{\mathbb{R}^3} \frac{(h_0(X,\xi)+1)^{-\frac{1}{2}}}{K_0(X,\xi)+1} \left|\widehat{\rho}\left(\frac{\xi}{\Lambda}\right)\right|^2 \diff \xi + F_\Lambda(X).
	\end{align}
	This expression contains the vacuum energy $-E_\Lambda(X)$ from \eqref{eq:VacuumEnergy}. Thus, $V_\Lambda(X) + E_\Lambda(X) = F_\Lambda(X)$. This concludes the proof because $F_\Lambda$ converges uniformly to a bounded continuous function. 
\end{proof}

\section{Proof of Nelson's Theorem}
\label{sec:ProofOfNelsonsTheorem}

The last step of the proof consists of showing that $H_\Lambda + E_\Lambda(X)$ has a limit in the norm resolvent sense, where $E_\Lambda(X)$ is defined in \eqref{eq:VacuumEnergy}. First, we prove that the transformed Hamiltonian $H_\Lambda' = U_\Lambda H_\Lambda U_\Lambda^* + E_\Lambda(X) = H_0 + R_\Lambda(X) + F_\Lambda(X)$ interpreted in quadratic forms converges to some self-adjoint operator $H_\infty'$. The operator $R_\Lambda(X)$ is defined in \eqref{eq:RLambda}, and $(F_\Lambda)_{\Lambda \geq 0} \subset C_\mathrm{b}^\infty(\mathbb{R}^3)$ is a bounded family of smooth functions that converges uniformly to some $F\in C_\mathrm{b}(\mathbb{R}^3)$ as $\Lambda \to \infty$. Then, we obtain the required renormalised Nelson Hamiltonian $H$ by a retransformation: $H=U_\infty^* H_\infty' U_\infty$.

The following useful proposition is a consequence of the well-known KLMN theorem.

\begin{prop}
	Let $A \geq 0$ be a self-adjoint operator and $(q_\Lambda)_{\Lambda \geq 0}$ a family of quadratic forms on $D(A^{1/2})$ such that
	\begin{align}
		|q_\Lambda(u)| \leq a \norm{A^{\frac{1}{2}}u}^2 + b_\Lambda \norm{u}^2
		\label{eq:FormEstimateKLMN}
	\end{align}
	for some $a<1$ and $b_\Lambda \in \mathbb{R}$. Moreover, assume that
	\begin{align}
		|q_{\Lambda_1}(u) - q_{\Lambda_2}(u)| \leq C_{\Lambda_1 \Lambda_2} \norm{(A+1)^{\frac{1}{2}}u}^2, 
		\label{eq:FormDifferenceEstimateKLMN}
	\end{align}
	where $C_{\Lambda_1 \Lambda_2} \to 0$ as $\Lambda_1, \Lambda_2 \to \infty$. Then \eqref{eq:FormEstimateKLMN} extends to $\Lambda = \infty$ with some $b_\infty < \infty$, and for each $\Lambda \leq \infty$, a unique self-adjoint bounded from below operator $A_\Lambda$ exists with $D(A_\Lambda) \subset D(A^{1/2})$ such that
	\begin{align}
		\scp{u}{A_\Lambda v} = \scp{A^{\frac{1}{2}} u}{A^{\frac{1}{2}} v} + q_\Lambda(u,v)
	\end{align}
	for all $u \in D(A^{\frac{1}{2}})$, $v \in D(A_\Lambda)$. Furthermore, $A_\Lambda$ converges in the norm resolvent sense to $A_\infty$.
	\label{prop:KLMN}
\end{prop}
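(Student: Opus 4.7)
The plan proceeds in three stages: construct the limit form $q_\infty$ pointwise, invoke the representation theorem to realise each $q_\Lambda$ as the form of a self-adjoint operator $A_\Lambda$, and derive norm resolvent convergence by a form-level identity. By \eqref{eq:FormDifferenceEstimateKLMN}, the sequence $(q_\Lambda(u))_\Lambda$ is Cauchy in $\mathbb{R}$ for every $u \in D(A^{1/2})$, so polarisation defines a symmetric bilinear form $q_\infty$ on $D(A^{1/2})$. By the joint convergence $C_{\Lambda_1\Lambda_2} \to 0$, one can fix $\Lambda_0$ so large that $C_{\Lambda \Lambda_0} \leq (1-a)/4$ for all $\Lambda \geq \Lambda_0$. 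Passing to the limit $\Lambda \to \infty$ in \eqref{eq:FormDifferenceEstimateKLMN} and combining with \eqref{eq:FormEstimateKLMN} at $\Lambda_0$ shows that $q_\infty$ satisfies a bound of the form \eqref{eq:FormEstimateKLMN} with relative bound $(3a+1)/4 < 1$ and some finite constant $b_\infty$. The same argument gives $\sup_{\Lambda \leq \infty} b_\Lambda < \infty$.

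Next, for each $\Lambda \leq \infty$, the sesquilinear form $\tilde q_\Lambda(u,v) = \scp{A^{1/2}u}{A^{1/2}v} + q_\Lambda(u,v)$ on $D(A^{1/2})$ is symmetric, bounded below, and closed, since $q_\Lambda$ is a KLMN-type perturbation with relative bound strictly less than one of the closed form associated with $A$. The representation theorem for closed semibounded forms (the KLMN theorem) then produces a unique self-adjoint operator $A_\Lambda$ bounded below with $D(A_\Lambda) \subset D(A^{1/2})$ satisfying the stated form identity.

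For the last stage, choose $\lambda$ larger than $\sup_{\Lambda \leq \infty} b_\Lambda + 1$ and set $\psi_\Lambda = (A_\Lambda + \lambda)^{-1}u$ for $u \in \mathfrak{H}$. Testing the defining form identity for $A_\infty$ against $\psi_\infty$ and applying the uniform relative bound yields the a priori estimate $\norm{(A+1)^{1/2}\psi_\infty} \leq C\norm{u}$. Subtracting the form identities for $A_\Lambda$ and $A_\infty$ at a common test vector $\phi \in D(A^{1/2})$ and then setting $\phi = \psi_\Lambda - \psi_\infty$ gives
\begin{align*}
\norm{A^{1/2}(\psi_\Lambda - \psi_\infty)}^2 + q_\Lambda(\psi_\Lambda - \psi_\infty) + \lambda\norm{\psi_\Lambda - \psi_\infty}^2 = (q_\infty - q_\Lambda)(\psi_\Lambda - \psi_\infty, \psi_\infty).
\end{align*}
Bounding the left side below by $c\norm{(A+1)^{1/2}(\psi_\Lambda - \psi_\infty)}^2$ via the uniform relative bound, and the right side above by $C_{\Lambda\infty}\norm{(A+1)^{1/2}(\psi_\Lambda - \psi_\infty)}\norm{(A+1)^{1/2}\psi_\infty}$ via the polarised version of \eqref{eq:FormDifferenceEstimateKLMN}, one cancels a factor of $\norm{(A+1)^{1/2}(\psi_\Lambda - \psi_\infty)}$ and invokes the a priori bound to obtain $\norm{\psi_\Lambda - \psi_\infty} \leq C' C_{\Lambda\infty}\norm{u}$, whence $(A_\Lambda + \lambda)^{-1} \to (A_\infty + \lambda)^{-1}$ in operator norm. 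The principal obstacle is this last conversion: turning the form-level smallness of $q_\Lambda - q_\infty$ into an operator-norm estimate on the resolvent difference, which hinges on the uniform coercivity $\norm{(A+1)^{1/2}(A_\Lambda + \lambda)^{-1}} \leq C$ in $\Lambda$.
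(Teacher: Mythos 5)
Your proposal is correct and follows essentially the same route as the paper: fix $\Lambda_0$ to propagate the relative bound to all large $\Lambda$ and to $\Lambda=\infty$, apply the KLMN theorem, and then convert the form-difference estimate into resolvent convergence using the uniform bound $\norm{(A+1)^{1/2}(A_\Lambda+\lambda)^{-1}}\leq C$. The only (immaterial) variation is that you establish convergence at a real point $-\lambda$ via a coercivity argument with test vector $\psi_\Lambda-\psi_\infty$, whereas the paper evaluates $\scp{u}{(\mathcal{R}_\Lambda(z)-\mathcal{R}_\infty(z))v}=(q_\infty-q_\Lambda)(\mathcal{R}_\Lambda(\overline{z})u,\mathcal{R}_\infty(z)v)$ directly at complex $z$; both hinge on the same two ingredients.
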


\begin{proof}
	We take a $\Lambda_0 \geq 0$ that is sufficiently large such that $C_{\Lambda_1\Lambda_2} \leq (1-a)/2$ for all $\Lambda_1,\Lambda_2 \geq \Lambda_0$. Then, for $\Lambda \geq \Lambda_0$, the following inequality holds:
	\begin{align}
		|q_\Lambda(u)| &\leq |q_\Lambda(u)-q_{\Lambda_0}(u)| + |q_{\Lambda_0}(u)| \notag \\
		&\leq (a+C_{\Lambda\Lambda_0}) \norm{A^{\frac{1}{2}}u}^2 + (b_{\Lambda_0} + C_{\Lambda\Lambda_0}) \norm{u}^2 \notag \\
		&\leq \frac{1+a}{2} \norm{A^{\frac{1}{2}}u}^2 + (b_{\Lambda_0} + C_{\Lambda\Lambda_0}) \norm{u}^2.
		\label{eq:KLMN1}
	\end{align}
	Hence, inequality \eqref{eq:FormEstimateKLMN} extends to $\Lambda = \infty$ with $b_\infty = b_{\Lambda_0} + C_{\infty\Lambda_0} < \infty$, where $C_{\infty\Lambda_0} = \limsup_{\Lambda \to \infty} C_{\Lambda \Lambda_0} < \infty$. 
	
	From the KLMN theorem (see \cite[Thm. X.17]{reed2}) it follows that, for each $\Lambda \leq \infty$, a unique self-adjoint operator $A_\Lambda$ exists with $D(A_\Lambda) \subset D(A^{\frac{1}{2}})$ such that
	\begin{align}
		\scp{u}{A_\Lambda v} = \scp{A^{\frac{1}{2}} u}{A^{\frac{1}{2}} v} + q_\Lambda(u,v)
		\label{eq:KLMN2}
	\end{align}
	for all $u\in D(A^\frac{1}{2})$, $v \in D(A_\Lambda)$. Moreover, $A_\Lambda$ is bounded from below by $-b_\Lambda$. 
	
	It remains to prove the convergence of $A_\Lambda$ to $A_\infty$ in the norm resolvent sense. The resolvent of $A_\Lambda$ is denoted by $\mathcal{R}_\Lambda(z) = (A_\Lambda-z)^{-1}$. From assumption \eqref{eq:FormDifferenceEstimateKLMN} and the Cauchy-Schwarz inequality for sesquilinear forms, we obtain, for every $u,v \in D(A^{1/2})$, the following estimate:
	\begin{align}
		&|\scp{u}{(\mathcal{R}_\Lambda(z) - \mathcal{R}_\infty(z))v}| \notag \\ 
		&\ \ \ = |\scp{\mathcal{R}_\Lambda(\overline{z})u}{(A_\infty-z)\mathcal{R}_\infty(z) v} - \scp{(A_\Lambda - \overline{z}) \mathcal{R}_\Lambda(\overline{z})u}{\mathcal{R}_\infty(z)v}| \notag \\
		&\ \ \ = |q_\infty(\mathcal{R}_\Lambda(\overline{z})u, \mathcal{R}_\infty(z)v) - q_\Lambda(\mathcal{R}_\Lambda(\overline{z})u, \mathcal{R}_\infty(z)v)| \notag \\
		&\ \ \ \leq C_{\Lambda\infty} \norm{(A+1)^{\frac{1}{2}}\mathcal{R}_\Lambda(\overline{z})u} \norm{(A+1)^{\frac{1}{2}}\mathcal{R}_\infty(z)v}.
	\end{align}
	The factor $\norm{A^{\frac{1}{2}}\mathcal{R}_\Lambda(\overline{z})u} \leq C_z \norm{u}$ is uniformly bounded by $\Lambda$ because, by \eqref{eq:KLMN1} and \eqref{eq:KLMN2},
	\begin{align}
		A \leq \frac{2}{1-a} (A_\Lambda + b_\infty)
	\end{align}
	for $\Lambda \geq \Lambda_0$. This confirms that, for every $z \in \mathbb{C}\backslash \mathbb{R}$, $\mathcal{R}_\Lambda(z)$ converges to $\mathcal{R}_\infty(z)$ in norm because $C_{\Lambda\infty} \to 0$ by assumption.
\end{proof}

\begin{thm}
	A self-adjoint bounded from below operator $H_\infty'$ exists such that $H_\Lambda' = U_\Lambda H_\Lambda U_\Lambda^* + E_\Lambda(X)$ converges in the norm resolvent sense to $H_\infty'$ as $\Lambda \to \infty$. Moreover, $D(H_\infty') \subset D(H_0^{1/2})$.
	\label{thm:ConvergenceTransformedHamiltonian}
\end{thm}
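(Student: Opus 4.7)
\medskip

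\noindent\textbf{Proof proposal.} The plan is to apply Proposition~\ref{prop:KLMN} with $A = H_0$ and the family of quadratic forms
\begin{align*}
q_\Lambda(u,v) = \scp{u}{(R_\Lambda(X) + F_\Lambda(X))v}, \quad u,v \in D(H_0^{\frac{1}{2}}),
\end{align*}
so that $H_\Lambda' = H_0 + R_\Lambda(X) + F_\Lambda(X)$ is recovered as the self-adjoint operator associated to the closed form $\scp{H_0^{1/2}\cdot}{H_0^{1/2}\cdot} + q_\Lambda$. Once the hypotheses \eqref{eq:FormEstimateKLMN} and \eqref{eq:FormDifferenceEstimateKLMN} are verified, the self-adjointness of $H_\infty'$, the lower bound, the norm resolvent convergence, and the inclusion $D(H_\infty') \subset D(H_0^{1/2})$ all follow at once.

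For the relative form bound, I would estimate each of the six summands in $R_\Lambda(X)$ of \eqref{eq:RLambda} separately. The mixed terms $a^{(*)}(\partial_X B^\sigma_{\Lambda,X}) \cdot g(X)\partial_X$ are controlled using Lemma~\ref{lem:EstimatesAnnihilationCreation} with $h=\omega$ and $\alpha=1/2$, yielding prefactors $\|\omega^{-1/2}\partial_X B^\sigma_{\Lambda,X}\|_\mathfrak{h}$; by Lemma~\ref{lem:B} these are dominated by $C\|\rho^\sigma_\Lambda\|_{H^s}$ for some $s<-3/2$, uniformly in $\Lambda$ and $X$, and the explicit bound $\|\rho^\sigma_\Lambda\|_{H^s}^2 \leq C\int_{|\xi|\geq\sigma}\jap{\xi}^{2s}\diff\xi$ makes this prefactor arbitrarily small by choosing the infrared cut-off $\sigma$ large. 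The purely quadratic pieces $a^*a^*$, $aa$, and $a^*a$ are handled by \eqref{eq:ThirdEstimate} with $\alpha=1$, producing squared prefactors of the same kind times $\|\diff\Gamma(\omega)\Psi\|\leq\|H_0\Psi\|$. The residual field operator $\Phi(\omega^{-1/2}(\rho_{\Lambda,X}-\rho^\sigma_{\Lambda,X}))$ has form factor supported on $|\xi|\leq 2\sigma$ where $\omega^{-1}$ is regular, so \eqref{eq:NormEstimateAnnihilation}--\eqref{eq:NormEstimateCreation} together with $N \leq H_0/m$ and Young's inequality control it with $\sigma$-dependent but $\Lambda$-independent constants. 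Theorem~\ref{thm:RenormalisedPotential} makes $F_\Lambda$ uniformly bounded in $X$. Collecting everything and fixing $\sigma$ large enough gives $|q_\Lambda(u)| \leq a\|H_0^{1/2}u\|^2 + b_\Lambda\|u\|^2$ with $a<1$ uniformly in $\Lambda$.

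For the Cauchy condition \eqref{eq:FormDifferenceEstimateKLMN}, the same estimates are rerun on the difference $q_{\Lambda_1}-q_{\Lambda_2}$, each summand now carrying one factor of the form $B^\sigma_{\Lambda_1,X}-B^\sigma_{\Lambda_2,X}$, or $(\rho_{\Lambda_1,X}-\rho^\sigma_{\Lambda_1,X})-(\rho_{\Lambda_2,X}-\rho^\sigma_{\Lambda_2,X})$, or $F_{\Lambda_1}-F_{\Lambda_2}$. The uniform-in-$X$ convergence \eqref{eq:ConvergenceCutOffToDelta} of $\rho_{\Lambda,X}$ to $\delta_X$ in $H^s$ for $s<-3/2$, combined with the mapping estimates from the proof of Lemma~\ref{lem:B}, gives $\|\omega^{-1/2}\partial_X(B^\sigma_{\Lambda_1,X}-B^\sigma_{\Lambda_2,X})\|_\mathfrak{h} \to 0$ and an analogous statement for the residual form factor, while Theorem~\ref{thm:RenormalisedPotential} provides the uniform convergence of $F_\Lambda$. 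Thus $C_{\Lambda_1\Lambda_2} \to 0$, and Proposition~\ref{prop:KLMN} delivers the required self-adjoint limit $H_\infty'$ with norm resolvent convergence.

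The main obstacle will be producing a relative form bound with constant \emph{strictly} less than $1$. The mixed term $a^*(\partial_X B^\sigma_{\Lambda,X}) \cdot g(X)\partial_X$ forces one to absorb simultaneously a bosonic factor $\|\diff\Gamma(\omega)^{1/2}\Psi\|$ and a particle factor $\|\partial_X\Psi\|$, each individually dominated by $\|H_0^{1/2}\Psi\|$ but whose product is not. A Cauchy--Schwarz step followed by Young's inequality, together with the uniform ellipticity~\eqref{eq:EllipticityCondition} to dominate $\|g(X)\partial_X\Psi\|^2$ by $C_1^2 C_0^{-1}\|H_0^{1/2}\Psi\|^2$, converts the product into $\tfrac{1}{2}\|\diff\Gamma(\omega)^{1/2}\Psi\|^2 + \tfrac{1}{2}\|g(X)\partial_X\Psi\|^2 \leq C\|H_0^{1/2}\Psi\|^2$; only after this splitting can the infrared prefactor $\|\omega^{-1/2}\partial_X B^\sigma_{\Lambda,X}\|_\mathfrak{h}$ be shrunk by tuning $\sigma$ to make $a<1$. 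This is the precise role played by the infrared cut-off \eqref{eq:InfraredCutOff} inside the Gross transformation.
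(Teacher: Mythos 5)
Your strategy is exactly the paper's: apply Proposition \ref{prop:KLMN} with $A=H_0$ and $q_\Lambda = \scp{\cdot}{(R_\Lambda(X)+F_\Lambda(X))\,\cdot}$, verify the uniform form bound by tuning first the infrared cut-off $\sigma$ and then the Young parameter $\epsilon$, and verify the Cauchy condition via the $H^s$-convergence of $\rho_\Lambda^\sigma$. Your treatment of the mixed terms, of the residual field operator $\Phi(\omega^{-1/2}(\rho_{\Lambda,X}-\rho_{\Lambda,X}^\sigma))$, of the role of the infrared cut-off, and of the difference estimates all coincide with the paper's proof; the explicit bound $\norm{\rho_\Lambda^\sigma}_{H^s}^2 \leq C\int_{|\xi|\geq\sigma}\jap{\xi}^{2s}\diff\xi$ is a useful way of making transparent why enlarging $\sigma$ shrinks the prefactor.

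One step as written would fail: the quadratic pieces. Applying \eqref{eq:ThirdEstimate} with $\alpha=1$ gives
\begin{align*}
|\scp{\Psi}{a(\partial_XB^\sigma)\cdot g(X)\,a(\partial_XB^\sigma)\Psi}_{\mathfrak{H}}| \leq C\,\norm{(N+1)^{\frac12}\Psi}_{\mathfrak{H}}\,\norm{\omega^{-\frac12}\partial_XB^\sigma}_{\mathfrak{h}}^2\,\norm{\diff\Gamma(\omega)\Psi}_{\mathfrak{H}},
\end{align*}
and the product $\norm{(N+1)^{1/2}\Psi}\,\norm{\diff\Gamma(\omega)\Psi}$ is \emph{not} dominated by $\norm{H_0^{1/2}\Psi}^2 + \norm{\Psi}^2$: on an $n$-boson state with all momenta of comparable energy the left-hand side grows like $n^{3/2}$ while $\norm{H_0^{1/2}\Psi}^2$ grows like $n$. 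Hence, no matter how small the prefactor, this cannot be absorbed into a KLMN form bound relative to $H_0^{1/2}$. The paper instead takes $\alpha=1/2$ in \eqref{eq:ThirdEstimate}, which yields $\norm{\omega^{-1/4}\partial_XB^\sigma}_{\mathfrak{h}}^2\,\norm{\diff\Gamma(\omega)^{1/2}\Psi}_{\mathfrak{H}}$; pairing this with $\norm{(N+1)^{1/2}\Psi}_{\mathfrak{H}} \leq C\norm{(H_0+1)^{1/2}\Psi}_{\mathfrak{H}}$ gives a bound of the required form, and Lemma \ref{lem:B} (with $\alpha-1=-1/4$, i.e.\ $\alpha=3/4<1$) still controls $\norm{\omega^{-1/4}\partial_XB^\sigma}_{\mathfrak{h}}$ by $\norm{\rho_\Lambda^\sigma}_{H^s}$. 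Note also that the $a^*\!\cdot g(X)\cdot a$ term does not need \eqref{eq:ThirdEstimate} at all: writing it as $\scp{a(\partial_XB^\sigma)\Psi}{g(X)a(\partial_XB^\sigma)\Psi}$ and using \eqref{eq:FirstEstimateAC} with $\alpha=1/2$ suffices. With this correction your argument goes through and reproduces the paper's proof.
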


\begin{proof}	
	We apply Proposition \ref{prop:KLMN} with $A= H_0$ and $q_\Lambda(\Psi) = \scp{\Psi}{(R_\Lambda(X)+ F_\Lambda(X))\Psi}_{\mathfrak{H}}$. First, we verify the assumption \eqref{eq:FormEstimateKLMN}. We estimate the quadratic form $q_\Lambda$ term by term and repeatedly use Lemmata \ref{lem:EstimatesAnnihilationCreation} and \ref{lem:B}. The first two terms in $R_\Lambda(X)$ contain one annihilation or one creation operator and are bounded in quadratic forms by\footnote{Remember that we assume that $C_0\leq g(X) \leq C_1$ is uniformly elliptic.}
	\begin{align}
		&|\scp{a(\partial_X B_{\Lambda,X}^\sigma)\Psi}{g(X) \partial_X\Psi}_{\mathfrak{H}}|
		\leq C \norm{a(\partial_X B_{\Lambda,X}^\sigma)\Psi}_{\mathfrak{H}} \norm{\partial_X\Psi}_{\mathfrak{H}} \notag \\
		&\ \ \ \leq C \norm{\omega^{-\frac{1}{2}} \partial_{X} B_{\Lambda,X}^\sigma}_\mathfrak{h} \norm{\diff\Gamma(\omega)^{\frac{1}{2}}\Psi}_{\mathfrak{H}} \norm{H_0^{\frac{1}{2}}\Psi}_{\mathfrak{H}} \notag \\
		&\ \ \ \leq C \norm{\rho_\Lambda^\sigma}_{H^s} \norm{H_0^{\frac12}\Psi}^2_{\mathfrak{H}}.
	\end{align}
	Then, there are terms in $R_\Lambda(X)$ with two annihilation operators or two creation operators:
	\begin{align}
		&|\scp{\Psi}{a(\partial_X B_{\Lambda,X}^\sigma)\cdot g(X) a(\partial_X B_{\Lambda,X}^\sigma)\Psi}_{\mathfrak{H}}| \notag \\
		&\ \ \ \leq C \norm{(N+1)^{\frac{1}{2}}\Psi}_{\mathfrak{H}} \norm{(N+1)^{-\frac{1}{2}}a(\partial_X B_{\Lambda,X}^\sigma) \cdot g(X)a(\partial_X B_{\Lambda,X}^\sigma)\Psi}_{\mathfrak{H}} \notag \\
		&\ \ \ \leq C \norm{(H_0+1)^{\frac{1}{2}}\Psi}_{\mathfrak{H}} \norm{\omega^{-1/4}\partial_X B_{\Lambda,X}^\sigma}^2_\mathfrak{h} \norm{\diff\Gamma(\omega)^{\frac{1}{2}}\Psi}_{\mathfrak{H}} \notag \\
		&\ \ \ \leq C \norm{\rho_\Lambda^\sigma}_{H^s}^2 (\norm{H_0^{\frac{1}{2}}\Psi}^2_{\mathfrak{H}} + \norm{\Psi}^2_{\mathfrak{H}}).
	\end{align}	
	Next, there is a term with one annihilation and one creation operator in $R_\Lambda(X)$:
	\begin{align}
		&|\scp{a(\partial_X B_{\Lambda,X}^\sigma)\Psi}{g(X)a(\partial_X B_{\Lambda,X}^\sigma)\Psi}_{\mathfrak{H}}|
		\leq C \norm{a(\partial_X B_{\Lambda,X}^\sigma)\Psi}^2_{\mathfrak{H}} \notag \\
		&\ \ \ \leq C \norm{\omega^{-1/2}\partial_X B_{\Lambda,X}^\sigma}^2_\mathfrak{h} \norm{\diff\Gamma(\omega)^{\frac{1}{2}}\Psi}^2_{\mathfrak{H}} \notag \\
		&\ \ \ \leq C \norm{\rho_\Lambda^\sigma}_{H^s}^2 \norm{H_0^{\frac{1}{2}}\Psi}^2_{\mathfrak{H}}.
	\end{align}
	The last term we included in $R_\Lambda(X)$ is the field operator $\Phi(\omega^{-\frac{1}{2}}(\rho_{\Lambda,X} - \rho_{\Lambda,X}^\sigma))$:
	\begin{align}
		&|\scp{\Psi}{\Phi(\omega^{-\frac{1}{2}}(\rho_{\Lambda,X} - \rho_{\Lambda,X}^\sigma))\Psi}_{\mathfrak{H}}|
		\leq C\norm{\omega^{-\frac{1}{2}}(\rho_{\Lambda} - \rho_{\Lambda}^\sigma)}_\mathfrak{h} \norm{(N+1)^{\frac{1}{4}}\Psi}^2_{\mathfrak{H}} \notag \\
		&\ \ \ \leq C\epsilon \norm{\omega^{-\frac{1}{2}}(\rho_{\Lambda} - \rho_{\Lambda}^\sigma)}_\mathfrak{h} \norm{H_0^{\frac{1}{2}}\Psi}^2_{\mathfrak{H}} + C_\epsilon \norm{\Psi}^2_{\mathfrak{H}}
	\end{align}
	for any $\epsilon > 0$.	In the first inequality, we used that $\Phi(f) \leq \sqrt{2} \norm{f}_\mathfrak{h} (N+1)^{1/2}$. 
	
	Moreover, it is clear that
	\begin{align}
		|\scp{\Psi}{F_\Lambda(X)\Psi}_\mathfrak{H}| \leq C \norm{\Psi}^2_\mathfrak{H}
	\end{align}
	because $(F_\Lambda)_{\Lambda \geq 0}$ is a bounded family of bounded functions. 
	
	Collecting all these bounds, we proved that an $s<-3/2$ and a $b_\Lambda \in \mathbb{R}$ exist such that
	\begin{align}
		|q_\Lambda(\Psi)| \leq C (\norm{\rho_{\Lambda}^\sigma}_{H^s} + \norm{\rho_{\Lambda}^\sigma}^2_{H^s} + \epsilon \norm{\omega^{-\frac{1}{2}}(\rho_{\Lambda} - \rho_{\Lambda}^\sigma)}_\mathfrak{h}) \norm{H_0^{\frac{1}{2}}\Psi}^2_\mathfrak{H} + b_\Lambda \norm{\Psi}^2_\mathfrak{H}
	\end{align}
	We must ensure that the prefactor in front of $\norm{H_0^{1/2}\Psi}^2$ is strictly smaller than $1$. This is possible if we first choose $\sigma$ introduced in \eqref{eq:InfraredCutOff} to be sufficiently large and then $\epsilon$ to be sufficiently small such that, for a $\Lambda_0 \geq 0$, the prefactor is strictly smaller than $1$ for $\Lambda_0 \leq \Lambda < \infty$. This verifies the assumption \eqref{eq:FormEstimateKLMN}.
	
	The second assumption of Proposition \ref{prop:KLMN} that we verify is \eqref{eq:FormDifferenceEstimateKLMN}. This is achieved through estimates similar to those above: For an $s<-3/2$,
	\begin{align}
		|\scp{[a(\partial_X B_{\Lambda_1,X}^\sigma) - a(\partial_X B_{\Lambda_2,X}^\sigma)]\Psi}{g(X) \partial_X\Psi}_\mathfrak{H} |
		\leq C \norm{\rho_{\Lambda_1}^\sigma - \rho_{\Lambda_2}^\sigma}_{H^s} \norm{H_0^{\frac{1}{2}}\Psi}^2_\mathfrak{H},
	\end{align}
	and
	\begin{align}
		&|\scp{\Psi}{a(\partial_X B_{\Lambda_1,X}^\sigma)\cdot g(X) a(\partial_X B_{\Lambda_1,X}^\sigma)\Psi}_\mathfrak{H}
		- \scp{\Psi}{a(\partial_X B_{\Lambda_2,X}^\sigma)\cdot g(X) a(\partial_X B_{\Lambda_2,X}^\sigma)\Psi}_\mathfrak{H}| \notag \\
		&\ \ \ =|\scp{\Psi}{a(\partial_X B_{\Lambda_1,X}^\sigma - \partial_X B_{\Lambda_2,X}^\sigma)\cdot g(X) a(\partial_X B_{\Lambda_1,X}^\sigma + \partial_X B_{\Lambda_2,X}^\sigma)\Psi}_\mathfrak{H}| \notag \\ 
		&\ \ \ \leq C \norm{\rho_{\Lambda_1}^\sigma - \rho_{\Lambda_2}^\sigma}_{H^s} \norm{\rho_{\Lambda_1}^\sigma + \rho_{\Lambda_2}^\sigma}_{H^s} \norm{(H_0+1)^{\frac{1}{2}}\Psi}^2_\mathfrak{H} \notag \\
		&\ \ \ \leq C \norm{\rho_{\Lambda_1}^\sigma - \rho_{\Lambda_2}^\sigma}_{H^s} \sup_\Lambda \norm{\rho_{\Lambda}^\sigma}_{H^s} \norm{(H_0+1)^{\frac{1}{2}}\Psi}^2_\mathfrak{H},
	\end{align}	
	as well as
	\begin{align}
		&|\scp{a(\partial_X B_{\Lambda_1,X}^\sigma)\Psi}{g(X)a(\partial_X B_{\Lambda_1,X}^\sigma)\Psi}_\mathfrak{H}
		-\scp{a(\partial_X B_{\Lambda_2,X}^\sigma)\Psi}{g(X)a(\partial_X B_{\Lambda_2,X}^\sigma)\Psi}_\mathfrak{H}| \notag \\ 
		&\ \ \ = |\scp{a(\partial_X B_{\Lambda_1,X}^\sigma - \partial_X B_{\Lambda_2,X}^\sigma)\Psi}{g(X)a(\partial_X B_{\Lambda_1,X}^\sigma + \partial_X B_{\Lambda_2,X}^\sigma)\Psi}_\mathfrak{H}| \notag \\ 
		&\ \ \ \leq C \norm{a(\partial_X B_{\Lambda_1,X}^\sigma-\partial_X B_{\Lambda_2,X}^\sigma)\Psi}_\mathfrak{H} \norm{a(\partial_X B_{\Lambda_1,X}^\sigma + \partial_X B_{\Lambda_2,X}^\sigma)\Psi}_\mathfrak{H} \notag \\
		&\ \ \ \leq C \norm{\rho_{\Lambda_1}^\sigma - \rho_{\Lambda_2}^\sigma}_{H^s} \sup_{\Lambda_1,\Lambda_2} \norm{\rho_{\Lambda_1}^\sigma + \rho_{\Lambda_2}^\sigma}_{H^s} \norm{H_0^{\frac{1}{2}}\Psi}^2_\mathfrak{H}.
	\end{align}
	The difference in the field operators is estimated by
	\begin{align}
		&|\scp{\Psi}{[\Phi(\omega^{-\frac{1}{2}}(\rho_{\Lambda_1,X} - \rho_{\Lambda_1,X}^\sigma))- \Phi(\omega^{-\frac{1}{2}}(\rho_{\Lambda_2,X} - \rho_{\Lambda_2,X}^\sigma))]\Psi}_\mathfrak{H}| \notag \\
		&\ \ \ \leq C \norm{\omega^{-\frac{1}{2}}(\rho_{\Lambda_1} - \rho_{\Lambda_1}^\sigma - \rho_{\Lambda_2} + \rho_{\Lambda_2}^\sigma)}_\mathfrak{h} \norm{(N+1)^{\frac{1}{4}}\Psi}^2_\mathfrak{H}.
	\end{align}
	All factors converge to $0$ as $\Lambda_1, \Lambda_2 \to \infty$. 
	
	Further, because $F_\Lambda$ converges uniformly,
	\begin{align}
		|\scp{\Psi}{(F_{\Lambda_1}(X) -F_{\Lambda_2}(X)) \Psi}_\mathfrak{H}| \leq \norm{F_{\Lambda_1}-F_{\Lambda_2}}_\infty \norm{\Psi}_{\mathfrak{H}}^2 \stackrel{\Lambda_1,\Lambda_2\to \infty}{\longrightarrow} 0.
	\end{align}	
	Hence, all assumptions of Proposition \ref{prop:KLMN} are satisfied. Therefore, for all $\Lambda \leq \infty$, a unique self-adjoint bounded from below operator $H_\Lambda'$ with $D(H_\Lambda') \subset D(H_0^{1/2})$ exists such that
	\begin{align}
		\scp{\Psi_1}{H_\Lambda' \Psi_2}_\mathfrak{H} = \scp{H_0^{\frac{1}{2}}\Psi_1}{H_0^{\frac{1}{2}}\Psi_2}_\mathfrak{H} +  \scp{\Psi_1}{(R_\Lambda(X) + F_\Lambda(X)) \Psi_2}_\mathfrak{H}
	\end{align}
	for all $\Psi_1 \in D(H_0^{\frac{1}{2}})$, $\Psi_2 \in D(H_\Lambda')$. Furthermore, $H_\Lambda'$ converges in the norm resolvent sense to $H_\infty'$. 
\end{proof}

A simple retransformation of the limit operator $H_\infty'$ suffices to finish the proof of Theorem \ref{thm:Gerard}.

\begin{proof}[Proof of Theorem \ref{thm:Gerard}]
	We choose $E_\Lambda(X)$ as in \eqref{eq:VacuumEnergy} and set $H =  U_\infty^* H_\infty' U_\infty$, where $H_\infty'$ is the operator from Theorem \ref{thm:ConvergenceTransformedHamiltonian}. We demonstrate that $H$ is the desired renormalised Nelson Hamiltonian. First, we observe that $(H_0+1)^{-\frac{1}{2}}U_\Lambda$ converges in norm to $(H_0+1)^{-\frac{1}{2}}U_\infty$ by Proposition \ref{prop:PropertiesWeylOperator} item \ref{it:PropWeylOperator3} and Lemma \ref{lem:EstimatesAnnihilationCreation}:
	\begin{align}
		&\sup_{\norm{\Psi}=1} \norm{(U_\Lambda^* - U_\infty^*) (H_0+1)^{-\frac{1}{2}} \Psi}_\mathfrak{H} \leq \norm{\Pi(B_{\Lambda,X}^\sigma - B_{\infty,X}^\sigma) (H_0+1)^{-\frac{1}{2}} \Psi}_\mathfrak{H}\notag \\
		&\ \ \ \leq \sup_{\norm{\Psi}=1} \norm{\omega^{-\frac{1}{2}}(B_{\Lambda,X}^\sigma - B_{\infty,X}^\sigma)}_\mathfrak{h} \norm{\diff\Gamma(\omega)^{\frac{1}{2}} (H_0+1)^{-\frac{1}{2}} \Psi}_\mathfrak{H} \notag \\
		&\ \ \ \leq C \norm{\rho_\Lambda^\sigma - \delta^\sigma}_{H^{-2}} \stackrel{\Lambda\to\infty}{\longrightarrow} 0.
	\end{align}	
	We use this to prove that $H_\Lambda + E_\Lambda(X) = U_\Lambda^*H_\Lambda'U_\Lambda$ converges in the norm resolvent sense to $H$. We write $\mathcal{R}_\Lambda(z) = (H_\Lambda' - z)^{-1}$ for the resolvent of $H_\Lambda'$. From the proof of Proposition \ref{prop:KLMN} it follows that $H_0 \leq C(H_\Lambda' + 1)$ for some constant $C>0$ and all $\Lambda \geq \Lambda_0$. Thus, $(H_0+1)^{\frac{1}{2}} \mathcal{R}_\Lambda(z)$ is bounded uniformly in $\Lambda$ for all $\Lambda \geq \Lambda_0$. A simple computation illustrates that for all $z \in \mathbb{C}\backslash \mathbb{R}$,
	\begin{align}
		&(U_\Lambda^* H_\Lambda' U_\Lambda - z)^{-1} - (U_\infty^*H_\infty' U_\infty - z)^{-1} \notag \\
		&\ \ \ = U_\Lambda^* \mathcal{R}_\Lambda(z) U_\Lambda - U_\infty^* \mathcal{R}_\infty(z) U_\infty \notag \\
		&\ \ \ = U_\Lambda^* \mathcal{R}_\Lambda(z) (U_\Lambda - U_\infty) + U_\Lambda^*(\mathcal{R}_\Lambda(z) - \mathcal{R}_\infty(z))U_\infty + (U_\Lambda^* - U_\infty^*) \mathcal{R}_\infty(z) U_\infty \notag \\
		&\ \ \ = U_\Lambda^* \mathcal{R}_\Lambda(z) (H_0+1)^{\frac{1}{2}}  (H_0+1)^{-\frac{1}{2}} (U_\Lambda - U_\infty) + U_\Lambda^*(\mathcal{R}_\Lambda(z) - \mathcal{R}_\infty(z))U_\infty \notag \\ 
		&\ \ \ \ \ \ + (U_\Lambda^* - U_\infty^*) (H_0+1)^{-\frac{1}{2}}  (H_0+1)^{\frac{1}{2}} \mathcal{R}_\infty(z) U_\infty.
	\end{align}
	All three terms in the last line vanish in the limit of $\Lambda \to \infty$. Hence, $H_\Lambda + E_\Lambda(X)$ converges in the norm resolvent sense to $H$.
\end{proof}

From the proof presented in this chapter, we gain the additional information that the domain $D(H)$ of the renormalised Hamiltonian is a subset of $U_\infty^* D(H_0^{1/2})$. This information was sufficient for Griesemer and Wünsch \cite{griesemer2018} to demonstrate that the domains of the renormalised Hamiltonian and the square root of the free Nelson Hamiltonian intersect trivially: $D(H) \cap D(H_0^{1/2}) = \{0\}$. However, no explicit description of $D(H)$ or $U_\infty^* D(H_0^{1/2})$ is available from this proof. Moreover, the renormalised Hamiltonian $H$ was constructed as a limit of a sequence of operators. This makes it difficult to analyse the renormalised Hamiltonian because we expect that many properties of $H_\Lambda(X) + E_\Lambda(X)$ do not survive the limit $\Lambda \to \infty$. In the next chapter, we obtain an explicit description of the renormalised Hamiltonian through the novel IBC method.
\chapter{IBC Method}
\label{ch:IBC}

The cut-off regularisation presented in the previous chapter is physically equivalent to suppressing high energies or treating particles as smeared objects rather than points. Another well-known re\-gu\-la\-ri\-sation procedure, not discussed here, is to discretise space. Both these regularisations are universally accepted in quantum field theories because it is expected that quantum field theories are merely effective theories and are not valid up to arbitrarily large energy scales or arbitrarily small length scales.

This chapter presents a novel approach to defining Hamiltonians in quantum field theories, which takes particles as points and space as a continuum. This new method, which is from Teufel and Tumulka \cite{teufel2016, teufel2020}, is known as the method of \textbf{interior boundary conditions (IBC)}. The IBC Hamiltonian we construct in this chapter is well defined and self-adjoint and is equivalent to the renormalised Nelson Hamiltonian from the previous chapter. However, this time, we obtain an explicit description of the renormalised Nelson Hamiltonian and its domain, which was not available before. 

Lampart and Schmidt \cite{lampart2021, lampart2019, schmidt2019} already demonstrated that the IBC method applies to Nelson-type Hamiltonians on Euclidean spacetime. The original part of the present work generalises their technique to the Nelson model with variable coefficients.

\section{IBC Hamiltonian}
\label{sec:IBCHamiltonian}

In Section \ref{sec:NelsonModelConstantCoefficients}, we observed that the creation operator $a^*(v_X)$ with form factor $v_X = \omega^{-\frac12}\delta_X/\sqrt{2}$ is not a well-defined operator on the Hilbert space $\mathfrak{H}$ because $v_X \notin L^2(\mathbb{R}^3,\diff x)$. The new approach in this chapter is to interpret $a^*(v_X) \Psi$ as a distribution for suitable $\Psi \in \mathfrak{H}$. In return, we allow $H_0\Psi$ to be a distribution as well and choose $\Psi \in \mathfrak{H}$ in such a way that the singular parts of $a^*(v_X) \Psi$ and $H_0 \Psi$ exactly cancel each other. Then, $L=H_0 + a^*(v_X)$ is a well-defined operator. To realise this idea, we observe that
\begin{align}
	L \Psi = H_0(1+H_0^{-1}a^*(v_X))\Psi \equiv H_0(1-G)\Psi,
\end{align}
where $G = -H_0^{-1}a^*(v_X)$ is a bounded operator on $\mathfrak{H}$ by Proposition \ref{prop:H0GBounded} below. If $\Psi \in \mathfrak{H}$ is chosen such that the interior boundary condition $(1-G)\Psi \in D(H_0)$ is satisfied, then $L\Psi$ is indeed well defined (i.e. the singular parts of $H_0 \Psi$ and $a^*(v_X) \Psi$ exactly cancel each other).

\begin{rem}
	The condition $(1-G)\Psi \in D(H_0)$ is called an interior boundary condition. This is motivated by the following reason. If $\Psi = (\Psi^{(0)}, \Psi^{(1)}, \dots) \in \mathfrak{H}$, the condition $(1-G)\Psi \in D(H_0)$ can (locally) be written as
	\begin{align}
		H_0 \Psi^{(n)}(X,x) &= -\frac{1}{\sqrt{n}} \sum_{j=1}^n v_X(x_j) \Psi^{(n-1)}(X,\hat{x}_j) + F(X,x), 
		\label{eq:IBC}
	\end{align}
	where $F \in \mathfrak{H}$, and $x = (x_1, \dots, x_n) \in \mathbb{R}^{3n}$. Because the pseudo-differential operator $\omega^{-1/2}$ is hypoelliptic (i.e. $\omega^{-1/2}$ leaves the singular support of $\delta_X$ invariant; see Theorem \ref{thm:PseudoLocal}), the form factor $v_X(x_j) = \omega^{-1/2}\delta_X(x_j)$ is singular at the \textbf{collision point} $x_j = X$. These collision points lie in the boundary of the \textbf{configuration space}
	\begin{align}
		\mathcal{Q} = \bigsqcup_{n\in \mathbb{N}} (\mathbb{R}^3 \times \mathbb{R}^{3n}) \backslash \mathcal{C}^{(n)},
	\end{align}
	where $\mathcal{C}^{(n)} \subset \mathbb{R}^3 \times \mathbb{R}^{3n}$ is the set of all $(X,x) \in \mathbb{R}^3 \times \mathbb{R}^{3n}$ such that $x_j = X$ for at least one $1\leq j\leq n$. The boundary $\partial\mathcal{Q}$ of the configuration space is the union of all $\mathcal{C}^{(n)}$. We assume that $(X,x) \in \mathcal{C}^{(n)}$ is such that $x_i = X$ for exactly one $i$. From \eqref{eq:IBC} it follows that
	\begin{align}
		\Psi^{(n-1)}(X,\hat{x}_i) = -\sqrt{n} \lim_{|x_i-X| \to 0} \frac{H_0\Psi^{(n)}(X,x)}{v_X(x_i)}.
	\end{align}
	Thus, the value (or limit) of $H_0\Psi^{(n)}$ at $(X,x) \in \partial \mathcal{Q}$ in the boundary of the configuration space is related to the value of $\Psi^{(n-1)}$ at $(X,\hat{x}_i) \in \mathcal{Q}$ in the interior of the configuration space in a sector with fewer bosons.
\end{rem}

Thus far, we have defined $L = H_0 + a^*(v_X)$ on $(1-G)^{-1}D(H_0)$. The only part of the Nelson Hamiltonian we have not yet discussed is the annihilation operator $a(v_X)$. The annihilation operator must be extended to an operator $A$ on $(1-G)^{-1}D(H_0)$. According to Proposition \ref{prop:H0GBounded} below, the annihilation operator $a(v_X)$ is well defined on $D(H_0)$. Therefore, if $A$ is an extension of $a(v_X)$, it can be written as
\begin{align}
	A\Psi = A(1-G)\Psi + AG\Psi = a(v_X) (1-G)\Psi + T\Psi,
\end{align}
where the action of $A$ on certain elements in $\mathrm{ran}(G)$ is encoded in the operator $T$. In principle, $T$ could be any (symmetric) operator such that the \textbf{IBC Hamiltonian} 
\begin{align}
	H = L + A = (1-G)^* H_0 (1-G) + T \equiv H_0' + T
\end{align}
is self-adjoint on $D(H) = (1-G)^{-1}D(H_0)$. However, whether all choices of $T$ lead to physically interesting Hamiltonians is unclear. To ensure that the IBC Hamiltonian is the Hamiltonian of the Nelson model, we choose $T$ in such a way that the IBC Hamiltonian is equivalent to the renormalised Hamiltonian from Chapter \ref{ch:RemovalCutOff}. In this sense, equivalent means that the Hamiltonians agree up to an exterior particle potential, which is bounded and continuous. For this purpose, we repeat the above construction of the IBC Hamiltonian for the cut-off Hamiltonian $H_\Lambda$. If $v_{\Lambda,X} = \omega^{-1/2}\rho_{\Lambda,X}/\sqrt{2}$ is the regularised form factor with the cut-off function $\rho_{\Lambda,X}$ defined in \eqref{eq:CutOffFunction}, then
\begin{align}
	H_\Lambda &= H_0 + a^*(v_{\Lambda,X}) + a(v_{\Lambda,X}) \notag \\
	&= H_0(1-G_\Lambda) + a(v_{\Lambda,X})(1-G_\Lambda) + a(v_{\Lambda,X})G_\Lambda \notag \\
	&= (1-G_\Lambda)^* H_0 (1-G_\Lambda) + T_\Lambda
\end{align}
with $G_\Lambda = -H_0^{-1}a^*(v_{\Lambda,X})$ and $T_\Lambda = a(v_{\Lambda,X})G_\Lambda = -a(v_{\Lambda,X}) H_0^{-1} a^*(v_{\Lambda,X})$. Due to the cut-off, the operator $T_\Lambda$ is well defined for all $\Lambda < \infty$. If $T$ is taken to be the limit of $T_\Lambda + E_\Lambda(X)$ (in a suitable sense), then the IBC Hamiltonian $H$ is the (norm resolvent) limit of $H_\Lambda + E_\Lambda(X)$. Having fixed this choice of $T$, the main result of this chapter is the following theorem.
\begin{thm}
	The IBC Hamiltonian $H = L + A = H_0'+T$ is self-adjoint on 
	\begin{align}
		D(H) =\{\Psi \in \mathfrak{H} \mid (1-G)\Psi \in D(H_0) \},
	\end{align}
	bounded from below, and its action on $\Psi \in D(H)$ is given by
	\begin{align}
		H \Psi = H_0 \Psi + a^*(v_X)\Psi + A\Psi,
	\end{align}
	where the sum is understood to be taken in $D(H_0^{-1})$. Furthermore, $H$ is equivalent to the renormalised Hamiltonian from Chapter \ref{ch:RemovalCutOff}.
	\label{thm:IBCHamiltonian}
\end{thm}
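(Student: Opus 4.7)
The plan is to establish the theorem in four stages: (i) boundedness and invertibility properties of $1-G$, (ii) self-adjointness of the "free part" $H_0' = (1-G)^* H_0 (1-G)$ on the claimed domain, (iii) construction of $T$ as a renormalised limit and proof that it is a relatively form-bounded perturbation of $H_0'$, and finally (iv) identification with the renormalised Hamiltonian via norm-resolvent convergence. The proof of the action formula $H\Psi = H_0\Psi + a^*(v_X)\Psi + A\Psi$ in $D(H_0^{-1})$ will drop out of the very construction.

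First I would invoke the (assumed) Proposition \ref{prop:H0GBounded} to get that $G = -H_0^{-1}a^*(v_X)$ and its adjoint $G^* = -a(v_X)H_0^{-1}$ extend to bounded operators on $\mathfrak{H}$, using the pseudo-differential structure: the extra order of $-\Delta_X$ in $H_0$ compensates the singularity $v_X = \omega^{-1/2}\delta_X/\sqrt 2$ through the symbol analysis of Section \ref{sec:PseudorsInNelsonModelI}, where $(K+\omega)^{-1}\omega^{-1/2}\rho_{\Lambda,X}$ is controlled uniformly in $\Lambda$ by Lemma \ref{lem:B}. From this one gets $\|G\|<1$ after possibly adding a large constant to $H_0$ (equivalently shifting the spectral parameter), so that $1-G$ is boundedly invertible and $D(H) = (1-G)^{-1}D(H_0)$ is a dense subspace of $\mathfrak{H}$ on which $H_0(1-G)$ acts as a well-defined operator. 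The action formula then follows at once from $L\Psi = H_0\Psi + a^*(v_X)\Psi$ interpreted in $D(H_0^{-1})$, the singularities cancelling by construction.

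Next I would show that $H_0' := (1-G)^* H_0 (1-G)$ is self-adjoint and nonnegative on $D(H_0')=D(H)$. This is the standard construction via the closed nonnegative quadratic form $q_0'(\Psi) = \|H_0^{1/2}(1-G)\Psi\|^2$ with form domain $(1-G)^{-1}Q(H_0) = (1-G)^{-1}D(H_0^{1/2})$; since $1-G$ is a topological isomorphism of $\mathfrak{H}$ and restricts to an isomorphism of $D(H_0)$ (once $\|G\|<1$ is arranged), the associated operator has precisely the stated domain. The hard part is now (iii): to define $T$ as a limit. One takes $T_\Lambda = -a(v_{\Lambda,X})H_0^{-1}a^*(v_{\Lambda,X})$ and uses pseudo-differential calculus exactly as in Section \ref{sec:Renormalisation} to split $T_\Lambda = -E_\Lambda(X) + T_\Lambda^\mathrm{reg}$, where $-E_\Lambda(X)$ is the vacuum-energy counterterm \eqref{eq:VacuumEnergy} and $T_\Lambda^\mathrm{reg}$ is an $H_0'$-form bounded family whose forms converge uniformly on $D(H_0'^{1/2})$ as $\Lambda\to\infty$. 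This is the main technical obstacle: one has to show that the quadratic form $\Psi\mapsto \langle\Psi,(T_\Lambda+E_\Lambda(X))\Psi\rangle$ is Cauchy and satisfies $|q_\Lambda^T(\Psi)| \le a\|H_0'^{1/2}\Psi\|^2 + b_\Lambda\|\Psi\|^2$ with $a<1$ and $b_\Lambda$ bounded. The form bound with respect to $H_0'$ (and not merely $H_0$) is delicate and will be obtained by writing $\Psi = (1-G)^{-1}\Phi$ with $\Phi\in D(H_0^{1/2})$, and using the operator-valued pseudo-differential calculus of Section \ref{sec:OperatorValuedSymbols}, together with the estimates of Lemma \ref{lem:EstimatesAnnihilationCreation} applied fibrewise in the particle variable $X$.

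Once the form convergence is established, Proposition \ref{prop:KLMN} (applied with $A = H_0'$ and $q_\Lambda = q_\Lambda^T$) yields a unique self-adjoint bounded-below operator $H = H_0' + T$ on $D(H)\subset D(H_0'^{1/2})$, and the convergence of $H_0' + T_\Lambda + E_\Lambda(X)$ in the norm resolvent sense. For the equivalence statement (iv) I would argue as follows. A direct algebraic computation gives $H_\Lambda + E_\Lambda(X) = (1-G_\Lambda)^* H_0 (1-G_\Lambda) + T_\Lambda + E_\Lambda(X)$ on $D(H_0)$; since $G_\Lambda \to G$ in operator norm (again by Lemma \ref{lem:B} and the convergence $\rho_\Lambda\to\delta$ in $H^s$ for $s<-3/2$), the operators $(1-G_\Lambda)^* H_0 (1-G_\Lambda)$ converge in norm resolvent sense to $H_0'$, and combined with the convergence of $T_\Lambda + E_\Lambda(X)$ one gets that $H_\Lambda + E_\Lambda(X)$ converges in norm resolvent sense to $H$. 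By uniqueness of such limits and Theorem \ref{thm:Gerard}, $H$ agrees with the renormalised Nelson Hamiltonian up to the bounded continuous particle potential $V_\mathrm{ren}\in C_\mathrm{b}(\mathbb{R}^3)$ from Theorem \ref{thm:RenormalisedPotential}, which is precisely the claimed equivalence.
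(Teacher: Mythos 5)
Your overall architecture (invertibility of $1-G$, self-adjointness of $H_0'$, perturbation by $T$, norm-resolvent identification) parallels the paper's, but stage (iii) contains a genuine gap: you control $T$ only as a relatively \emph{form}-bounded perturbation of $H_0'$ and invoke the KLMN-type Proposition \ref{prop:KLMN}. That route produces a self-adjoint operator whose domain is merely contained in the form domain $D((H_0')^{1/2})$ --- you write $D(H)\subset D((H_0')^{1/2})$ yourself --- and in general this domain does \emph{not} coincide with $D(H_0')$. The theorem, however, asserts $D(H) = \{\Psi \mid (1-G)\Psi \in D(H_0)\}$, which is exactly the \emph{operator} domain of $H_0' = (1-G)^*H_0(1-G)$; the explicit domain description is the whole point of the IBC construction. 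The paper therefore proves the stronger statement that $T$ is infinitesimally \emph{operator}-bounded with respect to $H_0'$ (Proposition \ref{prop:InfinitesimalBoundForT}, obtained by splitting $T = T(1-G)+TG$ and combining the mapping properties of $T_{\mathrm{od}}$, $T_{\mathrm{d}}$, $T_R$ from Propositions \ref{prop:Tod}, \ref{prop:Td}, \ref{prop:TR} with the bound \eqref{eq:1-GBoundedInverse}), and then applies Kato--Rellich, which preserves the domain. Without such an operator bound your argument establishes neither the claimed domain nor that $H\Psi = H_0\Psi + a^*(v_X)\Psi + A\Psi$ holds on all of $D(H)$.

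A secondary issue: you obtain invertibility of $1-G$ by ``adding a large constant to $H_0$'' so that $\norm{G}<1$. This replaces $G$ by $G_\eta = -(H_0+\eta^2)^{-1}a^*(v_X)$ and a priori changes the set $(1-G)^{-1}D(H_0)$; to recover the stated domain you would still need to note that $G-G_\eta = \eta^2(H_0+\eta^2)^{-1}G$ maps $\mathfrak{H}$ into $D(H_0)$, so that $(1-G)\Psi\in D(H_0)$ if and only if $(1-G_\eta)\Psi\in D(H_0)$. The paper avoids this detour by exploiting the decay $\norm{G}_{\mathfrak{H}^{(n-1)}\to\mathfrak{H}^{(n)}}\leq Cn^{-p}$, which makes the von Neumann series $\sum_k G^k$ converge even though $\norm{G}$ need not be smaller than $1$ (Corollary \ref{cor:BoundedInverse1-G}). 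Your quadratic-form construction of $H_0'$ is a legitimate alternative to the paper's direct adjoint computation, and your outline of the norm-resolvent identification in stage (iv) is consistent with Section \ref{sec:ProofIBCHamiltonian}.
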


\begin{proof}
	To begin, $D(H) = (1-G)^{-1}D(H_0)$ is dense in $\mathfrak{H}$ because $D(H_0)$ is dense in $\mathfrak{H}$ and because the inverse operator $(1-G)^{-1}$ is bounded according to Corollary \ref{cor:BoundedInverse1-G}. 
	
	Next, we prove that the free IBC Hamiltonian $H_0'$ is self-adjoint on $D(H_0') = D(H)$. Clearly, $H_0' = (1-G)^* H_0 (1-G)$ is symmetric on $D(H)$; thus, $D(H) \subset D((H_0')^*)$. To demonstrate that $D((H_0')^*) \subset D(H)$, take a $\Psi_1 \in D((H_0')^*)$ and observe that, for all $\Psi_2 \in D(H_0') = (1-G)^{-1}D(H_0)$,
	\begin{align}
		\scp{\Psi_1}{H_0' \Psi_2} = \scp{(1-G)\Psi_1}{H_0(1-G)\Psi_2};
	\end{align}
	hence, $(1-G)\Psi_1 \in D(H_0^*) = D(H_0)$. 
	
	The IBC Hamiltonian $H = H_0' + T$ is self-adjoint on $D(H_0') = D(H)$ because $T$ is an infinitesimal perturbation of $H_0'$ by Proposition \ref{prop:InfinitesimalBoundForT}.	It is also bounded from below because $H_0'\geq 0$ is bounded from below.	That the IBC Hamiltonian is equivalent to the renormalised Hamiltonian is proven in Section \ref{sec:ProofIBCHamiltonian}.
\end{proof}

\begin{rem}
	The operator $H_0' = (1-G)^* H_0 (1-G)$ is similar to the transformed Hamiltonian $U_\infty^* H_0 U_\infty$ from the previous chapter, where $1-G$ takes the role of the Gross transformation $U_\infty$. However, notable differences exist. First, $U_\infty$ is unitary whereas $1-G$ is merely invertible. Furthermore, the action of $1-G$ is easier to evaluate than that of $U_\infty$ because $((1-G)\Psi)^{(n)}$ only depends on $\Psi^{(n)}$ and $\Psi^{(n-1)}$, but $(U_\infty\Psi)^{(n)}$ depends on all boson sectors of $\Psi$.
\end{rem}

In the proof of Theorem \ref{thm:IBCHamiltonian}, we used several results we prove in the sections below. We analyse the operator $G$ in Section \ref{sec:MappingPropertiesOfG}, and define the extension $A$ of the annihilation operator in Section \ref{sec:ExtensionAnnihilationOperator}. In Section \ref{sec:RegularityDomainVectors}, we examine the regularity of the vectors in the domain $D(H)$ of the IBC Hamiltonian. In the last section of this chapter, we prove that the IBC Hamiltonian is equivalent to the renormalised Hamiltonian.

For the rest of this chapter, we fix the following notation.

\begin{ntn}
	In Definition \ref{def:Pseudor}, we introduced the pseudo-differential operator $\Op(a)$ for a symbol $a\in S(M)$ by
	\begin{align}
		\Op(a)\Psi(x) = \frac{1}{(2\pi)^d} \int_{\mathbb{R}^d} \int_{\mathbb{R}^d} \e^{\I\scp{x-y}{\xi}} a(x,\xi) \Psi(y) \diff y \diff \xi,
	\end{align}	
	where $\diff y$ is the Lebesgue measure on $\mathbb{R}^d$ and $\diff\xi$ the dual measure on $(\mathbb{R}^{d})' \equiv \mathbb{R}^d$. To simplify the notation, we absorb the factor $(2\pi)^{-d}$ into the dual measure, that is, we replace $(2\pi)^{-d}\diff \xi$ with $\diff\xi$.
\end{ntn}

\begin{ntn}
	We denote by $f(\Lambda)$ an arbitrary continuous function on $[0,\infty)$ that converges to 0 as $\Lambda \to \infty$.
\end{ntn}

\section{Mapping Properties of $G$}
\label{sec:MappingPropertiesOfG}

In this section, we analyse mapping properties of the operator $G = -H_0^{-1}a^*(v_X)$ and its regularised version $G_\Lambda = -H_0^{-1}a^*(v_{\Lambda, X})$. We prove that $G$ is continuous as an operator from $D(N^{1/2})$ to $D(H_0^p)$ for every $p<1/2$ (Proposition \ref{prop:H0GRelativelyBounded}) and that $G$ is continuous from $D(N^{1/2})$ to $D(\diff\Gamma(\omega)^{1/2})$ (Proposition \ref{prop:GmapsDomainToItself}). For $p < 1/4$, we moreover show that $H_0^pG$ is bounded (Proposition \ref{prop:H0GBounded}). From the boundedness of $H_0^pG$ for some $p>0$ it follows that $1-G$ has a bounded inverse (Corollary \ref{cor:BoundedInverse1-G}).


\begin{prop}
	For every $p<1/2$, the operator $G$ is continuous as an operator from $D(N^{1/2})$ to $D(H_0^p)$. Moreover, $G_\Lambda$ converges in norm to $G$ as an operator from $D(N^{1/2})$ to $D(H_0^p)$.
	\label{prop:H0GRelativelyBounded}
\end{prop}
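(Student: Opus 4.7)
The plan is to prove the operator-norm bounds $\|H_0^p G\|_{D(N^{1/2})\to \mathfrak{H}} \leq C$ and $\|H_0^p(G-G_\Lambda)\|_{D(N^{1/2})\to \mathfrak{H}} \leq f(\Lambda)\to 0$, from which both continuity and norm convergence follow. Writing $H_0^p G = -H_0^{p-1}a^*(v_X)$ and exploiting that $H_0$ and $N$ are diagonal in the boson decomposition, it suffices to show sectorwise
$$\|H_0^{p-1}a^*(v_X)\Psi^{(n)}\|^2_{\mathfrak{H}^{(n+1)}} \leq C(n+1)\|\Psi^{(n)}\|^2_{\mathfrak{H}^{(n)}}$$
with $C$ independent of $n$ and $X$, picking up an extra factor $f(\Lambda)^2$ for the cut-off difference.

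The central step is a degree count in Fourier space. In the constant-coefficient model, $\widehat{v_0}(\xi)=\omega(\xi)^{-1/2}/\sqrt{2}$ and the symbol of $H_0$ on $\mathfrak{H}^{(n+1)}$ is $|\Xi|^2+\sum_{j=1}^{n+1}\omega(\xi_j)$. A direct computation shows that $a^*(v_X)$ acts on $\widehat{\Psi^{(n)}}(\Xi',\xi_1,\dots,\xi_n)$ by adjoining $\widehat{v_0}(\xi_j)$ and shifting the particle momentum via $\Xi'\mapsto\Xi'-\xi_j$. After Cauchy--Schwarz on the $n+1$ cross terms (yielding the factor $n+1$) and symmetry reducing to the integration in $\xi_{n+1}$, the bound is controlled by
$$\sup_{\Xi'\in\mathbb{R}^3}\int_{\mathbb{R}^3}\frac{d\xi_{n+1}}{\omega(\xi_{n+1})\bigl(|\Xi'-\xi_{n+1}|^2+\omega(\xi_{n+1})\bigr)^{2(1-p)}},$$
where I have used $2(p-1)<0$ to drop the non-negative $\omega(\xi_k)$ with $k\leq n$ from the denominator. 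The integrand decays at infinity like $|\xi_{n+1}|^{-(5-4p)}$, which is integrable on $\mathbb{R}^3$ iff $5-4p>3$, i.e., $p<1/2$. This is the origin of the stated threshold.

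For variable coefficients I will reduce to this picture via Section~\ref{sec:PseudorsInNelsonModelI}. The decomposition $v_X=u_X+\tilde u_X$ splits off a remainder $\tilde u_X$ that is $L^2$ uniformly in $\Lambda,X$, hence directly controlled by Lemma~\ref{lem:EstimatesAnnihilationCreation}, and leaves $u_X$ with the same Fourier structure as above upon replacing $\omega(\xi)^{-1/2}$ by $\sigma(\omega^{-1/2})(X,\xi)\in S^{-1/2}$. By the functional calculus (Theorem~\ref{thm:FunctionalCalculus}), $H_0^{p-1}$ on $\mathfrak{H}^{(n+1)}$ is a pseudo-differential operator with order function $M_{n+1}^{p-1}$ in the sense of \eqref{eq:OrderFunctionH0}--\eqref{eq:MetricH0}, and the ellipticity bounds $K_0(X,\Xi)\geq C_0|\Xi|^2$ and $\sigma(\omega)(x,\xi)\gtrsim\jap{\xi}$ preserve the degree count; subprincipal symbolic errors are absorbed using Corollary~\ref{cor:PseudorsSobolevMapping}.

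The convergence $G_\Lambda\to G$ is obtained by repeating the argument with $v_X-v_{\Lambda,X}=\omega^{-1/2}(\delta_X-\rho_{\Lambda,X})/\sqrt{2}$: the integrand acquires a pointwise factor $|1-\widehat{\rho}(\xi_{n+1}/\Lambda)|^2\leq C$ that vanishes pointwise as $\Lambda\to\infty$, so dominated convergence together with $\|\rho_\Lambda-\delta\|_{H^s}\to 0$ from \eqref{eq:ConvergenceCutOffToDelta} yields the required rate $f(\Lambda)\to 0$, uniformly in $\Xi'$ after splitting the integration over a large and a small ball in $\xi_{n+1}$. The main obstacle I expect is keeping the constants uniform in $n$ when transferring the constant-coefficient Fourier count to the pseudo-differential setting: the symbol class $S(M_{n+1},g_{n+1})$ itself depends on $n$, and I will need to verify that the Calder\'on--Vaillancourt-type seminorm constants enter only through the harmless factor $\sqrt{n+1}$ already absorbed by $(N+1)^{1/2}$.
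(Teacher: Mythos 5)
Your degree count is the right one and matches the paper's: the key integral is
\begin{align*}
\int_{\mathbb{R}^3}\frac{\jap{\xi_j}^{-1}}{(\jap{\Xi-\xi_j}^2+|\xi_j|+\Omega(\hat\xi_j))^{2-2p}}\,\diff\xi_j ,
\end{align*}
which is exactly what the paper controls via Corollary \ref{cor:IntegralEstimate}, and the threshold $p<1/2$ arises for the reason you give. The splitting $v_X=u_X+\tilde u_X$, the sectorwise reduction with the $\sqrt{n+1}$ factor, and the treatment of the cut-off difference through $\zeta_\Lambda=1-\widehat\rho(\cdot/\Lambda)$ are also all as in the paper.

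The gap is in the reduction from variable to constant coefficients. You write that the ellipticity bounds ``preserve the degree count'' and that ``subprincipal symbolic errors are absorbed using Corollary \ref{cor:PseudorsSobolevMapping},'' but the obstruction is not the size or subprincipal part of $\sigma(H_0^{p-1})$ — it is that this symbol depends on the position $x_j$ of the boson being created, while the form factor pins that boson to the particle position. Concretely, after performing the $y_j$-integral in $H_0^{p-1}a^*(u_X)\Psi$ one is left with an oscillatory integral in which $\sigma(H_0^{p-1})(Y,\Xi,x,\xi)$ and the phase $\e^{\I\scp{x_j-Y}{\xi_j}}$ both involve $x_j$; this object is not yet an operator-valued pseudo-differential operator in the remaining variables, so neither the scalar Calder\'on--Vaillancourt theorem nor Corollary \ref{cor:PseudorsSobolevMapping} applies to it as it stands. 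The paper's resolution — Taylor-expanding $\sigma(H_0^{p-1})$ at the collision point $x_j=Y$, treating the leading term as an operator-valued symbol $m_{j,\Lambda}\in S(1,\mathfrak{B}(\mathbb{C},L^2))$ via Theorem \ref{thm:CalderonVaillancourtOV}, and converting the unbounded factor $(x_j-Y)$ in the remainder into a $D_{\xi_j}$-derivative by integration by parts (which raises the symbol order by one) — is the essential step your proposal is missing. Your closing remark correctly identifies uniformity in $n$ as a concern, but the collision-point expansion is the more fundamental one; without it the variable-coefficient case does not reduce to your Fourier-space computation.
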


\begin{proof}
	We prove that $\norm{H_0^p(G_\Lambda - G)\Psi}_\mathfrak{H} \leq f(\Lambda)\norm{(N+1)^{1/2}\Psi}_\mathfrak{H}$ for every $\Psi \in D(N^{1/2})$. The relative boundedness of $H_0^pG$ with respect to $N^{1/2}$ follows from setting $\Lambda = 0$ and the convergence from $f(\Lambda) \to 0$ as $\Lambda\to \infty$.
	
	We split the operator $H_0^p(G_\Lambda-G)$ into two summands according to the decomposition $v_{\Lambda,X} = u_{\Lambda,X} + \tilde{u}_{\Lambda,X}$ of the form factor from Section \ref{sec:PseudorsInNelsonModelI}:
	\begin{align}
		H_0^p(G_\Lambda-G) = H_0^{p-1}a^*(u_X - u_{\Lambda,X}) + H_0^{p-1}a^*(\tilde{u}_X - \tilde{u}_{\Lambda,X}).
		\label{eq:GLambda-G}
	\end{align}		
	The adjoint operator of the second summand satisfies, by Lemma \ref{lem:EstimatesAnnihilationCreation}, the bound
	\begin{align}
		&\norm{a(\tilde{u}_{\Lambda,X} - \tilde{u}_X)H_0^{p-1}\Psi}_{\mathfrak{H}} \notag \\ 
		&\ \ \leq \sup_X \left( \norm{\omega^{-\frac{1}{2}} \Op(D_\xi \cdot r) (\rho_{\Lambda,X} - \delta_X)}_\mathfrak{h} + \frac{\norm{\omega^{-\frac{1}{2}} \Op(r) \tau_{\Lambda,X}}_\mathfrak{h}}{\Lambda} \right) \norm{\diff\Gamma(\omega)^{\frac{1}{2}} H_0^{p-1}\Psi}_{\mathfrak{H}} \notag \\ 
		&\ \ \leq C \left(\norm{\rho_{\Lambda}-\delta}_{H^{-2}} + \frac{\norm{\tau_{\Lambda}}_{H^{-1}}}{\Lambda}\right) \norm{\diff\Gamma(\omega)^{p-\frac{1}{2}}\Psi}_\mathfrak{H}.
		\label{eq:H0GFormFactorRemainder}
	\end{align}
	The factor $\norm{\rho_{\Lambda}-\delta}_{H^{-2}}$ converges to $0$ due to \eqref{eq:ConvergenceCutOffToDelta} and the factor $\norm{\tau_{\Lambda}}_{H^{-1}}/\Lambda$ is, for any $\epsilon >0$ and large $\Lambda$, bounded by $\Lambda^{-1/2 + \epsilon}$ times a constant; hence, $\norm{\tau_{\Lambda}}_{H^{-1}}/\Lambda$ converges to $0$ as $\Lambda \to \infty$. Also, $\norm{\rho_{\Lambda}-\delta}_{H^{-2}} + \norm{\tau_{\Lambda}}_{H^{-1}}/\Lambda$ is continuous in $\Lambda >0$ and attains a finite value as $\Lambda \to 0$.
	
	Next, we consider the summand $H_0^{p-1}a^*(u_X - u_{\Lambda,X})$. According to the functional calculus (see Theorem \ref{thm:FunctionalCalculus}), $(H_0\vert_{\mathfrak{H}^{(n)}})^{p-1}$ is a pseudo-differential operator. Its symbol is an element of $S(M^{p-1},g)$ with the order function $M$ from \eqref{eq:OrderFunctionH0} and the metric $g$ from \eqref{eq:MetricH0}. Let $\sigma(H_0^{p-1}) \in S(M^{p-1}, g)$ be the symbol of $(H_0\vert_{\mathfrak{H}^{(n)}})^{p-1}$ in the right quantisation in the particle variables and in the standard quantisation in the boson variables. Then, for any $\Psi^{(n-1)} \equiv \Psi \in \mathfrak{H}^{(n-1)} \cap D(N^{1/2})$,
	\begin{align}
		& H_0^{p-1}a^*(u_X-u_{\Lambda,X})\Psi(X,x) \label{eq:H0GPseudor} \\
		&= \frac{1}{\sqrt{n}} \sum_{j=1}^n \int \e^{\I\scp{X-Y}{\Xi}} \e^{\I\scp{x-y}{\xi}} \frac{(u_Y-u_{\Lambda,Y})(y_j) \Psi(Y,\hat{y}_j)}{\sigma(H_0^{p-1})(Y,\Xi,x,\xi)^{-1}} \diff Y\diff\Xi \diff y \diff\xi \notag \\
		&= \frac{1}{\sqrt{n}} \sum_{j=1}^n \int \e^{\I\scp{X-Y}{\Xi}} \e^{\I\scp{\hat{x}_j-\hat{y}_j}{\hat{\xi}_j}} \frac{\e^{\I \scp{x_j-Y}{\xi_j}} \sigma(\omega^{-\frac{1}{2}})(Y,\xi_j) \zeta_\Lambda(\xi_j) \Psi(Y,\hat{y}_j)}{\sqrt{2} \sigma(H_0^{p-1})(Y,\Xi,x,\xi)^{-1} } \diff Y\diff\Xi \diff \hat{y}_j \diff\xi, \notag
	\end{align}		
	where $\zeta_\Lambda(\xi_j) = 1-\widehat{\rho}(\xi_j/\Lambda)$.	We expand the symbol $\sigma(H_0^{p-1})(Y,\Xi,x,\xi)$ at the collision point $x_j = Y$:
	\begin{align}
		\sigma(H_0^{p-1})(Y,\Xi,x,\xi) = \sigma(H_0^{p-1})(Y,\Xi,\hat{x}_j,Y,\xi) - (x_j-Y) \cdot R(Y,\Xi,x,\xi),
	\end{align}
	where $R \in [S(M^{p-1},g)]^3$ and $(\hat{x}_j,Y)$ is the vector $x\in\mathbb{R}^{3n}$ with the $j$-th entry replaced by $Y$. We insert this expansion in \eqref{eq:H0GPseudor}. The summand containing $\sigma(H_0^{p-1})(Y,\Xi,\hat{x}_j,Y,\xi)$ is called the leading order, and the summand containing $R$ the remainder.
	
		\item \textbf{Leading order}: The leading order term describes a sum of operator-valued pseudo-differential operators acting on $\Psi \in \mathfrak{H}^{(n-1)} \cap D(N^{1/2})$:
		\begin{align}
			&\frac{1}{\sqrt{n}} \sum_{j=1}^n \int \e^{\I\scp{X-Y}{\Xi}} \e^{\I\scp{\hat{x}_j-\hat{y}_j}{\hat{\xi}_j}} \frac{\e^{\I \scp{x_j-Y}{\xi_j}} \sigma(\omega^{-\frac{1}{2}})(Y,\xi_j) \zeta_\Lambda(\xi_j) \Psi(Y,\hat{y}_j)}{\sqrt{2} \sigma(H_0^{p-1})(Y,\Xi,\hat{x}_j,Y,\xi)^{-1} } \diff Y\diff\Xi \diff \hat{y}_j \diff\xi \notag \\
			&= \frac{1}{\sqrt{n}} \sum_{j=1}^n \int \e^{\I\scp{X-Y}{\Xi}} \e^{\I\scp{\hat{x}_j-\hat{y}_j}{\hat{\xi}_j}} \frac{\e^{\I \scp{x_j-X}{\xi_j}} \sigma(\omega^{-\frac{1}{2}})(Y,\xi_j) \zeta_\Lambda(\xi_j) \Psi(Y,\hat{y}_j)}{\sqrt{2}\sigma(H_0^{p-1})(Y,\Xi-\xi_j,\hat{x}_j,Y,\xi)^{-1}} \diff Y\diff\Xi \diff \hat{y}_j \diff\xi \notag \\
			&= \frac{1}{\sqrt{n}}\sum_{j=1}^n \int \e^{\I\scp{X-Y}{\Xi}} \e^{\I\scp{\hat{x}_j-\hat{y}_j}{\hat{\xi}_j}} [m_{j,\Lambda}(Y,\Xi,\hat{x}_j,\hat{\xi}_j) \Psi(Y,\hat{y}_j)](x_j-X) \diff Y \diff\Xi \diff \hat{y}_j \diff\hat{\xi}_j \notag \\
			&\equiv \frac{1}{\sqrt{n}}\sum_{j=1}^n \Op(m_{j,\Lambda})\Psi(X,\hat{x}_j,x_j-X).
			\label{eq:H0GLeadingOrder}
		\end{align}
		The operator-valued symbol $m_{j,\Lambda}(Y,\Xi,\hat{x}_j,\hat{\xi}_j):\mathbb{C}\to L^2(\mathbb{R}^3)$ is defined by
		\begin{align}
			m_{j,\Lambda}(Y,\Xi,\hat{x}_j,\hat{\xi}_j) = \frac{1}{\sqrt{2}} \int_{\mathbb{R}^3} \frac{\e^{\I\scp{\cdot}{\xi_j}}\sigma(\omega^{-\frac{1}{2}})(Y,\xi_j)\zeta_\Lambda(\xi_j)}{\sigma(H_0^{p-1})(Y,\Xi-\xi_j,\hat{x}_j,Y,\xi)^{-1}} \diff \xi_j.
			\label{eq:mjSymbol}
		\end{align}	
		We apply the Calderon--Vaillancourt Theorem to prove that $\Op(m_{j,\Lambda})$ is bounded. The operator norm of $m_{j,\Lambda}(Y,\Xi,\hat{x}_j,\hat{\xi}_j)$ is bounded because, for any $\epsilon > 0$ sufficiently small,
		\begin{align}
			\norm{m_{j,\Lambda}(Y,\Xi,\hat{x}_j,\hat{\xi}_j)}_{L^2}^2 
			&= \frac{1}{2}\int_{\mathbb{R}^3} \bigg|\frac{\sigma(\omega^{-\frac{1}{2}})(Y,\xi) \zeta_\Lambda(\xi_j)}{\sigma(H_0^{p-1})(Y,\Xi-\xi_j,\hat{x}_j,Y,\xi)^{-1}}\bigg|^2 \diff \xi_j \notag \\
			&\leq C \int_{\mathbb{R}^3} \frac{\jap{\xi_j}^{-1} |\zeta_\Lambda(\xi_j)|^2}{(\jap{\Xi-\xi_j}^2 + \Omega(\xi))^{2-2p}} \diff \xi_j \notag \\
			&\leq C \int_{\mathbb{R}^3} \frac{|\xi_j|^{-1} |\zeta_\Lambda(\xi_j)|}{(|\Xi-\xi_j|^2 + |\xi_j| + \Omega(\hat{\xi_j}))^{2-2p}} \diff \xi_j \notag \\
			&\leq f(\Lambda) \Omega(\hat{\xi}_j)^{-1+2p+\epsilon}.
			\label{eq:mjOperatorNorm}
		\end{align}
		In the second inequality, we used $ |\zeta_\Lambda(\xi_j)| \leq 2$ and, in the third, Corollary \ref{cor:IntegralEstimate}. We take $0<\epsilon\leq 1-2p$ so that $\Omega(\hat{\xi}_j)^{-1+2p+\epsilon} \leq 1$.
		
		The operator norms of the derivatives of $m_{j,\Lambda}$ are bounded by similar estimates; hence, $m_{j,\Lambda} \in S(f(\Lambda),\mathfrak{B}(\mathbb{C}, L^2))$. It follows that $\Op(m_{j,\Lambda})$ is a bounded operator, and its operator norm is bounded by $f(\Lambda)$. The bound can be chosen independent of $j$ because there are only finitely many $j$. Thus, we obtain the following estimate:
		\begin{align}
			&\int \bigg|\frac{1}{\sqrt{n}}\sum_{j=1}^n \Op(m_{j,\Lambda})\Psi(X,\hat{x}_j,x_j-X)\bigg|^2 \diff X \diff x \leq \sum_{j=1}^n \int |\Op(m_{j,\Lambda})\Psi(X,x)|^2 \diff X \diff x \notag \\		
			&\ \ \ \leq f(\Lambda) \sum_{j=1}^{n} \int |\Psi(X,\hat{x}_j)|^2 \diff X \diff \hat{x}_j = f(\Lambda) \norm{(N+1)^{\frac{1}{2}}\Psi}^2_{\mathfrak{H}^{(n-1)}}.
		\end{align} 
	
		\item \textbf{Remainder}: We demonstrate that the integral containing $R \in [S(M^{p-1},g)]^3$,
		\begin{align}
			-\frac{1}{\sqrt{2n}} \sum_{j=1}^n \int \e^{\I\scp{X-Y}{\Xi}} \e^{\I\scp{\hat{x}_j-\hat{y}_j}{\hat{\xi}_j}} (x_j-Y)\e^{\I \scp{x_j-Y}{\xi_j}} \sigma(\omega^{-\frac{1}{2}})(Y,\xi_j) \zeta_\Lambda(\xi_j) \notag \\ 
			\cdot R(Y,\Xi,x,\xi) \Psi(Y,\hat{y}_j)  \diff Y\diff\Xi \diff \hat{y}_j \diff\xi,
		\end{align}	
		defines a bounded operator acting on $\Psi \in \mathfrak{H}^{(n-1)} \cap D(N^{1/2})$. We replace the factor $x_j-Y$ with a derivative $D_{\xi_j}$ and integrate by parts:
		\begin{align}
			&\frac{1}{\sqrt{2n}} \sum_{j=1}^n \int \e^{\I\scp{X-Y}{\Xi}} \e^{\I\scp{\hat{x}_j-\hat{y}_j}{\hat{\xi}_j}} \e^{\I \scp{x_j-Y}{\xi_j}} \big\{[D_{\xi_j}\sigma(\omega^{-\frac{1}{2}})(Y,\xi_j) \zeta_\Lambda(\xi_j) \notag \\  
			&\ \ \ + \frac{1}{\Lambda}\sigma(\omega^{-\frac{1}{2}})(Y,\xi_j) (D_{\xi_j}\widehat{\rho})(\xi_j/\Lambda)] \cdot R(Y,\Xi,x,\xi) \notag \\ 
			&\ \ \ + \sigma(\omega^{-\frac{1}{2}})(Y,\xi_j) \zeta_\Lambda(\xi_j) (D_{\xi_j} \cdot R)(Y,\Xi,x,\xi) \big\} \Psi(Y,\hat{y}_j) \diff Y\diff\Xi \diff \hat{y}_j \diff\xi.
			\label{eq:GRemainder}
		\end{align}	
		The first summand in \eqref{eq:GRemainder} is equal to $\Op(R)a^*(w_X - w_{\Lambda,X})\Psi(X,x)$ with
		\begin{align}
			\widehat{w_{\Lambda,X}}(\xi_j) = \frac{1}{\sqrt{2}} \e^{-\I \scp{X}{\xi_j}} D_{\xi_j} \sigma(\omega^{-\frac{1}{2}})(X,\xi_j) \widehat{\rho}(\xi_j/\Lambda).
		\end{align} 
		The third summand has the same structure as the first summand. To see this, we write
		\begin{align}
			\sigma(\omega^{-\frac{1}{2}})(Y,\xi_j)D_{\xi_j}\cdot R(Y,\Xi,x,\xi) = \sigma(\omega^{-\frac{1}{2}})(Y,\xi)\jap{\xi_j}^{-1} \jap{\xi_j} D_{\xi_j}\cdot R(Y,\Xi,x,\xi).
		\end{align}
		Then, $\sigma(\omega^{-\frac{1}{2}})(Y,\xi) \jap{\xi_j}^{-1}$ as well as $D_{\xi_j} \sigma(\omega^{-\frac{1}{2}})(X,\xi_j)$ define symbols in $S^{-3/2}$, and $\jap{\xi_j}D_{\xi_j}\cdot R(Y,\Xi,x,\xi)$ as well as each vector component of $R(Y,\Xi,x,\xi)$ define symbols in $S(M^{p-1},g)$. The second summand is equal to $\Op(R)a^*(\omega^{-1/2}\tau_{\Lambda,X})/\Lambda\Psi$. 
		
		All three summands are bounded in norm by $f(\Lambda) \norm{\Psi}_{\mathfrak{H}^{(n-1)}}$ according to a similar argument as in \eqref{eq:H0GFormFactorRemainder}.
	\item Combining the estimates of the leading order and the remainder, we conclude that, for every $\Psi^{(n-1)} \in \mathfrak{H}^{(n-1)} \cap D(N^{1/2})$,
	\begin{align}
		\norm{H_0^p(G_\Lambda-G)\Psi^{(n-1)}}_{\mathfrak{H}^{(n)}} \leq f(\Lambda) \norm{(N+1)^{\frac{1}{2}} \Psi^{(n-1)}}_{\mathfrak{H}^{(n-1)}}.
	\end{align}
	Summing over $n$ yields the desired inequality.
\end{proof}

We work with the decomposition of $H_0^pG$ developed in the above proof repeatedly in this chapter. For reference, we formulate the decomposition in the following lemma.

\begin{lem}
	For every $p\leq 1/2$ and $n\in \mathbb{N}$, a bounded operator $B$ exists such that $\diff\Gamma(\omega)^{1/2-p}B$ is bounded and such that
	\begin{align}
		H_0^pG\Psi^{(n-1)}(X,x) = - \frac{1}{\sqrt{n}} \sum_{j=1}^n \Op(m_j)\Psi^{(n-1)}(X,\hat{x}_j,x_j-X) + B\Psi^{(n-1)}(X,x),
	\end{align}
	where $m_j \equiv m_{j,0} \in S(1,\mathfrak{B}(\mathbb{C}, L^2))$ is defined in \eqref{eq:mjSymbol}.
	\label{lem:DecompositionG}
\end{lem}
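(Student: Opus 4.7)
The plan is to redo the pseudo-differential decomposition of $H_0^p G$ already carried out in the proof of Proposition \ref{prop:H0GRelativelyBounded}, but applied to $G$ itself rather than to the difference $G_\Lambda - G$, and to collect every non-leading contribution into the single operator $B$.

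First, using the splitting $v_X = u_X + \tilde{u}_X$ from Section \ref{sec:PseudorsInNelsonModelI}, I would write
\begin{align*}
H_0^p G = -H_0^{p-1} a^*(u_X) - H_0^{p-1} a^*(\tilde{u}_X).
\end{align*}
The second summand is absorbed into $B$: the estimate \eqref{eq:H0GFormFactorRemainder} applied without the cut-off differences gives $\norm{a(\tilde{u}_X) \diff\Gamma(\omega)^{-1/2}}_{\mathfrak{B}(\mathfrak{H})} < \infty$, and since $\diff\Gamma(\omega)$ commutes with $H_0$ and $\diff\Gamma(\omega) \leq H_0$, the functional calculus gives $\diff\Gamma(\omega)^{1/2-p} H_0^{p-1} \leq H_0^{-1/2}$ for $p \leq 1/2$, which yields boundedness of both $H_0^{p-1} a^*(\tilde{u}_X)$ and $\diff\Gamma(\omega)^{1/2-p} H_0^{p-1} a^*(\tilde{u}_X)$ on $\mathfrak{H}$.

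For the main summand $-H_0^{p-1} a^*(u_X)$ I would proceed exactly as in \eqref{eq:H0GPseudor}--\eqref{eq:H0GLeadingOrder}: the functional calculus (Theorem \ref{thm:FunctionalCalculus}) provides $\sigma(H_0^{p-1}) \in S(M^{p-1}, g)$ in the right quantisation on the particle variable and standard quantisation on the bosons, and a first-order Taylor expansion of this symbol in $x_j$ about the collision point $x_j = Y$ splits it into $\sigma(H_0^{p-1})(Y, \Xi, \hat{x}_j, Y, \xi)$ plus $(x_j - Y) \cdot R$ with $R \in [S(M^{p-1}, g)]^3$. After the change of variable $\Xi \mapsto \Xi - \xi_j$, the collision-point contribution reproduces exactly the claimed $-\frac{1}{\sqrt{n}} \sum_{j=1}^{n} \Op(m_j) \Psi^{(n-1)}(X, \hat{x}_j, x_j - X)$, with $m_j = m_{j,0}$ as in \eqref{eq:mjSymbol} specialised to $\zeta_0 \equiv 1$. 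The $(x_j - Y) \cdot R$ remainder is handled by the same integration-by-parts identity as in \eqref{eq:GRemainder}: writing $(x_j - Y)\e^{\I \scp{x_j - Y}{\xi_j}} = -D_{\xi_j} \e^{\I \scp{x_j - Y}{\xi_j}}$ and moving $D_{\xi_j}$ onto $\sigma(\omega^{-1/2})(Y, \xi_j)\, R$ produces two operators in which $\sigma(\omega^{-1/2})$ is effectively replaced by a symbol of order $-3/2$ (one order smoother than $u_X$). Both admit the same bound as the $\tilde{u}_X$ term and therefore enter $B$ with the required $\diff\Gamma(\omega)^{1/2-p}$ improvement.

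The only step that genuinely requires verification rather than bookkeeping is that the operator-valued symbol $m_j$ belongs to $S(1, \mathfrak{B}(\mathbb{C}, L^2))$, i.e., that its $L^2$-norm in the auxiliary $\xi_j$-variable together with those of all its derivatives in $(Y, \Xi, \hat{x}_j, \hat{\xi}_j)$ are uniformly bounded. This is the computation \eqref{eq:mjOperatorNorm} with $\zeta_0 \equiv 1$; Corollary \ref{cor:IntegralEstimate} gives a bound of the form $\norm{m_j(Y, \Xi, \hat{x}_j, \hat{\xi}_j)}_{L^2}^2 \leq C\, \Omega(\hat{\xi}_j)^{-1+2p+\epsilon}$ with $\epsilon > 0$ suitably chosen in view of $p \leq 1/2$, and differentiation in any of the parameters only produces factors of symbols of equal or lower order so the same argument applies. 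I expect this uniformity, and in particular the handling of the endpoint $p = 1/2$, to be the only real technical point; once it is in place, Theorem \ref{thm:CalderonVaillancourtOV} identifies $\Op(m_j)$ as a bounded operator on the appropriate $L^2$-space, and the remainder of the argument is merely the collection of the three error terms (one from $\tilde{u}_X$ and two from the integration-by-parts on $R$) into the single operator $B$.
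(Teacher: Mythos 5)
Your route is exactly the paper's: Lemma \ref{lem:DecompositionG} is nothing but the decomposition from the proof of Proposition \ref{prop:H0GRelativelyBounded} evaluated at $\Lambda=0$ (where $G_0=0$ and $\zeta_0\equiv 1$), with the $\tilde u_X$-term and the two integration-by-parts remainders collected into $B$; your verification that $\diff\Gamma(\omega)^{1/2-p}B$ is bounded via $\diff\Gamma(\omega)^{1-p}H_0^{p-1}\leq 1$ and $\norm{a(\tilde u_X)\diff\Gamma(\omega)^{-1/2}}<\infty$ is the correct dualisation of \eqref{eq:H0GFormFactorRemainder}, and it does cover $p=1/2$.

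The gap is precisely the point you flagged and then waved through: the endpoint $p=1/2$ in the claim $m_j\in S(1,\mathfrak{B}(\mathbb{C},L^2))$. Corollary \ref{cor:IntegralEstimate} cannot be applied there: with $d=3$, $\nu=1$, $\sigma=0$, $\gamma=2$, $\alpha=2-2p$ the hypothesis $d\in(\nu+\sigma,\nu+\sigma+\alpha\gamma)$ of Proposition \ref{prop:IntegralEstimate} reads $3\in(1,5-4p)$, which fails exactly at $p=1/2$; and indeed the integral in \eqref{eq:mjOperatorNorm} with $\zeta_0\equiv1$ behaves like $\int\jap{\xi_j}^{-1}\jap{\xi_j}^{-2(2-2p)}\diff\xi_j=\int\jap{\xi_j}^{-3}\diff\xi_j$ at infinity and diverges logarithmically. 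No choice of $\epsilon>0$ rescues this — the bound $\Omega(\hat\xi_j)^{-1+2p+\epsilon}$ is only $\leq 1$ when $\epsilon\leq 1-2p$, which is vacuous at $p=1/2$. This divergence is not an artefact: the unboundedness of $\Op(m_1)$ at $p=1/2$ is exactly what Section \ref{sec:RegularityDomainVectors} exploits to prove $D(H)\cap D(H_0^{1/2})=\{0\}$. So your proof is complete for $p<1/2$, and the identity together with the stated properties of $B$ does extend to $p=1/2$ (the remainder bounds only need symbols of order $\leq-3/2$ in $\xi_j$ and $\diff\Gamma(\omega)^{1/2}H_0^{p-1}\leq\diff\Gamma(\omega)^{p-1/2}$), but you must either restrict the membership $m_j\in S(1,\mathfrak{B}(\mathbb{C},L^2))$ to $p<1/2$ or note explicitly that at $p=1/2$ the symbol $m_j$ is defined only as a distribution in the auxiliary variable and $\Op(m_j)$ is unbounded — which is how the lemma is actually used there.
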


\begin{rem}
	Both $B$ and $m_j$ depend on the parameter $p$ and the boson number $n$.
\end{rem}

We consider two modifications of Proposition \ref{prop:H0GRelativelyBounded}. First, we extend the statement to $p=1/2$ by replacing $H_0$ with $\diff\Gamma(\omega)$. Then, we improve the bound on $H_0^pG$ for $p<1/4$.

\begin{prop}
	The operator $G$ is continuous as an operator from $D(N^{1/2})$ to $D(\diff\Gamma(\omega)^{1/2})$. Moreover, for every $\epsilon >0$, $G_\Lambda$ converges in norm to $G$ as an operator from $D(N^{1/2}\diff\Gamma(\omega)^{\epsilon})$ to $D(\diff\Gamma(\omega)^{1/2})$.
	\label{prop:GmapsDomainToItself}
\end{prop}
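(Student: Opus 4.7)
The plan is to mimic the proof of Proposition \ref{prop:H0GRelativelyBounded}, replacing the auxiliary operator $H_0^p$ throughout by $\diff\Gamma(\omega)^{1/2}$. Two structural facts will drive the estimates. First, $\diff\Gamma(\omega)$ and $H_0 = K + \diff\Gamma(\omega)$ commute, so $\diff\Gamma(\omega)^{1/2} H_0^{-1}$ is self-adjoint and, restricted to $\mathfrak{H}^{(n)}$, a pseudo-differential operator with symbol in $S(\Omega^{1/2}M^{-1},g)$, where $M$ and $g$ are the order function and metric from \eqref{eq:OrderFunctionH0}-\eqref{eq:MetricH0}. Second, from $\diff\Gamma(\omega)\leq H_0$ and the Heinz-Löwner inequality one gets the operator bounds $\diff\Gamma(\omega)^{1/2}H_0^{-1/2}\leq 1$ and $\diff\Gamma(\omega)^{1/2}H_0^{-1}\diff\Gamma(\omega)^{1/2}\leq 1$, both of which will repeatedly be used to pass derivatives through resolvents.

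I would split $v_{\Lambda,X}=u_{\Lambda,X}+\tilde u_{\Lambda,X}$ as in Section \ref{sec:PseudorsInNelsonModelI} and handle each piece separately. For the remainder $\tilde u_X - \tilde u_{\Lambda,X}$, a duality argument combined with Lemma \ref{lem:EstimatesAnnihilationCreation} yields
\begin{align*}
    \norm{\diff\Gamma(\omega)^{1/2} H_0^{-1} a^*(g)\Psi}_{\mathfrak{H}}
    = \sup_{\norm{\Phi}=1}|\scp{a(g) H_0^{-1}\diff\Gamma(\omega)^{1/2}\Phi}{\Psi}_{\mathfrak{H}}|
    \leq \norm{\omega^{-1/2}g}_{\mathfrak{h}}\norm{\Psi}_{\mathfrak{H}},
\end{align*}
and the same structural estimate as in \eqref{eq:H0GFormFactorRemainder} then bounds the contribution of $g = \tilde u_X - \tilde u_{\Lambda,X}$ by $f(\Lambda)\norm{\Psi}_\mathfrak{H}$.

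For the leading-order piece $\diff\Gamma(\omega)^{1/2} H_0^{-1} a^*(u_X - u_{\Lambda,X})$, I would repeat the pseudo-differential decomposition of Lemma \ref{lem:DecompositionG}, obtaining operator-valued symbols $\tilde m_{j,\Lambda}\in S(M_\Lambda,\mathfrak{B}(\mathbb{C},L^2))$ analogous to \eqref{eq:mjSymbol}, with
\begin{align*}
    \norm{\tilde m_{j,\Lambda}(Y,\Xi,\hat x_j,\hat\xi_j)}_{L^2}^2
    \leq C \int_{\mathbb{R}^3} \frac{\jap{\xi_j}^{-1}\,|\zeta_\Lambda(\xi_j)|^2\,\Omega(\xi)}{(\jap{\Xi-\xi_j}^2 + \Omega(\xi))^2}\,\diff\xi_j.
\end{align*}
The main technical obstacle is that, in contrast to \eqref{eq:mjOperatorNorm}, the extra $\Omega(\xi)$ in the numerator makes the power counting exactly borderline. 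Writing $\Omega(\xi)=\jap{\xi_j}+A$ with $A=\Omega(\hat\xi_j)$ and splitting the integration region according to whether $\jap{\Xi-\xi_j}^2$, $\jap{\xi_j}^2$ or $A$ dominates the denominator, a region-by-region scaling analysis (most delicate in the zone $|\xi_j|\sim\sqrt{A}\wedge\sqrt{|\Xi|}$) shows that the integral is $O(1)$ uniformly in $(\Xi,\hat\xi_j)$ at $\Lambda=0$. Combined with bounds on the derivatives, this puts $\tilde m_j$ in $S(1,\mathfrak{B}(\mathbb{C},L^2))$, so the Calderon-Vaillancourt theorem (Theorem \ref{thm:CalderonVaillancourtOV}) gives $\Op(\tilde m_j)$ bounded, and summing the $j$-contributions as at the end of the proof of Proposition \ref{prop:H0GRelativelyBounded} yields the continuity bound $\norm{\diff\Gamma(\omega)^{1/2}G\Psi}\leq C\norm{(N+1)^{1/2}\Psi}$.

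For the norm convergence $G_\Lambda \to G$, the analogous integral for $\tilde m_{j,\Lambda}-\tilde m_j$ carries the factor $|1-\widehat\rho(\xi_j/\Lambda)|^2$, which localises it to $|\xi_j|\gtrsim\Lambda$. The dangerous region is now $\{|\xi_j|\gtrsim\Lambda, \, A\gg\Lambda^2\}$, where the naive estimate gives $O(1)$ rather than $f(\Lambda)\to 0$; here a small power of $\Omega(\hat\xi_j)^{\epsilon}$ must be extracted from $A/M\leq 1$ at the cost of introducing a factor $\Omega(\hat\xi_j)^{\epsilon}$. This puts $\tilde m_{j,\Lambda}-\tilde m_j$ in $S(f(\Lambda)\Omega(\hat\xi_j)^\epsilon,\mathfrak{B}(\mathbb{C},L^2))$, and Corollary \ref{cor:RelativeBoundednessPseudor} then gives the required bound $\norm{\diff\Gamma(\omega)^{1/2}(G_\Lambda - G)\Psi}\leq f(\Lambda)\norm{(N+1)^{1/2}\diff\Gamma(\omega)^{\epsilon}\Psi}$, explaining the extra $\diff\Gamma(\omega)^\epsilon$ on the domain in the convergence statement.
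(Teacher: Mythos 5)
Your proposal is correct and follows essentially the same route as the paper: the same splitting $v_{\Lambda,X}=u_{\Lambda,X}+\tilde u_{\Lambda,X}$, the same replacement of $H_0^{p-1}$ by $\diff\Gamma(\omega)^{1/2}H_0^{-1}$ in the decomposition of Lemma \ref{lem:DecompositionG}, the same operator-norm integral for the leading-order symbol, and the same identification of the extra $\Omega(\hat{\xi}_j)^{\epsilon}$ factor as the source of the $\diff\Gamma(\omega)^{\epsilon}$ in the convergence statement. The only cosmetic difference is that the paper dispatches the borderline integral by writing $\Omega(\xi)\jap{\xi_j}^{-1}=\Omega(\hat{\xi}_j)\jap{\xi_j}^{-1}+1$ and citing Corollary \ref{cor:IntegralEstimate} directly, rather than your region-by-region scaling analysis.
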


\begin{proof}
	We prove that $\norm{\diff\Gamma(\omega)^{1/2}(G_\Lambda - G)\Psi}_\mathfrak{H} \leq f(\Lambda)\norm{(N+1)^{1/2}\diff\Gamma(\omega)^{\epsilon(\Lambda)}\Psi}_\mathfrak{H}$ for every $\Psi \in D(N^{1/2}\diff\Gamma(\omega)^{\epsilon(\Lambda)})$, where $\epsilon(\Lambda)$ is equal to 0 if $\Lambda = 0$ and equal to $\epsilon$ if $\Lambda > 0$. 
	
	To prove this inequality, we copy the proof of Proposition \ref{prop:H0GRelativelyBounded} but replace $H_0^{p-1}$ with $\diff \Gamma(\omega)^{1/2}H_0^{-1}$. The critical difference arises in the bound of the leading order. The leading order term of $\diff \Gamma(\omega)^{1/2}H_0^{-1}a^*(v_X-v_{\Lambda,X})$ is a sum of operator-valued pseudo-differential operators with symbols $k_{j,\Lambda}$, where
	\begin{align}
		k_{j,\Lambda}(Y,\Xi,\hat{x}_j,\hat{\xi}_j) = \frac{1}{\sqrt{2}}\int_{\mathbb{R}^3} \frac{\e^{\I \scp{\cdot}{\xi_j}}\sigma(\omega^{-\frac{1}{2}})(Y,\xi_j) \zeta_\Lambda(\xi_j)}{\sigma(\diff \Gamma(\omega)^{1/2}H_0^{-1})(Y,\Xi-\xi_j,\hat{x}_j,Y,\xi)^{-1}} \diff \xi_j.
	\end{align}	
	The symbol $\sigma(\diff \Gamma(\omega)^{1/2}H_0^{-1})$ is an element of $S(M)$ with the order function $M(\Xi,\xi) = \Omega(\xi)^{1/2}/(\jap{\Xi}^2 + \Omega(\xi))$. Thus, the operator norm of $k_{j,\Lambda}(Y,\Xi,\hat{x}_j,\hat{\xi}_j)$ is bounded by
	\begin{align}
		\norm{k_{j,\Lambda}(Y,\Xi,\hat{x}_j,\hat{\xi}_j)}^2_{L^2} &= \frac{1}{2}\int_{\mathbb{R}^3} \bigg| \frac{\sigma(\omega^{-\frac{1}{2}})(Y,\xi_j) \zeta_\Lambda(\xi_j)}{\sigma(\diff \Gamma(\omega)^{1/2}H_0^{-1})(Y,\Xi-\xi_j,\hat{x}_j,Y,\xi)^{-1}} \bigg|^2 \diff \xi_j \notag \\
		&\leq C \int_{\mathbb{R}^3} \frac{(\Omega(\hat{\xi}_j) + \jap{\xi_j}) \jap{\xi_j}^{-1} |\zeta_\Lambda(\xi_j)|}{(\jap{\Xi-\xi_j}^2 + \Omega(\xi))^2} \diff \xi_j \notag \\ 
		&\leq C \int_{\mathbb{R}^3} \frac{(\Omega(\hat{\xi}_j)|\xi_j|^{-1} +1)|\zeta_\Lambda(\xi_j)|}{(|\Xi - \xi_j|^2 + |\xi_j| + \Omega(\hat{\xi}_j))^2} \diff \xi_j \leq f(\Lambda)\Omega(\hat{\xi}_j)^{\epsilon(\Lambda)}.
		\label{eq:kjNorm}
	\end{align}
	The operator norm of the derivatives of $k_{j,\Lambda}$ are bounded by similar estimates; hence, $k_{j,\Lambda} \in S(f(\Lambda)\Omega(\hat{\xi}_j)^{\epsilon(\Lambda)}, \mathfrak{B}(\mathbb{C},L^2))$. The desired inequality is then a consequence of the Calderon--Vaillancourt Theorem.
\end{proof}


\begin{prop}
	For every $p < 1/4$, the operator $H_0^pG$ is bounded. Moreover, $G_\Lambda$ converges to $G$ as an operator from $\mathfrak{H}$ to $D(H_0^p)$.
	\label{prop:H0GBounded}
\end{prop}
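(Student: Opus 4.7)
My plan is to build directly on the decomposition from Lemma \ref{lem:DecompositionG}, which expresses $H_0^p G$ on each sector $\mathfrak{H}^{(n-1)}$ as a sum of an operator-valued pseudo-differential leading term plus a bounded remainder $B$. Since $B$ is already bounded and converges in norm (with $G_\Lambda - G$ yielding an overall factor $f(\Lambda)$ in the remainder), the work reduces to estimating
\[
	\frac{1}{\sqrt n} \sum_{j=1}^n \Op(m_{j,\Lambda}) \Psi^{(n-1)}(X, \hat x_j, x_j - X)
\]
uniformly in $n$, with an overall prefactor $f(\Lambda)$ in the case of the difference $G_\Lambda - G$.

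For $p < 1/4$, I would exploit the strict inequality by sharpening the pointwise estimate of $\|m_{j,\Lambda}\|_{\mathfrak{B}(\mathbb{C}, L^2)}^2 \leq f(\Lambda) \Omega(\hat\xi_j)^{-1+2p+\epsilon}$ from \eqref{eq:mjOperatorNorm}: choosing $\epsilon > 0$ small enough that $-1 + 2p + \epsilon < -1/2$ produces a factor $\Omega(\hat\xi_j)^{-1/2-\delta}$ with $\delta > 0$ that was unavailable in the proof of Proposition \ref{prop:H0GRelativelyBounded}. This places $m_{j,\Lambda}$ in a symbol class of order strictly below $-1/4$, so by the relative-boundedness version of the operator-valued Calderon--Vaillancourt theorem (Corollary \ref{cor:RelativeBoundednessPseudor}), $\|\Op(m_{j,\Lambda})\Psi\|$ gains a factor of $\Omega(\hat D_j)^{-\beta}$ with $\beta > 1/4$ acting on $\Psi$. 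Combined with the translation substitution $r = x_j - X$ that preserves the $L^2$ norm on the resulting $L^2_r$-valued function space, this yields the pointwise--in--Fourier bound needed for the ensuing summation.

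The decisive step is then to convert the sum $\sum_j$ together with the prefactor $1/\sqrt n$ into an $n$-independent bound. I would use the symmetry of $\Psi^{(n-1)}$ in its boson variables, which makes all terms $\|\Op(m_{j,\Lambda})\Psi\|^2$ equal, together with the structural identity $\Omega(\hat\xi_j) = \sum_{k \neq j} \jap{\xi_k}$, to bound the resulting integral over $|\hat\Psi^{(n-1)}|^2$ by $\|\Psi^{(n-1)}\|^2$ with a constant uniform in $n$. The convergence claim $G_\Lambda \to G$ in $\mathfrak{B}(\mathfrak{H}, D(H_0^p))$ then follows automatically, since the cut-off factor $\zeta_\Lambda$ appears throughout the estimates and contributes the required $f(\Lambda) \to 0$.

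The hard part will be the $n$-independent bound on the symmetrised sum: a naive diagonal Cauchy--Schwarz on $\sum_j$ only absorbs the $1/\sqrt n$ up to a factor $n \cdot \Omega(\hat\xi_j)^{-2\beta}$, which using only $\Omega \geq n - 1$ remains uniformly bounded solely for $\beta \geq 1/2$, corresponding to $p \leq 0$. The improvement to $p < 1/4$ must therefore come either from exploiting the off-diagonal inner products $\langle F_j, F_k\rangle$ for $j \neq k$ (which by the symmetric pseudodifferential structure combine coherently rather than simply dominating the diagonal) or from a direct Schur-type estimate on the kernel of $H_0^p G$ as a map $\mathfrak{H}^{(n-1)} \to \mathfrak{H}^{(n)}$, verifying that both row and column sums are bounded uniformly in $n$ thanks to the extra $\Omega(\hat\xi_j)^{-\delta}$ decay. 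I expect this combinatorial step to occupy the bulk of the proof.
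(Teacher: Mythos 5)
Your setup is correct and matches the paper's starting point (Lemma \ref{lem:DecompositionG}, remainder handled as before, all work concentrated in the leading term $\frac{1}{\sqrt n}\sum_j \Op(m_{j,\Lambda})$), and your diagnosis of the obstruction is also correct: a term-by-term estimate using only $\norm{m_{j,\Lambda}}^2 \lesssim \Omega(\hat{\xi}_j)^{-1+2p+\epsilon}$ and $\Omega(\hat{\xi}_j)\geq (n-1)m$ absorbs the sum only for $p\leq 0$. But that is exactly where your proposal stops: you name two candidate strategies (cross terms $\scp{F_j}{F_k}$, or a Schur test on the kernel) without carrying either out, and you say yourself that this step is the bulk of the proof. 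As it stands the argument does not establish the claim for any $p>0$, so there is a genuine gap at the decisive point.

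For comparison, the mechanism the paper uses is neither of your two candidates in the form you describe, but a \emph{weighted} Cauchy--Schwarz: writing $\sum_j a_j = \sum_j (a_j\jap{\xi_j}^{-1/2})\jap{\xi_j}^{1/2}$ gives $\frac{1}{n}|\sum_j a_j|^2 \leq \frac{1}{n}\sum_{i,j}\jap{\xi_i}\jap{\xi_j}^{-1}|a_j|^2$ with $a_j = \mathcal{F}[\Op(m_{j,\Lambda})\Psi]$. The diagonal $i=j$ contribution is $\frac{1}{n}\sum_j\norm{\Op(m_{j,\Lambda})\Psi}^2$, which needs no $\Omega$-decay at all, only uniform boundedness of each $\Op(m_{j,\Lambda})$. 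The off-diagonal contribution reassembles $\sum_{i\neq j}\jap{\xi_i} = \Omega(\hat{\xi}_j)$, and the crucial gain is that dividing the symbol by $\jap{\xi_j}^{1/2}$ sharpens the $\xi_j$-integral in \eqref{eq:mjOperatorNorm} from $\jap{\xi_j}^{-1}$ to $\jap{\xi_j}^{-2}$ in the numerator, which by Corollary \ref{cor:IntegralEstimate} upgrades the decay to the point where it cancels the weight $\Omega(\hat{\xi}_j)^{1/2}$ --- and the exponent count works out precisely when $p<1/4$. So the condition $p<1/4$ enters through the decay of the \emph{reweighted} symbol $\mathcal{F}^{-1}(\mathcal{F}(m_{j,\Lambda})/\jap{\cdot}^{1/2})$ in $\Omega(\hat{\xi}_j)$, not through the decay of $\norm{m_{j,\Lambda}}$ itself as your step 3 suggests. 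If you want to complete your proof, this reweighting trick (or an equivalent honest treatment of the cross terms) is the missing ingredient; without it the statement is not proved.
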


\begin{proof}
	We prove that $\norm{H_0^p(G_\Lambda - G)\Psi}_\mathfrak{H} \leq f(\Lambda)\norm{\Psi}_\mathfrak{H}$ for every $\Psi \in \mathfrak{H}$. 
	
	In the proof of Proposition \ref{prop:H0GRelativelyBounded}, we already demonstrated that the remainder terms of $H_0^p(G_\Lambda - G)$ satisfy this bound. It remains to prove the bound for the leading order.
	
	It follows from the Cauchy--Schwarz inequality that, for every $\Psi \in \mathfrak{H}^{(n-1)}$,
	\begin{align}
		\norm{\frac{1}{\sqrt{n}}\sum_{j=1}^n \Op(m_{j,\Lambda})\Psi}^2_{\mathfrak{H}^{(n)}} = \frac{1}{n} \int \Bigg|\sum_{j=1}^n \frac{\mathcal{F}[\Op(m_{j,\Lambda})\Psi](\Xi,\xi)\jap{\xi_j}^{\frac{1}{2}}}{\jap{\xi_j}^{\frac{1}{2}}} \Bigg|^2 \diff \Xi \diff \xi \notag \\
		\leq \frac{1}{n} \sum_{i,j=1}^{n} \int \frac{|\mathcal{F}[\Op(m_{j,\Lambda})\Psi](\Xi,\xi)|^2 \jap{\xi_i}}{\jap{\xi_j}} \diff \Xi \diff \xi.
	\end{align}
	We distinguish two cases. First, if $i = j$, we use that $\Op(m_{j,\Lambda})$ is bounded by $f(\Lambda)$:
	\begin{align*}
		\frac{1}{n}\sum_{j=1}^n \int |\mathcal{F}[\Op(m_{j,\Lambda})\Psi](\Xi,\xi)|^2 \diff \Xi \diff \xi = \frac{1}{n}\sum_{j=1}^n \norm{\Op(m_{j,\Lambda})\Psi}^2_{\mathfrak{H}^{(n)}}
		\leq f(\Lambda) \norm{\Psi}^2_{\mathfrak{H}^{(n-1)}}.
	\end{align*}	
	For the terms with $i \neq j$, we recall equation \eqref{eq:H0GLeadingOrder} and observe that
	\begin{align}
		&\frac{1}{n} \sum_{j=1}^n \sum_{\substack{i=1 \\ i\neq j}}^n  \int \frac{|\mathcal{F}[\Op(m_{j,\Lambda})\Psi](\Xi,\xi)|^2 \jap{\xi_i}}{\jap{\xi_j}} \diff \Xi \diff \xi \notag \\
		&\ \ \ = \frac{1}{n} \sum_{j=1}^n \int \Omega(\hat{\xi}_j) \frac{|\mathcal{F}[\Op(m_{j,\Lambda})\Psi](\Xi,\xi)|^2}{\jap{\xi_j}} \diff \Xi \diff \xi \notag \\
		&\ \ \ = \frac{1}{n} \sum_{j=1}^n \norm{\Op(\Omega(\hat{\xi}_j)^{\frac{1}{2}})\Op(\mathcal{F}^{-1}(\mathcal{F}(m_{j,\Lambda})/\jap{\cdot}^{\frac{1}{2}})) \Psi}^2_{\mathfrak{H}^{(n)}}.
		\label{eq:ineqj}
	\end{align}
	The operator norm of $\mathcal{F}^{-1}(\mathcal{F}(m_{j,\Lambda})/\jap{\cdot}^{\frac{1}{2}})$ is, due to Corollary \ref{cor:IntegralEstimate} with $0 < \epsilon \leq 1-2p$, bounded by
	\begin{align}
		\norm{\mathcal{F}^{-1}(\mathcal{F}(m_{j,\Lambda})/\jap{\cdot}^{\frac{1}{2}})}_{L^2}^2 &\leq C \int_{\mathbb{R}^3} \frac{\jap{\xi_j}^{-2} |\zeta_\Lambda(\xi_j)|^2}{(\jap{\Xi-\xi_j}^2 + \Omega(\xi))^{2-2p}} \diff \xi_j \notag \\
		&\leq f(\Lambda) \Omega(\hat{\xi}_j)^{-\frac{1}{2}}
	\end{align}
	The operator norms of the derivatives of $\mathcal{F}^{-1}(\mathcal{F}(m_j)/\jap{\cdot}^{\frac{1}{2}})$ are bounded by similar estimates; hence, $\mathcal{F}^{-1}(\mathcal{F}(m_{j,\Lambda})/\jap{\cdot}^{\frac{1}{2}}) \in S(f(\Lambda)\Omega(\hat{\xi}_j)^{-1/2},\mathfrak{B}(\mathbb{C},L^2))$. It follows that $\Op(\Omega(\hat{\xi}_j)^{\frac{1}{2}})\Op(\mathcal{F}^{-1}(\mathcal{F}(m_j)/\jap{\cdot}^{\frac{1}{2}}))$ is an operator-valued pseudo-differential operator with underlying symbol in $S(f(\Lambda),\mathfrak{B}(\mathbb{C},L^2))$. Thus, \eqref{eq:ineqj} is bounded by $f(\Lambda)\norm{\Psi}_{\mathfrak{H}^{(n-1)}}$.
\end{proof}


\begin{cor}
	For every $\Lambda \leq \infty$, $1-G_\Lambda$ is invertible and its inverse is a bounded operator uniformly in $\Lambda$. Moreover, a constant $C>0$ independent of $\Lambda$ exists such that
		\begin{align}
			\norm{N\Psi}_\mathfrak{H} \leq C(\norm{N(1-G_\Lambda)\Psi}_\mathfrak{H} + \norm{\Psi}_\mathfrak{H})
			\label{eq:1-GBoundedInverse}
		\end{align}
		for all $\Psi \in (1-G_\Lambda)^{-1}D(N)$.
	\label{cor:BoundedInverse1-G}
\end{cor}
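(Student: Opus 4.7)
The plan is to first prove that $1-G_\Lambda$ is invertible on $\mathfrak{H}$ with $\Lambda$-uniformly bounded inverse by a Neumann series argument, and then extract the $N$-inequality from the commutator identity $[N,G_\Lambda]=G_\Lambda$.

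For invertibility, the key input is Proposition \ref{prop:H0GBounded}: $\|H_0^p G_\Lambda\|_{\mathfrak{B}(\mathfrak{H})}\leq C$ uniformly in $\Lambda$ for some $p\in (0,1/4)$. Since $\diff\Gamma(\omega)\geq mN$ and $K$ is bounded below under the standing assumptions of Theorem \ref{thm:Gerard}, one has $H_0|_{\mathfrak{H}^{(n)}}\geq c n-c'$, so after a harmless shift the bound of Proposition \ref{prop:H0GBounded} translates to the sectorwise estimate
\begin{equation*}
\|G_\Lambda \Psi^{(n-1)}\|_{\mathfrak{H}^{(n)}} \leq \frac{C_1}{n^{p}}\|\Psi^{(n-1)}\|_{\mathfrak{H}^{(n-1)}}
\end{equation*}
with $C_1$ independent of $n$ and $\Lambda$. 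Because $G_\Lambda$ raises the boson number by exactly one, iterating this estimate gives $\|G_\Lambda^k\big|_{\mathfrak{H}^{(n)}\to\mathfrak{H}^{(n+k)}}\|\leq C_1^k\prod_{j=1}^k(n+j)^{-p}\leq C_1^k/(k!)^p$, and hence $\|G_\Lambda^k\|_{\mathfrak{B}(\mathfrak{H})}\leq C_1^k/(k!)^p$. Since $\sum_k C_1^k/(k!)^p<\infty$ for any $p>0$, the Neumann series $(1-G_\Lambda)^{-1}=\sum_{k\geq 0}G_\Lambda^k$ converges absolutely in operator norm, uniformly in $\Lambda$, which yields a uniform bound $\|(1-G_\Lambda)^{-1}\|\leq \widetilde C$.

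For the inequality involving $N$, the fact that $G_\Lambda$ strictly raises the boson number implies $[N,G_\Lambda]=G_\Lambda$ on $D(N)$. Multiplying the identity $N(1-G_\Lambda)=(1-G_\Lambda)N-G_\Lambda$ on both sides by $(1-G_\Lambda)^{-1}$ gives
\begin{equation*}
N(1-G_\Lambda)^{-1} = (1-G_\Lambda)^{-1}N + \bigl((1-G_\Lambda)^{-1}-1\bigr)(1-G_\Lambda)^{-1}.
\end{equation*}
Setting $\Phi=(1-G_\Lambda)\Psi$ and taking $\mathfrak{H}$-norms gives $\|N\Psi\|\leq \widetilde C\,\|N\Phi\|+(\widetilde C+1)\,\|\Psi\|$, which is the claimed inequality with constant independent of $\Lambda$. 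The sole nontrivial ingredient — and main obstacle — is the $\Lambda$-uniform sectorwise decay $\|G_\Lambda|_{\mathfrak{H}^{(n-1)}\to\mathfrak{H}^{(n)}}\|\leq C_1 n^{-p}$; once it is in hand, both the Neumann series convergence and the commutator manipulation are purely algebraic.
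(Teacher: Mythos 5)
Your first half (invertibility with a uniform bound) is exactly the paper's argument: the sectorwise decay $\norm{G_\Lambda}_{\mathfrak{H}^{(n-1)}\to\mathfrak{H}^{(n)}}\leq C n^{-p}$ extracted from Proposition \ref{prop:H0GBounded}, the factorial bound $\norm{G_\Lambda^k}\leq C^k(k!)^{-p}$, and the Neumann series. (The paper reads off the sector decay from the adjoint $G_\Lambda^*=-a(v_{\Lambda,X})H_0^{-1}$ and $H_0\geq mN$, you do it directly via $H_0\vert_{\mathfrak{H}^{(n)}}\gtrsim n$; both are fine.)

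Your second half takes a genuinely different route. The paper introduces the shifted operator $G_{\Lambda;\eta}=-(H_0+\eta^2)^{-1}a^*(v_{\Lambda,X})$, makes its $D(N)\to D(N)$ norm small for large $\eta$, inverts $1-G_{\Lambda;\eta}$ on $D(N)$, and compares with $G_\Lambda$ via the resolvent formula. You instead use $[N,G_\Lambda]=G_\Lambda$, which is correct since $G_\Lambda$ raises the boson number by exactly one, and your algebra is right. The one gap: to evaluate the identity $N(1-G_\Lambda)^{-1}\Phi=(1-G_\Lambda)^{-1}N\Phi+\bigl((1-G_\Lambda)^{-1}-1\bigr)(1-G_\Lambda)^{-1}\Phi$ at $\Phi\in D(N)$ you must already know that $\Psi=(1-G_\Lambda)^{-1}\Phi$ lies in $D(N)$ — otherwise the commutator identity, which holds on $D(N)$, cannot be applied to $\Psi$, and $\norm{N\Psi}$ is not yet known to be finite. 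This is precisely the content the corollary is later used for ($(1-G)^{-1}$ preserves $D(N)$ in Theorem \ref{thm:DomainH}), so it cannot be assumed. The fix is short and in fact makes the commutator superfluous: from $\norm{G_\Lambda^k}_{\mathfrak{H}^{(m)}\to\mathfrak{H}^{(m+k)}}\leq C^k(k!)^{-p}$ one gets $\norm{N G_\Lambda^k\Phi}\leq \sqrt{2}\,C^k(k!)^{-p}\bigl(\norm{N\Phi}+k\norm{\Phi}\bigr)$, so the Neumann series converges in the graph norm of $N$; closedness of $N$ then gives $\Psi\in D(N)$ and $\norm{N\Psi}\leq C\bigl(\norm{N\Phi}+\norm{\Phi}\bigr)$ directly, uniformly in $\Lambda$. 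With that supplement your proof is complete and arguably more elementary than the paper's.
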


\begin{proof}	
	According to Proposition \ref{prop:H0GBounded}, there is a $p>0$ such that, for all $\Psi \in \mathfrak{H}$,
	\begin{align}
		\norm{G^*_{\Lambda}\Psi}_\mathfrak{H} \leq C\norm{H_0^{-p} \Psi}_\mathfrak{H} \leq C\norm{N^{-p}\Psi}_\mathfrak{H};
	\end{align}
	thus, $\norm{G_{\Lambda}}_{\mathfrak{H}^{(n-1)}\to \mathfrak{H}^{(n)}} = \norm{G^*_{\Lambda}}_{\mathfrak{H}^{(n)} \to \mathfrak{H}^{(n-1)}} \leq C n^{-p}$. It follows that
	\begin{align}
		\norm{G^k_{\Lambda} \Psi}^2_{\mathfrak{H}} &= \sum_{n\in\mathbb{N}} \norm{(G^k_{\Lambda}\Psi)^{(n)}}^2_{\mathfrak{H}^{(n)}} 
		\leq \sum_{n\geq k} \prod_{l=1}^{k} \norm{G_{\Lambda}}_{\mathfrak{H}^{(n-l)} \to \mathfrak{H}^{(n-l+1)}}^2 \norm{\Psi^{(n-k)}}^2_{\mathfrak{H}^{(n-k)}} \notag \\
		&\leq C (k!)^{-2p} \norm{\Psi}^2_\mathfrak{H}.
	\end{align}
	Hence, the von Neumann series $\sum_{k=0}^m G^k_{\Lambda}$ converges in the strong operator topology as $m \to \infty$ because $(k!)^{-p}$ is summable. Thus, the inverse of $1-G_\Lambda$ exists and it is bounded.
	
	To prove the second assertion, we set $G_{\Lambda;\eta} = -(H_0+\eta^2)^{-1}a^*(v_{\Lambda,X})$. For every $\eta \in \mathbb{R}$, $G_{\Lambda;\eta}$ is a bounded operator from $\mathfrak{H}$ to $\mathfrak{H}$ and from $D(N)$ to $D(N)$ because $G_{\Lambda;\eta}$ maps $\mathfrak{H}^{(n-1)}$ into $\mathfrak{H}^{(n)}$. Moreover, the operator norm $\norm{G_{\Lambda;\eta}}_{D(N) \to D(N)} \leq c_{\eta}$ is bounded by a constant $c_{\eta}$ (independent of $\Lambda$) that decreases to $0$ as $\eta \to \infty$. Thus, $c_\eta < 1$ for $\eta$ sufficiently large. For these $\eta$, the operator $1-G_{\Lambda;\eta}$ is invertible as an operator from $D(N)$ to $D(N)$ and its inverse is a bounded operator with norm at most $(1-c_\eta)^{-1}$. It follows that
	\begin{align}
		\norm{N\Psi}_\mathfrak{H} &= \norm{N(1-G_{\Lambda;\eta})^{-1}(1-G_{\Lambda;\eta})\Psi}_\mathfrak{H} \\	
		&\leq (1-c_\eta)^{-1} (\norm{N(1-G_{\Lambda;\eta})\Psi}_\mathfrak{H} + \norm{(1-G_{\Lambda;\eta})\Psi}_\mathfrak{H}) \notag \\
		&\leq (1-c_\eta)^{-1} (\norm{N(1-G_\Lambda)\Psi}_\mathfrak{H} + \eta^2\norm{N(H_0+\eta^2)^{-1} G_\Lambda \Psi}_\mathfrak{H} \notag + \norm{(1-G_{\Lambda;\eta})\Psi}_\mathfrak{H}).
	\end{align}
	In the second inequality, we used $G_\Lambda-G_{\Lambda;\eta} = \eta^2 (H_0 + \eta^2)^{-1}H_0^{-1}a^*(v_{\Lambda,X})$ by the resolvent formula. The proof of \eqref{eq:1-GBoundedInverse} is complete because $N(H_0+\eta^2)^{-1} G_\Lambda$ and $1-G_{\Lambda;\eta}$ are bounded operators uniformly in $\Lambda$.
\end{proof}

\section{Extension of the Annihilation Operator}
\label{sec:ExtensionAnnihilationOperator}

We construct a physically reasonable extension $A = a(v_X) (1-G) + T$ of the annihilation operator $a(v_X)$ to the domain $D(H) = (1-G)^{-1}D(H_0)$ of the IBC Hamiltonian $H$. In the first part of this section, we construct the operator $T$. In the second part, we prove that $T$ is infinitesimally bounded with respect to the free IBC Hamiltonian $H_0'$.

\subsection{Construction of $T$}

We require that the operator $T$ is chosen such that the IBC Hamiltonian is equivalent to the renormalised Nelson Hamiltonian from Chapter \ref{ch:RemovalCutOff}. To ensure this, we take $T$ to be the limit of $T_{\Lambda} + E_\Lambda(X)$ where $T_{\Lambda} = a(v_{\Lambda,X})G_\Lambda$. In which sense this limit should be understood becomes clear later.

Let $\sigma(H_0^{-1}) \in S(M^{-1},g)$ with the order function $M$ from \eqref{eq:OrderFunctionH0} and the metric $g$ from \eqref{eq:MetricH0} be the symbol of $H_0^{-1}\vert_{\mathfrak{H}^{(n)}}$ in the right quantisation in the $n$-th boson variable and in the standard quantisation in all other variables. Then, for suitable $\Psi^{(n-1)} \equiv \Psi \in \mathfrak{H}^{(n-1)}$, $T_\Lambda\Psi$ in leading order is equal to
\begin{align}
	&-a(u_{\Lambda,X}) H_0^{-1} a^*(u_{\Lambda,X}) \Psi(X,\hat{x}_n) 	 \notag \\
	&\ \ \ =-\sum_{j=1}^{n} \int \e^{\I \scp{X-Y}{\Xi}} \e^{\I\scp{x-y}{\xi}} \frac{\overline{u_{\Lambda,X}(x_{n})} u_{\Lambda,Y}(y_j) \Psi(Y,\hat{y}_j) }{\sigma(H_0^{-1})(X,\Xi,\hat{x}_{n},y_{n},\xi)^{-1}} \diff Y \diff \Xi \diff y \diff \xi \diff x_{n}. \label{eq:ConstructionOfT} 
\end{align}
Beyond leading order are terms which include $\tilde{u}_{\Lambda,X}$ of the form factor $v_{\Lambda,X} = u_{\Lambda, X} + \tilde{u}_{\Lambda,X}$ (see Section \ref{sec:PseudorsInNelsonModelI}). We collect these and all other remainders arising in the following at the end of this subsection. 

In \eqref{eq:ConstructionOfT}, we perform the $x_n$-integral because it is a simple Fourier transformation:
\begin{align}
	-\sum_{j=1}^{n} \int \e^{\I \scp{X-Y}{\Xi}} \e^{\I\scp{\hat{x}_{n}-\hat{y}_{n}}{\hat{\xi}_{n}}} \frac{\e^{\I\scp{X-y_n}{\xi_n}} \overline{\sigma}(\omega^{-\frac{1}{2}})(X,\xi_{n}) \widehat{\rho}(\xi_n/\Lambda)}{\sqrt{2}\sigma(H_0^{-1})(X,\Xi,\hat{x}_{n},y_{n},\xi)^{-1}} \notag \\
	u_{\Lambda,Y}(y_j) \Psi(Y,\hat{y}_j) \diff Y \diff \Xi \diff y \diff \xi,
	\label{eq:ConstructionOfT2} 
\end{align}
where $\overline{\sigma}(\omega^{-1/2})$ denotes the complex conjugate of $\sigma(\omega^{-1/2})$. We make similar mani\-pu\-lations as in the proof of Proposition \ref{prop:H0GRelativelyBounded} to simplify the above integral. First, we expand $\sigma(H_0^{-1})(X,\Xi,\hat{x}_{n},y_{n},\xi)$ at the collision point $y_n=X$:
\begin{align}
	\sigma(H_0^{-1})(X,\Xi,\hat{x}_{n},y_{n},\xi) = \sigma(H_0^{-1})(X,\Xi,\hat{x}_{n},X,\xi) + (X-y_n) R_1(X,\Xi,\hat{x}_{n},y_{n},\xi)
	\label{eq:ExpansionH0ForT}
\end{align}
with $R_1\in [S(M^{-1},g)]^3$. In \eqref{eq:ConstructionOfT2}, we separate in the leading order term the diagonal part, which is the term with $j=n$, from the off-diagonal part, which are all other summands with $1 \leq j \leq n-1$.

\subsubsection*{Off-Diagonal Part}

We perform the $y_j$-integral in the off-diagonal part:
\begin{align}
	-\sum_{j=1}^{n-1} \int \e^{\I \scp{X-Y}{\Xi}} \e^{\I\scp{\hat{x}_{j,n}-\hat{y}_{j,n}}{\hat{\xi}_{j,n}}} \frac{\overline{\sigma}(\omega^{-\frac{1}{2}})(X,\xi_{n}) \sigma(\omega^{-\frac{1}{2}})(Y,\xi_j)\widehat{\rho}(\xi_n/\Lambda)\widehat{\rho}(\xi_j/\Lambda)}{2\sigma(H_0^{-1})(X,\Xi,\hat{x}_{n},X,\xi)^{-1}}  \notag \\
	\e^{\I\scp{X-y_n}{\xi_n}} \e^{\I\scp{x_j-Y}{\xi_j}} \Psi(Y,\hat{y}_j) \diff Y \diff \Xi \diff \hat{y}_j \diff \xi.
	\label{eq:OffDiagonal}
\end{align}
Next, we expand the symbol $\sigma(H_0^{-1})(X,\Xi,\hat{x}_{n},X,\xi)$ at the collision point $x_j = Y$:
\begin{align}
	&\sigma(H_0^{-1})(X,\Xi,\hat{x}_{n},X,\xi) = \sigma(H_0^{-1})(X,\Xi,\hat{x}_{j,n},Y,X,\xi) \notag \\
	&\ \ \ + (x_j-Y) \cdot R_{\mathrm{od},1}(X,\Xi,\hat{x}_{j,n},Y,X,\xi) \notag \\ 
	&\ \ \ + (x_j-Y)^T R_{\mathrm{od},2}(X,\Xi,\hat{x}_{n},X,\xi) (x_j-Y)
	\label{eq:OffDiagonalRest}
\end{align}
with $R_{\mathrm{od},1} \in [S(M^{-1},g)]^3$ and $R_{\mathrm{od},2} \in [S(M^{-1},g)]^{3\times 3}$. It becomes clear later why we expand this symbol up to second order. We define the leading order term in \eqref{eq:OffDiagonal} as the off-diagonal part $T_{\mathrm{od},\Lambda}$ of the operator $T_\Lambda$:
\begin{align}
	T_{\mathrm{od},\Lambda} \Psi(X,\hat{x}_{n})
	&= \sum_{j=1}^{n-1} \int \e^{\I \scp{X-Y}{\Xi}} \e^{\I\scp{\hat{x}_{j,n}-\hat{y}_{j,n}}{\hat{\xi}_{j,n}}} \notag \\  
	&\ [o_{j,\Lambda}(X,Y,\Xi,\hat{x}_{j,n},\hat{\xi}_{j,n})\Psi(Y,\hat{y}_{j,n},\cdot+Y)](x_j-X) \diff Y \diff \Xi \diff \hat{y}_{j,n} \diff \hat{\xi}_{j,n}\notag \\
	&= \sum_{j=1}^{n-1} [\Op(o_{j,\Lambda}) \Psi_X(X,\hat{x}_{j,n},\cdot)](x_j-X),
	\label{eq:TodLambda}
\end{align}
where $\Psi_X$ denotes the translated function $\Psi_X(X,\hat{x}_j) = \Psi(X,\hat{x}_{j,n},x_n+X)$. The operator-valued symbol $o_{j,\Lambda}(X,Y,\Xi,\hat{x}_{j,n},\hat{\xi}_{j,n})$ acts on suitable $\psi \in L^2(\mathbb{R}^3)$ via
\begin{align}
	&o_{j,\Lambda}(X,Y,\Xi,\hat{x}_{j,n},\hat{\xi}_{j,n}) \psi(x_j) \label{eq:OffDiagonalSymbol} \notag \\
	&\ \ \ = -\int_{\mathbb{R}^6} \e^{\I\scp{x_j}{\xi_j}} \frac{\overline{\sigma}(\omega^{-\frac{1}{2}})(X,\xi_{n}) \sigma(\omega^{-\frac{1}{2}})(Y,\xi_j) \widehat{\rho}(\xi_n/\Lambda)\widehat{\rho}(\xi_j/\Lambda)}{2\sigma(H_0^{-1})(X,\Xi-\xi_j-\xi_n,\hat{x}_{j,n},Y,X,\xi)^{-1}} \widehat{\psi}(\xi_n) \diff \xi_j \diff \xi_n. 
\end{align}

\subsubsection*{Diagonal Part}

We perform the $y_n$-integral in the diagonal part:
\begin{align}
	-\int \e^{\I \scp{X-Y}{\Xi}} \e^{\I\scp{\hat{x}_{n}-\hat{y}_{n}}{\hat{\xi}_{n}}} \e^{\I\scp{X-Y}{\xi_n}} \frac{\overline{\sigma}(\omega^{-\frac{1}{2}})(X,\xi_{n}) \sigma(\omega^{-\frac{1}{2}})(Y,\xi_n) |\widehat{\rho}(\xi_n/\Lambda)|^2 }{2\sigma(H_0^{-1})(X,\Xi,\hat{x}_{n},X,\xi)^{-1}} \notag \\
	\Psi(Y,\hat{y}_n) \diff Y \diff \Xi \diff \hat{y}_n \diff \xi.
\end{align}
Next, we expand the symbol $\sigma(\omega^{-1/2})(Y,\xi_n)$ at the point $Y=X$:
\begin{align}
	\sigma(\omega^{-\frac{1}{2}})(Y,\xi_n) = \sigma(\omega^{-\frac{1}{2}})(X,\xi_n) + (X-Y)\cdot R_\mathrm{d}(Y,\xi_n) 
	\label{eq:DiagonalRest}
\end{align}
with $R_\mathrm{d} \in [S^{-1/2}]^3$. A closer inspection of the leading order term
\begin{align}
	-\int \e^{\I \scp{X-Y}{\Xi}} \e^{\I\scp{\hat{x}_{n}-\hat{y}_{n}}{\hat{\xi}_{n}}} \frac{|\sigma(\omega^{-\frac{1}{2}})(X,\xi_n)|^2 |\widehat{\rho}(\xi_n/\Lambda)|^2 \Psi(Y,\hat{y}_n) }{2\sigma(H_0^{-1})(X,\Xi-\xi_n,\hat{x}_{n},X,\xi)^{-1}}
	\diff Y \diff \Xi \diff \hat{y}_n \diff \xi
	\label{eq:DiagonalPart}
\end{align}
reveals that the $\xi_n$-integral is divergent in the limit $\Lambda \to \infty$.\footnote{This becomes clear in the Nelson model with constant coefficients in which the symbols are independent of the position variables $(x,X)$. In the Nelson model with constant coefficients, \eqref{eq:DiagonalPart} for $\Lambda = \infty$ (remember that $\widehat{\rho}(0) = 1$) simplifies to the inverse Fourier transform of
\begin{align*}
	-\int_{\mathbb{R}^3} \frac{|\sigma(\omega^{-\frac{1}{2}})(\xi_{n})|^2 \widehat{\Psi}(\Xi,\hat{\xi}_n) }{2\sigma(H_0^{-1})(\Xi-\xi_n,\xi)^{-1}} \diff \xi_n.
\end{align*}
The function $\widehat{\Psi}(\Xi,\hat{\xi}_n)$ does not depend on $\xi_n$, and the integrand is asymptotically equivalent to $\jap{\xi_n}^{-3}$ for large $\xi_n$; thus, the $\xi_n$-integral is divergent.} To obtain a well-defined operator in the limit $\Lambda \to \infty$, we must renormalise the $\xi_n$-integral. This is achieved by subtracting the vacuum energy
\begin{align}
	-E_\Lambda(X) = -\frac{1}{2} \int_{\mathbb{R}^3} \frac{|\sigma(\omega^{-\frac{1}{2}})(X,\xi_n)|^2}{\sigma(K)(X,\xi_n) + \sigma(\omega)(X,\xi_n)}|\widehat{\rho}(\xi_n/\Lambda)|^2 \diff \xi_n,
\end{align} 
where $\sigma(K)$ and $\sigma(\omega)$ are the symbols of $K$ and $\omega$ in the standard quantisation. We define the renormalised integral as
\begin{align}
	&d_\Lambda(X,\Xi,\hat{x}_n,\hat{\xi}_n) = -\frac{1}{2} \int_{\mathbb{R}^3} |\sigma(\omega^{-\frac{1}{2}})(X,\xi_n)|^2 |\widehat{\rho}(\xi_n/\Lambda)|^2 \notag \\ 
	&\ \ \ \left[\frac{1}{\sigma(H_0^{-1})(X,\Xi-\xi_n,\hat{x}_{n},X,\xi)^{-1}} - \frac{1}{\sigma(K)(X,\xi_n) + \sigma(\omega)(X,\xi_n)}\right] \diff \xi_n,
	\label{eq:diagonalSymbol}
\end{align}
and the diagonal part $T_{\mathrm{d},\Lambda}$ of $T_\Lambda$ by
\begin{align}
	T_{\mathrm{d},\Lambda} \Psi(X,\hat{x}_{n}) &= \int \e^{\I \scp{X-Y}{\Xi}} \e^{\I\scp{\hat{x}_n-\hat{y}_n}{\hat{\xi}_n}} d_\Lambda(X,\Xi,\hat{x}_n,\hat{\xi}_n) \Psi(Y,\hat{y}_n) \diff Y \diff \Xi \diff \hat{y}_n \diff \hat{\xi}_n \notag \\
	&= \Op(d_\Lambda)\Psi(X,\hat{x}_n).
	\label{eq:Td}
\end{align}

\subsubsection*{Remainders}

We collect the several remainders we encountered above. First, we neglected in \eqref{eq:ConstructionOfT} the terms
\begin{align}
	T_{\tilde{R},\Lambda} = a(\tilde{u}_{\Lambda,X})H_0^{-1}a^*(u_{\Lambda,X}+\tilde{u}_{\Lambda,X}) + a(u_{\Lambda,X})H_0^{-1}a^*(\tilde{u}_{\Lambda,X})
	\label{eq:TRestTilde}
\end{align}
when considering only the leading order part of the form factor $v_{\Lambda,X} = u_{\Lambda,X}+\tilde{u}_{\Lambda,X}$. 

Next, we obtained from the expansion in \eqref{eq:ExpansionH0ForT} the remainder
\begin{align}
	-\frac{1}{\sqrt{2}}\sum_{j=1}^{n} \int \e^{\I \scp{X-Y}{\Xi}} \e^{\I\scp{\hat{x}_{n}-\hat{y}_{n}}{\hat{\xi}_{n}}} (X-y_n) \e^{\I\scp{X-y_n}{\xi_n}}\overline{\sigma}(\omega^{-\frac{1}{2}})(X,\xi_{n}) \widehat{\rho}(\xi_n/\Lambda) \notag \\ 
	\cdot u_{\Lambda,Y}(y_j) R_1(X,\Xi,\hat{x}_{n},y_{n},\xi) \Psi(Y,\hat{y}_j) \diff Y \diff \Xi \diff y \diff \xi.
	\label{eq:TRest1}
\end{align}
Integration by parts leads to the following expression:
\begin{align}
	&\frac{1}{\sqrt{2}}\sum_{j=1}^{n} \int \e^{\I \scp{X-Y}{\Xi}} \e^{\I\scp{\hat{x}_{n}-\hat{y}_{n}}{\hat{\xi}_{n}}} \e^{\I\scp{X-y_n}{\xi_n}} \big\{[D_{\xi_n}\overline{\sigma}(\omega^{-\frac{1}{2}})(X,\xi_{n}) \widehat{\rho}(\xi_n/\Lambda) \notag \\ 
	&+ \frac{1}{\Lambda} \overline{\sigma}(\omega^{-\frac{1}{2}})(X,\xi_{n}) (D_{\xi_n} \widehat{\rho})(\xi_n/\Lambda)] \cdot R_1(X,\Xi,\hat{x}_{n},y_{n},\xi) \notag \\
	&+ \overline{\sigma}(\omega^{-\frac{1}{2}})(X,\xi_{n}) \widehat{\rho}(\xi_n/\Lambda) D_{\xi_n} \cdot R_1(X,\Xi,\hat{x}_{n},y_{n},\xi) \big\}  u_{\Lambda,Y}(y_j) \Psi(Y,\hat{y}_j) \diff Y \diff \Xi \diff y \diff \xi.
	\label{eq:Rest1}	 
\end{align}
The first summand in \eqref{eq:Rest1} is equal to $a(w_{\Lambda,X})\Op(R_1)a^*(u_{\Lambda,X})\Psi(X,x)$ with
\begin{align}
	\widehat{w_{\Lambda,X}}(\xi_n) = \frac{1}{\sqrt{2}} \e^{-\I\scp{X}{\xi_n}}D_{\xi_n}\overline{\sigma}(\omega^{-\frac{1}{2}})(X,\xi_n)\widehat{\rho}(\xi_n/\Lambda).
\end{align}
The third summand in \eqref{eq:Rest1} has the same structure if we write
\begin{align}
	\overline{\sigma}(\omega^{-\frac{1}{2}})(X,\xi_{n})\jap{\xi_n}^{-1} \jap{\xi_n} D_{\xi_n} \cdot R_1(X,\Xi,\hat{x}_{n},y_{n},\xi)
\end{align}
because $\overline{\sigma}(\omega^{-\frac{1}{2}})(X,\xi_{n})\jap{\xi_n}^{-1}$ as well as $D_{\xi_n}\overline{\sigma}(\omega^{-\frac{1}{2}})(X,\xi_{n})$ define symbols in $S^{-3/2}$ and $\jap{\xi_n} D_{\xi_n} \cdot R_1(X,\Xi,\hat{x}_{n},y_{n},\xi)$ as well as all vector components of $R_1(X,\Xi,\hat{x}_{n},y_{n},\xi)$ define symbols in $S(M^{-1},g)$. 

The second summand in \eqref{eq:Rest1} is equal to $a(\omega^{-1/2}\tau_{\Lambda,X}) \Op(R_1) a^*(u_{\Lambda,X})/\Lambda\Psi$.

Thereafter, we got two remainders in the off-diagonal part from the expansion in \eqref{eq:OffDiagonalRest}. The first remainder containing $R_{\mathrm{od},1}$ is
\begin{align}
	\frac{1}{2}\sum_{j=1}^{n-1} \sum_{k=1}^3 \int \e^{\I \scp{X-Y}{\Xi}} \e^{\I\scp{\hat{x}_{j,n}-\hat{y}_{j,n}}{\hat{\xi}_{j,n}}} D_{\xi_j}^{(k)}\left[ \frac{\sigma(\omega^{-\frac{1}{2}})(Y,\xi_j) \widehat{\rho}(\xi_j/\Lambda) }{ R_{\mathrm{od},1}^k(X,\Xi,\hat{x}_{j,n},Y,X,\xi)^{-1} } \right] \notag \\ 
	\overline{\sigma}(\omega^{-\frac{1}{2}})(X,\xi_{n}) \widehat{\rho}(\xi_n/\Lambda) \e^{\I\scp{X-y_n}{\xi_n}} \e^{\I\scp{x_j-Y}{\xi_j}} \Psi(Y,\hat{y}_j) \diff Y \diff \Xi \diff \hat{y}_j \diff \xi,
	\label{eq:Rod1st}
\end{align}
where $D_{\xi_j}^{(k)}$ denotes $-\I$ times the derivative of the $k$-th component of $\xi_j$ and $R_{\mathrm{od},1}^k$ the $k$-th component of the vector $R_{\mathrm{od},1}$. Similarly, we write the second remainder of the off-diagonal part as
\begin{align}
	\frac{1}{2}\sum_{j=1}^{n-1} \sum_{k,l=1}^3 \int \e^{\I \scp{X-Y}{\Xi}} \e^{\I\scp{\hat{x}_{j,n}-\hat{y}_{j,n}}{\hat{\xi}_{j,n}}} D_{\xi_j}^{(k)} D_{\xi_j}^{(l)} \left[\frac{\sigma(\omega^{-\frac{1}{2}})(Y,\xi_j) \widehat{\rho}(\xi_j/\Lambda)} {R_{\mathrm{od},2}^{kl}(X,\Xi,\hat{x}_{n},X,\xi)^{-1}}\right] \notag \\
	\overline{\sigma}(\omega^{-\frac{1}{2}})(X,\xi_{n}) \widehat{\rho}(\xi_n/\Lambda) \e^{\I\scp{X-y_n}{\xi_n}} \e^{\I\scp{x_j-Y}{\xi_j}} \Psi(Y,\hat{y}_j) \diff Y \diff \Xi \diff \hat{y}_j \diff \xi.
	\label{eq:Rod2nd}
\end{align}
Lastly, we obtained a remainder in the diagonal part from the expansion in \eqref{eq:DiagonalRest}:
\begin{align}
	\frac{1}{2}\int \e^{\I \scp{X-Y}{\Xi}} \e^{\I\scp{\hat{x}_{n}-\hat{y}_{n}}{\hat{\xi}_{n}}} \e^{\I\scp{X-Y}{\xi_n}} D_{\Xi} \sigma(H_0^{-1})(X,\Xi,\hat{x}_{n},X,\xi) \notag \\
	\cdot \overline{\sigma}(\omega^{-\frac{1}{2}})(X,\xi_{n}) R_\mathrm{d}(Y,\xi_n)|\widehat{\rho}(\xi_n/\Lambda)|^2 \Psi(Y,\hat{y}_n) \diff Y \diff \Xi \diff \hat{y}_n \diff \xi.
	\label{eq:Rd}
\end{align}

\begin{rem}
	In the Nelson model with constant coefficients all remainders are absent.
\end{rem}

\subsubsection*{Definition of $T$}

To summarise, we decomposed the operator $T_\Lambda + E_\Lambda(X)$ into $T_{\mathrm{od},\Lambda} + T_{\mathrm{d},\Lambda} + T_{R,\Lambda}$, where $T_{\mathrm{od},\Lambda}$ is defined in \eqref{eq:TodLambda}, $T_{\mathrm{d},\Lambda}$ in \eqref{eq:Td}, and $T_{R,\Lambda}$ contains the various remainders we collected in the previous paragraph, that is, $T_{R,\Lambda} = T_{\tilde{R},\Lambda} + T_{R_1,\Lambda} + T_{R_{\mathrm{od}},\Lambda} + T_{R_{\mathrm{d}},\Lambda}$ with $T_{\tilde{R},\Lambda}$ defined in \eqref{eq:TRestTilde}, $T_{R_1,\Lambda}$ in \eqref{eq:TRest1}, $T_{R_{\mathrm{od}},\Lambda}$ in \eqref{eq:Rod1st} and \eqref{eq:Rod2nd}, and $T_{R_{\mathrm{d}},\Lambda}$ in \eqref{eq:Rd}.

We define the operator $T$ by formally setting $\Lambda = \infty$, what amounts for replacing $\widehat{\rho}(\xi/\Lambda)$ with $\widehat{\rho}(0) = 1$. We demonstrate in the next subsection that the operator $T$ is well defined.

\subsection{Mapping Properties of $T$}
\label{ssec:MappingPropertiesOfT}

We establish different operator bounds on the off-diagonal, diagonal, and remainder part of $T$, and prove convergence of $T_\Lambda + E_\Lambda(X)$ to $T$. At the end of this subsection, we use these operator bounds to prove that $T$ is infinitesimally bounded with respect to the free IBC Hamiltonian $H_0'$.

\begin{prop}
	For every $\epsilon > 0$, $T_\mathrm{od}$ is continuous as an operator from $N^{\frac{1}{4}+\epsilon}\diff\Gamma(\omega)^{\frac{1}{2}}$ to $\mathfrak{H}$. Moreover, $T_{\mathrm{od},\Lambda}$ converges in norm to $\mathrm{T}_\mathrm{od}$ as an operator from $D(N^{\frac{1}{4}+\epsilon}\diff\Gamma(\omega)^{\frac{1}{2}})$ to $\mathfrak{H}$.
	\label{prop:Tod}
\end{prop}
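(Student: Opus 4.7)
The strategy mirrors the one used for \(H_0^p G\) in Proposition \ref{prop:H0GBounded}, but now the operator-valued symbol takes values in \(\mathfrak{B}(L^2(\mathbb{R}^3_{x_n}), L^2(\mathbb{R}^3_{x_j}))\) rather than \(\mathfrak{B}(\mathbb{C}, L^2)\). Fix \(\Psi \equiv \Psi^{(n-1)} \in \mathfrak{H}^{(n-1)}\). The plan is to regard
\begin{align*}
	T_{\mathrm{od},\Lambda}\Psi(X,\hat{x}_n) = \sum_{j=1}^{n-1} [\Op(o_{j,\Lambda})\Psi_X(X,\hat{x}_{j,n},\cdot)](x_j - X)
\end{align*}
as a sum of operator-valued pseudo-differential operators acting in the variables \((X,\hat{x}_{j,n})\) with values in \(\mathfrak{B}(L^2(\mathbb{R}^3_{x_n}), L^2(\mathbb{R}^3_{x_j}))\). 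The translation \(\Psi_X(X,\hat{x}_{j,n},x_n) = \Psi(X,\hat{x}_{j,n},x_n+X)\) is unitary on \(L^2(\mathbb{R}^{3n-3})\), so it leaves the \(\mathfrak{H}\)-norm and the weights \(N\) and \(\diff\Gamma(\omega)\) invariant.

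The first technical step is to bound the Hilbert--Schmidt norm of the operator-valued symbol \(o_{j,\Lambda}(X,Y,\Xi,\hat{x}_{j,n},\hat{\xi}_{j,n})\). Computing the integral kernel (with respect to \(x_j\) and \(x_n\)) and applying Plancherel together with \(|\sigma(\omega^{-1/2})(\cdot,\zeta)| \lesssim \jap{\zeta}^{-1/2}\) and \(|\sigma(H_0^{-1})| \lesssim (\jap{\Xi-\xi_j-\xi_n}^2 + \Omega(\xi))^{-1}\) gives
\begin{align*}
	\|o_{j,\Lambda}(\cdots)\|_{\mathrm{HS}}^2 \leq C \int_{\mathbb{R}^6} \frac{|\widehat{\rho}(\xi_n/\Lambda)\widehat{\rho}(\xi_j/\Lambda)|^2\, \diff \xi_j \diff\xi_n}{\jap{\xi_j}\jap{\xi_n}(\jap{\Xi-\xi_j-\xi_n}^2 + \Omega(\hat{\xi}_{j,n}) + \jap{\xi_j} + \jap{\xi_n})^2}.
\end{align*}
Iterated application of the integral estimate (Corollary \ref{cor:IntegralEstimate}) to the \(\xi_j\) and \(\xi_n\) integrals in turn yields a bound of the form \(f(\Lambda)\,\Omega(\hat{\xi}_{j,n})^{2\epsilon}\) for any \(\epsilon>0\), with analogous bounds for all derivatives in \((X,Y,\Xi,\hat{x}_{j,n},\hat{\xi}_{j,n})\). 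Thus \(o_{j,\Lambda} \in S(f(\Lambda)\Omega(\hat{\xi}_{j,n})^{2\epsilon}, \mathfrak{B}(L^2,L^2))\), and the Calderon--Vaillancourt theorem for operator-valued symbols (Theorem \ref{thm:CalderonVaillancourtOV}) produces the norm estimate
\begin{align*}
	\|\Op(o_{j,\Lambda})\Phi\|_{\mathfrak{H}^{(n-1)}} \leq f(\Lambda) \|\Op(\Omega(\hat{\xi}_{j,n})^{2\epsilon}) \Phi\|_{\mathfrak{H}^{(n-1)}}
\end{align*}
uniformly in \(j\) and \(n\), after we absorb the weight \(\Omega(\hat{\xi}_{j,n})^{2\epsilon}\) into the spectral function of \(\diff\Gamma(\omega)\).

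To handle the summation over \(j\), I will use the Cauchy--Schwarz trick from the proof of Proposition \ref{prop:H0GBounded}: insert weights \(\jap{\xi_i}^{1/2}/\jap{\xi_j}^{1/2}\) for \(i \neq j\), so that
\begin{align*}
	\|T_{\mathrm{od},\Lambda}\Psi\|_{\mathfrak{H}^{(n-1)}}^2 \leq \frac{1}{n-1}\sum_{i,j=1}^{n-1} \int \frac{\jap{\xi_i}}{\jap{\xi_j}}\,|\mathcal{F}[\Op(o_{j,\Lambda})\Psi_X](\Xi,\xi)|^2 \diff\Xi \diff\xi.
\end{align*}
For \(i=j\) the weight disappears and one applies the Calderon--Vaillancourt bound directly. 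For \(i \neq j\), the weight \(\jap{\xi_i}\) commutes with \(\Op(o_{j,\Lambda})\) (since the latter does not act on the \(i\)-th boson variable) and can be absorbed into \(\diff\Gamma(\omega)^{1/2}\Psi\), while \(\jap{\xi_j}^{-1/2}\) is absorbed by redefining \(o_{j,\Lambda}\) through division in the Fourier variable (checking the corresponding symbol remains in the right class via \(\mathcal{F}^{-1}(\mathcal{F}(o_{j,\Lambda})/\jap{\cdot}^{1/2})\)). Summing yields
\begin{align*}
	\|T_{\mathrm{od},\Lambda}\Psi\|_{\mathfrak{H}^{(n-1)}} \leq f(\Lambda)\, (n-1)^{1/4+\epsilon}\, \|\diff\Gamma(\omega)^{1/2}\Psi\|_{\mathfrak{H}^{(n-1)}},
\end{align*}
and squaring and summing over \(n\) produces the asserted bound with \(N^{1/4+\epsilon}\). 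Norm convergence \(T_{\mathrm{od},\Lambda} \to T_{\mathrm{od}}\) follows verbatim: the same estimates apply to \(o_{j,\Lambda}-o_{j,\infty}\), where the factor \(1 - \widehat{\rho}(\xi_n/\Lambda)\widehat{\rho}(\xi_j/\Lambda)\) produces the \(f(\Lambda)\) decay via dominated convergence in the Hilbert--Schmidt integral.

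The main obstacle is the bookkeeping of the power of \(\Omega(\hat{\xi}_{j,n})\) coming out of the Hilbert--Schmidt estimate: one must be careful to extract exactly enough decay to be absorbed into a single factor of \(\diff\Gamma(\omega)^{1/2}\) on the domain side (not more), since every additional power of \(\diff\Gamma(\omega)^{1/2}\) would worsen the hypothesis. The small \(\epsilon\)-loss seen in \eqref{eq:mjOperatorNorm} and \eqref{eq:kjNorm} is precisely why the statement requires \(N^{1/4+\epsilon}\) rather than \(N^{1/4}\); the corresponding \(\Omega(\hat{\xi}_{j,n})^{2\epsilon}\) weight forces the factor \((n-1)^{1/4+\epsilon}\) in the final estimate.
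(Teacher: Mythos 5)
The central estimate in your argument fails: the symbol $o_j=o_{j,\infty}$ is not Hilbert--Schmidt from $L^2(\mathbb{R}^3_{x_n})$ to $L^2(\mathbb{R}^3_{x_j})$, and it is not bounded there with a constant of the required form either. Your integral
\begin{align*}
	\int_{\mathbb{R}^6}\frac{\diff\xi_j\,\diff\xi_n}{\jap{\xi_j}\jap{\xi_n}\,\bigl(\jap{\Xi-\xi_j-\xi_n}^2+\Omega(\xi)\bigr)^{2}}
\end{align*}
diverges: on the region $|\xi_j|\sim|\xi_n|\sim|\Xi-\xi_j-\xi_n|\sim R$ the integrand is of size $R^{-1}\cdot R^{-1}\cdot R^{-4}=R^{-6}$ on a set of measure $\sim R^{6}$, so every dyadic scale contributes a fixed positive amount. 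The proposed iterated use of Corollary \ref{cor:IntegralEstimate} is not legitimate: after integrating out $\xi_n$ one is left with $\int\jap{\xi_j}^{-1}(\jap{\xi_j}+\Omega(\hat{\xi}_{j,n}))^{-1+\epsilon}\diff\xi_j$, which has no remaining quadratic decay in $\xi_j$ and diverges. For finite $\Lambda$ the cut-off renders the integral finite, but the constant then grows with $\Lambda$ rather than producing an $f(\Lambda)\to 0$, so neither the continuity of $T_\mathrm{od}$ nor the convergence $T_{\mathrm{od},\Lambda}\to T_\mathrm{od}$ follows.

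The missing idea is that $o_j$ must be measured in $\mathfrak{B}(D(\omega^{1/2}),L^2)$, with $\omega^{1/2}$ acting in the annihilated variable $x_n$: one applies Cauchy--Schwarz \emph{inside} the $\xi_n$-integral so as to place a factor $\jap{\xi_n}^{1/2}$ on $\widehat{\psi}(\xi_n)$. This is where the $\diff\Gamma(\omega)^{1/2}$ in the hypothesis actually comes from — not, as your closing paragraph suggests, from absorbing the $\jap{\xi_i}$ weights of a summation trick or an $\Omega^{2\epsilon}$ remainder. The companion factor then carries $\jap{\xi_n}^{-2}$, its $\xi_n$-integral converges and Corollary \ref{cor:IntegralEstimate} yields $\Omega(\hat{\xi}_n)^{-1/2+\epsilon}\lesssim (n-1)^{-1/2+2\epsilon}\jap{\xi_j}^{-\epsilon}$; the surviving $\jap{\xi_j}^{-\epsilon}$ makes the $\xi_j$-integral finite. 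Hence $o_{j,\Lambda}-o_j\in S\bigl(f(\Lambda)(n-1)^{-1/4+\epsilon},\mathfrak{B}(D(\omega^{1/2}),L^2)\bigr)$, and the crude bound $\norm{\sum_j A_j\Psi}^2\leq(n-1)\sum_j\norm{A_j\Psi}^2$ — no $\jap{\xi_i}/\jap{\xi_j}$ insertion is needed here, unlike in Proposition \ref{prop:H0GBounded} — gives $(n-1)\cdot(n-1)^{-1/2+2\epsilon}\sum_j\scp{\Psi}{\omega_j\Psi}=\norm{N^{1/4+\epsilon}\diff\Gamma(\omega)^{1/2}\Psi}^2$. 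Your own accounting, deriving $(n-1)^{1/4+\epsilon}$ from an $\Omega(\hat{\xi}_{j,n})^{2\epsilon}$ weight, cannot close: such a weight only yields a factor $N^{2\epsilon}$ and no power of $\diff\Gamma(\omega)$ at all.
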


\begin{proof} 	
	We prove that $\norm{(T_{\mathrm{od},\Lambda}-T_{\mathrm{od}})\Psi}_\mathfrak{H} \leq f(\Lambda) \norm{N^{\frac{1}{4}+\epsilon}\diff\Gamma(\omega)^{\frac{1}{2}}\Psi}_\mathfrak{H}$. 
	
	The action of $T_{\mathrm{od},\Lambda}-T_{\mathrm{od}}$ on the subspace $\mathfrak{H}^{(n-1)}$ is described by a finite sum of operator-valued pseudo-differential operators with symbols $o_{j,\Lambda} - o_j$. From the Cauchy--Schwarz inequality it follows that, for every $\psi \in D(\omega^{1/2})$,
	\begin{align}
		&\norm{(o_{j,\Lambda}-o_j)(X,Y,\Xi,\hat{x}_{j,n},\hat{\xi}_{j,n}) \psi}_{L^2}^2 \\
		&= \int_{\mathbb{R}^3} \bigg| \int_{\mathbb{R}^3}	\frac{\overline{\sigma}(\omega^{-\frac{1}{2}})(X,\xi_{n})\sigma(\omega^{-\frac{1}{2}})(Y,\xi_j) (1 - \widehat{\rho}(\xi_n/\Lambda) \widehat{\rho}(\xi_j/\Lambda)) \widehat{\psi}(\xi_n)}{ 2\sigma(H_0^{-1})(X,\Xi-\xi_j-\xi_n,\hat{x}_{j,n},Y,X,\xi)^{-1} } \diff \xi_n \bigg|^2 \diff \xi_j \notag \\
		&\leq C \int_{\mathbb{R}^3} \left( \int_{\mathbb{R}^3} \frac{\jap{\xi_j}^{-\frac{1}{2}}\jap{\xi_n}^{\frac{1}{2}} |\widehat{\psi}(\xi_n)|}{(\jap{\Xi-\xi_j-\xi_n}^2 + \Omega(\xi))^\frac{1}{2}} \frac{\jap{\xi_n}^{-1} |1-\widehat{\rho}(\xi_n/\Lambda) \widehat{\rho}(\xi_j/\Lambda)|}{(\jap{\Xi-\xi_j-\xi_n}^2 + \Omega(\xi))^\frac{1}{2}} \diff\xi_n \right)^2 \diff \xi_j \notag \\
		&\leq C \int_{\mathbb{R}^3} \left(\int_{\mathbb{R}^3} \frac{\jap{\xi_j}^{-1}\jap{\xi_n} |\widehat{\psi}(\xi_n)|^2}{\jap{\Xi-\xi_j-\xi_n}^2 + \Omega(\xi)} \diff \xi_n \right) \left( \int_{\mathbb{R}^3} \frac{\jap{\xi_n}^{-2} |1-\widehat{\rho}(\xi_n/\Lambda) \widehat{\rho}(\xi_j/\Lambda)|^2}{\jap{\Xi-\xi_j-\xi_n}^2 + \Omega(\xi)} \diff\xi_n \right) \diff \xi_j. \notag
	\end{align}
	The factor $|1-\widehat{\rho}(\xi_n/\Lambda) \widehat{\rho}(\xi_j/\Lambda)|$ is bounded by
	\begin{align}
		|1- \widehat{\rho}(\xi_n/\Lambda) \widehat{\rho}(\xi_j/\Lambda)| &= |1-\widehat{\rho}(\xi_n/\Lambda) + \widehat{\rho}(\xi_n/\Lambda)(1-\widehat{\rho}(\xi_j/\Lambda))| \notag \\
		&\leq 2(|\zeta_\Lambda(\xi_n)| + |\zeta_\Lambda(\xi_j)|),
		\label{eq:InequalityRhoSquared}
	\end{align}
	where $\zeta_{\Lambda}(\xi_j) = 1-\widehat{\rho}(\xi_j/\Lambda)$. We apply Corollary \ref{cor:IntegralEstimate} to estimate the second $\xi_n$-integral:
	\begin{align}
		\int_{\mathbb{R}^3} \frac{\jap{\xi_n}^{-2} (|\zeta_\Lambda(\xi_n)| + |\zeta_\Lambda(\xi_j)|)}{\jap{\Xi-\xi_j-\xi_n}^2 + \Omega(\xi)} \diff\xi_n &\leq 
		C[f(\Lambda) + |\zeta_\Lambda(\xi_j)|]\Omega(\hat{\xi}_n)^{-\frac{1}{2}+\epsilon} \notag \\
		&\leq C[f(\Lambda) + |\zeta_\Lambda(\xi_j)|] (n-1)^{-\frac{1}{2}+2\epsilon} \jap{\xi_j}^{-\epsilon},
	\end{align}
	for every $\epsilon > 0$ sufficiently small; thus,
	\begin{align}
		&\norm{(o_{j,\Lambda}-o_j)(X,Y,\Xi,\hat{x}_{j,n},\hat{\xi}_{j,n}) \psi}_{L^2}^2 \notag \\
		&\ \ \ \leq C (n-1)^{-\frac{1}{2}+2\epsilon} \int_{\mathbb{R}^3} \left( \int_{\mathbb{R}^3} \frac{\jap{\xi_j}^{-1-\epsilon} [f(\Lambda) + |\zeta_\Lambda(\xi_j)|]}{\jap{\Xi-\xi_j-\xi_n}^2 + \Omega(\xi)} \diff \xi_j \right) \jap{\xi_n} |\widehat{\psi}(\xi_n)|^2 \diff \xi_n \notag \\
		&\ \ \ \leq f(\Lambda) (n-1)^{-\frac{1}{2}+2\epsilon} \int_{\mathbb{R}^3} \jap{\xi_n} |\widehat{\psi}(\xi_n)|^2 \diff \xi_n 
		\leq f(\Lambda) (n-1)^{-\frac{1}{2}+2\epsilon} \norm{\omega^{\frac{1}{2}}\psi}^2_{L^2}.
		\label{eq:TodBound}
	\end{align}
	In the last inequality, we used that $\omega$ is elliptic and applied Corollary \ref{cor:RelativeBoundednessPseudor}. The operator norms of the derivatives of $o_{j,\Lambda}-o_j$ are bounded by similar estimates; hence, $o_{j,\Lambda}-o_j \in S(f(\Lambda) (n-1)^{-1/4+\epsilon},\mathfrak{B}(D(\omega^{1/2}),L^2))$. From the Cauchy--Schwarz inequality and the Calderon--Vaillancourt Theorem it follows that
	\begin{align}
		\norm{(T_{\mathrm{od},\Lambda}-T_{\mathrm{od}})\Psi^{(n-1)}}^2_{\mathfrak{H}^{(n-1)}}
		&\leq (n-1) \sum_{j=1}^{n-1} \norm{\Op(o_{j,\Lambda}-o_j)\Psi^{(n-1)}}^2_{\mathfrak{H}^{(n-1)}} \notag \\
		&\leq f(\Lambda) (n-1)^{\frac{1}{2}+2\epsilon} \sum_{j=1}^{n-1} \scp{\Psi^{(n-1)}}{\omega_j \Psi^{(n-1)}}_{\mathfrak{H}^{(n-1)}} \notag \\
		&= f(\Lambda) \norm{N^{\frac{1}{4}+\epsilon}\diff\Gamma(\omega)^{\frac{1}{2}}\Psi^{(n-1)}}^2_{\mathfrak{H}^{(n-1)}},
		\label{eq:MPTod}
	\end{align}
	where $\omega_j = \mathbb{1}^{\otimes(j-1)} \otimes \omega \otimes \mathbb{1}^{\otimes(n-1-j)}$. Summing over $n$ yields the desired inequality.
\end{proof}

\begin{prop}
	For every $\epsilon>0$, $T_\mathrm{d}$ is continuous as an operator from $H_0^\epsilon$ to $\mathfrak{H}$. Moreover, $T_{\mathrm{d},\Lambda}$ converges in norm to $\mathrm{T}_\mathrm{d}$ as an operator from $D(H_0^\epsilon)$ to $\mathfrak{H}$.
	\label{prop:Td}
\end{prop}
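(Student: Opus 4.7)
The strategy is to show that, for every $\epsilon>0$, the scalar symbol difference $d_\Lambda-d$ — where $d$ is defined by formally setting $\widehat{\rho}(\xi_n/\Lambda)=\widehat{\rho}(0)=1$ in \eqref{eq:diagonalSymbol} — lies in $S(f(\Lambda)M^\epsilon,g)$ on each $(n-1)$-boson sector $\mathfrak{H}^{(n-1)}$, where $M(X,\Xi,\hat{x}_n,\hat{\xi}_n)=\jap{\Xi}^2+\Omega(\hat{\xi}_n)$ is the order function of $H_0|_{\mathfrak{H}^{(n-1)}}$ and $g$ is the associated admissible metric from \eqref{eq:MetricH0}. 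Granted this, Corollary \ref{cor:RelativeBoundednessPseudor} combined with the functional calculus of $H_0$ yields $\norm{\Op(d_\Lambda-d)(H_0+1)^{-\epsilon}}_{\mathfrak{H}^{(n-1)}}\leq f(\Lambda)$ uniformly in $n$, and summation over $n$ delivers the claimed norm convergence from $D(H_0^\epsilon)$ to $\mathfrak{H}$; the case $\Lambda=0$ simultaneously yields continuity of $T_\mathrm{d}$ itself.

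To obtain the pointwise bound, write $A=\sigma(H_0^{-1})^{-1}(X,\Xi-\xi_n,\hat{x}_n,X,\xi)$, which to leading order equals $K_0(X,\Xi-\xi_n)+W(X)+\sigma(\omega)(X,\xi_n)+\sum_{j<n}\sigma(\omega)(x_j,\xi_j)$, and $B=K_0(X,\xi_n)+W(X)+\sigma(\omega)(X,\xi_n)$. Ellipticity gives $A\geq c(\jap{\Xi-\xi_n}^2+\jap{\xi_n}+\Omega(\hat{\xi}_n))$ and $B\geq c\jap{\xi_n}^2$, while a direct computation yields
\begin{equation*}
B-A=2\Xi\cdot g(X)\xi_n-\Xi\cdot g(X)\Xi-\textstyle\sum_{j<n}\sigma(\omega)(x_j,\xi_j)+O(1),
\end{equation*}
so $|B-A|\leq C(\jap{\Xi}|\xi_n|+\jap{\Xi}^2+\Omega(\hat{\xi}_n)+1)$. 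Combined with $|\sigma(\omega^{-1/2})(X,\xi_n)|^2\leq C\jap{\xi_n}^{-1}$ and $\bigl||\widehat{\rho}(\xi_n/\Lambda)|^2-1\bigr|\leq 2|\zeta_\Lambda(\xi_n)|$, the integrand defining $d_\Lambda-d$ is pointwise bounded by $C|\zeta_\Lambda(\xi_n)|\jap{\xi_n}^{-1}|B-A|/(AB)$, which decays like $\jap{\xi_n}^{-4}$ for large $|\xi_n|$ uniformly in the remaining parameters. Integrating term by term by a variant of Corollary \ref{cor:IntegralEstimate}, the $\jap{\Xi}^2+1$- and $\jap{\Xi}|\xi_n|$-contributions are $O(1)$ by the standard 3d convolution estimates, whereas the critical $\Omega(\hat{\xi}_n)$-contribution is handled by splitting into $\{|\xi_n|\leq\Omega^{1/2}\}$, where $A\geq c\Omega$ forces the integrand to be $\lesssim\jap{\xi_n}^{-3}$ with integral $\lesssim\log\Omega$, and $\{|\xi_n|>\Omega^{1/2}\}$, where $A\geq c\jap{\xi_n}$ produces an integrand $\lesssim\Omega\jap{\xi_n}^{-4}$ with integral $O(1)$. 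Estimating $\log\Omega\leq C_\epsilon\Omega^\epsilon$ and invoking dominated convergence to extract $f(\Lambda)\to 0$ from the $\zeta_\Lambda$-factor gives $|d_\Lambda-d|\leq f(\Lambda)M^\epsilon$.

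The symbol estimates on derivatives of $d_\Lambda-d$ follow from an identical analysis: derivatives in $\Xi$ and the weighted derivatives $\jap{\xi_j}\partial_{\xi_j}$ ($j<n$) of $1/A$ produce only $O(1)$ factors by ellipticity, and derivatives in $X$ and $\hat{x}_n$ fall only on the coefficients $g^{jk},W,\mu^2\in C_\mathrm{b}^\infty(\mathbb{R}^3)$. The main obstacle is the logarithmic divergence $\log\Omega$ appearing in the $\Omega(\hat{\xi}_n)$-contribution: it is the sole reason the proposition is stated with an $\epsilon$-loss rather than as boundedness on $\mathfrak{H}$, and one must verify that the logarithm remains the dominant term after every differentiation — this is routine but the book-keeping is tedious, since each derivative of $A^{-1}$ shifts the decay exponent in $\xi_n$ by at most one while leaving the splitting argument on $\{|\xi_n|\lessgtr\Omega^{1/2}\}$ intact.
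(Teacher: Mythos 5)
Your overall strategy is the same as the paper's: reduce everything to the symbol estimate $d_\Lambda - d \in S(f(\Lambda)(M^\epsilon+1))$ on each boson sector, invoke the Calderon--Vaillancourt theorem (Corollary \ref{cor:RelativeBoundednessPseudor}), and sum over $n$; the decomposition into $A$ and $B$ and the expansion of $B-A$ also match the paper (which uses Lemma \ref{lem:SymbolH0-1} for the $S(\jap{\Xi-\xi_n})$ remainder that you record as $O(1)$ --- harmless, since it is dominated by your listed terms). However, two of your concrete integral estimates are wrong as stated. First, the $\jap{\Xi}^2$- and $\jap{\Xi}|\xi_n|$-contributions are \emph{not} $O(1)$: already for $\Xi$ large, $\int_{1\leq|\xi_n|\leq|\Xi|/2} |\Xi|^2\,(\jap{\Xi-\xi_n}^2+\Omega(\xi))^{-1}\jap{\xi_n}^{-3}\diff\xi_n \gtrsim \log|\Xi|$, so these terms also force an $\epsilon$-loss (the paper obtains $f(\Lambda)|\Xi|^{2\epsilon}$ for them via Lemma \ref{lem:IntegralEstimate}). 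Your diagnosis that the $\Omega(\hat{\xi}_n)$-term is the sole source of the $\epsilon$ in the statement is therefore incorrect, though this does not damage the final bound, since $\log\jap{\Xi}\leq C_\epsilon M^\epsilon$.

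The second error is more serious. In the outer region $\{|\xi_n|>\Omega^{1/2}\}$ you bound the integrand by $\Omega\jap{\xi_n}^{-4}$ and claim the integral is $O(1)$; in three dimensions $\int_{|\xi_n|>\Omega^{1/2}}\jap{\xi_n}^{-4}\diff\xi_n \sim \Omega^{-1/2}$, so your bound actually yields $\Omega^{1/2}$, not $O(1)$. Taken at face value this gives only $|d_\Lambda-d|\lesssim M^{1/2}$, which proves relative boundedness with respect to $H_0^{1/2}$ but not the stated proposition for every $\epsilon>0$. The step is fixable without changing your framework: either interpolate the denominator as $A\geq c\,\Omega^{1-2\epsilon}\jap{\xi_n}^{2\epsilon}$ and integrate over all of $\mathbb{R}^3$ at once (no splitting, giving $C_\epsilon\Omega^{2\epsilon}$ directly), or retain the $\jap{\Xi-\xi_n}^2$ factor and apply Corollary \ref{cor:IntegralEstimate} as the paper does; the latter route also supplies the uniformity in $(\Xi,\hat{\xi}_n)$ of the factor $f(\Lambda)\to 0$, which your appeal to ``dominated convergence'' glosses over --- pointwise convergence of $\zeta_\Lambda$ for fixed parameters is not enough for a norm statement.
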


\begin{proof}
	We prove that $\norm{(T_{\mathrm{d},\Lambda} - T_{\mathrm{d}})\Psi}_\mathfrak{H} \leq f(\Lambda) \norm{H_0^\epsilon \Psi}_\mathfrak{H}$ for every $\Psi \in D(H_0^\epsilon)$. 
	
	The action of $T_{\mathrm{d},\Lambda} - T_{\mathrm{d}}$ on the subspace $\mathfrak{H}^{(n-1)}$ is described by the pseudo-differential operator $\Op(d_\Lambda - d)$ with
	\begin{align}
	 	&(d_\Lambda-d)(X,\Xi,\hat{x}_n,\hat{\xi}_n) \notag \\
	 	&\ \ \ = \frac{1}{2} \int_{\mathbb{R}^3} \frac{|\sigma(\omega^{-\frac{1}{2}})(X,\xi_n)|^2 (1-|\widehat{\rho}(\xi_n/\Lambda)|^2) }{\sigma(H_0^{-1})(X,\Xi-\xi_n,\hat{x}_{n},X,\xi)^{-1}[\sigma(K)(X,\xi_{n}) + \sigma(\omega)(X,\xi_{n})]} \notag \\ 
	 	&\ \ \ \ \ \ \ \ \ [\sigma(K)(X,\xi_{n}) + \sigma(\omega)(X,\xi_{n}) -\sigma(H_0^{-1})(X,\Xi-\xi_n,\hat{x}_{n},X,\xi)^{-1}] \diff \xi_n.
	\end{align}
 	From Lemma \ref{lem:SymbolH0-1} below it follows that $\sigma(H_0^{-1})^{-1} = \sigma(H_0) + S(\jap{\Xi})$; hence,
 	\begin{align}
 		&|\sigma(K)(X,\xi_{n}) + \sigma(\omega)(X,\xi_{n}) - \sigma(H_0^{-1})(X,\Xi-\xi_n,\hat{x}_{n},X,\xi)^{-1}| \notag \\ 
 		&= |-\Xi\cdot g(X)\Xi + \xi_n \cdot g(X)\Xi + \Xi \cdot g(X) \xi_n + S(\jap{\xi_n}) - \sum_{j=1}^{n-1} \sigma(\omega)(x_j,\xi_j) + S(\jap{\Xi-\xi_n})| \notag \\
 		&\leq C [|\Xi|^2 + 2|\xi_n||\Xi| + \Omega(\xi) + \jap{\Xi-\xi_n}]
 	\end{align}
 	and
 	\begin{align}
 		|(d_\Lambda-d)(X,\Xi,\hat{x}_n,\hat{\xi}_n)| \leq C \int_{\mathbb{R}^3} \frac{|\Xi|^2 + 2|\xi_n||\Xi| + \Omega(\xi) + \jap{\Xi-\xi_n}}{(\jap{\Xi-\xi_n}^2 + \Omega(\xi))\jap{\xi_n}^3} |\zeta_\Lambda(\xi_n)| \diff\xi_n.
 		\label{eq:EstimateD}
 	\end{align}
	For every $0<\epsilon<1$, $\jap{\xi_n}^2 = 1 + |\xi_n|^2 \geq |\xi_n|^{2(1-\epsilon)}$. We use this inequality to bound the first two summands in \eqref{eq:EstimateD} by
 	\begin{align}
 		\int_{\mathbb{R}^3} \frac{(|\Xi|^2 + 2|\xi_n||\Xi|) |\zeta_\Lambda(\xi_n)|}{(\jap{\Xi-\xi_n}^2 + \Omega(\xi))\jap{\xi_n}^3} \diff\xi_n 
 		&\leq C \int_{\mathbb{R}^3} \frac{(|\Xi|^2 + 2|\xi_n||\Xi|)|\zeta_\Lambda(\xi_n)|}{(|\Xi-\xi_n|^2 + 1)|\xi_n|^{3-\epsilon}} \diff \xi_n \notag \\
 		&\leq f(\Lambda) \left( \frac{|\Xi|^2}{|\Xi|^{2-2\epsilon}} + \frac{2|\Xi|}{|\Xi|^{1-2\epsilon}} \right) \leq f(\Lambda) |\Xi|^{2\epsilon},
 	\end{align}
 	for every $\epsilon >0$ sufficiently small. In the second inequality, we applied Lemma \ref{lem:IntegralEstimate}.
 	
 	To bound the third summand in \eqref{eq:EstimateD}, we use the same inequality for $\jap{\xi_n}^2$ and apply Corollary \ref{cor:IntegralEstimate}:
 	\begin{align}
 		\int_{\mathbb{R}^3} \frac{\Omega(\xi) |\zeta_\Lambda(\xi_n)|}{(\jap{\Xi-\xi_n}^2 + \Omega(\xi))\jap{\xi_n}^3} \diff\xi_n
 		\leq f(\Lambda) \Omega(\hat{\xi}_n)^\epsilon.
 	\end{align}
 	Lastly, the fourth summand in \eqref{eq:EstimateD} is bounded by $f(\Lambda)$:
 	\begin{align}
 		\int_{\mathbb{R}^3} \frac{\jap{\Xi-\xi_n} |\zeta_\Lambda(\xi_n)|}{(\jap{\Xi-\xi_n}^2 + \Omega(\xi))\jap{\xi_n}^3} \diff\xi_n \leq f(\Lambda).
 	\end{align} 	
 	It follows that
 	\begin{align}
 		|(d_\Lambda - d)(X,\Xi,\hat{x}_n,\hat{\xi}_n)| &\leq f(\Lambda) (|\Xi|^{2\epsilon} + \Omega(\hat{\xi}_n)^\epsilon + 1) \notag \\
 		&\leq f(\Lambda) [(\jap{\Xi}^2 + \Omega(\hat{\xi}_n))^\epsilon + 1] = f(\Lambda) [M(\Xi,\hat{\xi}_n)^\epsilon + 1].
 	\end{align}
 	The operator norms of the derivatives of $d_\Lambda - d$ are bounded by similar estimates; hence, $d_\Lambda - d \in S(M^\epsilon + 1)$. We conclude that, for every $\Psi^{(n-1)} \in \mathfrak{H}^{(n-1)} \cap D(H_0^\epsilon)$,
 	\begin{align}
 		\norm{(T_{\mathrm{d},\Lambda} - T_{\mathrm{d}})\Psi^{(n-1)}}_{\mathfrak{H}^{(n-1)}} &\leq f(\Lambda) \norm{ (M(\Xi,\hat{\xi}_n)^\epsilon +1) \widehat{\Psi}^{(n-1)}}_{\mathfrak{H}^{(n-1)}} \notag \\
 		&\leq f(\Lambda) \norm{(H_0^\epsilon+1) \Psi^{(n-1)}}_{\mathfrak{H}^{(n-1)}}.
 	\end{align}
 	Summing over $n$ yields the desired inequality.
\end{proof}

\begin{lem}
	Let $\sigma(H_0)$ be the symbol of $H_0 \vert_{\mathfrak{H}^{(n)}}$ and $\sigma(H_0^{-1})$ the symbol of $H_0^{-1} \vert_{\mathfrak{H}^{(n)}}$. Then $\sigma(H_0^{-1}) = \sigma(H_0)^{-1} + S(M^{-2}\jap{\Xi})$.
	\label{lem:SymbolH0-1}
\end{lem}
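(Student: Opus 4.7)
My plan is to construct the Moyal inverse of $\sigma(H_0)$ by the procedure in Theorem~\ref{thm:MoyalInverse}, then read off the first-order correction to $\sigma(H_0)^{-1}$. Because $\sigma(H_0) \in S(M,g)$ is elliptic (it grows like $M = \jap{\Xi}^2 + \Omega$), Theorem~\ref{thm:MoyalInverse} produces a symbol $b \in S(M^{-1},g)$, unique modulo $S^{-\infty}$, with $\sigma(H_0)\#b \sim 1$. Since $H_0$ is boundedly invertible on $\mathfrak{H}^{(n)}$ and its inverse is a pseudo-differential operator of order $-1$, uniqueness forces $b = \sigma(H_0^{-1})$ modulo a smoothing symbol. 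Hence it suffices to control $b - \sigma(H_0)^{-1}$.

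Following the proof of Theorem~\ref{thm:MoyalInverse}, set $r = 1 - \sigma(H_0)\#\sigma(H_0)^{-1}$. Then
\begin{align*}
  b \sim \sigma(H_0)^{-1}\#(1 + r + r\#r + \cdots),
\end{align*}
so $b - \sigma(H_0)^{-1} \sim \sigma(H_0)^{-1}\#(r + r\#r + \cdots)$. The problem reduces to showing $r \in S(M^{-1}\jap{\Xi},g)$: the leading correction $\sigma(H_0)^{-1}\#r$ then lies in $S(M^{-2}\jap{\Xi},g)$, and the iterated terms $\sigma(H_0)^{-1}\#r^{\#k}$ lie in $S(M^{-k-1}\jap{\Xi}^k,g) \subset S(M^{-2}\jap{\Xi},g)$ (using $\jap{\Xi}^{k-1} \leq M^{k-1}$, which follows from $M\geq \jap{\Xi}^2$).

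The estimate on $r$ exploits the key structural feature that $\sigma(H_0)(X,\Xi,x,\xi) = \sigma_K(X,\Xi) + \sigma_F(x,\xi)$ splits into a particle part $\sigma_K = \Xi\cdot g(X)\Xi + W(X) \in S(\jap{\Xi}^2,g)$ and a field part $\sigma_F = \sum_{j=1}^n \sigma(\omega)(x_j,\xi_j)$. This splitting annihilates every mixed momentum derivative in the asymptotic Moyal expansion
\begin{align*}
  r = -\sum_{|\alpha|\geq 1} \tfrac{1}{\alpha!}\, \partial_\xi^\alpha \sigma(H_0)\, D_x^\alpha \sigma(H_0)^{-1},
\end{align*}
so only derivatives entirely in $\Xi$ (hitting $\sigma_K$) or entirely in a single $\xi_i$ (hitting $\sigma(\omega)(x_i,\xi_i)$) contribute. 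The dominant $|\alpha|=1$ term in $\Xi$ is $\partial_\Xi\sigma_K \cdot D_X\sigma(H_0)^{-1} = -2g(X)\Xi \cdot \sigma(H_0)^{-2} D_X\sigma_K$, of order $\jap{\Xi}\cdot M^{-2}\jap{\Xi}^2 = M^{-2}\jap{\Xi}^3 \leq M^{-1}\jap{\Xi}$. The $|\alpha|=1$ field contribution $\sum_i \partial_{\xi_i}\sigma(\omega) \cdot D_{x_i}\sigma(H_0)^{-1}$ is of order $M^{-2}\Omega \leq M^{-1}$. Higher-order terms acquire further factors of $\jap{\Xi}^{-1}$ or $\jap{\xi_i}^{-1}$ from the metric weights and therefore live in strictly smaller classes. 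Summing, $r \in S(M^{-1}\jap{\Xi},g)$ as required.

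The only mildly subtle point is tracking the order-function arithmetic around the competition between $\jap{\Xi}^2$ and $\Omega$ inside $M$; once one has internalised that $M\geq \jap{\Xi}^2$ (which swallows the dominant $\Xi$-correction into $S(M^{-1}\jap{\Xi},g)$) and $\Omega \leq M$ (which handles the $\xi_i$-corrections), the remaining estimates are routine applications of the calculus already developed in Chapter~\ref{ch:Pseudors}.
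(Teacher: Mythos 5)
Your proposal is correct and follows essentially the same route as the paper: both arguments rest on computing the first-order terms of the Moyal expansion of $\sigma(H_0)\#\sigma(H_0)^{-1}$, observing that the splitting $\sigma(H_0) = \sigma(K)(X,\Xi) + \sum_j \sigma(\omega)(x_j,\xi_j)$ kills mixed derivatives, and then using $\jap{\Xi}^2 \leq M$ and $\Omega \leq M$ to place the dominant correction $\partial_\Xi\sigma(K)\,D_X\sigma(H_0)^{-1}$ in $S(M^{-1}\jap{\Xi},g)$ and the field terms in $S(M^{-1},g)$. The only difference is in the final assembly: you sum the full parametrix series and invoke uniqueness of the Moyal inverse, whereas the paper reaches the conclusion in one step from the exact identity $\sigma(H_0^{-1})\#\sigma(H_0) = 1$ via $\sigma(H_0)^{-1} = \sigma(H_0^{-1})\#\sigma(H_0)\#\sigma(H_0)^{-1} = \sigma(H_0^{-1}) + S(M^{-2}\jap{\Xi})$.
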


\begin{proof}
	We assume that the symbols are in the standard quantisation. The proof for arbitrary quantisations is similar. The symbol $\sigma(H_0)$ is given by
	\begin{align}
		\sigma(H_0)(X,\Xi,x,\xi) = \sigma(K)(X,\Xi) + \sum_{j=1}^{n} \sigma(\omega)(x_j,\xi_j)
	\end{align}
	with 
	\begin{align}
		\sigma(K)(X,\Xi) &= \Xi \cdot g(X)\Xi + S^1, \\
		\sigma(\omega)(x,\xi) &= \sqrt{\xi \cdot g(x) \xi + m^2} + S^0.
	\end{align}
	From the functional calculus it follows that the symbol $\sigma(H_0^{-1})$ in leading order is equal to $\sigma(H_0)^{-1}$. To obtain the order of the remainder, we compute the Moyal product
	\begin{align}
		\sigma(H_0) \# \sigma(H_0)^{-1} = 1 + \partial_\Xi \sigma(H_0) D_X \sigma(H_0)^{-1} + \sum_{j=1}^{n} \partial_{\xi_j} \sigma(H_0) D_{x_j} \sigma(H_0)^{-1} + R,
	\end{align}
	where $R$ is a symbol of lower order than the preceding terms. Clearly, $\partial_\Xi \sigma(H_0) = \partial_\Xi \sigma(K) \in S(\jap{\Xi}) \subset S(M\jap{\Xi}^{-1})$ and $D_X \sigma(H_0)^{-1} = \sigma(H_0)^{-2} D_X \sigma(K) \in S(M^{-2}\jap{\Xi}^2)$. Thus, $\partial_\Xi \sigma(H_0) D_X \sigma(H_0)^{-1} \in S(M^{-1}\jap{\Xi})$. 
	
	Furthermore, $\partial_{\xi_j} \sigma(H_0) D_{x_j} \sigma(H_0)^{-1} \in S(M^{-1}) \subset S(M^{-1}\jap{\Xi})$. This proves the claim because $\sigma(H_0)^{-1} = \sigma(H_0^{-1}) \# \sigma(H_0) \# \sigma(H_0)^{-1} = \sigma(H_0^{-1}) + S(M^{-2}\jap{\Xi})$.
\end{proof}

\begin{prop}
	The operator $T_R$ is continuous as an operator from $D(N^{1/2})$ to $\mathfrak{H}$. Moreover, $T_{R,\Lambda}$ converges in norm to $\mathrm{T}_R$ as an operator from $D(\diff\Gamma(\omega)^\epsilon N^{1/2})$ to $\mathfrak{H}$ for every $\epsilon > 0$.
	\label{prop:TR}
\end{prop}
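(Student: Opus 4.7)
The plan is to treat each of the four constituents $T_{\tilde R}$, $T_{R_1}$, $T_{R_\mathrm{od}}$, $T_{R_\mathrm{d}}$ of $T_R$ separately, reducing each to compositions and operator-valued pseudo-differential operators that are already controlled in the preceding sections. The unifying principle is that every remainder carries either one additional order of decay in the boson momenta (a power of $\jap{\xi_j}^{-1}$ coming from a $D_{\xi_j}$-derivative produced by the integration by parts that turned the Taylor-remainder factor $(x_j-Y)$ or $(X-y_n)$ into a momentum-space derivative) or a smoother form factor $\tilde u_{\Lambda,X}$ (whose symbol lies in $S^{-3/2}$ rather than $S^{-1/2}$). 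This extra smoothness is precisely what permits the weakening of the domain assumption from $D(N^{1/4+\epsilon}\diff\Gamma(\omega)^{1/2})$ (used in Proposition \ref{prop:Tod}) or $D(H_0^\epsilon)$ (used in Proposition \ref{prop:Td}) down to $D((N+1)^{1/2})$.

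For $T_{\tilde R}$ I would use the factorisation $T_{\tilde R,\Lambda} = -a(\tilde u_{\Lambda,X}) G_\Lambda + a(u_{\Lambda,X})H_0^{-1}a^*(\tilde u_{\Lambda,X})$. Recalling from Section \ref{sec:PseudorsInNelsonModelI} that $\tilde u_{\Lambda,X} = \Op(D_\xi\cdot r)\rho_{\Lambda,X} + \Lambda^{-1}\Op(r)\tau_{\Lambda,X}$ with $r\in[S^{-1/2}]^3$, the quantity $\sup_X\norm{\omega^{-1/2}\tilde u_{\Lambda,X}}_\mathfrak{h}$ is uniformly bounded in $\Lambda$ and converges to $\sup_X\norm{\omega^{-1/2}\tilde u_X}_\mathfrak{h}$. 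Hence Lemma \ref{lem:EstimatesAnnihilationCreation} controls $a(\tilde u_{\Lambda,X})$ by a constant times $\norm{\diff\Gamma(\omega)^{1/2}\,\cdot\,}_\mathfrak{H}$, and Proposition \ref{prop:GmapsDomainToItself} then yields $\norm{a(\tilde u_{\Lambda,X}) G_\Lambda\Psi}_\mathfrak{H} \leq C\norm{(N+1)^{1/2}\Psi}_\mathfrak{H}$. The second summand is handled symmetrically by moving $a(u_{\Lambda,X})H_0^{-1}$ into $G_\Lambda^*$.

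For $T_{R_1}$, $T_{R_\mathrm{od}}$, and $T_{R_\mathrm{d}}$ I would re-enact the operator-valued pseudo-differential calculus from the proofs of Propositions \ref{prop:Tod} and \ref{prop:Td}, exploiting the additional decay. Concretely, the three summands of \eqref{eq:Rest1} take the form $a(f) \Op(R_1) a^*(g)$ with $R_1\in S(M^{-1},g)$ and form factors $f,g$ whose relevant symbols lie in $S^{-3/2}$; the Calderon--Vaillancourt theorem applied to the corresponding operator-valued symbol, together with Corollary \ref{cor:RelativeBoundednessPseudor}, produces the required $(N+1)^{1/2}$ bound. For $T_{R_\mathrm{od}}$ and $T_{R_\mathrm{d}}$ the extra $D_{\xi_j}$- or $D_{\xi_n}$-derivatives improve the integrands in \eqref{eq:TodBound} and \eqref{eq:EstimateD} by a factor of $\jap{\xi_j}^{-1}$ respectively $\jap{\xi_n}^{-1}$, turning the operator-valued symbol class $S(f(\Lambda)\,(n-1)^{-1/4+\epsilon},\mathfrak{B}(D(\omega^{1/2}),L^2))$ into $S(f(\Lambda)\,(n-1)^{-1/2},\mathfrak{B}(\mathbb{C},L^2))$ for $T_{R_\mathrm{od}}$, and the class $S(M^\epsilon+1)$ into $S(1)$ for $T_{R_\mathrm{d}}$. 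Summation over $n$ then gives the $(N+1)^{1/2}$ bound without any $\diff\Gamma(\omega)^{1/2}$ or $H_0^\epsilon$. Norm convergence $T_{R,\Lambda}\to T_R$ is obtained by inserting the difference factors $\zeta_\Lambda(\xi_n)$, $\zeta_\Lambda(\xi_j)$, or $1-\widehat\rho(\xi_n/\Lambda)\widehat\rho(\xi_j/\Lambda)$ (bounded as in \eqref{eq:InequalityRhoSquared}) into each of the above bounds and invoking Corollary \ref{cor:IntegralEstimate}, which produces the prefactor $f(\Lambda)$ at the price of an additional $\diff\Gamma(\omega)^{\epsilon}$ on the right-hand side, exactly as in Proposition \ref{prop:GmapsDomainToItself}.

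The main obstacle will be the book-keeping. Although the underlying mechanism is clear, the four remainder pieces have genuinely different algebraic structures: some are triple compositions of annihilation/creation operators with a pseudo-differential middle piece $\Op(R_\bullet)$, others (like $T_{R_\mathrm{d}}$) arise as single pseudo-differential operators on the joint phase-space variables. For each term the correct operator-valued symbol class must be identified, its semi-norms estimated uniformly in $X$, $\Lambda$, and the boson number $n$, and the dependence on the cut-off tracked explicitly so as to yield a common factor $f(\Lambda)\to 0$. The most delicate of these is $T_{R_1}$, where one must additionally check that after the integration by parts in \eqref{eq:Rest1} the combined symbol $\overline\sigma(\omega^{-1/2})\jap{\xi_n}^{-1}\cdot\jap{\xi_n}D_{\xi_n}\cdot R_1$ still lies in a favourable subclass of $S(M^{-1},g)$ uniformly in $X$, which is precisely what lets the $(N+1)^{1/2}$ estimate close.
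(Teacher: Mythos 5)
Your plan coincides with the paper's proof: the paper likewise treats the four pieces $T_{\tilde R}$, $T_{R_1}$, $T_{R_{\mathrm{od}}}$, $T_{R_{\mathrm{d}}}$ separately, disposes of $T_{\tilde R}$ and $T_{R_1}$ via Lemma \ref{lem:EstimatesAnnihilationCreation} together with the mapping properties of $G$ (Proposition \ref{prop:GmapsDomainToItself}), and re-runs the operator-valued Calderon--Vaillancourt estimates of Propositions \ref{prop:Tod} and \ref{prop:Td} for the off-diagonal and diagonal remainders, using precisely the extra $\jap{\xi_j}^{-1}$-decay produced by integrating the Taylor-remainder factor by parts and Corollary \ref{cor:IntegralEstimate} to extract the $f(\Lambda)$ prefactor. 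The only cosmetic discrepancy is that the off-diagonal remainder symbols act on $L^2(\mathbb{R}^3)$, so they belong to $S(f(\Lambda),\mathfrak{B}(L^2,L^2))$ rather than $\mathfrak{B}(\mathbb{C},L^2)$; this changes nothing in the argument.
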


\begin{proof}
	We prove that $\norm{(T_{R,\Lambda} - T_R) \Psi}_\mathfrak{H} \leq f(\Lambda)\norm{N^{1/2}\diff\Gamma(\omega)^\epsilon \Psi}_\mathfrak{H}$ by bounding all summands of $T_{R,\Lambda} - T_R$ separately.
	
	\item $\mathbf{T_{\tilde{R}} - T_{\tilde{R},\Lambda}}$ and $\mathbf{T_{R_1} - T_{R_1,\Lambda}}$: All the terms in this part of $T_{R,\Lambda} - T_R$ have a similar structure and are treated similarly as the remainders we encountered in Section~\ref{sec:MappingPropertiesOfG}. Exemplary, we prove the desired inequality for
	\begin{align}
		&a(\tilde{u}_{\Lambda,X}) H_0^{-1} a^*(u_{\Lambda,X}) - a(\tilde{u}_{X}) H_0^{-1} a^*(u_X) \notag \\ 
		&\ \ \ = a(\tilde{u}_{\Lambda,X} - \tilde{u}_X) H_0^{-1} a^*(u_{\Lambda,X}) - a(\tilde{u}_X) H_0^{-1} a^*(u_X - u_{\Lambda,X}).
	\end{align}
	The first summand is a relatively bounded operator with respect to $N^{1/2}$ because, by Proposition \ref{prop:GmapsDomainToItself}, for every $\Psi \in D(N^{1/2})$,
	\begin{align}
		\norm{a(\tilde{u}_{\Lambda,X} - \tilde{u}_X) H_0^{-1} a^*(u_{\Lambda,X})\Psi}_\mathfrak{H} &\leq \norm{\omega^{-\frac{1}{2}}(\tilde{u}_{\Lambda,X} - \tilde{u}_X)}_\mathfrak{h} \norm{\diff\Gamma(\omega)^{\frac{1}{2}}H_0^{-1}a^*(u_{\Lambda,X})\Psi }_\mathfrak{H} \notag \\
		&\leq C\norm{\rho_\Lambda - \delta}_{H^{-2}} \norm{(N+1)^{\frac{1}{2}}\Psi}_\mathfrak{H}.
	\end{align}	
	Similarly, the second summand is relatively bounded with respect to $N^{1/2}\diff\Gamma(\omega)^\epsilon$ because, for every $\Psi \in D(N^{1/2}\diff\Gamma(\omega)^\epsilon)$,
	\begin{align}
		\norm{a(\tilde{u}_X) H_0^{-1} a^*(u_X - u_{\Lambda,X})\Psi}_{\mathfrak{H}} &\leq \norm{\omega^{-\frac{1}{2}}\tilde{u}_X}_{\mathfrak{h}} \norm{\diff\Gamma(\omega)^{\frac{1}{2}} H_0^{-1} a^*(u_X - u_{\Lambda,X})}_{\mathfrak{H}} \notag \\
		&\leq f(\Lambda) \norm{(N+1)^\frac{1}{2}\diff\Gamma(\omega)^\epsilon \Psi}_{\mathfrak{H}}
	\end{align}
	according to Proposition \ref{prop:GmapsDomainToItself}. 	
		
	\item $\mathbf{T_{R_{\mathrm{od}}} - T_{R_{\mathrm{od}},\Lambda}}$: The first off-diagonal remainder is treated like the leading order term we discussed in Proposition \ref{prop:Tod}. The additional derivative in \eqref{eq:Rod1st} ensures that the symbol defining the pseudo-differential operator $T_{R_{\mathrm{od},1}} - T_{R_{\mathrm{od},1},\Lambda}$ is an element of $S(f(\Lambda),\mathfrak{B}(L^2,L^2))$. 
	Following the same steps as in the proof of Proposition \ref{prop:Tod}, we confirm that the first off-diagonal remainder satisfies the desired inequality.
		
	The second off-diagonal remainder \eqref{eq:Rod2nd} is a finite sum of operator-valued pseudo-differential operators as in \eqref{eq:TodLambda} with symbols
	\begin{align}
		&q_{j,\Lambda}(X,Y,\Xi, \hat{x}_{j,n}, \hat{\xi}_{j,n}) \psi(x_j) = \frac{1}{2}\sum_{k,l=1}^3 \int_{\mathbb{R}^6} \e^{\I\scp{x_j}{\xi_j}} \widehat{\rho}(\xi_n/\Lambda) \widehat{\psi}(\xi_n) \notag \\ 
		&\ \ \ \bigg\{ D_{\xi_j}^{(k)} D_{\xi_j}^{(l)} h_{j,\Lambda}^{kl}(X,Y,\Xi-\xi_j-\xi_n,\hat{x}_n,X,\xi) \widehat{\rho}(\xi_j/\Lambda) \notag \\
		&\ \ \ + \frac{2}{\Lambda} D_{\xi_j}^{(k)} h_{j,\Lambda}^{kl}(X,Y,\Xi-\xi_j-\xi_n,\hat{x}_n,X,\xi) D_{\xi_j}^{(l)} \widehat{\rho}(\xi_j/\Lambda)   \notag \\
		&\ \ \ + \frac{1}{\Lambda^2} h_{j,\Lambda}^{kl}(X,Y,\Xi-\xi_j-\xi_n,\hat{x}_n,X,\xi) D_{\xi_j}^{(k)} D_{\xi_j}^{(l)} \widehat{\rho}(\xi_j/\Lambda) \bigg \} \diff \xi_n \diff \xi_j,
		\label{eq:Rest2ndOd}
	\end{align}
	where
	\begin{align}
		&h_{j,\Lambda}^{kl}(X,Y,\Xi,\hat{x}_n,X,\xi) = \sigma(\omega^{-\frac{1}{2}})(Y,\xi_j) R_{\mathrm{od},2}^{kl}(X,\Xi,\hat{x}_{n},X,\xi) \overline{\sigma}(\omega^{-\frac{1}{2}})(X,\xi_{n}).
	\end{align}
	We numerate the three summands in \eqref{eq:Rest2ndOd} by $q_{j,\Lambda}^{(0)}$, $q_{j,\Lambda}^{(1)}$, and $q_{j,\Lambda}^{(2)}$. From the Cauchy--Schwarz inequality and \eqref{eq:InequalityRhoSquared} it follows that
	\begin{align}
		&|(q_{j,\Lambda}^{(0)}-q_j^{(0)})(X,Y,\Xi, \hat{x}_{j,n}, \hat{\xi}_{j,n}) \psi(x_j)| \notag \\ 
		&\leq C \int_{\mathbb{R}^3} \int_{\mathbb{R}^3} \frac{\jap{\xi_n}^{-\frac{1}{2}} \jap{\xi_j}^{-\frac{5}{2}}  |1-\widehat{\rho}(\xi_j/\Lambda)\widehat{\rho}(\xi_n/\Lambda)|}{\jap{\Xi-\xi_j-\xi_n}^2 + \Omega(\xi)} |\widehat{\psi}(\xi_n)| \diff \xi_n \diff \xi_j \notag \\ 
		&\leq C \left( \int_{\mathbb{R}^3} \bigg| \int_{\mathbb{R}^3} \frac{\jap{\xi_n}^{-\frac{1}{2}} \jap{\xi_j}^{-\frac{5}{2}} (|\zeta_\Lambda(\xi_j)| + |\zeta_\Lambda(\xi_n)|)}{\jap{\Xi-\xi_j-\xi_n}^2 + \Omega(\xi)} \diff \xi_j \bigg|^2 \diff \xi_n \right)^{1/2} \norm{\psi}_{L^2}.
	\end{align}
	The prefactor in front of $\norm{\psi}_{L^2}$ is bounded by $f(\Lambda)$:
	\begin{align}
		&\int_{\mathbb{R}^3} \bigg| \int_{\mathbb{R}^3} \frac{\jap{\xi_n}^{-\frac{1}{2}} \jap{\xi_j}^{-\frac{5}{2}} (|\zeta_\Lambda(\xi_n)| + |\zeta_\Lambda(\xi_j)|)}{\jap{\Xi-\xi_j-\xi_n}^2 + \Omega(\xi)} \diff \xi_j \bigg|^2 \diff \xi_n \notag \\
		&\ \ \ \leq \int_{\mathbb{R}^3} \left( \int_{\mathbb{R}^3} \frac{\jap{\xi_n}^{-1}\jap{\xi_j}^{-3}}{\jap{\Xi-\xi_j-\xi_n}^2 + \Omega(\xi)} \diff \xi_j \right) \left( \int_{\mathbb{R}^3} \frac{\jap{\xi_j}^{-2} (|\zeta_\Lambda(\xi_n)| + |\zeta_\Lambda(\xi_j)|)}{\jap{\Xi-\xi_j-\xi_n}^2 + \Omega(\xi)} \diff \xi_j \right) \diff \xi_n \notag \\
		&\ \ \ \leq C \int_{\mathbb{R}^3} \int_{\mathbb{R}^3} \frac{\jap{\xi_n}^{-1}\jap{\xi_j}^{-3} (f(\Lambda) \Omega(\hat{\xi}_j)^\epsilon + |\zeta_\Lambda(\xi_n)|)\Omega(\hat{\xi}_j)^{-\frac{1}{2}}}{\jap{\Xi-\xi_j-\xi_n}^2 + \Omega(\xi)} \diff \xi_j \diff \xi_n \leq f(\Lambda).
	\end{align}
	Furthermore, for $\beta = 1$ or $\beta = 2$,
	\begin{align}
		|(q_{j,\Lambda}^{(\beta)}-q_j^{(\beta)})(X,Y,\Xi, \hat{x}_{j,n}, \hat{\xi}_{j,n}) \psi(x_j)| = |q_{j,\Lambda}^{(\beta)}(X,Y,\Xi, \hat{x}_{j,n}, \hat{\xi}_{j,n}) \psi(x_j)| \notag \\ 
		\leq \frac{1}{\Lambda^\beta} \int_{\mathbb{R}^3} \bigg| \int_{\mathbb{R}^3} \frac{\jap{\xi_n}^{-\frac{1}{2}} \jap{\xi_j}^{-\frac{5}{2}+\beta} |\widehat{\tau}_\beta(\xi_j/\Lambda)||\widehat{\rho}(\xi_n/\Lambda)| |\widehat{\psi}(\xi_n)|}{\jap{\Xi-\xi_j-\xi_n}^2 + \Omega(\xi)} \diff \xi_j \bigg|^2 \diff \xi_n,
		\label{eq:qLambda12}
	\end{align}
	where $\tau_1(x) = x\rho(x)$ and $\tau_2(x) = |x|^2 \rho(x)$. Because $\tau_1$ and $\tau_2$ are Schwartz functions, $|\widehat{\tau}_\beta(\xi_j/\Lambda)| \leq C \Lambda^{\beta-\epsilon} \jap{\xi_j}^{-\beta + \epsilon}$ for every $\epsilon >0$; hence, \eqref{eq:qLambda12} is bounded by $\Lambda^{-\epsilon}$ times a constant.
	
	We argue along similar lines to prove that $x_j^\alpha (q_{j,\Lambda}-q_j)(X,Y,\Xi, \hat{x}_{j,n}, \hat{\xi}_{j,n}) \psi(x_j)$ is bounded by $f(\Lambda) \norm{\psi}_{L^2}$ for every multi-index $\alpha \in \mathbb{N}^3$. It follows that the operator norm of $q_{j,\Lambda} - q_j$ is bounded by $f(\Lambda)$ (cf. the proof of Theorem \ref{thm:L2Continuity}). The operator norms of the derivatives of $q_{j,\Lambda} - q_j$ are bounded by similar techniques; hence, $q_{j,\Lambda} - q_j \in S(f(\Lambda),\mathfrak{B}(L^2,L^2))$ and the second off-diagonal remainder satisfies the desired estimate.
		
	\item $\mathbf{T_{R_\mathrm{d}} - T_{R_\mathrm{d},\Lambda}}$: The remainder in the diagonal part is treated like the leading order term we discussed in Proposition \ref{prop:Td}. The additional derivative in \eqref{eq:Rd} ensures that the symbol defining the pseudo-differential operator is an element of $S(1)$.
\end{proof}


\begin{thm}
	The operator $T$ is infinitesimally bounded with respect to $H_0' = (1-G)^*H_0(1-G)$, and $T_\Lambda + E_\Lambda(X)$ is infinitesimally bounded with respect to $(1-G_\Lambda)^*H_0(1-G_\Lambda)$ uniformly in $\Lambda$.
	\label{prop:InfinitesimalBoundForT}
\end{thm}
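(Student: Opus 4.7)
The plan is to treat the operator $T = T_{\mathrm{od}} + T_{\mathrm{d}} + T_R$ term by term via Propositions \ref{prop:Tod}, \ref{prop:Td}, and \ref{prop:TR}, and then to reduce the resulting right-hand sides (in $\|N^\alpha \Psi\|$, $\|H_0^\varepsilon \Psi\|$, and $\|N^{1/4+\varepsilon}\diff\Gamma(\omega)^{1/2}\Psi\|$) to bounds of the form $\varepsilon' \|H_0' \Psi\|_\mathfrak{H} + C_{\varepsilon'} \|\Psi\|_\mathfrak{H}$. Two auxiliary inequalities drive the reduction. Since $(1-G)$ has a bounded inverse by Corollary \ref{cor:BoundedInverse1-G}, so does $((1-G)^*)^{-1}$, and hence for every $\Psi \in D(H_0')$, setting $\Phi := (1-G)\Psi \in D(H_0)$,
\[
\|H_0 \Phi\|_\mathfrak{H} \;=\; \|((1-G)^*)^{-1} H_0' \Psi\|_\mathfrak{H} \;\leq\; C \|H_0' \Psi\|_\mathfrak{H}.
\]
Combined with $N \leq H_0/m$ and the second assertion of Corollary \ref{cor:BoundedInverse1-G}, this yields $\|N \Psi\|_\mathfrak{H} \leq C(\|H_0' \Psi\|_\mathfrak{H} + \|\Psi\|_\mathfrak{H})$, and hence by interpolation and Young's inequality $\|N^\alpha \Psi\|_\mathfrak{H}$ is infinitesimally $H_0'$-bounded for every $\alpha < 1$.

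With this in hand, $T_R$ is immediate: Proposition \ref{prop:TR} gives $\|T_R \Psi\|_\mathfrak{H} \leq C \|(N+1)^{1/2}\Psi\|_\mathfrak{H}$, which is infinitesimally $H_0'$-bounded. For $T_\mathrm{d}$, Proposition \ref{prop:Td} yields $\|T_\mathrm{d}\Psi\|_\mathfrak{H} \leq C\|H_0^\varepsilon \Psi\|_\mathfrak{H}$ for small $\varepsilon > 0$; writing $\Psi = \Phi + G\Psi$, the operator-interpolation bound $\|H_0^\varepsilon \Phi\| \leq \|H_0 \Phi\|^\varepsilon \|\Phi\|^{1-\varepsilon} \leq C\|H_0'\Psi\|^\varepsilon \|\Psi\|^{1-\varepsilon}$ together with $\|H_0^\varepsilon G\Psi\|_\mathfrak{H} \leq C\|(N+1)^{1/2}\Psi\|_\mathfrak{H}$ from Proposition \ref{prop:H0GRelativelyBounded} and Young's inequality does the job.

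The delicate piece is $T_\mathrm{od}$, for which Proposition \ref{prop:Tod} gives $\|T_\mathrm{od}\Psi\|_\mathfrak{H} \leq C\|N^{1/4+\varepsilon}\diff\Gamma(\omega)^{1/2}\Psi\|_\mathfrak{H}$. Splitting $\Psi = \Phi + G\Psi$: on each $n$-boson sector $\diff\Gamma(\omega) \geq m N$ as forms, so $N^{1/2}\diff\Gamma(\omega)^{1/2} \leq \diff\Gamma(\omega)/\sqrt{m} \leq H_0/\sqrt{m}$, giving $\|N^{1/2}\diff\Gamma(\omega)^{1/2}\Phi\|_\mathfrak{H} \leq C\|H_0'\Psi\|_\mathfrak{H}$; combined with $\|\diff\Gamma(\omega)^{1/2}\Phi\|_\mathfrak{H}^2 = \scp{\Psi}{H_0'\Psi}_\mathfrak{H} \leq \|\Psi\|_\mathfrak{H}\|H_0'\Psi\|_\mathfrak{H}$, interpolation in the exponent of $N$ yields
\[
\|N^{1/4+\varepsilon}\diff\Gamma(\omega)^{1/2}\Phi\|_\mathfrak{H} \;\leq\; C \|H_0'\Psi\|_\mathfrak{H}^{3/4+\varepsilon} \|\Psi\|_\mathfrak{H}^{1/4-\varepsilon},
\]
infinitesimally $H_0'$-bounded by Young for small $\varepsilon$. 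For the $G\Psi$-part, Proposition \ref{prop:GmapsDomainToItself} applied sector-wise (noting that $G$ raises the boson number by one, so $\|\diff\Gamma(\omega)^{1/2}G\Psi^{(n-1)}\|_{\mathfrak{H}^{(n)}} \leq C\sqrt{n}\|\Psi^{(n-1)}\|_{\mathfrak{H}^{(n-1)}}$) gives $\|N^{1/4+\varepsilon}\diff\Gamma(\omega)^{1/2}G\Psi\|_\mathfrak{H} \leq C\|(N+1)^{3/4+\varepsilon}\Psi\|_\mathfrak{H}$, again infinitesimally $H_0'$-bounded since $3/4+\varepsilon < 1$.

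The uniform-in-$\Lambda$ version follows verbatim: Corollary \ref{cor:BoundedInverse1-G} supplies $\Lambda$-independent bounds on $(1-G_\Lambda)^{-1}$ and on $\|N \Psi\|_\mathfrak{H}$ in terms of $\|N(1-G_\Lambda)\Psi\|_\mathfrak{H}$; Propositions \ref{prop:H0GRelativelyBounded}, \ref{prop:GmapsDomainToItself}, and \ref{prop:H0GBounded} provide $G_\Lambda$-estimates with $\Lambda$-independent constants (as encoded in their convergence statements); and the $T$-bounds of Propositions \ref{prop:Tod}, \ref{prop:Td}, \ref{prop:TR} hold uniformly in $\Lambda$ for the same reason. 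The main technical subtlety throughout is that $D(H_0') \not\subset D(H_0)$, so $H_0\Psi$ cannot be controlled directly by $H_0'\Psi$; the entire argument rests on the splitting $\Psi = (1-G)\Psi + G\Psi$, which transfers the $H_0$-regularity of $\Phi$ to $\Psi$ at the cost of an additional $N^{1/2}$-power that is absorbed by interpolation.
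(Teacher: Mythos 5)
Your proof is correct and follows essentially the same route as the paper's: both rest on the splitting $T\Psi = T(1-G)\Psi + TG\Psi$, on the bounded invertibility of $(1-G)^*$ to convert $\norm{H_0(1-G)\Psi}_\mathfrak{H}$ into $C\norm{H_0'\Psi}_\mathfrak{H}$, on the estimate $\norm{N\Psi}_\mathfrak{H} \leq C(\norm{H_0'\Psi}_\mathfrak{H}+\norm{\Psi}_\mathfrak{H})$ from Corollary \ref{cor:BoundedInverse1-G}, and on interpolation/Young's inequality to render the sub-unit powers infinitesimal. The only (inconsequential) slip is that $\norm{\diff\Gamma(\omega)^{1/2}(1-G)\Psi}_\mathfrak{H}^2 \leq \scp{\Psi}{H_0'\Psi}_\mathfrak{H}$ should be an inequality rather than an equality, since $H_0 = K + \diff\Gamma(\omega)$ with $K \geq 0$.
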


\begin{proof}
	We prove the Proposition for $T$. The proof for $T_\Lambda + E_\Lambda(X)$ is similar. The strategy is to show that $TG$ and $T(1-G)$ are infinitesimally bounded with respect to $H_0'$. Then $T = T(1-G)+TG$ is infinitesimally bounded with respect to $H_0'$.
	
	We select an $0<\epsilon<1/4$. From Propositions \ref{prop:Tod}, \ref{prop:Td}, and \ref{prop:TR} it follows that
	\begin{align}
		\norm{T\Psi}_\mathfrak{H} 
		&\leq C (\norm{H_0^\epsilon \Psi}_\mathfrak{H} + \norm{N^{\frac{1}{4}+\epsilon}\diff\Gamma(\omega)^{\frac{1}{2}}\Psi}_\mathfrak{H} + \norm{\Psi}_{\mathfrak{H}}).
	\end{align}
	Hence, by the boundedness of the inverse operator of $(1-G)^*$ (see Corollary \ref{cor:BoundedInverse1-G}), 
	\begin{align}
		\norm{T(1-G)\Psi}_\mathfrak{H} &\leq C(\norm{(1-G)^*H_0^\epsilon (1-G) \Psi}_\mathfrak{H} \notag \\
		&\ \ \ + \norm{(1-G)^*N^{\frac{1}{4}+\epsilon}\diff\Gamma(\omega)^{\frac{1}{2}}(1-G)\Psi}_\mathfrak{H} + \norm{\Psi}_\mathfrak{H})
	\end{align}
	For any $\gamma > 0$, $H_0^{2\epsilon} \leq \gamma H_0^2 + C_\gamma$ by Young's inequality. Likewise, $N^{1/2+2\epsilon}\diff\Gamma(\omega) \leq H_0^{3/2+2\epsilon} \leq \gamma H_0^2 + C_\gamma$. It follows that, for every $\Psi \in D(H)$,
	\begin{align}
		\norm{T(1-G)\Psi}_\mathfrak{H} &\leq 2\gamma \norm{(1-G)^*H_0 (1-G) \Psi}_\mathfrak{H} + C_\gamma \norm{\Psi}_\mathfrak{H} \notag \\
		&= 2\gamma \norm{H_0' \Psi}_\mathfrak{H} + C_\gamma \norm{\Psi}_\mathfrak{H},
	\end{align}
	that is, $T(1-G)$ is infinitesimally bounded with respect to $H_0'$. 
	
	Next, we observe that by Proposition \ref{prop:GmapsDomainToItself}, Proposition \ref{prop:H0GBounded}, and Young's inequality, for every $\gamma > 0$,
	\begin{align}
		\norm{TG\Psi}_\mathfrak{H} &\leq C (\norm{H_0^\epsilon G \Psi}_\mathfrak{H} + \norm{N^{\frac{1}{4}+\epsilon}\diff\Gamma(\omega)^{\frac{1}{2}}G\Psi}_\mathfrak{H} + \norm{\Psi}_\mathfrak{H}) \notag \\
		&\leq C(\norm{N^{\frac{3}{4}+\epsilon}\Psi}_\mathfrak{H} + \norm{\Psi}_\mathfrak{H})
		\leq \gamma \norm{N\Psi}_\mathfrak{H} + C_\gamma \norm{\Psi}_\mathfrak{H}.
	\end{align}
	We conclude that $TG$ is infinitesimally bounded with respect to $H_0'$ because
	\begin{align}
		\norm{N\Psi}_\mathfrak{H} &\leq C (\norm{N(1-G)\Psi}_\mathfrak{H} + \norm{\Psi}_\mathfrak{H}) \notag \\ &\leq C (\norm{(1-G)^*H_0 (1-G)\Psi}_\mathfrak{H} + \norm{\Psi}_\mathfrak{H})
	\end{align}
	according to Corollary \ref{cor:BoundedInverse1-G}, eq. \eqref{eq:1-GBoundedInverse}.
\end{proof}

\section{Regularity of Domain Vectors}
\label{sec:RegularityDomainVectors}

In this section, we analyse the regularity of vectors in the domain $D(H)$ of the IBC Hamiltonian $H$. The first observation we make is that $\Psi \in D(H)$ is as regular as $G\Psi$ because $\Psi = G\Psi + (1-G)\Psi$ and $(1-G)\Psi \in D(H_0)$ is regular by assumption. In the following two theorems, we prove that $D(H)$ is contained in $D(H_0^p)$ for every $p<1/2$ but that $D(H) \cap D(H_0^{1/2}) = \{0\}$. 


\begin{thm}
	For every $p < 1/2$, the domain $D(H)$ of the IBC Hamiltonian $H$ is contained in $D(H_0^p)$ and in $D(\diff\Gamma(\omega)^{1/2})$.
	\label{thm:DomainH}
\end{thm}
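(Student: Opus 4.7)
The plan is to exploit the decomposition $\Psi = (1-G)\Psi + G\Psi$ available for any $\Psi \in D(H) = (1-G)^{-1}D(H_0)$, which reduces the problem to controlling $G\Psi$ separately in the two target spaces $D(H_0^p)$ and $D(\diff\Gamma(\omega)^{1/2})$.

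First I would observe that by definition of $D(H)$, the vector $(1-G)\Psi$ lies in $D(H_0)$. Since $H_0 \geq 0$ is self-adjoint and bounded below by a positive constant (the mass $m>0$ controls $\diff\Gamma(\omega)$), we have $D(H_0) \subset D(H_0^p) \cap D(\diff\Gamma(\omega)^{1/2})$ for every $p\leq 1$. Moreover, the bound $N \leq H_0/m$ from the proof of Proposition \ref{prop:CutOffHamiltonianSelfAdjoint} gives $D(H_0) \subset D(N)$, so $(1-G)\Psi \in D(N)$.

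Next I would apply Corollary \ref{cor:BoundedInverse1-G}, specifically the estimate
\begin{align*}
\norm{N\Psi}_\mathfrak{H} \leq C(\norm{N(1-G)\Psi}_\mathfrak{H} + \norm{\Psi}_\mathfrak{H}),
\end{align*}
to conclude that $\Psi$ itself lies in $D(N)$, and in particular in $D(N^{1/2})$. This is the crucial step: it transports the regularity of $(1-G)\Psi$ in the number operator back to $\Psi$, which is needed to feed into the mapping properties of $G$.

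With $\Psi \in D(N^{1/2})$ in hand, I would invoke Proposition \ref{prop:H0GRelativelyBounded} to obtain $\|H_0^p G \Psi\|_\mathfrak{H} \leq C \|(N+1)^{1/2}\Psi\|_\mathfrak{H} < \infty$ for every $p < 1/2$, so $G\Psi \in D(H_0^p)$. Similarly, Proposition \ref{prop:GmapsDomainToItself} yields $\|\diff\Gamma(\omega)^{1/2}G\Psi\|_\mathfrak{H} \leq C\|(N+1)^{1/2}\Psi\|_\mathfrak{H} < \infty$, so $G\Psi \in D(\diff\Gamma(\omega)^{1/2})$. Combining this with the corresponding membership of $(1-G)\Psi$, the decomposition $\Psi = (1-G)\Psi + G\Psi$ gives $\Psi \in D(H_0^p) \cap D(\diff\Gamma(\omega)^{1/2})$, as desired.

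There is no real obstacle here: all the substantive work has already been done in Section \ref{sec:MappingPropertiesOfG}. The proof is essentially a bookkeeping argument assembling Propositions \ref{prop:H0GRelativelyBounded} and \ref{prop:GmapsDomainToItself} via the estimate from Corollary \ref{cor:BoundedInverse1-G}. The only subtlety worth flagging is why the result fails at $p=1/2$: the estimate in Proposition \ref{prop:H0GRelativelyBounded} genuinely breaks down there (the leading-order symbol bound $\Omega(\hat\xi_j)^{-1+2p+\epsilon}$ ceases to be summable), which is consistent with the companion statement, proved later, that $D(H) \cap D(H_0^{1/2}) = \{0\}$.
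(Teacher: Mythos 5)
Your proof is correct and takes essentially the same route as the paper: decompose $\Psi = (1-G)\Psi + G\Psi$, use Corollary \ref{cor:BoundedInverse1-G} to place $\Psi$ in $D(N^{1/2})$, and then apply the mapping properties of $G$ to each target space. For the $D(\diff\Gamma(\omega)^{1/2})$ part you cite Proposition \ref{prop:GmapsDomainToItself}, which is the right reference; the paper's text instead points to Proposition \ref{prop:H0GBounded}, which appears to be a typo since that proposition concerns the boundedness of $H_0^pG$ for $p<1/4$.
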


\begin{proof}
	For every $\Psi \in D(H)$, $(1-G)\Psi \in D(H_0) \subset D(H_0^p)$; hence, $\Psi \in D(H_0^p)$ if and only if $G\Psi \in D(H_0^p)$. Moreover, $D(H) \subset (1-G)^{-1}D(N) \subset D(N) \subset D(N^{1/2})$ because $(1-G)^{-1}$ preserves $D(N)$ by Corollary \ref{cor:BoundedInverse1-G}. Thus, by Proposition \ref{prop:H0GRelativelyBounded},
	\begin{align}
		\norm{H_0^p G \Psi}_\mathfrak{H} \leq \norm{(N+1)^{\frac{1}{2}}\Psi}_\mathfrak{H} \leq \norm{(H+1) \Psi}_\mathfrak{H} < \infty.
		\label{eq:Regularity}
	\end{align}
	 This confirms that $G\Psi \in D(H_0^p)$. The proof for $D(H) \subset D(\diff\Gamma(\omega)^{1/2})$ is the same if we replace Proposition \ref{prop:H0GRelativelyBounded} with Proposition \ref{prop:H0GBounded}.
\end{proof}


\begin{thm}
	The domain $D(H)$ of the IBC Hamiltonian $H$ and $D(H_0^{1/2})$ intersect trivially, that is, $D(H) \cap D(H_0^{1/2}) = \{ 0 \}$.
\end{thm}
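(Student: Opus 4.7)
The plan is to assume $\Psi \in D(H) \cap D(H_0^{1/2})$ and show $\Psi = 0$. Writing $\Psi = (1-G)\Psi + G\Psi$, the first summand is in $D(H_0) \subset D(H_0^{1/2})$ by the hypothesis $\Psi \in D(H)$, so $G\Psi = \Psi - (1-G)\Psi \in D(H_0^{1/2})$. Since $G$ raises the boson number by exactly one, $(G\Psi)^{(n)} = -H_0^{-1} a^*(v_X) \Psi^{(n-1)}$, and the claim reduces to showing that the condition $H_0^{1/2} H_0^{-1} a^*(v_X) \Psi^{(n-1)} \in \mathfrak{H}^{(n)}$ forces $\Psi^{(n-1)} = 0$ for every $n \geq 1$.

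The heart of the argument is to push the pseudo-differential decomposition of Proposition \ref{prop:H0GRelativelyBounded} and Lemma \ref{lem:DecompositionG} to the critical exponent $p = 1/2$. For $p < 1/2$, the bound \eqref{eq:mjOperatorNorm} on the leading symbol $\|m_j\|_{L^2}^2$ carries the exponent $-1 + 2p + \epsilon$, which remains nonpositive and gives boundedness of $H_0^p G$. At $p = 1/2$ the analogous integral
\begin{align*}
\|m_j(Y,\Xi,\hat{x}_{j,n},\hat{\xi}_j)\|_{L^2}^2 \ \sim \ \int_{\mathbb{R}^3} \frac{|\sigma(\omega^{-1/2})(Y,\xi_j)|^2}{\jap{\Xi-\xi_j}^2 + \Omega(\xi)}\,\diff\xi_j
\end{align*}
is logarithmically divergent at large $|\xi_j|$: the numerator is of order $|\xi_j|^{-1}$ while the denominator is of order $|\xi_j|^2$, leaving a nonintegrable $|\xi_j|^{-3}$ tail once the volume element is included. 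This sharp failure at $p = 1/2$ is the quantitative obstruction to $\Psi \neq 0$.

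To convert this divergence into vanishing, I would first treat the base case $n = 1$, where the sum over $j$ collapses to a single term and no cross terms appear. The direct Fourier computation in the constant-coefficient model,
\begin{align*}
\|H_0^{1/2}(G\Psi)^{(1)}\|^2 = \int \frac{|\widehat{v}(\xi_1)|^2\,|\widehat{\Psi^{(0)}}(\Xi + \xi_1)|^2}{\Xi^2 + \omega(\xi_1)}\,\diff\Xi\,\diff\xi_1,
\end{align*}
becomes, after the substitution $\Xi' = \Xi + \xi_1$, the integral of $|\widehat{\Psi^{(0)}}(\Xi')|^2$ against a weight that is $+\infty$ at every $\Xi'$; hence $\Psi^{(0)} = 0$. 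The pseudo-differential calculus and uniform ellipticity of the symbols from Section \ref{sec:NelonModelVariableCoeff} transfer this conclusion to the variable-coefficient setting via localisation in phase space.

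For $n \geq 2$ the main difficulty is cross-term cancellation: the off-diagonal contributions $j \neq k$ in $\|\sum_j \Op(m_j)\Psi^{(n-1)}\|^2$ could a priori compensate the divergent diagonal. I would handle this by reintroducing the cut-off $\rho_\Lambda$, so that every $\Op(m_{j,\Lambda})$ is bounded with norm controlled via Proposition \ref{prop:H0GBounded} at $p < 1/4$, splitting the resulting quadratic form into diagonal and off-diagonal parts, and using Cauchy--Schwarz together with the regularity $\Psi \in D(H_0^p)$ for $p < 1/2$ from Theorem \ref{thm:DomainH} to show that the cross contributions stay uniformly bounded in $\Lambda$ while the diagonal contribution grows logarithmically. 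Finiteness of the limit as $\Lambda \to \infty$ then forces the trace of $\Psi^{(n-1)}$ on the collision set $\{x_j = X\}$ to vanish, and hypoellipticity of the symbols (Theorem \ref{thm:PseudoLocal}) promotes this to $\Psi^{(n-1)} \equiv 0$. Induction over $n$ completes the argument. The cross-term control is the most delicate step, since it requires exploiting a strict monotonicity in $\Lambda$ of the full quadratic form rather than a pointwise divergence estimate.
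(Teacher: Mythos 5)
Your reduction ($\Psi\in D(H)\cap D(H_0^{1/2})$ forces $G\Psi\in D(H_0^{1/2})$), your identification of the logarithmic divergence of $\norm{m_j}_{L^2}^2$ at the critical exponent $p=1/2$, and your $n=1$ computation are all correct and match the mechanism the paper exploits. The gap is in your treatment of $n\geq 2$. Your plan is to reintroduce the cut-off, show the diagonal part of $\norm{\sum_j\Op(m_{j,\Lambda})\Psi}^2$ grows like $\log\Lambda$ while the cross terms stay \emph{uniformly bounded} in $\Lambda$, "using Cauchy--Schwarz". But Cauchy--Schwarz bounds each cross term by $\norm{\Op(m_{j,\Lambda})\Psi}\,\norm{\Op(m_{k,\Lambda})\Psi}$, i.e.\ by the geometric mean of two diagonal terms, each of which is itself of order $\sqrt{\log\Lambda}$ for generic $\Psi$. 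So this step yields cross terms of exactly the same order as the divergence you are trying to isolate, and cannot give uniform boundedness; the "strict monotonicity in $\Lambda$ of the full quadratic form" you invoke to rescue this is not established and is not a substitute for an argument. Separately, your concluding mechanism is off: what the divergence of the $j$-th term controls is the $L^2$ mass of $\widehat{\Psi^{(n-1)}}$ on a momentum ball (times a divergent factor), so finiteness forces $\Psi^{(n-1)}=0$ outright; there is no "trace of $\Psi^{(n-1)}$ on the collision set" to take (that function does not live on the $n$-boson configuration space), hypoellipticity plays no role here, and no induction over $n$ is needed since the sectors decouple.

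The paper avoids the cross-term problem entirely by localising in momentum space rather than by a cut-off in $\Lambda$: it bounds the full norm from below by the integral over $U=B_r(0)\times B_R(0)\times B_r(0)^{n-1}$, where only $\xi_1$ is allowed to be large, and uses the elementary inequality $(a+b)^2\geq a^2/2-b^2$ to separate the $j=1$ term from the $j\geq 2$ terms. On $U$ the symbols $m_j$ for $j\geq 2$ are truncated to $|\xi_j|\leq r$ and hence define genuinely bounded operators (Calderon--Vaillancourt), so those contributions are finite independently of $R$; the $j=1$ term is bounded below by $\int_{B_r(0)^{n-1}}|\widehat{\Psi}(\Xi,\hat\xi_1)|^2\diff\Xi\diff\hat\xi_1$ times $\int_{|\xi_1|\leq R}\jap{\xi_1}^{-1}(\jap{\xi_1}^2+\jap{\xi_1})^{-1}\diff\xi_1$, which diverges as $R\to\infty$ unless that mass vanishes for every $r$, i.e.\ unless $\Psi^{(n-1)}=0$. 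If you replace your cut-off/monotonicity step by this phase-space localisation, your argument closes.
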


\begin{proof}
	It suffices to prove that $G$ maps no nonzero vector $0 \neq \tilde{\Psi} \in D(H)$ into $D(H_0^{1/2})$ because $\tilde{\Psi} \in D(H_0^{1/2})$ if and only if $G\tilde{\Psi} \in D(H_0^{1/2})$. If $\tilde{\Psi} \in D(H)$ is nonzero, then there is at least one $n\in \mathbb{N}^*$ such that $\Psi \equiv \tilde{\Psi}^{(n-1)} \neq 0$.
	
	According to Lemma \ref{lem:DecompositionG}, a bounded operator $B$ exists such that
	\begin{align}
		H_0^{\frac{1}{2}}G\Psi = -\frac{1}{\sqrt{n}} \sum_{j=1}^n \Op(m_j)\Psi + B\Psi.
	\end{align}	
	The summand $B\Psi$ is an element of $\mathfrak{H}$ because $B$ is bounded. We prove that the first summand is not an element of $\mathfrak{H}$. To prove this, we demonstrate that
	\begin{align}
		\frac{1}{n}\int \bigg|\sum_{j=1}^n \Op(m_j)\Psi(X,x)\bigg|^2 \diff X \diff x &= \frac{1}{n} \int \bigg|\sum_{j=1}^n \mathcal{F}[\Op(m_j)\Psi](\Xi,\xi)\bigg|^2 \diff \Xi \diff\xi
		\label{eq:RegularityVector}
	\end{align}
	diverges. We set $U = B_r(0) \times B_R(0) \times B_r(0)^{n-1} \subset \mathbb{R}^3 \times \mathbb{R}^{3n}$, where $B_r(0)$ is the open ball with centre $0$ and radius $r$. Clearly,
	\begin{align}
		\frac{1}{n} \int \bigg|\sum_{j=1}^n \mathcal{F}[\Op(m_j)\Psi](\Xi,\xi)\bigg|^2 \diff \Xi \diff\xi \geq \frac{1}{n} \int_U \bigg|\sum_{j=1}^n \mathcal{F}[\Op(m_j)\Psi](\Xi,\xi)\bigg|^2 \diff \Xi \diff\xi.
	\end{align}
	From $(a+b)^2 \geq a^2/2 - b^2$ and the Cauchy--Schwarz inequality it follows that\footnote{The inequality $(a+b)^2 \geq a^2/2 - b^2$ is equivalent to $(a+2b)^2 \geq 0 $.}
	\begin{align}
		&\frac{1}{n} \bigg|\sum_{j=1}^n \mathcal{F}[\Op(m_j)\Psi](\Xi,\xi) \bigg|^2 \notag \\
		&\ \ \ \geq \frac{1}{2n} |\mathcal{F}[\Op(m_1)\Psi](\Xi,\xi)|^2 - \sum_{j=2}^n |\mathcal{F}[\Op(m_j)\Psi](\Xi,\xi)|^2.
		\label{eq:RegularitySummands}
	\end{align}
	We recall that $\mathcal{F}[\Op(m_j)\Psi](\Xi,\xi)$ is the Fourier transform in the variables $(X,\hat{x}_j)$ of
	\begin{align}
		\sum_{j=1}^n \int \e^{\I\scp{X-Y}{\Xi}} \e^{\I\scp{\hat{x}_j-\hat{y}_j}{\hat{\xi}_j}} \frac{\e^{-\I \scp{X}{\xi_j}} \sigma(\omega^{-\frac{1}{2}})(Y,\xi_j) \Psi(Y,\hat{y}_j)}{\sqrt{2}\sigma(H_0^{-1})(Y,\Xi-\xi_j,\hat{x}_j,Y,\xi)^{-1}} \diff Y\diff\Xi \diff \hat{y}_j \diff\hat{\xi}_j;
	\end{align}
	see \eqref{eq:H0GLeadingOrder}. The sum over $2\leq j\leq n$ in \eqref{eq:RegularitySummands} has a finite integral over $U$ because, for every $2\leq j \leq n$,
	\begin{align}
		&\int_U |\mathcal{F}[\Op(m_j)\Psi](\Xi,\xi)|^2 \diff\Xi \diff\xi
		=  \int_U |\mathcal{F}[\Op(\mathcal{F}^{-1}[\mathcal{F}(m_j)\mathbb{1}_{|\cdot| \leq r}])\Psi](\Xi,\xi)|^2 \diff\Xi \diff\xi \notag \\
		&\ \ \ \leq \int |\mathcal{F}[\Op(\mathcal{F}^{-1}[\mathcal{F}(m_j)\mathbb{1}_{|\cdot| \leq r}])\Psi](\Xi,\xi)|^2 \diff\Xi \diff\xi.
		\label{eq:RegularityIntegral2n}
	\end{align}
	The symbol $\mathcal{F}^{-1}[\mathcal{F}(m_j)\mathbb{1}_{|\cdot| \leq r}]$ is an element of $S(1,\mathfrak{B}(\mathbb{C},L^2))$ for every $r>0$. Due to the cut-off $\mathbb{1}_{|\cdot| \leq r}$, the operator norm of $\mathcal{F}^{-1}[\mathcal{F}(m_j)\mathbb{1}_{|\cdot| \leq r}]$ is bounded by a constant times
	\begin{align}
		\sup_{\Xi} \int_{|\xi_j|\leq r} \frac{\jap{\xi_j}^{-1}}{\jap{\Xi-\xi_j}^2 + 1} \diff\xi_j.
	\end{align}
	Similarly, the operator norms of the derivatives of $\mathcal{F}^{-1}[\mathcal{F}(m_j)\mathbb{1}_{|\cdot| \leq r}]$ are bounded. It follows from the Calderon--Vaillancourt Theorem that the integral \eqref{eq:RegularityIntegral2n} is finite for every $2\leq j\leq n$ and every $r>0$.
	
	For the summand with $m_1$ in \eqref{eq:RegularitySummands}, we use the same equality as in the first line of \eqref{eq:RegularityIntegral2n} with $r$ replaced by $R$. However, this time, we bound the operator norm of the symbol $\mathcal{F}^{-1}[\mathcal{F}(m_1)\mathbb{1}_{|\cdot| \leq R}]$ from below by a constant times
	\begin{align}
		\int_{|\xi_1| \leq R} \frac{\jap{\xi_1}^{-1}}{\jap{\Xi-\xi_1}^2 + \Omega(\xi)} \diff\xi_1.
	\end{align}
	This is possible because the symbols $\sigma(\omega^{-\frac{1}{2}})$ and $\sigma(H_0^{-1/2})$ are elliptic. It follows from Corollary \ref{cor:RelativeBoundednessPseudor} that, up to finite term $A$,
	\begin{align}
		&\int_U |\mathcal{F}[\Op(m_1)\Psi](\Xi,\xi)|^2 \diff\Xi \diff\xi \notag \\
		&\ \ \ \geq C \int_{B_r(0)^{n-1}} |\widehat{\Psi}(\Xi,\hat{\xi}_1)|^2 \left(\int_{|\xi_1| \leq R} \frac{\jap{\xi_1}^{-1}}{\jap{\Xi-\xi_1}^2 + \Omega(\xi)} \diff\xi_1\right) \diff \Xi \diff\hat{\xi}_1 + A.
		\label{eq:DivergentIntegral}
	\end{align}
	In the integration area of the integral on the right side, $|\Xi|\leq r$ and $\Omega(\hat{\xi}_1) \leq (n-1)\jap{r}$; thus,
	\begin{align}
		\jap{\Xi-\xi_1}^2 + \Omega(\xi) \leq \jap{\Xi-\xi_1}^2 + \jap{\xi_1} + C_r \leq C_r (\jap{\xi_1}^2 + \jap{\xi_1})
	\end{align}
	by Peetre's inequality (cf. Theorem \ref{thm:Peetre}). Hence, \eqref{eq:DivergentIntegral} is bounded from below by a constant times
	\begin{align}
		\int_{B_r(0)^{n-1}} |\widehat{\Psi}(\Xi,\hat{\xi}_1)|^2 \diff \Xi \diff\hat{\xi}_1 \int_{|\xi_1|\leq R} \frac{\jap{\xi_1}^{-1}}{\jap{\xi_1}^2 + \jap{\xi_1}} \diff\xi_1.
		\label{eq:RegularityDivergence}
	\end{align}
	We fix $r<\infty$ such that $\int_{B_r(0)^{n-1}} |\widehat{\Psi}(\Xi,\hat{\xi}_1)|^2 \diff \Xi \diff\hat{\xi}_1 > 0$. This is possible because $\Psi \neq 0$. Then, \eqref{eq:RegularityDivergence} diverges in the limit $R\to \infty$. This confirms that \eqref{eq:RegularityVector} is divergent.
\end{proof}


\section{Comparison with the Renormalised Hamiltonian}
\label{sec:ProofIBCHamiltonian}

We verify that the IBC Hamiltonian we constructed in this chapter is equivalent to the renormalised Nelson Hamiltonian from Chapter \ref{ch:RemovalCutOff}. To prove this, we demonstrate that $H_\Lambda + E_\Lambda(X)$ converges in the norm resolvent sense to the IBC Hamiltonian $H$. From the uniqueness of the limit it then follows that the IBC Hamiltonian is indeed the renormalised Nelson Hamiltonian.

\begin{thm}
	The operator $H_\Lambda + E_\Lambda(X)$ converges in the norm resolvent sense to the IBC Hamiltonian $H$ as $\Lambda \to \infty$.
\end{thm}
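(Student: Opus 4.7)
The strategy is to exploit the decompositions
\[
H_\Lambda + E_\Lambda(X) = H_{0,\Lambda}' + (T_\Lambda + E_\Lambda(X)), \qquad H = H_0' + T,
\]
with $H_{0,\Lambda}' = (1-G_\Lambda)^*H_0(1-G_\Lambda)$ and $H_0' = (1-G)^*H_0(1-G)$, and to combine the norm convergence $G_\Lambda \to G$ from Section \ref{sec:MappingPropertiesOfG} with the norm convergence $T_\Lambda + E_\Lambda(X) \to T$ from Subsection \ref{ssec:MappingPropertiesOfT} and the uniform infinitesimal boundedness of Theorem \ref{prop:InfinitesimalBoundForT}. The plan proceeds in two main steps.

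First, I would establish norm resolvent convergence of the free IBC operators, $(H_{0,\Lambda}' - z)^{-1} \to (H_0' - z)^{-1}$ for $z \in \mathbb{C}\setminus\mathbb{R}$. Starting from the second resolvent identity and the telescoping
\[
H_{0,\Lambda}' - H_0' = (1-G_\Lambda)^*H_0(G - G_\Lambda) + (G - G_\Lambda)^*H_0(1-G),
\]
the second summand is easy: applied to $(H_0' - z)^{-1}f$, the factor $H_0(1-G)(H_0' - z)^{-1} = ((1-G)^*)^{-1}H_0'(H_0' - z)^{-1}$ is uniformly bounded on $\mathfrak{H}$ by Corollary \ref{cor:BoundedInverse1-G}, and $(G - G_\Lambda)^*$ tends to zero in operator norm on $\mathfrak{H}$ by dualising Proposition \ref{prop:H0GBounded}. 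For the first summand I would group the factors as $[(H_{0,\Lambda}' - z)^{-1}(1-G_\Lambda)^*H_0]\cdot(G-G_\Lambda)(H_0' - z)^{-1}$, observing that the bracketed composite extends to a bounded operator on $\mathfrak{H}$ by duality: its adjoint $H_0(1-G_\Lambda)(H_{0,\Lambda}' - \bar z)^{-1}$ is uniformly bounded in $\Lambda$ via the identity $H_0(1-G_\Lambda) = ((1-G_\Lambda)^*)^{-1}H_{0,\Lambda}'$ together with Corollary \ref{cor:BoundedInverse1-G}. The remaining factor $(G - G_\Lambda)(H_0' - z)^{-1}$ then converges to zero in operator norm on $\mathfrak{H}$ by Proposition \ref{prop:H0GRelativelyBounded} with $p = 0$, using that $(H_0' - z)^{-1}$ maps $\mathfrak{H}$ boundedly into $D(H_0') \subset D(N^{1/2})$ (Corollary \ref{cor:BoundedInverse1-G}).

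Second, I would apply a perturbative argument. By Theorem \ref{prop:InfinitesimalBoundForT}, $T_\Lambda + E_\Lambda(X)$ is infinitesimally bounded with respect to $H_{0,\Lambda}'$ uniformly in $\Lambda \leq \infty$ (with $T_\infty = T$); hence for $z \in \mathbb{C}\setminus\mathbb{R}$ with $|\operatorname{Im}(z)|$ sufficiently large, the operator $\mathbb{1} + (T_\Lambda + E_\Lambda(X))(H_{0,\Lambda}' - z)^{-1}$ is invertible on $\mathfrak{H}$ with inverse uniformly bounded in $\Lambda$. Writing
\[
(H_\Lambda + E_\Lambda(X) - z)^{-1} = (H_{0,\Lambda}' - z)^{-1}\bigl[\mathbb{1} + (T_\Lambda + E_\Lambda(X))(H_{0,\Lambda}' - z)^{-1}\bigr]^{-1},
\]
norm convergence reduces to showing $(T_\Lambda + E_\Lambda(X))(H_{0,\Lambda}' - z)^{-1} \to T(H_0' - z)^{-1}$ in operator norm, which I would split as $(T_\Lambda + E_\Lambda(X) - T)(H_{0,\Lambda}' - z)^{-1} + T[(H_{0,\Lambda}' - z)^{-1} - (H_0' - z)^{-1}]$. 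The first term goes to zero by Propositions \ref{prop:Tod}, \ref{prop:Td}, \ref{prop:TR}, since the range of $(H_{0,\Lambda}' - z)^{-1}$ lies uniformly in $\Lambda$ in the common domain $D(N) \cap D(\diff\Gamma(\omega)^{1/2}) \cap D(H_0^\epsilon)$ where the three propositions supply norm estimates; the second term is handled by applying Step 1 inside the resolvent identity $T[(H_{0,\Lambda}' - z)^{-1} - (H_0' - z)^{-1}] = T(H_{0,\Lambda}' - z)^{-1}(H_0' - H_{0,\Lambda}')(H_0' - z)^{-1}$, where $T(H_{0,\Lambda}' - z)^{-1}$ is uniformly bounded by the infinitesimal estimate.

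The main obstacle is Step 1: the naive expression $H_0(G - G_\Lambda)$ is not a bounded operator on $\mathfrak{H}$ because neither $H_0 G$ nor its formal subtraction with $H_0 G_\Lambda$ gives an $\mathfrak{H}$-valued object in the limit. The key observation is that one should not try to split the unbounded difference; instead, $H_0$ must remain sandwiched with a factor of $(1-G_\Lambda)^*$ on its left and a resolvent beyond that, so that duality converts the whole group into a bounded operator on $\mathfrak{H}$. Only the genuinely tame difference $G - G_\Lambda$ must converge to zero in norm, which is precisely what Proposition \ref{prop:H0GRelativelyBounded} provides. Uniqueness of the norm-resolvent limit, combined with Theorem \ref{thm:Gerard}, then identifies $H$ with the renormalised Nelson Hamiltonian constructed in Chapter \ref{ch:RemovalCutOff}.
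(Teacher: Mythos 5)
Your proof is correct, but it is organised differently from the paper's. The paper applies the second resolvent identity once, directly to the full operators: it writes $(H_\Lambda+E_\Lambda(X)+\I)^{-1}-(H+\I)^{-1}$ as the cut-off resolvent times the operator difference times $(H+\I)^{-1}$, telescopes the difference into $(G_\Lambda-G)^*H_0(1-G)+(1-G_\Lambda)^*H_0(G_\Lambda-G)+(T-T_\Lambda-E_\Lambda(X))$, and kills the three summands using exactly the ingredients you invoke (norm convergence of $G_\Lambda$, the $f(\Lambda)$-bounds on $T_\Lambda+E_\Lambda(X)-T$, and relative boundedness of $T$ to make $H_0(1-G)(H+\I)^{-1}$ bounded). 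You instead proceed in two stages: first norm resolvent convergence of the free IBC operators $H_{0,\Lambda}'\to H_0'$ via the same telescoping, then a uniform Neumann-series inversion of $\mathbb{1}+(T_\Lambda+E_\Lambda(X))(H_{0,\Lambda}'-z)^{-1}$ for $|\Im z|$ large. Your route is more modular and makes the role of the uniform infinitesimal bound of Theorem \ref{prop:InfinitesimalBoundForT} explicit (it is what guarantees the uniform invertibility), at the cost of having to verify that $(H_{0,\Lambda}'-z)^{-1}$ maps into the domains $D(N^{1/4+\epsilon}\diff\Gamma(\omega)^{1/2})$, $D(H_0^\epsilon)$, $D(N^{1/2}\diff\Gamma(\omega)^\epsilon)$ uniformly in $\Lambda$ — which does follow from the estimates inside the proof of Theorem \ref{prop:InfinitesimalBoundForT}, but is a uniformity the paper's one-step argument avoids by putting the fixed resolvent $(H+\I)^{-1}$ on the right and quoting Theorem \ref{thm:DomainH} only for the limit operator. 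Your key observation in the last paragraph — that $H_0$ must stay sandwiched against a factor $(1-G_\Lambda)^*$ and a resolvent so that duality yields a bounded operator, leaving only the tame difference $G-G_\Lambda$ to converge — is precisely the mechanism the paper uses for its first two summands, so the two proofs rest on the same core idea. (Note also that norm resolvent convergence at one non-real $z$ with large imaginary part propagates to all of $\mathbb{C}\setminus\mathbb{R}$ by the first resolvent identity; this should be said if your argument is restricted to large $|\Im z|$.)
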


\begin{proof}	
	According to the resolvent formula, the difference in the resolvents between $H = H_0' + T$ and $H_\Lambda + E_\Lambda(X)$ is equal to
	\begin{align}
		&(H_\Lambda + E_\Lambda(X) + \I)^{-1} - (H+\I)^{-1} \notag \\ 
		&\ \ \ = (H_\Lambda + E_\Lambda(X) + \I)^{-1} (G_\Lambda - G)^*H_0(1-G) (H+\I)^{-1}  \notag \\
		&\ \ \ \ \ \ + (H_\Lambda + E_\Lambda(X) + \I)^{-1} (1-G_\Lambda)^* H_0 (G_\Lambda-G) (H+\I)^{-1}  \notag \\
		&\ \ \ \ \ \ + (H_\Lambda + E_\Lambda(X) + \I)^{-1} (T - T_\Lambda - E_\Lambda(X)) (H+\I)^{-1}.
		\label{eq:ResolventDifference}
	\end{align}
	By Proposition \ref{prop:InfinitesimalBoundForT}, the operator $T$ is relatively bounded with respect to $H_0' = (1-G)H_0(1-G)$. Hence, $H_0(1-G) (H+\I)^{-1}$ is a bounded operator, and the first summand converges in norm to $0$ because $G_\Lambda \to G$ in norm by Proposition \ref{prop:H0GBounded}.
	Similarly, we argue that the second summand in \eqref{eq:ResolventDifference} converges to 0.
	
	To obtain convergence of the third summand, we use $N \leq H_\Lambda + E_\Lambda(X)$ and the operator bound on $T_\Lambda + E_\Lambda(X) - T$ from Subsection \ref{ssec:MappingPropertiesOfT}:
	\begin{align}
		&\norm{(H_\Lambda + E_\Lambda(X) + \I)^{-1} (T_\Lambda + E_\Lambda(X) - T) (H+\I)^{-1} \Psi}_{\mathfrak{H}} \notag \\ 
		&\ \ \ \leq f(\Lambda) \norm{(\diff\Gamma(\omega)^{1/2} + H_0^\epsilon + \diff\Gamma(\omega)^\epsilon) (H+\I)^{-1} \Psi}_{\mathfrak{H}}
	\end{align}
	The operator $(\diff\Gamma(\omega)^{1/2} + H_0^\epsilon + \diff\Gamma(\omega)^\epsilon) (H+\I)^{-1}$ is bounded by Proposition \ref{thm:DomainH}. Hence, also the third summand converges in norm to $0$.
\end{proof}

\begin{cor}
	The IBC Hamiltonian is equivalent to the renormalised Nelson Hamiltonian from Chapter \ref{ch:RemovalCutOff}.
\end{cor}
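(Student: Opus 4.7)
The plan is to deduce the corollary from the immediately preceding theorem together with Theorem \ref{thm:Gerard}, using uniqueness of norm-resolvent limits. Write $E_\Lambda^{\mathrm{IBC}}(X)$ for the counterterm appearing in the construction of $T$ (the integrand in \eqref{eq:diagonalSymbol}), and $E_\Lambda^{\mathrm{ren}}(X)$ for the counterterm chosen in Chapter~\ref{ch:RemovalCutOff}, namely the leading part of \eqref{eq:VacuumEnergy}. Theorem~\ref{thm:Gerard} gives that $H_\Lambda + E_\Lambda^{\mathrm{ren}}(X)$ converges in norm-resolvent sense to the renormalised Hamiltonian $H_\mathrm{ren}$, while the preceding theorem gives that $H_\Lambda + E_\Lambda^{\mathrm{IBC}}(X)$ converges in the same sense to the IBC Hamiltonian $H$.

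The second step is to compare the two vacuum counterterms. Both are integrals of the form $\frac{1}{2}\int c_X(\xi)\,|\widehat{\rho}(\xi/\Lambda)|^2\,\diff\xi$ with integrands
\begin{align*}
c^{\mathrm{ren}}_X(\xi) &= \tfrac{1}{\sqrt{h_0(X,\xi)+1}\,(K_0(X,\xi)+1)},\\
c^{\mathrm{IBC}}_X(\xi) &= \tfrac{|\sigma(\omega^{-1/2})(X,\xi)|^2}{\sigma(K)(X,\xi)+\sigma(\omega)(X,\xi)}.
\end{align*}
By the functional calculus (Theorem~\ref{thm:FunctionalCalculus}) both $c^{\mathrm{ren}}_X$ and $c^{\mathrm{IBC}}_X$ have the same principal part $(h_0(X,\xi))^{-1/2}(h_0(X,\xi))^{-1}$ in $\xi$, determined by the common metric $g$. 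Hence their difference $c^{\mathrm{ren}}_X - c^{\mathrm{IBC}}_X$ lies in $S^{-5}$ uniformly in $X$, so the difference of the counterterms is an integral which converges absolutely, uniformly in $X$ and in $\Lambda$. A dominated-convergence argument then shows that
\begin{align*}
V_\Lambda(X) := E_\Lambda^{\mathrm{ren}}(X) - E_\Lambda^{\mathrm{IBC}}(X)
\end{align*}
is a bounded continuous function of $X$, uniformly bounded in $\Lambda$, and converging uniformly as $\Lambda\to\infty$ to some $V\in C_\mathrm{b}(\mathbb{R}^3)$.

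The conclusion then follows by writing
\begin{align*}
H_\Lambda + E_\Lambda^{\mathrm{ren}}(X) = \bigl(H_\Lambda + E_\Lambda^{\mathrm{IBC}}(X)\bigr) + V_\Lambda(X),
\end{align*}
where $V_\Lambda(X)$ is a bounded self-adjoint multiplication operator converging in norm to $V(X)$. Because $V_\Lambda(X)\to V(X)$ in operator norm, adding it preserves norm-resolvent convergence, and so the right-hand side converges to $H + V(X)$ in the norm-resolvent sense. By the uniqueness of the norm-resolvent limit applied to the left-hand side, $H_\mathrm{ren} = H + V(X)$, where $V\in C_\mathrm{b}(\mathbb{R}^3)$ is an exterior particle potential. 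This is precisely the notion of equivalence defined at the beginning of Section~\ref{sec:IBCHamiltonian}.

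The main obstacle is the second step, verifying that the leading (logarithmically divergent) parts of $c^{\mathrm{ren}}_X$ and $c^{\mathrm{IBC}}_X$ agree and cancel; everything else is routine. This cancellation is forced by the fact that both counterterms arise from the same formal second-order perturbative computation, so that their principal symbols coincide. Once that is established, the $X$-uniform bound on the difference and the norm-resolvent convergence argument are straightforward.
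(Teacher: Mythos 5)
Your proposal is correct and follows the paper's own route exactly: norm-resolvent convergence of $H_\Lambda$ plus a counterterm to both the renormalised and the IBC Hamiltonian, followed by uniqueness of the norm-resolvent limit, with the discrepancy between the two counterterms absorbed into a bounded continuous exterior potential — precisely the notion of equivalence fixed in Section \ref{sec:IBCHamiltonian}. Your explicit comparison of the two vacuum energies is in fact more careful than the paper, which writes $E_\Lambda(X)$ for both; the only slips are cosmetic: the common principal part should read $h_0(X,\xi)^{-1/2}K_0(X,\xi)^{-1}$, and the difference of the integrands is of order $-4$ rather than $-5$, which still gives absolute integrability over $\mathbb{R}^3$ uniformly in $X$ and $\Lambda$, so the argument is unaffected.
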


\appendix
\chapter{Some Functional Analysis}

\setcounter{section}{1}

We collect some important and well-known theorems and inequalities from functional analysis. 

\begin{thm}[Schwartz kernel theorem]
	Let $X,Y \subset \mathbb{R}^d$ be open and $\mathcal{D}(X) = C_\mathrm{c}^\infty(X)$ the space of smooth compactly supported functions on $X$. Then every distribution $K\in \mathcal{D}'(X\times Y)$ defines a linear map $A:\mathcal{D}(X) \to \mathcal{D}'(Y)$ satisfying
	\begin{align}
		\scp{Au}{v} = K(u \otimes v)
		\label{eq:DualPairing}
	\end{align}
	for all $u \in \mathcal{D}(X)$, $v\in \mathcal{D}(Y)$. Conversely, every continuous linear map $A:\mathcal{D}(X) \to \mathcal{D}'(Y)$ defines a unique distribution $K \in \mathcal{D}'(X\times Y)$ such that \eqref{eq:DualPairing} holds.
	\label{thm:SchwartzKernel}
\end{thm}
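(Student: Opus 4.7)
The plan is to prove the two directions separately. The forward direction is a routine verification using the definition of $\mathcal{D}(X\times Y)$, while the converse requires extending a bilinear form from tensor products to all test functions, which is the substantive content of the theorem.

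For the forward direction: given $K\in\mathcal{D}'(X\times Y)$ and $u\in\mathcal{D}(X)$, I would define $Au\in\mathcal{D}'(Y)$ by $\langle Au,v\rangle := K(u\otimes v)$. The map $v\mapsto u\otimes v$ sends $\mathcal{D}(Y)$ continuously into $\mathcal{D}(X\times Y)$ (if $v_n\to v$ in $\mathcal{D}(Y)$ with common compact support, then $u\otimes v_n\to u\otimes v$ in $\mathcal{D}(X\times Y)$ with common compact support $\supp u\times\supp v_n$, and Leibniz gives uniform convergence of all derivatives). Composing with $K$ shows $Au\in\mathcal{D}'(Y)$, and an analogous argument yields continuity of $A:\mathcal{D}(X)\to\mathcal{D}'(Y)$.

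For the converse: given a continuous $A:\mathcal{D}(X)\to\mathcal{D}'(Y)$, I would set $B(u,v):=\langle Au,v\rangle$, a separately continuous bilinear form on $\mathcal{D}(X)\times\mathcal{D}(Y)$. The task is to build $K\in\mathcal{D}'(X\times Y)$ with $K(u\otimes v)=B(u,v)$. The engine is density of $\mathcal{D}(X)\otimes\mathcal{D}(Y)$ in $\mathcal{D}(X\times Y)$, which I would realise explicitly by Fourier series. Given $\varphi\in\mathcal{D}(X\times Y)$, choose cutoffs $\chi\in\mathcal{D}(X)$, $\eta\in\mathcal{D}(Y)$ with $\chi\otimes\eta\equiv 1$ on a neighbourhood of $\supp\varphi$, enclose $\supp\varphi$ in a box $[-L,L]^{2d}$, and expand $\varphi$ there as a Fourier series:
\begin{equation*}
\varphi(x,y) = \chi(x)\eta(y)\sum_{m,n\in\mathbb{Z}^d} c_{mn}\,e^{i\pi m\cdot x/L}\,e^{i\pi n\cdot y/L} = \sum_{m,n} c_{mn}\, u_m(x)\,v_n(y),
\end{equation*}
with $u_m(x)=\chi(x)e^{i\pi m\cdot x/L}$ and $v_n(y)=\eta(y)e^{i\pi n\cdot y/L}$. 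Smoothness of $\varphi$ gives rapid decay $|c_{mn}|\leq C_N\jap{m}^{-N}\jap{n}^{-N}$, so the series converges in $\mathcal{D}(X\times Y)$. I would then define $K(\varphi):=\sum_{m,n}c_{mn}B(u_m,v_n)$.

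The main obstacle is showing this definition is sound: absolute convergence, independence of the choice of $\chi,\eta,L$, and continuity in $\varphi$. The crux is a joint bound of the form $|B(u_m,v_n)|\leq C\,\|u_m\|_{C^k}\|v_n\|_{C^k}$ for some $k$, because separate continuity of $B$ is not a priori strong enough. This is obtained from a Banach--Steinhaus argument: the continuity of $A$ restricts to continuity on the Fr\'echet space $\mathcal{D}_{K_1}(X)$ for each compact $K_1\subset X$; the sequence $\{u_m\}$ lies in a bounded subset of $\mathcal{D}_{\supp\chi}(X)$, so $\{Au_m\}$ is weakly bounded in $\mathcal{D}'(Y)$, hence equicontinuous on $\mathcal{D}_{\supp\eta}(Y)$ by the uniform boundedness principle. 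Combined with the rapid decay of $c_{mn}$ this yields absolute convergence. Independence of the auxiliary data is then checked by noting that two admissible choices produce Fourier expansions whose partial sums both converge to $\varphi$ in $\mathcal{D}(X\times Y)$, so one can compare them via a further density step, and continuity $\varphi_k\to\varphi$ in $\mathcal{D}(X\times Y)\Rightarrow K(\varphi_k)\to K(\varphi)$ follows from the same joint estimate applied to the Fourier coefficients, which depend continuously on $\varphi$ in the $C^N$ norms. Uniqueness of $K$ is immediate: two candidates agreeing on the dense subspace $\mathcal{D}(X)\otimes\mathcal{D}(Y)$ coincide on $\mathcal{D}(X\times Y)$.
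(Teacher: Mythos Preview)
The paper does not prove this theorem; it is stated in the appendix as a well-known result without proof. Your outline is essentially the classical argument (as in H\"ormander, \emph{The Analysis of Linear Partial Differential Operators I}, Theorem~5.2.1), and the forward direction and the Fourier-series density step are fine.

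There is, however, a genuine slip in your Banach--Steinhaus paragraph. You write that ``the sequence $\{u_m\}$ lies in a bounded subset of $\mathcal{D}_{\supp\chi}(X)$'', but this is false: $u_m(x)=\chi(x)e^{i\pi m\cdot x/L}$ has $\|u_m\|_{C^k}\sim |m|^k$, so $\{u_m\}$ is unbounded in every $C^k$-seminorm with $k\geq 1$. Consequently the step ``$\{Au_m\}$ is weakly bounded, hence equicontinuous'' does not follow as stated, and even if it did, equicontinuity in the second variable alone would not yield the joint estimate $|B(u_m,v_n)|\leq C\|u_m\|_{C^k}\|v_n\|_{C^k}$ that you need.

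The fix is to argue directly for joint continuity. Restrict $B$ to $\mathcal{D}_{\supp\chi}(X)\times\mathcal{D}_{\supp\eta}(Y)$, a product of Fr\'echet spaces. Separate continuity of a bilinear form on a product of Fr\'echet spaces implies joint continuity (this is a standard consequence of the uniform boundedness principle together with Baire category). Hence there exist $k\in\mathbb{N}$ and $C>0$ with
\[
|B(u,v)|\leq C\,\|u\|_{C^k}\,\|v\|_{C^k}\quad\text{for all }u\in\mathcal{D}_{\supp\chi}(X),\ v\in\mathcal{D}_{\supp\eta}(Y).
\]
Since $\|u_m\|_{C^k}\leq C'\jap{m}^k$ and $\|v_n\|_{C^k}\leq C'\jap{n}^k$, this gives $|B(u_m,v_n)|\leq C''\jap{m}^k\jap{n}^k$, which combined with the rapid decay of $c_{mn}$ yields absolute convergence. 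The rest of your argument (well-definedness, continuity, uniqueness by density) then goes through.
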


\begin{thm}[Schur test]
	Let $X,Y$ be two measurable spaces, $(B, \norm{\cdot})$ a Banach space, and $T:L^2(Y,B) \to L^2(X,B)$ the integral operator with kernel $K: X\times Y \to B$. If a constant $C>0$ exists such that
	\begin{align}
		\sup_x \int_Y \norm{K(x,y)} \diff y \leq C, \ \sup_y \int_X \norm{K(x,y)} \diff x \leq C,
	\end{align}
	then $T$ is a bounded operator. Moreover, the operator norm of $T$ is bounded by $C$.
	\label{thm:SchurTest}
\end{thm}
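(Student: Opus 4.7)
The plan is to follow the classical Schur test argument, carefully adapted so that scalar absolute values are replaced by norms in $B$ where appropriate. First, I would fix $u\in L^2(Y,B)$ and write $Tu(x) = \int_Y K(x,y) u(y) \diff y$ (understood as a Bochner integral). The key starting estimate is the pointwise bound
\begin{align*}
	\norm{Tu(x)} \leq \int_Y \norm{K(x,y)}\, \norm{u(y)} \diff y,
\end{align*}
which follows from the standard Banach-space estimate for Bochner integrals. The integrand is now a nonnegative scalar function, so from this point the analysis is entirely scalar.

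Next, I would apply the Cauchy--Schwarz inequality after splitting $\norm{K(x,y)} = \norm{K(x,y)}^{1/2} \cdot \norm{K(x,y)}^{1/2}$. This yields
\begin{align*}
	\norm{Tu(x)}^2 \leq \left(\int_Y \norm{K(x,y)} \diff y \right) \left( \int_Y \norm{K(x,y)} \norm{u(y)}^2 \diff y \right) \leq C \int_Y \norm{K(x,y)} \norm{u(y)}^2 \diff y,
\end{align*}
where I used the first hypothesis to bound the first factor by $C$.

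Finally, I would integrate in $x$ over $X$ and apply Fubini--Tonelli (the integrand is nonnegative, so this is unproblematic):
\begin{align*}
	\norm{Tu}^2_{L^2(X,B)} \leq C \int_Y \norm{u(y)}^2 \left( \int_X \norm{K(x,y)} \diff x \right) \diff y \leq C^2 \norm{u}^2_{L^2(Y,B)},
\end{align*}
using the second hypothesis to estimate the inner integral. Taking square roots gives $\norm{T}\leq C$, establishing both boundedness and the norm bound. There is no serious obstacle here; the only thing to be slightly careful about is justifying the pointwise norm estimate for the Bochner integral and confirming that the integrand appearing in Fubini is measurable as a function of $(x,y)$, but both are standard consequences of strong measurability of $K$ together with the separability framework implicit in the statement.
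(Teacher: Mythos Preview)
The paper does not actually prove this theorem; it is stated without proof in the appendix as a well-known result from functional analysis. Your argument is the standard Schur test proof (Cauchy--Schwarz on the split $\norm{K}^{1/2}\cdot\norm{K}^{1/2}$, then Fubini--Tonelli), correctly adapted to the Banach-space-valued setting, and it yields exactly the bound $\norm{T}\leq C$.

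One small remark: as written in the paper the kernel takes values in $B$, so the product $K(x,y)u(y)$ is only meaningful if $B$ carries a product (e.g.\ $B=\mathfrak{B}(\mathcal{H})$, which is how the result is actually applied in the proof of the operator-valued Calderon--Vaillancourt theorem). Your implicit use of $\norm{K(x,y)u(y)}\leq\norm{K(x,y)}\,\norm{u(y)}$ presupposes exactly this, which is the intended interpretation and causes no difficulty.
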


\begin{thm}[Cotlar--Stein]
	Let $H_1$, $H_2$ be two Hilbert spaces and $A_j \in \mathfrak{B}(H_1,H_2)$ a bounded operator for every $j\in \mathbb{N}$. If a constant $C>0$ exists such that
	\begin{align}
		\sup_j \sum_{k=1}^{\infty} \norm{A_j A_k^*}^{\frac{1}{2}}_{\mathfrak{B}(H_2)} \leq C, \ \sup_j \sum_{k=1}^{\infty} \norm{A_j^* A_k}^{\frac{1}{2}}_{\mathfrak{B}(H_1)} \leq C,
	\end{align}
	then $\sum_{j=1}^n A_j$ converges in the strong operator topology to a bounded operator $A$ as $n\to\infty$. Moreover, the operator norm of $A$ is bounded by $C$.
	\label{thm:CotlarStein}
\end{thm}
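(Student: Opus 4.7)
The plan is to prove norm boundedness of the partial sums $S_n = \sum_{j=1}^n A_j$ first, via an almost-orthogonality trick, and then upgrade pointwise convergence on a dense subspace to strong convergence by interpolation with the uniform norm bound.

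First I would set $\alpha_{jk} = \norm{A_j^* A_k}_{\mathfrak{B}(H_1)}^{1/2}$ and $\beta_{jk} = \norm{A_j A_k^*}_{\mathfrak{B}(H_2)}^{1/2}$ and observe that both are symmetric in $(j,k)$ (since $\norm{B} = \norm{B^*}$), so the hypotheses give $\sum_k \alpha_{jk} \leq C$ and $\sum_k \beta_{jk} \leq C$ uniformly in $j$, and also $\norm{A_j} \leq C$ (take $k=j$). The key identity is $\norm{S_n}^{2m} = \norm{(S_n^* S_n)^m}$ for every $m \in \mathbb{N}$, so
\begin{align*}
	\norm{S_n}^{2m} \leq \sum_{j_1,k_1,\dots,j_m,k_m=1}^n \norm{A_{j_1}^* A_{k_1} A_{j_2}^* A_{k_2} \cdots A_{j_m}^* A_{k_m}}.
\end{align*}
The almost-orthogonality step is to estimate each summand in two different ways: first by grouping consecutive pairs of the form $A_{j_i}^* A_{k_i}$, yielding $\prod_{i=1}^m \alpha_{j_i k_i}^2$; second by grouping them as $A_{k_i} A_{j_{i+1}}^*$ with lone factors $A_{j_1}^*$ and $A_{k_m}$ at the ends, yielding $C^2 \prod_{i=1}^{m-1} \beta_{k_i j_{i+1}}^2$. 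Taking the geometric mean of these two bounds gives
\begin{align*}
	\norm{A_{j_1}^* A_{k_1} \cdots A_{j_m}^* A_{k_m}} \leq C \prod_{i=1}^m \alpha_{j_i k_i} \prod_{i=1}^{m-1} \beta_{k_i j_{i+1}}.
\end{align*}

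The combinatorial second step is to sum this bound over all $2m$ indices in the order $k_m, j_m, k_{m-1}, \dots, k_1, j_1$. Each index appears in at most two factors (one $\alpha$ and one $\beta$, except the outermost ones), and by the hypothesis every summation over a single index produces at most a factor $C$. The very last summation, over $j_1$, has no $\alpha$ or $\beta$ factor left to exploit, so it merely contributes a factor $n$. Altogether
\begin{align*}
	\norm{S_n}^{2m} \leq n\, C^{2m+1},
\end{align*}
and taking $2m$-th roots and sending $m \to \infty$ gives $\norm{S_n}_{\mathfrak{B}(H_1, H_2)} \leq C$ uniformly in $n$.

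Finally I would prove strong convergence on a dense subspace. For $u = A_{i_0}^* w$ with $w \in H_2$, the bound $\norm{A_j u} = \norm{A_j A_{i_0}^* w} \leq \beta_{j i_0}^2 \norm{w}$ combined with $\sum_j \beta_{j i_0}^2 \leq (\sup_j \beta_{j i_0}) \sum_j \beta_{j i_0} \leq C^2$ shows that $\sum_j A_j u$ converges absolutely in $H_2$. For $u \in \bigcap_j \ker A_j$ convergence is trivial. The direct sum $\mathcal{D} := \mathrm{span}\{A_{i_0}^* w : i_0 \in \mathbb{N}, w \in H_2\} + \bigcap_j \ker A_j$ is dense in $H_1$ because its orthogonal complement is contained in $\bigcap_j \ker A_j \cap (\bigcap_j \ker A_j)^\perp = \{0\}$. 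On $\mathcal{D}$ the sequence $S_n u$ converges, and the uniform bound $\norm{S_n} \leq C$ together with an $\varepsilon/3$-argument extends this to all of $H_1$, defining the limit operator $A$ with $\norm{A} \leq C$. The main obstacle is the almost-orthogonality estimate, where the trick of writing the same product in two different groupings and taking the geometric mean has to be matched with a summation order that exploits both hypotheses symmetrically.
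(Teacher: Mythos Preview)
The paper does not actually prove the Cotlar--Stein lemma; it is merely stated in the appendix as a well-known result from functional analysis, alongside the Schwartz kernel theorem and the Schur test, and then invoked in the proof of the Calderon--Vaillancourt theorem (Theorem~\ref{thm:CalderonVaillancourtOV}). So there is nothing in the paper to compare your argument against.

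That said, your proof is correct and is precisely the standard one. Two very minor remarks. First, the bookkeeping in the combinatorial step gives $\norm{S_n}^{2m} \leq n\,C^{2m}$ rather than $n\,C^{2m+1}$: there are $2m-1$ factors $\alpha,\beta$ in the chain plus the leading $C$ from the geometric mean, and $2m-1$ of the $2m$ index summations each eat one factor and produce a $C$, while the final summation over $j_1$ contributes the factor $n$. This discrepancy is of course irrelevant after taking $2m$-th roots and sending $m\to\infty$. Second, the density argument is fine as written, since the orthogonal complement of $\mathrm{span}\{A_{i_0}^* w\}$ in $H_1$ is exactly $\bigcap_j \ker A_j$.
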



\begin{thm}[Peetre's inequality]
	Let $t\in \mathbb{R}$ and set $\jap{x} = \sqrt{1+|x|^2}$. Then, for every $x,y\in \mathbb{R}^d$,
	\begin{align}
		\jap{x}^t \leq 2^{|t|} \jap{y}^t \jap{x-y}^{|t|}.
	\end{align}
	\label{thm:Peetre}
\end{thm}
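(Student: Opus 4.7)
The plan is to reduce everything to the single pointwise estimate
\begin{align}
\jap{x}^2 \leq 2\,\jap{y}^2 \jap{x-y}^2 \qquad \forall\, x,y \in \mathbb{R}^d, \label{eq:PeetreBase}
\end{align}
and then extract the two sign cases $t \geq 0$ and $t < 0$ from it by elementary manipulations. Note that \eqref{eq:PeetreBase} actually yields the sharper constant $2^{|t|/2}$, which in turn implies the stated $2^{|t|}$, so no sharpness issue arises.

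To prove \eqref{eq:PeetreBase} I would write $x = y + (x-y)$ and invoke the elementary inequality $|a+b|^2 \leq 2|a|^2 + 2|b|^2$ (a direct consequence of the parallelogram identity or of $2|a||b| \leq |a|^2 + |b|^2$) to obtain
\begin{align*}
1 + |x|^2 \;\leq\; 1 + 2|y|^2 + 2|x-y|^2 \;\leq\; 2\bigl(1 + |y|^2 + |x-y|^2 + |y|^2|x-y|^2\bigr),
\end{align*}
where the last step uses the trivial bound $1 + u + v \leq 2(1+u)(1+v)$ for $u,v \geq 0$ (which follows from $(1+u)(1+v) = 1 + u + v + uv \geq \tfrac{1}{2}(1 + u + v) + \tfrac{1}{2}$ together with $1+u+v \geq 1$). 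Factoring the right-hand side as $2\jap{y}^2\jap{x-y}^2$ gives \eqref{eq:PeetreBase}, and taking positive square roots yields $\jap{x} \leq \sqrt{2}\,\jap{y}\,\jap{x-y}$.

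For the case $t \geq 0$, I simply raise the last inequality to the power $t$ to obtain $\jap{x}^t \leq 2^{t/2}\jap{y}^t\jap{x-y}^t \leq 2^{|t|}\jap{y}^t\jap{x-y}^t$. For $t < 0$, set $s = -t > 0$ and apply the already-proved inequality after interchanging $x$ and $y$ (using that $|x-y| = |y-x|$, hence $\jap{x-y} = \jap{y-x}$): this gives $\jap{y}^s \leq 2^{s/2}\jap{x}^s\jap{x-y}^s$, and dividing by $\jap{x}^s\jap{y}^s$ and reinserting $s = |t|$ yields $\jap{x}^t = \jap{x}^{-s} \leq 2^{s/2}\jap{y}^{-s}\jap{x-y}^s = 2^{|t|/2}\jap{y}^t\jap{x-y}^{|t|} \leq 2^{|t|}\jap{y}^t\jap{x-y}^{|t|}$, completing the proof. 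There is no real obstacle here; the only thing to be careful about is that the exponent on $\jap{y}$ carries the sign of $t$ while the exponent on $\jap{x-y}$ carries $|t|$, which is exactly why the sign-swap step for negative $t$ works.
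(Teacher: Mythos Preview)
Your proof is correct. The paper itself states Peetre's inequality without proof (it is listed among standard background results in the appendix), so there is no paper proof to compare against; your argument via the base case $\jap{x}^2 \leq 2\jap{y}^2\jap{x-y}^2$ followed by the sign-split on $t$ is the standard one and even yields the sharper constant $2^{|t|/2}$.

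One tiny expository nit: the inequality you actually use in the second displayed step is $1 + 2u + 2v \leq 2(1+u)(1+v)$, not the cited $1 + u + v \leq 2(1+u)(1+v)$. Both are trivially true (the former reduces to $1 \leq 2 + 2uv$), so this does not affect correctness, but you may want to align the citation with the step.
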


\begin{defn}
	The \textbf{symmetric rearrangement} $A^*$ of a Lebesgue measurable set $A \subset \mathbb{R}^d$ is the ball centred at the origin whose Lebesgue measure is the same as that of~$A$. 
\end{defn}

\begin{defn}
	Let $f$ be a measurable nonnegative function. Then
	\begin{align}
		f^*(x) = \int_0^\infty \mathbb{1}_{\{y \in \mathbb{R}^d \mid f(y)>t\}^*}(x) \diff t
	\end{align}
	defines the \textbf{symmetric decreasing rearrangement} $f^*$ of $f$.
\end{defn}

\begin{exmp}
	Let $\Lambda \geq 0$, $p>0$, and $f_\Lambda(x) = (1-\chi_\Lambda(x)) |x|^{-p}$, where $\chi_\Lambda(x)$ is equal to $1$ if $|x| \leq \Lambda$, and $0$ otherwise. The level sets $\{f_\Lambda > t \} = \{ y \in \mathbb{R}^d \mid f(y) > t\}$ contain all $y \in \mathbb{R}^d$ with $\Lambda < |y| < t^{-1/p}$ (i.e. $\{f_\Lambda > t \} = B_{t^{-1/p}}\backslash B_\Lambda$). The Lebesgue measure of $B_{t^{-1/p}}\backslash B_\Lambda$ equals $\mathrm{vol}(B_1) (t^{-d/p} - \Lambda^d)$, where $\mathrm{vol}(B_1)$ is the volume of the $d$-dimensional unit ball. Thus, $\{f_\Lambda > t \}^* = B_{(t^{-d/p} - \Lambda^d)^{1/d}}$ and
	\begin{align}
		f_\Lambda^*(x) = \int_0^\infty \mathbb{1}_{B_{(t^{-d/p} - \Lambda^d)^{1/d}}}(x) \diff t = \int_0^{(|x|^d + \Lambda^d)^{-\frac{p}{d}}} \diff t = (|x|^d + \Lambda^d)^{-\frac{p}{d}}.
	\end{align}
	\label{exmp:SDR1}
\end{exmp}

\begin{exmp}
	If $g_y(x) = g(x-y)$ is the translation of a symmetric decreasing function $g$, then the symmetric decreasing rearrangement of $g_y$ is $g_y^* = g_0 = g$.
	\label{exmp:SDR2}
\end{exmp}

\begin{thm}[Hardy--Littlewood inequality]
	If $f,g$ are nonnegative measurable functions on $\mathbb{R}^d$ and vanishing at infinity, then
	\begin{align}
		\int_{\mathbb{R}^d} f(x)g(x) \diff x \leq \int_{\mathbb{R}^d} f^*(x) g^*(x) \diff x.
	\end{align}
	\label{thm:HardyLittlewood}
\end{thm}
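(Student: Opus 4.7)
The plan is to reduce the inequality to an elementary fact about intersections of level sets by means of the \emph{layer cake representation}. For any nonnegative measurable function $h$ vanishing at infinity, we have
\begin{align}
	h(x) = \int_0^\infty \mathbb{1}_{\{h > s\}}(x) \diff s,
\end{align}
and, by definition of the symmetric decreasing rearrangement, $h^*(x) = \int_0^\infty \mathbb{1}_{\{h>s\}^*}(x) \diff s$. First, I would apply this representation to both $f$ and $g$, insert the two expressions into $\int_{\mathbb{R}^d} f(x)g(x)\diff x$, and use Tonelli (everything is nonnegative) to interchange the spatial integral with the two parameter integrals. This yields
\begin{align}
	\int_{\mathbb{R}^d} f(x) g(x) \diff x = \int_0^\infty \int_0^\infty |\{f > s\} \cap \{g > t\}| \diff s \diff t,
\end{align}
and analogously for the rearranged integrand, with $\{f>s\}$ and $\{g>t\}$ replaced by their symmetric rearrangements.

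The key pointwise (in $(s,t)$) estimate is then the following: for any two measurable sets $A,B \subset \mathbb{R}^d$ of finite measure, $|A \cap B| \leq \min\{|A|,|B|\} = |A^* \cap B^*|$. The equality here is trivial because $A^*$ and $B^*$ are concentric balls, so one is contained in the other and their intersection is the smaller ball. The inequality $|A \cap B| \leq \min\{|A|, |B|\}$ is immediate from monotonicity of Lebesgue measure. Applying this set-theoretic bound with $A = \{f>s\}$ and $B = \{g>t\}$, whose rearrangements are precisely $\{f>s\}^*$ and $\{g>t\}^*$, gives
\begin{align}
	|\{f > s\} \cap \{g > t\}| \leq |\{f>s\}^* \cap \{g>t\}^*|
\end{align}
for almost every $(s,t) \in (0,\infty)^2$.

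Integrating this inequality over $(s,t)$ and reversing the Tonelli step on the right-hand side (using the layer cake representation for $f^*$ and $g^*$) recovers $\int_{\mathbb{R}^d} f^*(x) g^*(x) \diff x$, which completes the argument. I expect no serious obstacle: the assumption that $f$ and $g$ vanish at infinity guarantees that the level sets $\{f>s\}$ and $\{g>t\}$ have finite Lebesgue measure for every $s,t>0$, so the symmetric rearrangements are well defined balls and the elementary set-theoretic step is legitimate. The only mild care needed is justifying Tonelli's theorem, which is immediate since all integrands are nonnegative and measurable in all variables.
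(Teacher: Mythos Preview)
Your argument is the standard layer-cake proof of the Hardy--Littlewood inequality and is correct as written; the only point worth tightening is the phrase ``for almost every $(s,t)$'' --- in fact the set inequality $|\{f>s\}\cap\{g>t\}| \leq |\{f>s\}^*\cap\{g>t\}^*|$ holds for \emph{every} $(s,t)\in(0,\infty)^2$, since vanishing at infinity gives finite-measure level sets for all positive thresholds.

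There is nothing to compare against: in the paper Theorem~\ref{thm:HardyLittlewood} is stated without proof in the appendix of collected well-known results (alongside the Schwartz kernel theorem, the Schur test, Cotlar--Stein, and Peetre's inequality), and is only invoked as a tool in the proof of Proposition~\ref{prop:IntegralEstimate}. Your proof would serve perfectly well as a filled-in argument for this cited fact.
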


\begin{prop}[\cite{schmidt2019}]
	For all $\nu, \sigma \geq 0$ and $\alpha, \gamma > 0$, an $\epsilon_0 > 0$ exists such that, for all $d \in (\nu + \sigma, \nu + \sigma + \alpha \gamma) \cap \mathbb{N}$, $\Lambda, \Omega \geq 0$, $\Xi \in \mathbb{R}^d$, and $0 < \epsilon \leq \epsilon_0$,
	\begin{align}
		\int_{\mathbb{R}^d} \frac{|\xi|^{-\nu} |\Xi - \xi|^{-\sigma} (1-\chi_\Lambda(\xi)) }{(|\Xi - \xi|^\gamma + |\xi| + \Omega)^\alpha} \diff \xi \leq C \Omega^{-\alpha + (d-\nu-\sigma)/\gamma + \epsilon(\Lambda)} \Lambda^{-\epsilon(\Lambda)},
		\label{eq:IntegralEstimate}
	\end{align}
	where $\epsilon(\Lambda)$ is equal to $\epsilon$ if $\Lambda > 0$, and $0$ if $\Lambda = 0$.
	\label{prop:IntegralEstimate}
\end{prop}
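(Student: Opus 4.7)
The strategy is to reduce the $\Xi$-dependent integral to a purely radial one and then exploit scaling in $\Omega$. First I would handle the cutoff via the elementary pointwise bound $1 - \chi_\Lambda(\xi) \leq \Lambda^{-\gamma\epsilon}\,|\xi|^{\gamma\epsilon}$, valid for $\Lambda > 0$ because the left side vanishes on $|\xi| < \Lambda$ and otherwise $|\xi|/\Lambda \geq 1$. Plugging this in pulls a factor $\Lambda^{-\gamma\epsilon}$ out of the integral and effectively shifts $\nu \mapsto \nu - \gamma\epsilon$ in the remaining integrand; choosing $\epsilon_0$ small enough preserves the strict inequality $d > \nu + \sigma$ for the shifted exponent. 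When $\Lambda = 0$ this step is skipped and $\epsilon(\Lambda) = 0$.

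Next I would decouple the three terms in the denominator via the elementary chain $a + b + c \geq \max(a + c, b + c) \geq (a + c)^\theta (b + c)^{1 - \theta}$ for $\theta \in [0, 1]$, which gives
\begin{align*}
	(|\Xi - \xi|^\gamma + |\xi| + \Omega)^\alpha \geq (|\Xi - \xi|^\gamma + \Omega)^{\alpha\theta}\,(|\xi| + \Omega)^{\alpha(1-\theta)}.
\end{align*}
The resulting integrand then factors into a symmetric decreasing function of $|\Xi - \xi|$ times one of $|\xi|$. Applying the Hardy--Littlewood rearrangement inequality (Theorem \ref{thm:HardyLittlewood}) together with Example \ref{exmp:SDR2} replaces $|\Xi - \xi|^{-\sigma}(|\Xi - \xi|^\gamma + \Omega)^{-\alpha\theta}$ by its symmetric decreasing rearrangement around the origin, namely $|\xi|^{-\sigma}(|\xi|^\gamma + \Omega)^{-\alpha\theta}$, reducing the problem (after passing to spherical coordinates) to the one-dimensional radial integral
\begin{align*}
	\int_0^\infty \frac{r^{d - 1 - (\nu - \gamma\epsilon) - \sigma}}{(r^\gamma + \Omega)^{\alpha\theta}\,(r + \Omega)^{\alpha(1-\theta)}}\,\diff r.
\end{align*}

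Finally, I would evaluate this radial integral by choosing $\theta = 1$ when $\gamma \geq 1$ and $\theta = 0$ when $\gamma < 1$, then rescaling $r = \Omega^{1/\gamma} s$ or $r = \Omega\,s$ accordingly. This extracts the claimed power of $\Omega$ and leaves a dimensionless integral whose convergence at $r = 0$ and $r = \infty$ is guaranteed precisely by the two-sided hypothesis $\nu + \sigma < d < \nu + \sigma + \alpha\gamma$ (with the shifted $\nu - \gamma\epsilon$ absorbing the small $\epsilon$ loss). The main obstacle is obtaining a bound uniform in $\Xi$: a more direct attempt via the Riesz-potential convolution $|\cdot|^{-a} \ast |\cdot|^{-b}$ would produce a positive power of $|\Xi|$, unbounded unless $a + b = d$ exactly. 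The decoupling step above circumvents this by distributing the regularising mass $\Omega$ into both singular factors, so that Hardy--Littlewood can remove the $\Xi$-dependence without loss of control at either singularity.
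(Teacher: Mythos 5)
Your overall strategy --- decouple the three terms in the denominator, recentre the $\Xi$-translated factor by Hardy--Littlewood, then scale out $\Omega$ --- is exactly the paper's. Two of your specific choices, however, create gaps. The first is the cutoff bound $1-\chi_\Lambda(\xi)\leq\Lambda^{-\gamma\epsilon}|\xi|^{\gamma\epsilon}$: it turns the origin-centred factor into $|\xi|^{-(\nu-\gamma\epsilon)}$ (times $(|\xi|+\Omega)^{-\alpha(1-\theta)}$), and since $\nu=0$ is allowed in the statement (and occurs in the applications), this factor can be \emph{increasing} and unbounded at infinity. It is then not symmetric decreasing, so you cannot take $f^*=f$ in Theorem \ref{thm:HardyLittlewood}; worse, for $\nu<\gamma\epsilon$ its level sets have infinite measure and the rearrangement is not even defined. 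For $\nu>0$ you can repair this by shrinking $\epsilon_0$ below $\nu/\gamma$, but $\nu=0$ genuinely breaks the step. The paper avoids this by harvesting the small power of $|\xi|$ from the denominator instead, via $(|\Xi-\xi|^\gamma+|\xi|+\Omega)^{-\alpha}\leq(|\Xi-\xi|^\gamma+\Omega)^{-\alpha+\epsilon}\,|\xi|^{-\epsilon}$, so the origin-centred factor becomes $(1-\chi_\Lambda(\xi))|\xi|^{-\nu-\epsilon}$, which is always decreasing in $|\xi|$; its rearrangement is computed in Example \ref{exmp:SDR1} as $(|\xi|^d+\Lambda^d)^{-(\nu+\epsilon)/d}\leq|\xi|^{-\nu}\Lambda^{-\epsilon}$, which also yields the factor $\Lambda^{-\epsilon}$ in precisely the form stated (your route gives $\Lambda^{-\gamma\epsilon}$ against $\Omega^{\cdots+\epsilon}$, which matches the statement only after rescaling $\epsilon$ and only up to constants; that part is bookkeeping).

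The second issue is the $\theta$-interpolation, which is both unnecessary and, on the branch $\theta=0$, incompatible with the stated exponent. Discarding $|\xi|$ from the denominator outright (your $\theta=1$) works for \emph{every} $\gamma>0$: the radial integral $\int_0^\infty r^{d-1-\nu-\sigma+\gamma\epsilon}(r^\gamma+\Omega)^{-\alpha}\,\diff r$ converges at $r=0$ because $d>\nu+\sigma$ and at $r=\infty$ because $d<\nu+\sigma+\alpha\gamma$ (for $\epsilon$ small), and $r=\Omega^{1/\gamma}s$ gives exactly $\Omega^{-\alpha+(d-\nu-\sigma)/\gamma+\epsilon}$. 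By contrast, keeping $(|\xi|+\Omega)^{\alpha}$ (your $\theta=0$) forces the substitution $r=\Omega s$ and produces $\Omega^{\,d-\nu-\sigma-\alpha+\gamma\epsilon}$; since $(d-\nu-\sigma)/\gamma\neq d-\nu-\sigma$ for $\gamma\neq 1$, this is a different exponent and is strictly weaker than the claim for $\Omega<1$ when $\gamma<1$. Delete the $\theta=0$ branch, always drop the $|\xi|$ term from the denominator after first extracting $|\xi|^{-\epsilon}$ from it as above, and the proof closes.
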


\begin{proof}	
	For $0< \epsilon < \alpha$, the denominator of the integrand is bounded by
	\begin{align}
		(|\Xi - \xi|^\gamma + |\xi| + \Omega)^{-\alpha} &= (|\Xi - \xi|^\gamma + |\xi| + \Omega)^{-\alpha+\epsilon(\Lambda)} (|\Xi - \xi|^\gamma + |\xi| + \Omega)^{-\epsilon(\Lambda)} \notag \\
		&\leq (|\Xi - \xi|^\gamma + \Omega)^{-\alpha+\epsilon(\Lambda)} |\xi|^{-\epsilon(\Lambda)}.
	\end{align} 
	Thus, the integral under consideration in \eqref{eq:IntegralEstimate} is bounded by $\int_{\mathbb{R}^d} f_\Lambda g_\Xi$, where $f_\Lambda(\xi) = (1-\chi_\Lambda(\xi))|\xi|^{-\nu-\epsilon(\Lambda)}$ and $g_\Xi(\xi) = (|\Xi - \xi|^\gamma +\Omega)^{-\alpha + \epsilon(\Lambda)} |\Xi - \xi|^{-\sigma}$. By Example \eqref{exmp:SDR1}, $f_\Lambda^*(\xi) = (|\xi|^d + \Lambda^d)^{-p/d}$ with $p=\nu+\epsilon(\Lambda)$ and, by Example \eqref{exmp:SDR2}, $g_\Xi^* = g_0$.
	
	If $\Lambda = 0$ and $\nu = 0$, then $f_\Lambda = f_0 = 1$ and from the Hardy--Littlewood inequality it follows that
	\begin{align}
		\int_{\mathbb{R}^d} f_0 g_\Xi = \int_{\mathbb{R}^d} \sqrt{g_\Xi} \sqrt{g_\Xi} \leq \int_{\mathbb{R}^d} g_0 = \int_{\mathbb{R}^d} \frac{|\xi|^{-\sigma}}{(|\xi|^\gamma + \Omega)^\alpha} \diff \xi.
	\end{align}
	Similarly, if $\Lambda = 0$ and $\nu > 0$, then, due to $f_0 = f_0^*$, 
	\begin{align}
		\int_{\mathbb{R}^d} f_0 g_\Xi \leq \int_{\mathbb{R}^d} f_0 g_0 = \int_{\mathbb{R}^d} \frac{|\xi|^{-\sigma-\nu}}{(|\xi|^\gamma + \Omega)^\alpha} \diff \xi.
	\end{align}
	Lastly, if $\Lambda > 0$, we obtain the following estimate:
	\begin{align}
		\int_{\mathbb{R}^d} f_\Lambda g_\Xi \leq \int_{\mathbb{R}^d} f_\Lambda^* g_0 = \int_{\mathbb{R}^d} |\xi|^{-\sigma} \frac{(|\xi|^d + \Lambda^d)^{-\frac{\nu + \epsilon}{d}}}{(|\xi|^\gamma + \Omega)^{\alpha - \epsilon}} \diff \xi \notag \\
		\leq \int_{\mathbb{R}^d} |\xi|^{-\sigma-\nu} \frac{\Lambda^{-\epsilon}}{(|\xi|^\gamma + \Omega)^{\alpha - \epsilon}} \diff \xi.
	\end{align}
	We combine these three inequalities and substitute $\xi$ with $\eta = \xi/\Omega^{1/\gamma}$:
	\begin{align}
		\int_{\mathbb{R}^d} f_\Lambda g_\Xi &\leq \int_{\mathbb{R}^d} |\xi|^{-\sigma-\nu} \frac{\Lambda^{-\epsilon(\Lambda)}}{(|\xi|^\gamma + \Omega)^{\alpha-\epsilon(\Lambda)}} \diff \xi \notag \\
		&\leq \Omega^{-\alpha+(d-\nu-\sigma)/\gamma + \epsilon(\Lambda)} \Lambda^{-\epsilon(\Lambda)} \int_{\mathbb{R}^d} \frac{|\eta|^{-\sigma-\nu}}{(|\eta|^\gamma + 1)^{\alpha-\epsilon(\Lambda)}} \diff \eta.
	\end{align}
	The last integral is finite if $d\in (\nu +\sigma, \nu + \sigma + \alpha\gamma)$ and if $\epsilon$ is sufficiently small.
\end{proof}

\begin{cor}
	Let $\zeta$ be a bounded continuous function on $\mathbb{R}^d$ with $\zeta(0) = 0$ and $\lim_{|\xi| \to \infty} \zeta(\xi) = 1$. Set $\zeta_\Lambda(\xi) = \zeta(\xi/\Lambda)$ for $\Lambda > 0$ and $\zeta_0(\xi) = 1$. Then, under the assumptions of Proposition \ref{prop:IntegralEstimate},
	\begin{align}
		\int_{\mathbb{R}^d} \frac{|\xi|^{-\nu} |\Xi - \xi|^{-\sigma} |\zeta_\Lambda(\xi)| }{(|\Xi - \xi|^\gamma + |\xi| + \Omega)^\alpha} \diff \xi \leq f(\Lambda) \Omega^{-\alpha + (d-\nu-\sigma)/\gamma + \epsilon(\Lambda)},
	\end{align}
	where $f$ is a continuous function on $[0,\infty)$ that converges to 0 as $\Lambda \to \infty$.
	\label{cor:IntegralEstimate}
\end{cor}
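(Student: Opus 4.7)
The plan is to reduce the corollary to Proposition~\ref{prop:IntegralEstimate} by a single dyadic splitting of the integration domain, exploiting two different features of $\zeta_\Lambda$: its vanishing at the origin (for small $|\xi|/\Lambda$) and its boundedness together with the proposition itself (for large $|\xi|/\Lambda$). The case $\Lambda = 0$ is trivial, since then $\zeta_0 \equiv 1$ and $\epsilon(\Lambda)=0$, so the inequality follows immediately from Proposition~\ref{prop:IntegralEstimate} with zero cut-off; take $f(0)$ to be the resulting constant. Assume henceforth $\Lambda > 0$, so $\epsilon(\Lambda) = \epsilon$, and introduce
\begin{align*}
\omega_\zeta(\tau) := \sup_{|\eta| \leq \tau} |\zeta(\eta)|, \qquad M := \sup_{\eta \in \mathbb{R}^d} |\zeta(\eta)|.
\end{align*}
Both are finite: $M$ because $\zeta$ is bounded continuous with finite limit at infinity; and $\omega_\zeta(\tau)$ is continuous with $\omega_\zeta(\tau) \to 0$ as $\tau \to 0^+$ since $\zeta$ is continuous and $\zeta(0) = 0$.

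Next, for a parameter $\tau = \tau(\Lambda) \in (0,1]$ to be fixed below, I would split the left-hand side $I(\Lambda)$ of the corollary as $I_1(\Lambda) + I_2(\Lambda)$ according to the regions $\{|\xi| \leq \tau \Lambda\}$ and $\{|\xi| > \tau \Lambda\}$. On $I_1$ the uniform bound $|\zeta_\Lambda(\xi)| = |\zeta(\xi/\Lambda)| \leq \omega_\zeta(\tau)$ holds, and extending the integral to all of $\mathbb{R}^d$ before invoking Proposition~\ref{prop:IntegralEstimate} with $\Lambda = 0$ yields
\begin{align*}
I_1(\Lambda) \leq C \, \omega_\zeta(\tau) \, \Omega^{-\alpha + (d-\nu-\sigma)/\gamma}.
\end{align*}
On $I_2$ the bound $|\zeta_\Lambda(\xi)| \leq M$ and the pointwise identity $\mathbf{1}_{|\xi| > \tau \Lambda}(\xi) = 1 - \chi_{\tau\Lambda}(\xi)$ reduce the estimate to Proposition~\ref{prop:IntegralEstimate} at cut-off $\tau \Lambda > 0$, giving
\begin{align*}
I_2(\Lambda) \leq C \, M \, (\tau \Lambda)^{-\epsilon} \, \Omega^{-\alpha + (d-\nu-\sigma)/\gamma + \epsilon}.
\end{align*}

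To match the two $\Omega$-powers I would use that in every application of the corollary in the body of the thesis one has $\Omega \geq 1$ (since $\Omega$ plays the role of $\Omega(\hat\xi_j) = \sum_{i \neq j} \jap{\xi_i}$, and the corollary is invoked only in boson sectors where this sum is non-trivial); for $\Omega \geq 1$ and $\epsilon > 0$ one has $\Omega^{-\alpha + (d-\nu-\sigma)/\gamma} \leq \Omega^{-\alpha + (d-\nu-\sigma)/\gamma + \epsilon}$, so both contributions combine into
\begin{align*}
I(\Lambda) \leq C\bigl[\omega_\zeta(\tau) + M (\tau \Lambda)^{-\epsilon}\bigr] \, \Omega^{-\alpha + (d-\nu-\sigma)/\gamma + \epsilon}.
\end{align*}
Finally, I would fix $\tau(\Lambda) := \Lambda^{-1/2}$, so that $\omega_\zeta(\Lambda^{-1/2}) \to 0$ by continuity of $\zeta$ at the origin while $(\tau\Lambda)^{-\epsilon} = \Lambda^{-\epsilon/2} \to 0$. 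Setting $f(\Lambda) := C[\omega_\zeta(\Lambda^{-1/2}) + M\Lambda^{-\epsilon/2}]$ for $\Lambda > 0$ and extending continuously to $\Lambda = 0$ produces the desired function, continuous on $[0,\infty)$ and vanishing at infinity.

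\paragraph{Main obstacle.} The only non-routine point is reconciling the two $\Omega$-powers: $I_1$ produces $\Omega^{-\alpha + (d-\nu-\sigma)/\gamma}$ and $I_2$ produces $\Omega^{-\alpha + (d-\nu-\sigma)/\gamma + \epsilon}$, which differ by the factor $\Omega^{-\epsilon}$ that is only bounded on $\{\Omega \geq 1\}$. This is not actually restrictive in view of the intended applications, but has to be stated cleanly; once absorbed, the rest of the argument reduces to applying Proposition~\ref{prop:IntegralEstimate} twice and calibrating $\tau$ against $\Lambda$.
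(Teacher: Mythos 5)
Your proof is correct and follows essentially the same route as the paper's: split the integration domain at a radius growing more slowly than $\Lambda$ (the paper uses $\Lambda/N$ with $N$ fixed by an $\epsilon$--$\delta$ argument, you use $\tau\Lambda=\Lambda^{1/2}$), bound $|\zeta_\Lambda|$ by its modulus of continuity at the origin on the inner region and by its supremum on the outer region, and apply Proposition \ref{prop:IntegralEstimate} once with zero cut-off and once with positive cut-off. The mismatch between the exponents $\Omega^{-\alpha+(d-\nu-\sigma)/\gamma}$ and $\Omega^{-\alpha+(d-\nu-\sigma)/\gamma+\epsilon}$ that you flag is absorbed silently in the paper's proof as well, so your explicit remark that this requires $\Omega\geq 1$ (as in the intended applications) is, if anything, the more careful treatment.
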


\begin{proof}
	The case $\Lambda = 0$ is covered by Proposition \ref{prop:IntegralEstimate}; hence, we assume that $\Lambda > 0$. We denote the integrand without $|\zeta_\Lambda(\xi)|$ by
	\begin{align}
		h_\Xi(\xi) = \frac{|\xi|^{-\nu} |\Xi - \xi|^{-\sigma}}{(|\Xi - \xi|^\gamma + |\xi| + \Omega)^\alpha}.
	\end{align}
	The function $\zeta$ is continuous and vanishes in 0; thus, for every $\delta > 0$, there is an $N \in \mathbb{N}$ such that $|\zeta_\Lambda(\xi)| = |\zeta(\xi/\Lambda)| \leq \delta$ for all $|\xi| \leq \Lambda / N$. We split the integral under consideration as follows:
	\begin{align}
		&\int_{\mathbb{R}^d} |\zeta_\Lambda(\xi)| h_\Xi(\xi) \diff \xi \notag \\
		&\ \ \ = \int_{\mathbb{R}^d} (1-\chi_{\Lambda / N}(\xi))|\zeta_\Lambda(\xi)| h_\Xi(\xi) \diff \xi + \int_{\mathbb{R}^d} \chi_{\Lambda / N}(\xi)|\zeta_\Lambda(\xi)| h_\Xi(\xi) \diff \xi.
	\end{align}
	In the first summand, we bound $|\zeta_\Lambda(\xi)|$ by constant independent of $\Lambda$ and apply Proposition \ref{prop:IntegralEstimate}:
	\begin{align}
		\int_{\mathbb{R}^d} (1-\chi_{\Lambda / N}(\xi))|\zeta_\Lambda(\xi)| h_\Xi(\xi) \leq C\Omega^{-\alpha + (d-\nu-\sigma)/\gamma + \epsilon(\Lambda/N)} (\Lambda/N)^{-\epsilon(\Lambda/N)}.
	\end{align}
	In the second summand, we bound $|\zeta_\Lambda(\xi)|$ by $\delta$ and $\chi_{\Lambda / N}$ by 1. Then, we apply Proposition \ref{prop:IntegralEstimate} with $\Lambda = 0$:
	\begin{align}
		\int_{\mathbb{R}^d} \chi_{\Lambda / N}(\xi)|\zeta_\Lambda(\xi)| h_\Xi(\xi) \diff \xi \leq \delta \int_{\mathbb{R}^d} h_\Xi(\xi) \diff \xi \leq \delta C \Omega^{-\alpha + (d-\nu-\sigma)/\gamma}.
	\end{align}
	Thus, we arrive at the following inequality:
	\begin{align}
		\int_{\mathbb{R}^d} h_\Xi(\xi) |\zeta_\Lambda(\xi)| \diff \xi \leq C ((\Lambda/N)^{-\epsilon(\Lambda/N)} + \delta) \Omega^{-\alpha + (d-\nu-\sigma)/\gamma + \epsilon}
	\end{align}
	The prefactor converges to $\delta$ as $\Lambda \to \infty$. As $\delta>0$ is arbitrary, the prefactor converges to 0 as $\Lambda \to \infty$.
\end{proof}

\begin{lem}
	Let $\zeta$ be as in Corollary \ref{cor:IntegralEstimate}. Then, for every $\Xi \in \mathbb{R}^3$ and $\nu \in (1,3)$, an $\epsilon_0 > 0$ exists such that, for every $0< \epsilon \leq \epsilon_0$,
	\begin{align}
		\int_{\mathbb{R}^3} \frac{|\zeta_\Lambda(\xi)| |\xi|^{-\nu}}{|\Xi-\xi|^2+1} \diff \xi \leq \frac{f(\Lambda)}{|\Xi|^{\nu - 1 - \epsilon}},
	\end{align}
	where $f$ is a continuous function on $[0,\infty)$ that converges to 0 as $\Lambda \to \infty$.
	\label{lem:IntegralEstimate}
\end{lem}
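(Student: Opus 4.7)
The plan is to reduce the lemma to an auxiliary estimate on the $\zeta$-free integral
\[
J_\mu(\Xi) \;=\; \int_{\mathbb{R}^3} \frac{|\xi|^{-\mu}}{|\Xi - \xi|^2 + 1}\diff \xi, \qquad \mu \in (1,3),
\]
and then to mimic the two-region splitting used in the proof of Corollary~\ref{cor:IntegralEstimate} to insert the cut-off $\zeta_\Lambda$.

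First I would prove that $J_\mu(\Xi) \leq C_\mu\,|\Xi|^{-(\mu-1)}$ for $|\Xi|\geq 1$ and $J_\mu(\Xi) \leq C_\mu$ for $|\Xi| \leq 1$, with $C_\mu$ continuous in $\mu \in (1,3)$. For the first bound I would split the integration domain into the three regions $\{|\xi|\leq |\Xi|/2\}$, $\{|\Xi|/2 < |\xi| < 2|\Xi|\}$, and $\{|\xi|\geq 2|\Xi|\}$: in the first the triangle inequality gives $|\Xi-\xi|\geq|\Xi|/2$ and (using $\mu<3$) one directly gets a contribution of size $C|\Xi|^{1-\mu}$; in the third $|\Xi-\xi|\geq|\xi|/2$ and (using $\mu>1$) one again obtains $C|\Xi|^{1-\mu}$; in the middle region $|\xi|^{-\mu}\leq 2^\mu|\Xi|^{-\mu}$, and the substitution $u=\xi-\Xi$ reduces the remaining integral to $\int_{|u|\leq 3|\Xi|}(|u|^2+1)^{-1}\diff u \leq 12\pi|\Xi|$, producing again $C|\Xi|^{1-\mu}$. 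The bound for $|\Xi|\leq 1$ is a straightforward consequence of $\int_{|\xi|\leq 2}|\xi|^{-\mu}\diff\xi + \int_{|\xi|>2}4|\xi|^{-\mu-2}\diff\xi < \infty$.

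Next, following the proof of Corollary~\ref{cor:IntegralEstimate}, for $\delta > 0$ I pick $N \in \mathbb{N}$ such that $|\zeta(\eta)| \leq \delta$ for $|\eta|\leq 1/N$---possible because $\zeta$ is continuous with $\zeta(0) = 0$---so that $|\zeta_\Lambda(\xi)| \leq \delta$ for $|\xi| \leq \Lambda/N$. Fix $\epsilon_0 \in (0, \nu-1)$ and $0 < \epsilon \leq \epsilon_0$, and split the integration domain into $\{|\xi|\leq \Lambda/N\}$ and $\{|\xi| > \Lambda/N\}$. On the first, $|\zeta_\Lambda| \leq \delta$ gives a contribution $\leq \delta\, J_\nu(\Xi)$; on the second, $|\xi|^{-\nu} \leq (\Lambda/N)^{-\epsilon} |\xi|^{-(\nu-\epsilon)}$ gives a contribution $\leq \|\zeta\|_\infty\, N^\epsilon\, \Lambda^{-\epsilon}\, J_{\nu-\epsilon}(\Xi)$, with $\nu - \epsilon \in (1,3)$ by the choice of $\epsilon_0$. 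Combining the bounds from the first step, and using $|\Xi|^{-(\mu-1)} \leq |\Xi|^{-(\nu-1-\epsilon)}$ for $|\Xi|\geq 1$ together with $J_\mu \leq C \leq C\,|\Xi|^{-(\nu-1-\epsilon)}$ for $|\Xi|\leq 1$, yields
\[
\int_{\mathbb{R}^3} \frac{|\zeta_\Lambda(\xi)|\,|\xi|^{-\nu}}{|\Xi-\xi|^2+1}\diff\xi \;\leq\; C\bigl(\delta + N^\epsilon\, \Lambda^{-\epsilon}\bigr)\,|\Xi|^{-(\nu-1-\epsilon)}.
\]

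Finally, as at the end of the proof of Corollary~\ref{cor:IntegralEstimate}, the prefactor $C(\delta + N^\epsilon \Lambda^{-\epsilon})$ converges to $C\delta$ as $\Lambda \to \infty$, and since $\delta > 0$ is arbitrary one extracts a continuous dominating function $f\in C([0,\infty))$ with $f(\Lambda) \to 0$. The main obstacle is the middle-region estimate in the bound on $J_\mu$: the integrand is singular at $\xi = \Xi$ and one must exploit the fact that $\int_{|u|\leq R}(|u|^2+1)^{-1}\diff u$ grows only linearly in $R$ in three dimensions to match the $|\Xi|^{1-\mu}$ scaling produced by the outer two regions. Everything else is a routine adaptation of the argument of Corollary~\ref{cor:IntegralEstimate}.
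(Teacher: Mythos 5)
Your proof is correct, but it extracts the decay in $\Lambda$ by a genuinely different mechanism than the paper's. Your first step --- the bound $J_\mu(\Xi) \leq C_\mu \min(1,|\Xi|^{1-\mu})$ for the $\zeta$-free integral --- matches the paper's first step in substance (the paper uses a two-region split around $\xi=\Xi$ plus a scaling argument, you use a three-region split with the explicit linear growth of $\int_{|u|\leq R}(|u|^2+1)^{-1}\diff u$; both need exactly $1<\mu<3$). The divergence is in the second step: you transplant the $\delta$/$N$ splitting at $|\xi|=\Lambda/N$ from the proof of Corollary \ref{cor:IntegralEstimate}, using $|\zeta_\Lambda|\leq\delta$ on the inner region and trading $|\xi|^{-\epsilon}\leq(\Lambda/N)^{-\epsilon}$ on the outer region, which is where your $\epsilon$-loss in the exponent of $|\Xi|$ comes from and which requires the uniformity of $C_\mu$ over $\mu\in[\nu-\epsilon_0,\nu]$ that your construction indeed provides. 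The paper instead proves a second global bound $f(\Lambda)\jap{\Xi}^{2}$ via Peetre's inequality $\jap{\Xi-\xi}^{-2}\leq 4\jap{\Xi}^{2}\jap{\xi}^{-2}$ and the convergence of $\int|\zeta_\Lambda(\xi)|\jap{\xi}^{-2}|\xi|^{-\nu}\diff\xi$ to zero, and then interpolates the two bounds via $a\leq b^{1-\epsilon}c^{\epsilon}$; there the $\epsilon$-loss comes from the interpolation. Both routes are sound and sit at the same level of rigor, including the slightly informal passage from ``$\delta>0$ arbitrary'' to a single continuous majorant $f$ with $f(\Lambda)\to 0$ (which the paper glosses over in exactly the same way in Corollary \ref{cor:IntegralEstimate}); yours is a bit longer but more self-contained, and it has the small merit of treating $|\Xi|\leq 1$ and $\Lambda=0$ explicitly, whereas the paper's interpolation trick is shorter and reuses machinery already set up for the corollary.
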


\begin{proof}
	First, we bound $|\zeta_\Lambda(\xi)|$ by a constant independent of $\Lambda$ and divide the integration area into $\{\xi \in \mathbb{R}^3 \mid |\Xi-\xi|\leq |\Xi|/2\}$ and $\{\xi \in \mathbb{R}^3 \mid |\Xi-\xi| > |\Xi|/2\}$.
		If $|\Xi-\xi| \leq |\Xi|/2$, then $|\xi|\geq |\Xi|/2$ and
		\begin{align}
			\int_{|\Xi-\xi| \leq \frac{|\Xi|}{2}} \frac{|\xi|^{-\nu}}{|\Xi-\xi|^2+1} \diff \xi &\leq \frac{2^\nu}{|\Xi|^\nu} \int_{|\Xi-\xi| \leq \frac{|\Xi|}{2}} \frac{1}{|\Xi-\xi|^2+1} \diff \xi \notag \\
			&= \frac{2^\nu}{|\Xi|^\nu} \int_{|\xi| \leq \frac{|\Xi|}{2}} \frac{1}{|\xi|^2+1} \diff \xi \leq \frac{C}{|\Xi|^{\nu - 1}}.
		\end{align}
		If $|\Xi-\xi| > |\Xi|/2$, scaling $\xi$ by $|\Xi|$, we obtain
		\begin{align}
			\int_{|\Xi-\xi| > \frac{|\Xi|}{2}} \frac{|\xi|^{-\nu}}{|\Xi-\xi|^2+1} \diff \xi &\leq 
			\frac{1}{|\Xi|^{\nu-1}} \int_{\left|\frac{\Xi}{|\Xi|}-\xi\right| > \frac{1}{2}} \frac{1}{|\Xi/|\Xi|-\xi|^2|\xi|^\nu} \diff \xi \notag \\
			&\leq \frac{C}{|\Xi|^{\nu-1}}.
		\end{align}	
		The last integral is independent of $\Xi$ due to the rotation invariance of the Lebesgue measure. 	
	
	Next, we bound the integral under consideration by  
	\begin{align}
		\int_{\mathbb{R}^3} \frac{|\zeta_\Lambda(\xi)| |\xi|^{-\nu}}{|\Xi-\xi|^2+1} \diff \xi \leq 4 \jap{\Xi}^2 \int_{\mathbb{R}^3} |\zeta_\Lambda(\xi)| \jap{\xi}^{-2} |\xi|^{-\nu} \diff \xi \leq f(\Lambda) \jap{\Xi}^2
	\end{align}
	according to Peetre's inequality (see Theorem \ref{thm:Peetre}). The claim follows because, for nonnegative constants $a,b,c$, the inequalities $a\leq b$ and $a\leq c$ imply $a \leq b^{1-\epsilon} c^{\epsilon}$.
\end{proof}

\backmatter
\bibliographystyle{acm}
\bibliography{literature}

\end{document}